\documentclass{article}

\usepackage{microtype}
\usepackage{graphicx}
\usepackage{subfigure}
\usepackage{booktabs} 

\usepackage{hyperref}

\usepackage[preprint]{icml2026}

\usepackage{amsmath}
\usepackage{amssymb}
\usepackage{mathtools}
\usepackage{amsthm}

\usepackage[capitalize,noabbrev]{cleveref}

\usepackage{xcolor}
\definecolor{afiablue}{RGB}{61,159,207}
\definecolor{afiared}{RGB}{167,75,68}
\definecolor{afialightblue}{RGB}{158,193,232}

\usepackage[utf8]{inputenc} 
\usepackage[T1]{fontenc}    
\usepackage{hyperref}       
\usepackage{url}            
\usepackage{booktabs}       
\usepackage{amsfonts}       
\usepackage{nicefrac}       
\usepackage{microtype}      
\usepackage{lipsum}
\usepackage{enumitem}
\usepackage{bbm}
\usepackage{graphicx}
\graphicspath{ {./images/} }
\usepackage{natbib}

\usepackage{amsmath}
\usepackage{amssymb}
\usepackage{mathtools}
\usepackage{centernot}
\usepackage{amsthm}

\usepackage{tikz}
\usetikzlibrary{positioning, arrows.meta}
\usepackage{tikz-cd}
\usetikzlibrary{shapes.geometric,arrows.meta,positioning}

    \theoremstyle{plain}
    \newtheorem{theorem}{Theorem}[section]
    \newtheorem{proposition}{Proposition}[section]
    \newtheorem{corollary}{Corollary}[section]
    \newtheorem{lemma}{Lemma}[section]
    \theoremstyle{definition}
    \newtheorem{definition}[theorem]{Definition}
    \newtheorem{assumption}[theorem]{Assumption}
    \theoremstyle{remark}
    \newtheorem{remark}[theorem]{Remark}
    \usepackage{stmaryrd}
    \usepackage{xparse} \DeclarePairedDelimiterX{\Iintv}[1]{\llbracket}{\rrbracket}{\iintvargs{#1}}
    \NewDocumentCommand{\iintvargs}{>{\SplitArgument{1}{,}}m}
    {\iintvargsaux#1} %
    \NewDocumentCommand{\iintvargsaux}{mm} {#1\mkern1.5mu,\mkern1.5mu#2}

    \DeclareMathOperator{\Cauchy}{Cauchy}

    \DeclareMathOperator{\Var}{Var}

    \DeclareMathOperator{\tr}{tr}

    \DeclareMathOperator{\supp}{supp}

\usepackage{thmtools}
\usepackage{thm-restate}
    
\usepackage[textsize=tiny]{todonotes}
\usepackage{xcolor}

\icmltitlerunning{Privacy Amplification Persists under Unlimited Synthetic Data Release}

\begin{document}

\twocolumn[
\icmltitle{Privacy Amplification Persists under Unlimited Synthetic Data Release}
          \icmlsetsymbol{equal}{*}
          
\begin{icmlauthorlist}
\icmlauthor{Clément Pierquin}{Craft,Lille}
\icmlauthor{Aurélien Bellet}{Montpellier}
\icmlauthor{Marc Tommasi}{Lille}
\icmlauthor{Matthieu Boussard}{Craft}

\end{icmlauthorlist}

\icmlaffiliation{Craft}{Craft AI, Paris, France}
\icmlaffiliation{Lille}{Université de Lille, Inria, CNRS, Centrale Lille, UMR 9189 CRIStAL, F-59000 Lille, France}
\icmlaffiliation{Montpellier}{PreMeDICaL team, Inria, Idesp, Inserm, Université de Montpellier}
\icmlcorrespondingauthor{Clément Pierquin}{clement.pierquin@craft-ai.fr}

\icmlkeywords{Differential Privacy, Synthetic Data}

\vskip 0.3in
]
\printAffiliationsAndNotice{}
\begin{abstract}
  We study \emph{privacy amplification by synthetic data release}, a phenomenon in which differential privacy guarantees are improved by releasing only synthetic data rather than the private generative model itself. Recent work by \citet{pierquin2025} established the first formal amplification guarantees for a linear generator, but they apply only in asymptotic regimes where the model dimension far exceeds the number of released synthetic records, limiting their practical relevance. In this work, we show a surprising result: under a bounded-parameter assumption, privacy amplification persists \emph{even when releasing an unbounded number of synthetic records}, thereby improving upon the bounds of~\citet{pierquin2025}. Our analysis provides structural insights that may guide the development of tighter privacy guarantees for more complex release mechanisms.
\end{abstract}

\section{Introduction}


Differential privacy (DP)~\citep{Dwork2014} provides a rigorous framework for protecting sensitive data in machine learning. Privacy-preserving models can be obtained through methods like output perturbation\citep{chaudhuri2011differentiallyprivateempiricalrisk,Zhang20172,Lowy2024}, which adds noise to a trained model, or noisy gradient descent algorithms, which inject noise during training~\citep{private_sgd,Bassily2014a,Abadi2016,Feldman2018}. 
A key property of DP is \emph{post-processing}: once a model is differentially private, any further processing or analysis preserves its guarantees. This makes \emph{differentially private generative models} \citep{Zhang2017,xie2018,mckenna2019,Jordon2018,McKenna2021,Lee2022,dockhorn2022,bie2023private} particularly appealing, as the synthetic data they generate automatically inherits the model's privacy guarantees, allowing synthetic datasets to be shared safely for downstream use, in contrast to heuristic approaches that may not provide meaningful privacy protection~\citep{zhao2025does,ponomareva2025dpfydatapracticalguide}.

Empirical evidence however suggests that synthetic data can sometimes provide stronger privacy than the formal guarantees of the underlying generative model~\citep{Annamalai2024, Houssiau2022}, hinting at a possible privacy amplification effect. One intuitive explanation is that privacy leakage may be lower when the number of synthetic samples is small relative to the complexity of the generative model.

This idea was recently formalized by~\citet{pierquin2025}, who established the first formal privacy amplification guarantees via synthetic data release in the setting of a linear generator privatized via output perturbation. Their analysis attributes the amplification effect to the latent randomness in the data generation process. However, these guarantees only hold in a highly restrictive regime where the number $n_{\mathrm{syn}}$ of synthetic records is small compared to the model dimension $d$.
Furthermore, the privacy bound of \citet{pierquin2025} is expressed as a piecewise minimum between a standard post-processing guarantee and an amplification term. This piecewise formulation does not provide a unified view of amplification and post-processing effects. Overall, it remains unclear whether a single, non-asymptotic guarantee can be established that holds even for large $n_{\mathrm{syn}}$.

In this paper, we substantially strengthen and extend these prior results by developing a unified framework for privacy amplification in synthetic data released by a linear generator.
Our analysis identifies and precisely characterizes the mechanisms driving privacy amplification in this tractable model, offering insights that may guide the study of more general synthetic data release scenarios.
Specifically, our main contributions are as follows:
\begin{itemize}[itemsep=1pt, parsep=2pt, topsep=1pt]
\item 
We show that releasing \textbf{infinitely many synthetic records $(n_{\mathrm{syn}}\to+\infty)$ can still yield significant privacy amplification}—a surprising result that holds under a bounded-parameter assumption. We propose a unified characterization of privacy in the high-privacy regime,
and also derive a general privacy amplification upper bound.
\item To establish these results, we develop new technical tools based on Fisher information. Specifically, we introduce a criterion that allows controlling Rényi divergences via Fisher information, and we derive upper bounds on the Fisher information for the non-centrality parameter of non-central chi-squared and non-central Wishart distributions. These results are more general and unified than existing analyses, and may be of independent interest to the statistics community.
\item To complement our theoretical results, we numerically estimate the Rényi DP guarantees and compare them with our theoretical bounds across different parameters. Strikingly, we find that the privacy guarantees converge rapidly to their $n_{\mathrm{syn}} \to +\infty$ limits, showing that focusing on this asymptotic regime captures the key amplification effect with little loss for practical amounts of synthetic data.

\end{itemize}

All proofs and additional discussions are provided in the Appendix.

Before presenting our main results in more details, we highlight two unexpected and conceptually important insights that emerged from our analysis. First, \textbf{we invalidate the natural intuition}, put forward by \citet{pierquin2025}, \textbf{that releasing an unbounded number of synthetic records would eliminate any privacy benefit}. According to this intuition, an adversary could, in principle, exploit central limit effects to detect even the smallest distributional shifts and reconstruct the private generative model with arbitrary precision.
Our analysis shows this is not the case: the privacy loss can be effectively upper-bounded without restricting the number of released synthetic records. This yields a privacy accounting rule that, even for arbitrarily large numbers of synthetic records, is strictly tighter than the standard post-processing privacy guarantee obtained when releasing the model parameters.

The second insight concerns a necessary condition for the above phenomenon: \textbf{boundedness of the generative model parameters}. Indeed, without restricting the parameter space, we show that releasing infinitely many records yields no amplification.
This boundedness condition is mild in our setting, since the linear generator can, for example, be trained with $\ell_2$-regularization, which naturally enforces a bounded parameter norm (for example, it holds for $\ell_2$-regularized linear regression). More generally, a bounded parameter domain is a standard assumption in theoretical analyses of private optimization \citep{Bassily2014a,Feldman2018,Altschuler2022}, and is often required to ensure a Lipschitz objective in machine learning.
Interestingly, in our setting, bounding the parameters serves a deeper purpose: it prevents an adversary from exploiting the norm of the released data to mount an attack whose leakage would match the post-processing guarantees.

\section{Setting and Overview of our Results}
\label{sec:l_infty}

\textbf{Rényi Differential Privacy.} 
In this work, we express privacy guarantees using Rényi Differential Privacy (RDP), a variant of differential privacy that quantifies privacy in terms of the Rényi divergence \citep{Mironov2017}. 
Two datasets $\mathcal{D}$ and $\mathcal{D}'$ of fixed size $n$ are called adjacent if they differ in a single data point~\citep{Mironov2017}. Then, a mechanism $\mathcal{M}$ satisfies $(\alpha, \varepsilon)$-RDP if, for any adjacent datasets $\mathcal{D}$ and $\mathcal{D}'$,
$$
D_\alpha(\mathcal{M}(\mathcal{D}) \| \mathcal{M}(\mathcal{D}')) \leq \varepsilon,
$$
with $D_\alpha(P,Q) = \frac{1}{\alpha-1} \log \mathbb{E}_{x \sim Q}[(P(x)/Q(x))^\alpha]$ the Rényi divergence of order $\alpha>1$ between distributions $P$ and $Q$. By a slight abuse of notation, if $V \sim P$ and $W \sim Q$, we write $D_\alpha(V,W) = D_\alpha(P,Q)$.  

\textbf{The setting: synthetic data and linear generators.} Many approaches for private synthetic data release are modeled as a two-stage randomized mechanism \citep{xie2018,Jordon2018,Lee2022,dockhorn2022}. In the first stage, a differentially private procedure $\mathcal{M}$ outputs the parameters $V=\mathcal{M}(\mathcal{D})$ of a generative model trained on a sensitive dataset $\mathcal{D}$. In the second stage, the learned generator $f_V$ draws a latent noise $Z = (Z_1, \dots Z_{n_{\text{syn}}})$ from some (typically Gaussian) distribution $\mu^{\otimes {n_{\text{syn}}}}$ and outputs $n_{\text{syn}}$ synthetic records $f_V(Z) = (f_V(Z_1),\dots, f_V(Z_{n_{\text{syn}}}))$. Therefore, the privacy loss associated to synthetic data release is the Rényi divergence $D_\alpha(f_V(Z), f_W(Z))$, which is upper-bounded by  $D_\alpha(V,W)$ by the post-processing property of DP. Privacy amplification by synthetic data release then corresponds to establishing a stronger bound of $D_\alpha(f_V(Z), f_W(Z)) \leq \eta D_\alpha(V,W)$ with $\eta < 1$, which can be attributed to the additional randomness introduced by $Z$.

For realistic generators $f_V$ (e.g. deep generative models), this divergence is intractable. Following~\citep{pierquin2025}, we therefore study a tractable proxy:
\begin{enumerate}
    \item We model $f_V$ as a linear function $f_V(Z) = ZV$, where $Z \in \mathbb{R}^{n_{\text{syn}} \times d}$ and $V \in \mathbb{R}^{d \times k}$ with:
    \begin{itemize}
        \item $n_{\text{syn}}$: the number of synthetic records,
        \item $k$: the dimension of synthetic records,
        \item $d$: the width of the model.
    \end{itemize}
    Similar idealized model have been used to study model collapse in generative AI~\cite{Dohmatob2024, Gerstgrasser2024} and learning dynamics of generative models~\citep{Lucas2019, Gemp2018}.
    \item We consider deterministic parameters $v,w \in \mathbb{R}^{d \times k}$ obtained from adjacent datasets $\mathcal{D}$, $\mathcal{D'}$ via a bounded-sensitivity procedure, which are then privatized with output perturbation, i.e., by adding Gaussian noise.
\end{enumerate}



The bounded-sensitivity assumption is natural, and essentially without loss in the linear generator setting. For instance, under standard boundedness conditions, $\ell_2$-regularized multi-output linear regression has bounded sensitivity \citep{chaudhuri2011differentiallyprivateempiricalrisk,pierquin2025}, meaning that $\|v-w\|_F \leq \Delta$ for some sensitivity $\Delta > 0$ that depends on the problem parameters.


Formally, we consider the following synthetic data release mechanism.
\begin{definition}[Linear generation from Gaussian inputs; \citealt{pierquin2025}]
    Let $v \in \mathbb{R}^{d \times k}$ be a deterministic parameter. Let $\mathcal{M}(v) := v + \sigma N$, where $N\in\mathbb{R}^{d\times k}$ has i.i.d. standard normal entries. Independently, draw a matrix $Z\in\mathbb{R}^{n_{\mathrm{syn}}\times d}$ with i.i.d. entries $Z_{ij}\sim\mathcal{N}(0,\sigma_z^2)$. The released synthetic dataset is $Z\mathcal{M}(v)$.
\end{definition}
By the scale-invariance of Gaussian matrices, $D_\alpha(Z(v+\sigma N),Z(w + \sigma N)) = D_\alpha(Z(\frac{1}{\sigma} v + N), Z(\frac{1}{\sigma}w + N))$. Therefore, without loss of generality, we assume $\sigma_z = \sigma = 1$ throughout the paper.

 We let the random variables $V = \mathcal{M}(v)$ and $W = \mathcal{M}(w)$, for $v,w \in \mathbb{R}^{d \times k}$ obtained from adjacent datasets. 

\begin{assumption}[Non-degeneracy]
We assume that $d \geq k$. This ensures that the product $ZV$ admits a density. Such an overparameterized regime is common in modern deep learning, where the number of model parameters typically exceeds the dimensionality of the output.
\end{assumption}


\textbf{Summary of our main result.} We derive bounds on $D_\alpha(ZV,ZW)$, the privacy loss of releasing synthetic data, and compare them to $D_\alpha(V,W)$, the privacy loss of releasing the generative model parameters. 
Figure~\ref{fig:summary} summarizes our results and compares them to the upper bound of prior work by \citet{pierquin2025}.\footnote{The results of~\citet{pierquin2025} were originally reported in
the $f$-DP framework~\citep{Dong2019}. For comparison with our results, we converted them into upper bounds on Rényi divergences. Details are provided in Appendix~\ref{app:comparison-prior-work}.} We obtain a unified upper bound capturing both amplification and non-amplification regimes in a natural way. Strikingly, our upper bounds do not depend on the number of synthetic data points $n_{\text{syn}}$, relying only on the parameters $v$ and $w$ having Frobenius norm bounded by $C$.

\begin{figure}[t]
\begin{tikzpicture}
{\small
    \node[
      draw,                   
      fill=purple!10,         
      thick,                  
      rounded corners,        
      inner xsep=12pt,        
      inner ysep=8pt,         
      text width=0.43\textwidth,
      align=left              
    ] (mybox) {
      \textbf{Prior work~\citep{pierquin2025}:} 
     $$D_\alpha(ZV,ZW) \lesssim \min\left\{1, k\sqrt{\frac{n_{\text{syn}}}{d-k}}\right\}D_\alpha(V,W).$$
     \textbf{Our work:}
     $$D_\alpha(ZV,ZW) \lesssim \frac{C^2}{d/k + C^2} D_\alpha(V,W).$$
    };
    }
  \end{tikzpicture}
\caption{Summary of our results highlighting the improvements over prior work by \citet{pierquin2025}.}
\label{fig:summary}
\end{figure}


\section{Relationship between Fisher information and Rényi divergences}
\label{sec:fisher-renyi}
This section presents some key technical results (some of which novel) that we will use to address our main goal:
upper bounding the privacy loss induced in the infinite synthetic data release regime ($n_{\text{syn}}\rightarrow+\infty$). Let $v,w \in \mathbb{R}^{d\times k}$ be the parameter produced by two adjacent datasets, with $\|v-w\|_F \leq \Delta$, and $P_v$ and $P_w$ the corresponding infinite release distributions of $ZV$ and $ZW$. Unfortunately, as we will see in Sections~\ref{sec:linear-regression} and~\ref{sec:multilinear-regression}, the Rényi divergence $D_\alpha(P_v, P_w)$ is generally intractable and does not admit simple bounds.

Instead of attempting a direct computation, we take an indirect approach. We embed the parameters $v$ and $w$ into a smooth parametric family $\{P_{\theta}, \theta \in \Theta\}$ that interpolate between $P_v$ and $P_w$, and we study the Fisher information of this family with respect to $\theta$. While still not directly computable, this quantity is typically more tractable to bound than the Rényi divergence itself. Then, we leverage a relationship between Rényi divergences and the Fisher information. Rényi divergences admit a local second-order expansion governed by Fisher information.

\begin{proposition}[Local relationship between Rényi divergences and Fisher information; \citealt{Haussler1997, vanErven2014}]
\label{prop:local-fisher}
    Let $\alpha > 1$. Let $P = \{P_\theta,\; \theta\in \Theta \subset \mathbb{R}\}$ be a family of probability distributions, with associated densities $p_\theta$. 
    For $\theta \in \Theta$, let the Fisher information $I(\theta) = \mathbb{E}_{X \sim P_\theta}[(\partial_\theta \log p_\theta(X))^2]$ and $\Delta > 0$. Then, under some regularity conditions (see Definition~\ref{defi:regularity}),
    \[D_\alpha(P_{\theta + \Delta},P_{\theta}) = \frac{\alpha}{2}I(\theta)\Delta^2 + o(\Delta^2).\]
\end{proposition}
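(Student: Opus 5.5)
We will prove the expansion by writing the Rényi divergence through the moment functional
\[
F(\Delta) := \mathbb{E}_{X\sim P_\theta}\!\left[\left(\frac{p_{\theta+\Delta}(X)}{p_\theta(X)}\right)^{\alpha}\right] = \int p_{\theta+\Delta}^{\alpha}\, p_\theta^{1-\alpha}\, d\mu ,
\qquad D_\alpha(P_{\theta+\Delta},P_\theta) = \frac{1}{\alpha-1}\log F(\Delta).
\]
We will then Taylor-expand $F$ to second order around $\Delta=0$. The regularity conditions of Definition~\ref{defi:regularity} will be used for three things: to ensure $\theta\mapsto p_\theta(x)$ is twice differentiable, to justify differentiation under the integral sign through a dominating function on a neighbourhood of $\theta$, and to guarantee that the score has finite second moment.

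\textbf{Derivatives of $F$ at $\Delta=0$.} Clearly $F(0)=\int p_\theta = 1$. Differentiating under the integral gives
\[
F'(\Delta) = \alpha\int p_{\theta+\Delta}^{\alpha-1}\,\partial_\theta p_{\theta+\Delta}\; p_\theta^{1-\alpha},
\]
so $F'(0)=\alpha\int \partial_\theta p_\theta = \alpha\,\partial_\theta\!\int p_\theta = 0$. Differentiating once more,
\[
F''(0) = \alpha(\alpha-1)\int \frac{(\partial_\theta p_\theta)^2}{p_\theta} + \alpha\int \partial_\theta^2 p_\theta = \alpha(\alpha-1)\, I(\theta),
\]
where the second integral vanishes by the same interchange argument, and the first is the Fisher information because $(\partial_\theta p_\theta)^2/p_\theta = (\partial_\theta \log p_\theta)^2\, p_\theta$.

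\textbf{Conclusion.} Taylor's theorem gives $F(\Delta) = 1 + \tfrac{\alpha(\alpha-1)}{2} I(\theta)\Delta^2 + o(\Delta^2)$. Since $\log(1+u) = u + O(u^2)$, this yields
\[
D_\alpha(P_{\theta+\Delta},P_\theta) = \frac{1}{\alpha-1}\cdot\frac{\alpha(\alpha-1)}{2} I(\theta)\Delta^2 + o(\Delta^2) = \frac{\alpha}{2} I(\theta)\Delta^2 + o(\Delta^2),
\]
which is the claimed expansion. The same computation can be phrased via $\sqrt{\cdot}$-differentiability or through the Hellinger-type expansions of \citet{Haussler1997, vanErven2014}, and we can cite those for the general case.

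\textbf{Main obstacle.} The delicate step is justifying the second-order expansion of $F$ with a remainder that is uniformly $o(\Delta^2)$. The integrand contains $p_{\theta+\Delta}^{\alpha}p_\theta^{1-\alpha}$, and for $\alpha>1$ the likelihood ratio may have heavy tails. We first try to assume, as part of the regularity conditions, that $\sup_{|h|\le\delta}\int \big|\partial_h^2 \big(p_{\theta+h}^\alpha\big)\big|\, p_\theta^{1-\alpha} < \infty$ for some $\delta>0$. With this assumption, the integral form of Taylor's remainder combined with dominated convergence gives continuity of $F''$ at $0$, and hence the $o(\Delta^2)$ remainder. For the Gaussian-type families used later in the paper, this domination can be checked directly.
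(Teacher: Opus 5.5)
Your proposal is correct and is essentially the paper's proof. The paper also sets $G(\theta')=\int p_{\theta'}^{\alpha}p_\theta^{1-\alpha}$ and shows $G(\theta)=1$, $G'(\theta)=0$ (mean-zero score) and $G''(\theta)=\alpha(\alpha-1)I(\theta)$ (via the information identity, which is equivalent to your $\int\partial_\theta^2 p_\theta=0$); the only difference is that it then differentiates $D=\frac{1}{\alpha-1}\log G$ twice, whereas you expand $\log(1+u)$. The domination you flag as the main obstacle is exactly what the last bullet of Definition~\ref{defi:regularity} supplies, as an integrable pointwise bound with the supremum inside the integral, which is what dominated convergence needs. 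Also, Peano's form of Taylor's theorem only requires $F''(0)$ to exist, so you do not need continuity of $F''$.
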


The above proposition establishes an approximation of the Rényi divergence $D_\alpha(P_{\theta+\Delta},P_{\theta})$ in terms of the Fisher information $I(\theta)$. This result is particularly useful for families of distributions in which the Fisher information is substantially easier to compute than the Rényi divergence. However, this approximation is local, in the sense that it holds when the sensitivity $\Delta$ is small, and, to our knowledge, does not extend to global relationships between Rényi divergence and Fisher information. An existing attempt by \citet{Abbasnejad2006} contains a flaw in its proof; see Appendix~\ref{app:fisher-global-counterexample}, where we also clarify in Appendix~\ref{app:fisher-global-counterexample} why global comparison statements based on pointwise Rényi divergences and Fisher information cannot hold in general, and we give a basic counterexample.

For privacy certification, one requires \emph{non-asymptotic} upper bounds. In this work, we proceed by introducing an upper envelope $U$ on the Rényi divergence $D_{2\alpha-1}(P_z,P_\theta)$ along a path: $D_{2\alpha-1}(P_z,P_\theta)\leq U(z,\theta)$. Then, we show that for any such choice of $U$, a uniform bound on the Fisher information along the path yields a upper bound on Rényi divergence $D_\alpha(\theta',\theta)$. In particular, whenever $U$ is tractable, this provides a closed-form upper bound for $D_\alpha(\theta',\theta)$.

\begin{proposition}[Upper bounding Rényi divergences through Fisher information]
\label{prop:Fisher-upper-bound}
    Let $\theta < \theta' \in \Theta$ and $\alpha > 1$. We assume that there exists a function $U : \mathbb{R}^2 \to \mathbb{R^+}$ such that for all  $z \in (\theta,\theta')$, $D_{2\alpha-1}(P_{z},P_{\theta}) \leq U(z,\theta)$. We denote: \begin{align*}
        D_\theta^{\theta'} = \int_{\theta}^{\theta'} e^{(\alpha-1)U(z,\theta)}dz.
    \end{align*}  Then, under some regularity conditions (see Definition~\ref{defi:regularity}):
    \begin{align*}
        D_\alpha(P_{\theta'},P_{\theta}) \leq \frac{1}{\alpha-1}\log \bigg( 1 + \alpha \sup_{z \in (\theta,\theta')}I(z)^{1/2} D_\theta^{\theta'}\bigg).
    \end{align*}
\end{proposition}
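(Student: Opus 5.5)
We track the moment-generating quantity $F(z) = \mathbb{E}_{X\sim P_\theta}\big[(p_z(X)/p_\theta(X))^\alpha\big] = \int p_z^\alpha p_\theta^{1-\alpha}$ along the path $z\in[\theta,\theta']$. We have $F(\theta)=1$ and $D_\alpha(P_{\theta'},P_\theta)=\frac{1}{\alpha-1}\log F(\theta')$. It therefore suffices to show
\[
F(\theta') \le 1 + \alpha \sup_{z} I(z)^{1/2}\, D_\theta^{\theta'},
\]
which we obtain by bounding $F'(z)$ pointwise and integrating from $\theta$ to $\theta'$.

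\textbf{Step 1 (differentiation).} Under the regularity conditions of Definition~\ref{defi:regularity}, we may differentiate under the integral:
\[
F'(z) = \alpha \int \partial_z \log p_z(x)\, p_z(x)^\alpha p_\theta(x)^{1-\alpha}\,dx = \alpha\, \mathbb{E}_{X\sim P_z}\Big[\partial_z \log p_z(X)\,\big(p_z(X)/p_\theta(X)\big)^{\alpha-1}\Big].
\]

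\textbf{Step 2 (Cauchy--Schwarz under $P_z$).}
\[
|F'(z)| \le \alpha\, I(z)^{1/2}\,\Big(\mathbb{E}_{P_z}\big[(p_z/p_\theta)^{2\alpha-2}\big]\Big)^{1/2}.
\]
Next, $\mathbb{E}_{P_z}[(p_z/p_\theta)^{2\alpha-2}] = \int p_z^{2\alpha-1}p_\theta^{2-2\alpha} = \exp\big((2\alpha-2)D_{2\alpha-1}(P_z,P_\theta)\big)$. Taking the square root gives $\exp\big((\alpha-1)D_{2\alpha-1}(P_z,P_\theta)\big)\le e^{(\alpha-1)U(z,\theta)}$ by the hypothesis on $U$. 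This is exactly why the order $2\alpha-1$ appears in the statement.

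\textbf{Step 3 (integration).} Integrate $F'$ over $(\theta,\theta')$ and bound $I(z)^{1/2}$ by its supremum:
\[
F(\theta') - 1 \le \alpha \sup_z I(z)^{1/2}\int_\theta^{\theta'} e^{(\alpha-1)U(z,\theta)}\,dz.
\]
Taking $\frac{1}{\alpha-1}\log$ of both sides concludes.

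\textbf{Main obstacle.} The delicate step is justifying Step 1 rigorously. We need $F$ to be absolutely continuous on $[\theta,\theta']$ with derivative given by the interchange, which calls for a dominated-convergence bound on $\partial_z(p_z^\alpha p_\theta^{1-\alpha})$ uniformly in $z$ near each point. I would simply state these as part of the regularity conditions in Definition~\ref{defi:regularity}, namely differentiability of $z\mapsto p_z$ and an integrable envelope of the derivative, together with finiteness of $D_{2\alpha-1}(P_z,P_\theta)$ on the path. Given these, the fundamental theorem of calculus applies directly.
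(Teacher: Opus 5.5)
Your proposal is correct and is essentially the paper's proof: the paper also differentiates $G(z)=\int p_z^\alpha p_\theta^{1-\alpha}$ and applies Cauchy--Schwarz under $P_z$ to split $G'(z)$ into $\sqrt{I(z)}$ times $e^{(\alpha-1)D_{2\alpha-1}(P_z,P_\theta)}$. It then bounds the latter by $e^{(\alpha-1)U(z,\theta)}$ and integrates from $\theta$ to $\theta'$ using $G(\theta)=1$, exactly as you do. Your remark that the fundamental theorem of calculus needs absolute continuity (or an integrable derivative) of $G$ on $[\theta,\theta']$ is, if anything, more careful than the paper, which only assumes differentiability of $z\mapsto D_\alpha(P_z,P_\theta)$ on the open interval.
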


A discussion about the tightness of this upper bound can be found in Appendix~\ref{app:prop-Fisher-upper-bound}. We note that related considerations and proof techniques appear in~\citet{Karbowski2024} to bound the rates of several statistical divergences under temporal evolution of the distributions. In contrast, Proposition~\ref{prop:Fisher-upper-bound} yields a global bound after integrating along a parameter path, provided a computable upper envelope $U$ of $D_{2\alpha-1}$ is available. 

In our setting, this condition follows from the post-processing inequality. In fact, as synthetic data is a randomized function of the parameters, we have $D_{2\alpha-1}(P_v,P_w)\leq   D_{2\alpha-1}(V,W)$, yielding a closed-form upper envelope. This allows us to control the Rényi divergence in terms of the Fisher information. Crucially, if the Fisher information is small along a path, the upper bound forces $D_\alpha(P_{\theta'},P_\theta)$ to be small, which we will leverage to prove privacy amplification.

In summary, we have obtained two complementary tools to characterize privacy amplification:
\begin{itemize}
    \item The local expansion in Proposition~\ref{prop:local-fisher} will yield sharp privacy amplification heuristics in regimes where the sensitivity $\Delta$ is small.
    \item The global, non-asymptotic bound of Proposition~\ref{prop:Fisher-upper-bound} ensures rigorous guarantees. We will use it to show that privacy amplification holds for every value of $\Delta$.
\end{itemize}

\section{From Outputs to Sufficient Gram Statistics}
\label{sec:sufficient}

As explained in Section~\ref{sec:l_infty}, our goal is to quantify the privacy loss $D_\alpha(ZV,ZW)$ associated to synthetic data release, especially when $n_{\mathrm{syn}}$ is large. In this section, we show that the test between $ZV$ and $ZW$ can be reduced to a test between $V^\top V$ and $W^\top W$ when $n_{\mathrm{syn}} \to +\infty$.
We first show the following result.

\begin{restatable}{proposition}{propSufficientStats}
\label{prop:sufficient-statistics}
    Assume $n_{\mathrm{syn}}\geq d\geq k$. Then,
    \[D_\alpha(ZV,ZW) = D_\alpha(V^\top Z^\top Z V, W^\top Z^\top Z W).\]
\end{restatable}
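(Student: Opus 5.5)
We prove the two inequalities $D_\alpha(ZV,ZW) \ge D_\alpha(V^\top Z^\top Z V, W^\top Z^\top Z W)$ and $D_\alpha(ZV,ZW)\le D_\alpha(V^\top Z^\top Z V, W^\top Z^\top Z W)$ separately. The first is immediate from post-processing (data processing for Rényi divergences): the map $X\mapsto X^\top X$ applied to $X=ZV$ gives $V^\top Z^\top Z V$. The substance is the reverse inequality, which we obtain by showing that the Gram matrix $X^\top X$ is a \emph{sufficient statistic} for the family of laws of $X=ZV$ as $v$ varies. Equivalently, there is a Markov kernel from Gram matrices back to $\mathbb{R}^{n_{\mathrm{syn}}\times k}$ that does not depend on $v$; applying it to the Gram matrix recovers the law of $ZV$, and post-processing then gives the reverse inequality.

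\textbf{Symmetry.} The key fact is left orthogonal invariance. For any fixed $O\in O(n_{\mathrm{syn}})$, the matrix $OZ$ has the same law as $Z$, since its entries are i.i.d.\ standard Gaussian. As $Z$ is independent of $V$, it follows that $OZV \overset{d}{=} ZV$ for every $v$. Now let $H$ be Haar-distributed on $O(n_{\mathrm{syn}})$, independent of everything else. Then $X:=ZV$ satisfies $X\overset{d}{=} HX$, and the same holds for $Y:=ZW$.

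\textbf{Kernel construction.} We claim that, conditionally on the Gram matrix $X^\top X=G$, the variable $HX$ is distributed as $H M(G)$ for a fixed measurable choice $M(G)$, for instance $M(G)=\begin{pmatrix}G^{1/2}\\ 0\end{pmatrix}\in\mathbb{R}^{n_{\mathrm{syn}}\times k}$. This holds because two $n_{\mathrm{syn}}\times k$ matrices with the same Gram matrix differ by a left orthogonal transformation, which requires $n_{\mathrm{syn}}\ge k$. Concretely, write $X=Q\,G^{1/2}$ (polar decomposition) with $Q$ having orthonormal columns; this is valid a.s.\ since $X$ has full column rank $k$, because $n_{\mathrm{syn}}\ge d\ge k$ and $X$ has a density. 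There is then some $O_X\in O(n_{\mathrm{syn}})$ with $X=O_X M(G)$, and by right invariance of Haar measure $HX = HO_X M(G)\overset{d}{=} H M(G)$ conditionally on $X$. The kernel $K(G,\cdot)=\mathrm{Law}(HM(G))$ is therefore independent of $v$, and $K$ applied to $\mathrm{Law}(X^\top X)$ yields $\mathrm{Law}(HX)=\mathrm{Law}(X)$. Post-processing gives $D_\alpha(ZV,ZW)\le D_\alpha(V^\top Z^\top ZV,W^\top Z^\top ZW)$.

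\textbf{Main obstacle.} The delicate step is the measure-theoretic justification of the kernel construction. We must check that a measurable selection $G\mapsto M(G)$ exists and that the full-rank event has probability one. The explicit formula $M(G)$ above handles measurability. The rank condition holds almost surely because $ZV$ has a density under the non-degeneracy assumption; more directly, $Z$ has rank $d$ a.s.\ when $n_{\mathrm{syn}}\ge d$, and $V$ has rank $k$ a.s. The hypothesis $n_{\mathrm{syn}}\ge d$ (rather than only $\ge k$) enters only to guarantee this full-rank property. Everything else follows from the data-processing inequality.
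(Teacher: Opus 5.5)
Your argument is correct, but it establishes sufficiency by a different route than the paper.

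The paper writes the density of $ZV$ explicitly as a mixture over $V$ of matrix normal densities $\det(v^\top v)^{-n_{\mathrm{syn}}/2}\exp\bigl(-\tfrac12\tr((v^\top v)^{-1}y^\top y)\bigr)$. It observes that this depends on $y$ only through $y^\top y$ and invokes the Fisher--Neyman factorization theorem. You instead use left orthogonal invariance of $Z$ to build an explicit $v$-independent reverse kernel $G\mapsto\mathrm{Law}(HM(G))$, so the equality follows from two applications of data processing.

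The paper's route is a one-liner once the mixture density is written down. However, it relies on $ZV$ having a Lebesgue density, hence on $V^\top V$ being a.s.\ invertible. It also leaves implicit the step from sufficiency to equality of R\'enyi divergences, namely that the likelihood ratio is $y^\top y$-measurable. Your route needs no densities at all. It works for any law of $V$ independent of $Z$ and any left-invariant $Z$, which fits the paper's later remark on other training procedures. It also shows that the hypothesis $n_{\mathrm{syn}}\ge d$ is superfluous.

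In fact, your ``main obstacle'' is not one. For every $x\in\mathbb{R}^{n_{\mathrm{syn}}\times k}$, full rank or not, $\|xu\|=\|M(x^\top x)u\|$ for all $u$. Hence the map $M(x^\top x)u\mapsto xu$ is a well-defined isometry between column spaces and extends to some $O_x\in O(n_{\mathrm{syn}})$. You only need $\mathrm{Law}(Hx)=\mathrm{Law}(HM(x^\top x))$ for each fixed $x$, together with independence of $H$ and $X$, so no measurable selection of $O_x$ and no almost-sure rank condition is required. The only use of $n_{\mathrm{syn}}\ge k$ is to make your particular choice of $M(G)$ well defined.
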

The proof, provided in Appendix~\ref{app:prop-sufficient-statistics}, relies on showing that the map $T : X \mapsto X^\top X$ is a sufficient statistic for distinguishing between $ZV$ and $ZW$. The matrix $Z^\top Z$ follows a Wishart distribution $\mathcal{W}(d,n_{\mathrm{syn}})$ with shape $d$ and degree of freedom $n_{\mathrm{syn}}$. Then, a careful analysis allows us to connect the divergence between $ZV$ and $ZW$ to the divergence between $V^\top V$ and $W^\top W$ in the asymptotic regime.

\begin{restatable}{proposition}{propGramLimit}
\label{prop:gram-limit}
    Assume $d \geq k$. Then,
    \[D_\alpha(V^\top Z^\top Z V, W^\top Z^\top Z W) \overset{n_{\mathrm{syn}}\to \infty}{\longrightarrow} D_\alpha(V^\top V, W^\top  W).\]
    Furthermore, for all $n_{\mathrm{syn}}> 1$, \[D_\alpha(ZV,ZW) \leq D_\alpha(V^\top V, W^\top  W).\]
\end{restatable}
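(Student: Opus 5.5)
The plan is to work conditionally on $V$ and exploit the structure $V^\top Z^\top Z V$. Given $V=x$, the rows of $ZV$ are i.i.d. $\mathcal{N}(0, x^\top x)$. Hence $V^\top Z^\top Z V \mid V=x$ follows a Wishart distribution $\mathcal{W}_k(x^\top x, n_{\mathrm{syn}})$.

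\textbf{Post-processing inequality.} The key observation is a Markov chain: $V^\top Z^\top Z V$ is obtained from $S:=V^\top V$ by a randomized map that does not depend on $v$. Given $S=s$, the law of $V^\top Z^\top ZV$ is $\mathcal{W}_k(s,n_{\mathrm{syn}})$, which depends on $V$ only through $s$, and $Z$ is independent of $V$. Concretely, one can write $V^\top Z^\top Z V \overset{d}{=} S^{1/2} G^\top G S^{1/2}$ with $G\in\mathbb{R}^{n_{\mathrm{syn}}\times k}$ standard Gaussian and independent of $S$. This uses the rotational invariance of $Z$: $ZV \overset{d}{=} G (V^\top V)^{1/2}$ conditionally on $V$, since both sides have i.i.d. $\mathcal{N}(0,V^\top V)$ rows.

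The data-processing inequality for Rényi divergences then gives
\[
D_\alpha(V^\top Z^\top ZV, W^\top Z^\top ZW)\le D_\alpha(V^\top V,W^\top W).
\]
For $n_{\mathrm{syn}}\ge d$, Proposition~\ref{prop:sufficient-statistics} turns the left side into $D_\alpha(ZV,ZW)$. For smaller $n_{\mathrm{syn}}$, I would argue directly: $ZV\overset{d}{=}G S^{1/2}$ is itself a randomized function of $S$, so data processing applies to $ZV$ as well. This gives the second claim for every $n_{\mathrm{syn}}$.

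\textbf{Limit.} I would rescale to $\frac{1}{n_{\mathrm{syn}}}V^\top Z^\top ZV = S^{1/2}\big(\tfrac{1}{n_{\mathrm{syn}}}G^\top G\big)S^{1/2}$. Rescaling is a bijection, so it does not change $D_\alpha$. By the law of large numbers, $\frac{1}{n_{\mathrm{syn}}}G^\top G\to I_k$ almost surely, so the rescaled statistic converges to $S$ in distribution.

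The lower bound follows from lower semicontinuity of $D_\alpha$ under weak convergence, applied jointly to both arguments (van Erven--Harremoës):
\[
\liminf_{n_{\mathrm{syn}}\to\infty} D_\alpha \ge D_\alpha(V^\top V,W^\top W).
\]
Combined with the upper bound $\limsup\le D_\alpha(V^\top V,W^\top W)$ from the first part, this yields the limit.

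\textbf{Main obstacle.} The delicate step is the lower semicontinuity. One must make sure it applies when both distributions vary with $n_{\mathrm{syn}}$. It is also worth checking that $S$ and the Wishart mixtures have densities; this requires $d\ge k$, so that $S$ is almost surely invertible. If the general lower semicontinuity result raises doubts, I would instead use the variational (Donsker–Varadhan-type) representation of $D_\alpha$ over bounded continuous test functions. Each fixed test function passes to the limit by weak convergence, and taking the supremum afterwards gives the $\liminf$ inequality.
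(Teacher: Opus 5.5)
Your proposal is correct and follows the paper's proof essentially step for step. The upper bound comes from data processing through the same kernel $S\mapsto S^{1/2}G^\top G S^{1/2}$, justified by rotational invariance ($ZV\overset{d}{=}G(V^\top V)^{1/2}$); the matching $\liminf$ comes from the same rescaling, law of large numbers and joint lower semicontinuity of $D_\alpha$ under weak convergence \citep{vanErven2014}. Your one addition is a small improvement: you apply the kernel argument directly to $ZV$ rather than only to the Gram statistic, whereas the paper bounds $D_\alpha(V^\top Z^\top ZV,W^\top Z^\top ZW)$ and implicitly relies on Proposition~\ref{prop:sufficient-statistics} (stated for $n_{\mathrm{syn}}\ge d$) to transfer the bound to $ZV$, which leaves the case $n_{\mathrm{syn}}<d$ unaddressed.
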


This key proposition allows us to obtain privacy guarantees for any $n_{\mathrm{syn}} \in \mathbb{N}^*$ by analyzing $D_\alpha(V^\top V, W^\top  W)$. The limit depends on $V,W$ only through the Gram matrices $V^\top V$ and $W^\top W$. This reduction is the core technical simplification that unlocks closed-form upper bounds for the divergence. In particular $D_\alpha(V^\top V, W^\top  W)$ is the privacy loss associated to releasing an infinite amount of synthetic data. In other words, from a privacy perspective, \textbf{the release of arbitrarily many synthetic data is equivalent to releasing the Gram matrix $V^\top V$}. 

In the rest of the paper, we focus on deriving an upper bound for $D_\alpha(V^\top V, W^\top  W)$.
We first state a negative result: without additional assumptions, releasing infinitely many synthetic records provides no stronger privacy guarantees than releasing the generative model itself.

\begin{restatable}{proposition}{propNoFreeLunch}{\normalfont (No worst-case amplification in the $n_{\mathrm{syn}} \to + \infty$ regime).}
\label{prop:nofreelunch}
    \[\sup_{v,w \in \mathbb{R}^{d \times k}, \|v-w\|_F \leq \Delta}D_\alpha(V^\top V, W^\top  W) = D_\alpha(V,W).\]
\end{restatable}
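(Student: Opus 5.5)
The proof splits into an upper bound, which holds for every pair, and a matching lower bound obtained as a limit along a sequence of pairs.

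\textbf{Upper bound.} Since $V^\top V$ is a deterministic function of $V$ (and likewise for $W$), the data-processing inequality for Rényi divergences gives $D_\alpha(V^\top V, W^\top W) \leq D_\alpha(V,W)$ for every pair $v,w$. Moreover $D_\alpha(V,W) = \frac{\alpha}{2}\|v-w\|_F^2 \leq \frac{\alpha\Delta^2}{2}$, because $V,W$ are Gaussian with identity covariance. Hence the supremum is at most $\frac{\alpha\Delta^2}{2}$. We read the right-hand side of the statement as this worst-case value $\frac{\alpha\Delta^2}{2}$, which is attained by $D_\alpha(V,W)$ whenever $\|v-w\|_F=\Delta$.

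\textbf{Lower bound.} We push $v$ and $w$ to infinity along a common direction so that the Gram map becomes asymptotically injective and linear on the relevant region. Take $w = t u$ and $v = tu + \Delta e$, with $u,e \in \mathbb{R}^{d\times k}$ of unit Frobenius norm and $t\to\infty$. Write $V = tu + \Delta e + N$ and $W = tu + N$. Then
\[
V^\top V = t^2 u^\top u + t\,(u^\top(\Delta e + N) + (\Delta e+N)^\top u) + O(1),
\]
and similarly for $W$. Subtract the common deterministic term $t^2u^\top u$ and divide by $t$; both operations are bijections, so they leave the divergence unchanged. The rescaled statistics become $L(\Delta e + N) + O_P(1/t)$ versus $L(N) + O_P(1/t)$, where $L(x) = u^\top x + x^\top u$ is a linear map $\mathbb{R}^{d\times k}\to \mathrm{Sym}_k$.

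The Gaussian limit therefore has divergence $D_\alpha(L(\Delta e+N), L(N))$. The projection of $N$ onto $\ker L^\perp$ is still standard Gaussian, so this limit equals $\frac{\alpha}{2}\|\Pi_{\ker L^\perp}(\Delta e)\|_F^2$. Choose $e$ outside $\ker L$ so that nothing is lost; for instance, take $u = e$ equal to the matrix with a $1$ in position $(1,1)$. Then $L(e) = 2 E_{11} \neq 0$, and $e$ is orthogonal to $\ker L$ since $\langle L(x), E_{11}\rangle = 2x_{11}$. The limit is then exactly $\frac{\alpha\Delta^2}{2}$.

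\textbf{Main obstacle: passing to the limit.} The step I expect to be hardest is passing from the perturbed statistics to their Gaussian limit. Rényi divergence is only lower semicontinuous under weak convergence, but lower semicontinuity is exactly the direction needed here, since it gives $\liminf_t D_\alpha(V^\top V, W^\top W) \geq \frac{\alpha\Delta^2}{2}$. To apply it, I need the joint weak convergence of the rescaled Gram statistics, which follows because the $O(1)$ remainder becomes $O_P(1/t)$ after dividing by $t$ (Slutsky). Combined with the upper bound, this shows the supremum equals $\frac{\alpha\Delta^2}{2} = D_\alpha(V,W)$ for $\|v-w\|_F=\Delta$.

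One subtlety remains for $k>1$: the map $x \mapsto x^\top x$ loses information modulo the orthogonal group. This is why I choose $u$ with full rank on the first coordinate and rely on the local linearization rather than on global injectivity.
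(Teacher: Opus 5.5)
Your proof is correct and follows essentially the same route as the paper. Both arguments use the data-processing upper bound, the same sequence of pairs along the $(1,1)$ coordinate direction with $t\to\infty$, and lower semicontinuity of R\'enyi divergence under weak convergence. The only difference is technical: the paper post-processes to the $(1,1)$ entry and applies $x\mapsto\sqrt{x}$ to obtain Gaussian limits, whereas you center by $t^2u^\top u$ and rescale by $1/t$, which is a bijection and makes the $t$-dependent centering explicit where the paper leaves it implicit.
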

The idea of the proof is to take $v_\theta = t e_1, w_t = (t + \Delta)e_1$, where $e_1 = (\delta_{11}) \in \mathbb{R}^{d\times k}$. As $t$ tends to $+\infty$, the matrices $V^\top V$ and $W^\top W$ diverge in a way that allows an adversary to recover $V$ and $W$.

To rule out this pathological behavior, we henceforth assume that the parameters are bounded.
\begin{assumption}[Parameter space boundedness]
    We assume that there exists $C > 0$ such that for any parameters $v,w$ obtained from adjacent datasets: \[\max\{\|v\|_F,\|w\|_F\} \leq C.\]
\end{assumption}
We note that parameter boundedness is a standard assumption in the theory of private learning algorithms such as Projected Noisy Gradient Descent \citep{Feldman2018,Altschuler2022}.


\section{Privacy Amplification in Linear Synthetic Data Generation}

In this section, we build on the results of Sections~\ref{sec:fisher-renyi} and~\ref{sec:sufficient} to derive our main privacy amplification guarantees for synthetic data generation with a linear generator.

\subsection{Releasing One-Dimensional Synthetic Data}
\label{sec:linear-regression}

We begin by analyzing the simpler case $k = 1$, which corresponds to releasing one-dimensional synthetic data points.
When $k=1$, $V^\top V= \|N+v\|^2$ and $W^\top W=\|N+w\|^2$ are scalar-valued non-central chi-squared random variables with the same degree of freedom $d$ and different non-centrality parameters $\|v\|^2,\|w\|^2$. We denote this distribution by $\chi_d^2(\theta^2)$, where $\theta^2$ is the non-centrality parameter. In particular, we have $V^\top V \sim \chi_d^2(\|v\|^2)$ and $W^\top W \sim \chi_d^2(\|w\|^2)$. The probability density function of $\chi_d^2(\|v\|^2)$ is given by:
\[p_v(x) = p_0(x)e^{- \|v\|^2/2} \|v\|^{1-d/2} I_{d/2-1}\big(\|v\| \sqrt{x}\big),\]
where $I_{d/2-1}$ is the modified Bessel function of the first kind with index $d/2-1$ and $p_0$ is the density of a central chi-squared distribution with degree of freedom $d$. We denote the family of non-central chi-squared distributions with respect to the amplitude parameter as \[P := \{p_\theta= \chi_d^2(\theta^2); \; \theta \in \mathbb{R^+}\},\] and the Fisher information associated with this family as $$I(d,\theta) := \mathbb{E}_{Y \sim p_\theta}[(\partial_\theta \log p_\theta)^2].$$

The Fisher information of non-central chi-squared distribution for the amplitude parameter is not available in closed form~\citep{Idier2014}. As a result, much of the literature has focused either on deriving lower bounds~\citep{Bouhrara2018, Idier2014, Stein2016} or on numerical estimation methods~\citep{Bouhrara2018}. The literature on upper bounds remains limited, aside from the work of~\citet{Idier2014} which proposes an upper bound for the Fisher information of generalized Rician distributions (more details in Appendix~\ref{app:fisher-one-dim}). In this section, we tighten the upper bound of~\citet{Idier2014}.

\begin{restatable}{proposition}{propFisherUpperBoundOneDim}\normalfont(Fisher information of non-central $\chi^2$).
\label{prop:Fisher-upper-bound-one-dim}
For all $\theta> 0, d > 2$ and  $p_\theta \in P$, and :
    \[\frac{2\theta^2}{2\theta^2 + d} \leq I(d,\theta) \leq \frac{2\theta^2}{2\theta^2 + d-3}.\]
\end{restatable}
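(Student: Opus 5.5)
The plan is to rewrite the score in probabilistic form and then obtain each bound by a variance argument. Let $X\sim\mathcal{N}(\theta e_1, I_d)$, so that $Y=\|X\|^2\sim\chi_d^2(\theta^2)$, and write $\nu=d/2-1$. Differentiating the log-density given above and using $I_\nu'(z)=I_{\nu+1}(z)+\frac{\nu}{z}I_\nu(z)$, the terms $\pm\nu/\theta$ cancel. Writing $A_\nu=I_{\nu+1}/I_\nu$, this gives
\[
\partial_\theta\log p_\theta(Y)=\sqrt{Y}\,A_\nu(\theta\sqrt{Y})-\theta .
\]
Conditionally on $\|X\|=r$, the direction $U=X/r$ follows a von Mises--Fisher law on $S^{d-1}$ with concentration $\kappa=\theta r$. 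Its mean in direction $e_1$ is $A_\nu(\kappa)$. Hence the score equals $\mathbb{E}[X_1\mid \|X\|]-\theta$, and since it has mean zero,
\[
I(d,\theta)=\Var\big(\mathbb{E}[X_1\mid\|X\|]\big)=1-\mathbb{E}\big[\Var(X_1\mid \|X\|)\big],
\]
where the second equality is the law of total variance with $\Var(X_1)=1$.

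\textbf{Lower bound.} Use the Cauchy--Schwarz characterization $\Var(\mathbb{E}[X_1\mid S])\geq \mathrm{Cov}(X_1,g(S))^2/\Var(g(S))$, valid for any $g(S)$ measurable with respect to $S=\|X\|$. Take $g=\|X\|^2$. Gaussian moments give $\mathrm{Cov}(X_1,\|X\|^2)=\mathrm{Cov}(X_1,X_1^2)=2\theta$ and $\Var(\|X\|^2)=2d+4\theta^2$. The ratio is $4\theta^2/(2d+4\theta^2)=2\theta^2/(2\theta^2+d)$, which is exactly the claimed lower bound.

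\textbf{Upper bound.} It suffices to show $\mathbb{E}[\Var(X_1\mid\|X\|)]\geq \frac{d-3}{2\theta^2+d-3}$. The standard von Mises--Fisher identities give
\[
\Var(X_1\mid \|X\|=r)=r^2\Big(1-A_\nu(\kappa)^2-\tfrac{d-1}{\kappa}A_\nu(\kappa)\Big)=r^2A_\nu'(\kappa),\qquad \kappa=\theta r,
\]
using the Riccati equation $A_\nu'=1-A_\nu^2-\frac{2\nu+1}{\kappa}A_\nu$. I would then invoke Amos-type bounds on $A_\nu$, for instance $A_\nu(\kappa)\leq \kappa/\big(\nu-\tfrac12+\sqrt{\kappa^2+(\nu+\tfrac32)^2}\big)$ together with a matching lower bound. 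The shift $\nu-\tfrac12=(d-3)/2$ is what I expect to produce the constant $d-3$, which is also why the hypothesis $d>2$ appears. Plugging these into the Riccati expression should yield a pointwise inequality of the form
\[
r^2A_\nu'(\theta r)\geq \frac{(d-3)\,r^2}{2\theta^2 r^2+(d-3)\,c(r)}
\]
for some explicit $c(r)$. This is of the form $h(r^2)$ with $h$ monotone and concave-friendly. I would then finish with Jensen's inequality, or a direct Cauchy--Schwarz, using $\mathbb{E}\|X\|^2=d+\theta^2$ to land on $(d-3)/(2\theta^2+d-3)$.

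If this route becomes too lossy, the fallback is to bound $\mathbb{E}[Y A_\nu(\theta\sqrt Y)^2]$ directly. One would apply the upper Amos bound to $A_\nu$ inside the score formula $I=\mathbb{E}[YA_\nu(\theta\sqrt Y)^2]-\theta^2$, then use moments of the non-central $\chi^2$. This is the strategy of \citet{Idier2014}, tightened by using the sharper Amos bound.

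\textbf{Main obstacle.} The main obstacle is this last step. The Bessel-ratio inequality has to be chosen so that, after substitution into $1-A^2-\frac{d-1}{\kappa}A$, the resulting function of $r$ is convex or concave in the right direction for Jensen's inequality to close with exactly $d-3$ and no additional slack. I expect to have to try a couple of Amos variants, with the Riccati-based ones being the most natural first choice, before the algebra closes.
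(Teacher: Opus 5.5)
Your lower bound is correct and complete. The paper only quotes it from \citet{Idier2014}, while your argument proves it: the identity $\partial_\theta\log p_\theta=\mathbb{E}[X_1\mid\|X\|]-\theta$ plus Cauchy--Schwarz with $g=\|X\|^2$ is the Cram\'er--Rao bound for the statistic $Y$. The upper bound, however, has a genuine gap, and it is structural rather than a matter of choosing the right Amos variant. You need a \emph{lower} bound on $\mathbb{E}[\phi(Y)]$ with $\phi(y)=yA_\nu'(\theta\sqrt y)$. But $\phi(0)=0$ and $\phi$ is bounded, since $\phi(y)\to\frac{d-1}{2\theta^2}$ as $y\to\infty$. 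A convex function bounded above on a half-line is nonincreasing, so every convex minorant of $\phi$ is $\le 0$. Hence no pointwise Bessel bound followed by Jensen in $Y$ (or in $\|X\|$) can give anything positive. Your target form $h(y)=\frac{(d-3)y}{2\theta^2y+(d-3)c}$ is concave, so Jensen gives $\mathbb{E}[h(Y)]\le h(\mathbb{E}Y)$, which is the wrong direction. Moreover, matching $h(d+\theta^2)=\frac{d-3}{2\theta^2+d-3}$ needs $c=d+\theta^2$, whereas $A_\nu'(0)=1/d$ forces $c\ge d$ pointwise. The Cauchy--Schwarz substitute $\mathbb{E}[h(Y)]\ge(\mathbb{E}Y)^2/\mathbb{E}[Y^2/h(Y)]$ pays for $\Var(Y)=2d+4\theta^2$. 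Even with $c=d$ it only closes when $d^2-7d+(d-11)\theta^2\ge0$, which fails for every $\theta>0$ when $d\le 7$. The fallback via Amos bounds on $\mathbb{E}[YA_\nu^2]$ is unsubstantiated: it must control the difference $\mathbb{E}[YA_\nu^2]-\theta^2$ of two $O(\theta^2)$ quantities to $O(1)$ accuracy. In particular, the constant $d-3$ does not come from the Amos shift $\nu-\frac12$.

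The missing idea is a self-referential bound on $I$ itself rather than on $1-I$. Your Riccati identity together with $1/A_\nu=A_{\nu+1}+\frac{2\nu+2}{\kappa}$ gives $\kappa A_\nu'(\kappa)=A_\nu(\kappa)\bigl(1-\kappa(A_\nu-A_{\nu+1})(\kappa)\bigr)$. Now use the size-biased probability measure $dQ=\theta^{-1}\|X\|A_\nu(\theta\|X\|)\,dP$, which is normalized because the score has mean zero. With $\kappa=\theta\|X\|$ this gives $I=\mathbb{E}_Q[\kappa(A_\nu-A_{\nu+1})(\kappa)]$, which is the paper's generalized-Rician representation (Proposition~\ref{prop:FisherIdier}). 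A Tur\'an-type inequality \citep{Baricz2015} combined with \citet{Segura2011} gives $\kappa(A_\nu-A_{\nu+1})\le\frac{\kappa A_\nu}{\kappa A_\nu+(d-1)/2}$ (Lemma~\ref{lem:bessel-upper-bound}). This is concave in $\kappa A_\nu$, so Jensen now runs the right way: $I\le\frac{m}{m+(d-1)/2}$ with $m=\mathbb{E}_Q[\kappa A_\nu(\kappa)]=\mathbb{E}[\|X\|^2A_\nu(\theta\|X\|)^2]$. By your own identity $I=\mathbb{E}[YA_\nu^2]-\theta^2$, this means $m=I+\theta^2$. Substituting yields the quadratic inequality $I^2+\bigl(\theta^2+\frac{d-3}{2}\bigr)I-\theta^2\le0$, hence $I\le\frac{2\theta^2}{2\theta^2+d-3}$. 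The drop from $d-1$ to $d-3$ comes from $I$ appearing on both sides.
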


This upper bound nearly matches the lower bound of \citet{Idier2014} reproduced in the proposition. Taken together, the two bounds tightly sandwich the Fisher information within the same functional form $2\theta^2/(2\theta^2 + d)$, up to a constant shift in the denominator. In particular, as noted for example by~\citet{Idier2014}, they capture the asymptotic behavior: \[I(d,\theta) \overset{d\to + \infty}{\longrightarrow} 0,\;\; I(d,\theta) \overset{\theta\to + \infty}{\longrightarrow} 1.\]
These bounds characterize the dependence of the Fisher information on both the degrees of freedom $d$ and the non-centrality parameter $\theta^2$. They will play a crucial role in deriving upper bounds on the Rényi divergence between non-central chi-squared distributions.

Using the Fisher information bounds derived above and the local comparison principle in Proposition~\ref{prop:local-fisher}, we can derive our first key result: explicit lower and upper bounds on the Rényi divergence $D_\alpha(V^\top V, W^\top W)$.

\begin{theorem}[Privacy amplification in the $n_{\mathrm{syn}} \to + \infty$ and 
high privacy regime, $k=1$]
\label{thm:local-renyi}
    Assume without loss of generality that $\|w\|_F^2 \leq \|v\|_F^2 \leq C$. 
    We note the amplification factor:
    \[\eta_d = \frac{D_\alpha(V^\top V,W^\top W)}{D_\alpha(V,W)}.\]
    For $k=1$, we have:
    \begin{align*}
       \frac{2\|w\|^2}{2\|w\|^2 + d} + o(1) \leq \eta_d \leq \frac{2C^2}{2C^2 + d-3} + o(1),
    \end{align*}
    where the $o(1)$ term is relative to the sensitivity $\Delta$.
\end{theorem}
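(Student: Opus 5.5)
We reduce to a one-parameter family and apply the local expansion of Proposition~\ref{prop:local-fisher} twice: once to the Gram-matrix release, once to the parameter release. Since $k=1$, $V^\top V\sim\chi_d^2(\|v\|^2)$ and $W^\top W\sim\chi_d^2(\|w\|^2)$. Both laws belong to the family $P=\{\chi_d^2(\theta^2)\}$ indexed by the amplitude $\theta$, evaluated at $\theta_w=\|w\|$ and $\theta_v=\|v\|$. By the triangle inequality, $\delta:=\theta_v-\theta_w\in[0,\Delta]$.

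For the denominator, note that $V$ and $W$ are Gaussians $\mathcal{N}(v,I_d)$ and $\mathcal{N}(w,I_d)$. Hence $D_\alpha(V,W)=\frac{\alpha}{2}\|v-w\|^2$ exactly; there is no $o(\cdot)$ term here.

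For the numerator, Proposition~\ref{prop:local-fisher} gives
\[D_\alpha(V^\top V,W^\top W)=\frac{\alpha}{2}I(d,\|w\|)\,\delta^2+o(\delta^2).\]
Taking the ratio yields
\[\eta_d=I(d,\|w\|)\,\frac{(\|v\|-\|w\|)^2}{\|v-w\|^2}+o(1).\]
The $o(1)$ is relative to $\Delta$: the pair $(v,w)$ is viewed as a sequence with $\|v-w\|\to0$, and $(\|v\|-\|w\|)^2\le\|v-w\|^2$ lets us convert $o(\delta^2)$ into $o(\|v-w\|^2)$. For this to be legitimate, the remainder in Proposition~\ref{prop:local-fisher} must be uniform for $\theta$ in the compact set $[0,C]$. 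I would justify this by checking the regularity conditions of Definition~\ref{defi:regularity} via the smoothness of the Bessel-function density in $\theta$ on compact sets.

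The upper bound follows from $(\|v\|-\|w\|)^2/\|v-w\|^2\le1$ together with Proposition~\ref{prop:Fisher-upper-bound-one-dim}:
\[I(d,\|w\|)\le\frac{2\|w\|^2}{2\|w\|^2+d-3}\le\frac{2C^2}{2C^2+d-3}.\]
The second inequality holds because $x\mapsto 2x/(2x+d-3)$ is increasing. There is also a notational point: the assumption reads $\|v\|^2\le C$ but the bound uses $C^2$. I would read this as $\|v\|\le C$, consistent with the parameter-boundedness assumption.

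The lower bound is a supremum-type statement: the ratio $(\|v\|-\|w\|)^2/\|v-w\|^2$ equals $1$ exactly when $v$ and $w$ are positively colinear, i.e.\ perturbations along the radial direction. So I would interpret the lower bound as attained for the worst-case adjacent pair, i.e.\ radial perturbations with $w$ fixed. For such pairs, the lower bound of Proposition~\ref{prop:Fisher-upper-bound-one-dim} gives $\eta_d\ge 2\|w\|^2/(2\|w\|^2+d)+o(1)$.

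\textbf{Main obstacle.} The delicate step is making the local expansion rigorous at the level of uniformity. One needs the $o(\delta^2)$ remainder to be uniform over $\theta\in[0,C]$. The point $\theta=0$ also needs care, since the amplitude parametrization has a symmetric density there and $I(d,0)=0$. I would handle this through the regularity conditions, possibly restricting to $\|w\|$ bounded away from $0$ for the lower bound.
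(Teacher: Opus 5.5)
Your argument is correct and follows the route the paper intends, although the paper gives no separate written proof: the local expansion of Proposition~\ref{prop:local-fisher} for the amplitude family $\chi_d^2(\theta^2)$ at $\theta=\|w\|$, combined with the Fisher sandwich of Proposition~\ref{prop:Fisher-upper-bound-one-dim} and the exact value $D_\alpha(V,W)=\frac{\alpha}{2}\|v-w\|^2$. Your extra factor $(\|v\|-\|w\|)^2/\|v-w\|^2\le 1$, which the paper silently sets to $1$ by treating perturbations as radial, is a genuine sharpening: it shows the upper bound holds for every pair, while the lower bound can only hold for positively colinear pairs (for $\|v\|=\|w\|$, $v\neq w$, one has $\eta_d=0$); your reading $\|v\|\le C$ and your attention to uniformity of the remainder over $\|w\|\in[0,C]$ likewise address points the paper glosses over.
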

This result provides a sharp characterization of the divergence. The rightmost inequality shows that the divergence is upper bounded by the factor $2C^2/(2C^2 + d-3)$ that quantifies the degree of privacy amplification.

We observe two distinct behaviors. As $\|w\|, C \to +\infty$, the amplification factor lower bound $2C^2/(2C^2 + d)\to 1$, yielding no amplification, consistent with Proposition~\ref{prop:nofreelunch}. However, for a fixed $C > 0$, the amplification factor tends to $0$ at rate $O(d^{-1})$.  Importantly, the lower bound that we obtain shows that our analysis is sharp: this behavior is not an artifact of our proof technique. As a result, our upper bound provides a unified and tight expression capturing both amplification and post-processing. While this bound is only relevant when the sensitivity is small, this is typically the case if parameters are obtained through regularized linear regression, where $\Delta$ is of order $1/n$ \citep{chaudhuri2011differentiallyprivateempiricalrisk}. 

Nevertheless, we also derive a global upper bound based on our criterion in Proposition~\ref{prop:Fisher-upper-bound}.
\begin{restatable}{theorem}{thmUpperBoundRenyiOneDim}\normalfont (Privacy amplification in the $n_{\mathrm{syn}} \to + \infty$ regime, $k=1$).
\label{thm:upper-bound-renyi-one-dim}
Let us denote: \[f(\alpha, C,d,\Delta) = 1+\alpha \sqrt{\frac{2C^2}{2C^2 + d-3}}\frac{e^{(\alpha-1) (2\alpha-1) \frac{\Delta^2}{2}} -1}{(\alpha-1)(2\alpha-1)\Delta}.\]
 Then, for $k=1$, we have:
    \begin{align*}
        D_\alpha(V^\top V,W^\top W) 
        \leq \frac{1}{\alpha-1}\log f(\alpha,C,d,\Delta).
    \end{align*}
\end{restatable}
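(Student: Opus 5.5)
We apply Proposition~\ref{prop:Fisher-upper-bound} to the one-parameter family $P=\{\chi_d^2(\theta^2),\theta\ge 0\}$. For $k=1$ we have $V^\top V\sim\chi_d^2(\|v\|^2)$ and $W^\top W\sim\chi_d^2(\|w\|^2)$. Write $\theta_0=\|w\|$ and $\theta_1=\|v\|$, and assume without loss of generality that $\theta_0\le\theta_1$; otherwise the symmetric argument applies. The triangle inequality gives $\theta_1-\theta_0\le\|v-w\|_F\le\Delta$, and the boundedness assumption gives $\theta_1\le C$. Since we are bounding $D_\alpha(P_{\theta_1},P_{\theta_0})$, we integrate along the path $z\in(\theta_0,\theta_1)$. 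The ordering $D_\alpha(P_{\theta_0},P_{\theta_1})$ needs the reversed path; there we would either use the same argument with the roles of the endpoints swapped, reparametrising the proposition for $\theta'<\theta$, or rely on the fact that the symmetric computation yields the same envelope.

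\textbf{Fisher factor.} By Proposition~\ref{prop:Fisher-upper-bound-one-dim}, for every $z\in(\theta_0,\theta_1)\subset(0,C]$ we have $I(d,z)\le 2z^2/(2z^2+d-3)$. This expression is increasing in $z$, so $\sup_z I(d,z)^{1/2}\le\sqrt{2C^2/(2C^2+d-3)}$, which is the square-root factor in $f$.

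\textbf{Upper envelope.} For $z\in(\theta_0,\theta_1)$, the law $\chi_d^2(z^2)$ is the law of $\|N+u\|^2$ with $u=z e_1$, and $\chi_d^2(\theta_0^2)$ is the law of $\|N+\theta_0 e_1\|^2$, by rotation invariance. Post-processing by $x\mapsto\|x\|^2$ then gives
\[
D_{2\alpha-1}(P_z,P_{\theta_0})\le D_{2\alpha-1}\bigl(\mathcal N(ze_1,I),\mathcal N(\theta_0e_1,I)\bigr)=\frac{(2\alpha-1)(z-\theta_0)^2}{2}=:U(z,\theta_0).
\]
Bounding $(z-\theta_0)^2\le\Delta(z-\theta_0)$, which holds since $z-\theta_0\le\Delta$, gives
\[
D_{\theta_0}^{\theta_1}\le\int_0^{\Delta}e^{(\alpha-1)(2\alpha-1)\Delta s/2}\,ds=\frac{e^{(\alpha-1)(2\alpha-1)\Delta^2/2}-1}{(\alpha-1)(2\alpha-1)\Delta/2}.
\]
This linearisation is what produces the closed form in $f$. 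It overshoots the stated denominator by a factor $2$, so to reach exactly $f$ I would use a convexity/chord comparison adapted to the exact constant. Alternatively, I would check whether the intended envelope keeps $\Delta^2$ in the exponent, with the remaining constant absorbed via $\Delta/2$ scaling of the path.

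\textbf{Conclusion and main obstacle.} Substituting both bounds into Proposition~\ref{prop:Fisher-upper-bound} gives $D_\alpha(V^\top V,W^\top W)\le\frac{1}{\alpha-1}\log f(\alpha,C,d,\Delta)$, because the right-hand side is monotone in the sup-Fisher term and in $D_\theta^{\theta'}$. The main obstacle is verifying the regularity conditions of Definition~\ref{defi:regularity} for the non-central $\chi^2$ family. This requires differentiability of $z\mapsto p_z$ and dominated interchange of derivative and integral in $\int p_z^\alpha p_{\theta_0}^{1-\alpha}$. The endpoint $\theta_0=0$ also needs care, since the Bessel representation has $I(d,0)=0$. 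I would handle both by using the integral representation $p_z(x)=\mathbb E_{\phi}[\cdot]$ over the sphere, which gives smooth dependence on $z$ with Gaussian-dominated derivatives. A secondary task is matching the exact constant in $f$ described above.
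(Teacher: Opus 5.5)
Your argument is the paper's argument. You apply Proposition~\ref{prop:Fisher-upper-bound} to the amplitude-parametrised family $\chi^2_d(z^2)$, bound $\sup_z\sqrt{I(d,z)}$ by Proposition~\ref{prop:Fisher-upper-bound-one-dim} (monotone in $z\le C$), and take the envelope $U(z,\theta_0)=(2\alpha-1)(z-\theta_0)^2/2$ from post-processing against Gaussians. Your linearisation $(z-\theta_0)^2\le\Delta(z-\theta_0)$ replaces the paper's Lemma~\ref{lem:gauss-simple-bound}. Applied correctly, with exponent constant $(\alpha-1)(2\alpha-1)/2$ and $x=\Delta$, that lemma gives exactly your expression $\frac{2(e^{(\alpha-1)(2\alpha-1)\Delta^2/2}-1)}{(\alpha-1)(2\alpha-1)\Delta}$. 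The ordering issue you worry about is harmless: the Cauchy--Schwarz step bounds $|\partial G|$, so the argument runs identically when $\|v\|<\|w\|$.

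The factor $2$ you found is genuine, and the paper's proof does not remove it either. Its displayed chain loses track of the $1/2$ in the Gaussian R\'enyi divergence: it disappears from the exponent, a stray $2$ appears in the denominator, and in the last line the exponent is corrected but the prefactor is not. The paper's own Gaussian sanity check in Appendix~\ref{app:prop-Fisher-upper-bound} carries the factor $2$. The patches you suggest cannot close this gap. Since $U\ge 0$, we have $D_{\theta_0}^{\theta_1}\ge\theta_1-\theta_0$, which equals $\Delta$ for admissible collinear $v,w$. On the other hand, with $y=(\alpha-1)(2\alpha-1)\Delta^2/2$,
\[
\frac{e^{y}-1}{(\alpha-1)(2\alpha-1)\Delta}=\frac{\Delta}{2}\cdot\frac{e^{y}-1}{y}<\Delta\qquad\text{whenever } e^{y}-1<2y,
\]
that is, for all small $\Delta$. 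So no sharper estimate of the integral (convexity, chord, or otherwise) can produce the stated term. Rescaling the path linearly also gains nothing, because the Fisher information scales quadratically while the path length scales inversely, leaving $\sup_z\sqrt{I(z)}\cdot D_{\theta_0}^{\theta_1}$ unchanged. Consequently your concluding sentence, which asserts $\frac{1}{\alpha-1}\log f(\alpha,C,d,\Delta)$, does not follow from the bounds you derived. What this route actually establishes, yours and the paper's alike, is the theorem with the second term of $f$ doubled. You should state that version rather than chase the constant.
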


Due to our proof technique, we obtain a $o(d^{-1/2})$ rate, relying on the asymptotic equivalence $\log(1+x) \sim x$ when $d$ is small relative to the other considered quantities. However, Theorem~\ref{thm:local-renyi} suggests that a $o(d^{-1})$ exact rate is achievable.

\subsection{Releasing Multi-Dimensional Synthetic Data}
\label{sec:multilinear-regression}

We now turn to the evaluation $D_\alpha(V^\top V, W^\top W)$ in the general case where $k>1$. In this multi-dimensional case, the matrix $V^\top V$ now follows a non-central Wishart distribution $\mathcal{W}(k,d,v^\top v)$ with size $k$, degree of freedom $d$ and non-centrality parameter $\Omega_v = v^\top v$. When $d \geq k$, $V^\top V$ admits a density~\citep{Muirhead2009}:
\[P_v(y) =  P_0(y) e^{-tr(\Omega_v)/2} {}_0F_1\Big(\frac{d}{2}, \frac{1}{4}y^{1/2} \Omega_v y^{1/2} \Big),\]
where $P_0$ denotes the density of the (central) Wishart distribution $\mathcal{W}(k,d)$ of size $k$ and with degree of freedom $d$, and ${}_0F_1$ is the generalized hypergeometric function of a matrix argument. The function ${}_0F_1$ can be characterized as a series involving zonal polynomials of a matrix argument. However, these latter polynomials are notoriously hard to handle~\citep{Koev2006}, making direct computation of $D_\alpha(V^\top V,W^\top W)$ intractable.
Furthermore, the Fisher information upper bounds of Proposition~\ref{prop:Fisher-upper-bound-one-dim} do not readily extend to the $k>1$ setting. In fact, although the diagonal elements of $V^\top V$ are independent non-central chi-squared random variables, the non-diagonal elements are not independent, and follow a substantially more complex joint distribution. To our knowledge, there is no existing analysis in the literature characterizing the Fisher information of non-central Wishart distributions with respect to the non-centrality parameter.

Let $v = U_v^\top \Sigma_v S_v, w = U_w^\top \Sigma_w S_w$ denote the singular value decomposition of $v$ and $w$, and define the Gram operator $G(M) = M^\top M$. Leveraging the invariance of Gaussian matrices by orthogonal transformations, we obtain the following result.

\begin{lemma}[Rényi divergence equivalence along the SVD]
Let $G_v = G((N+\Sigma_v)S_v)$, $G_w = G((N+\Sigma_v)S_w)$. Then,
\[D_\alpha(V^\top V, W^\top W) = D_\alpha (G_v,G_w).\]
\end{lemma}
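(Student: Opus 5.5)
The key observation is that $D_\alpha(V^\top V, W^\top W)$ depends only on the two \emph{marginal} laws of $V^\top V$ and $W^\top W$. No coupling between them enters. So it suffices to show separately that $V^\top V \overset{d}{=} G_v$ and $W^\top W \overset{d}{=} G_w$. Here I read $G_w$ as $G((N+\Sigma_w)S_w)$, which is the natural reading for the $w$-side. The equality of Rényi divergences then follows immediately, since $D_\alpha(P,Q)$ is a functional of the pair of distributions only.

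I would use the full SVD $v = U_v^\top \Sigma_v S_v$ with the following shapes:
\begin{itemize}
\item $U_v \in \mathbb{R}^{d\times d}$ orthogonal;
\item $\Sigma_v \in \mathbb{R}^{d\times k}$ rectangular diagonal, which is well defined since $d\geq k$;
\item $S_v\in\mathbb{R}^{k\times k}$ orthogonal.
\end{itemize}
Writing $V = v+N = U_v^\top(\Sigma_v S_v + U_v N)$, the orthogonality of $U_v$ gives
\[V^\top V = (\Sigma_v S_v + U_v N)^\top(\Sigma_v S_v + U_v N).\]
Next I factor out $S_v$ on the right, using $S_v S_v^\top = I_k$:
\[\Sigma_v S_v + U_v N = (\Sigma_v + U_v N S_v^\top) S_v,\]
so that $V^\top V = G\big((\Sigma_v + U_v N S_v^\top)S_v\big)$.

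The only probabilistic input is the bi-orthogonal invariance of the standard Gaussian matrix. For fixed orthogonal $A\in O(d)$ and $B\in O(k)$, the matrix $ANB$ again has i.i.d. $\mathcal{N}(0,1)$ entries. To see this, note that $\mathrm{vec}(ANB) = (B^\top\otimes A)\mathrm{vec}(N)$, and $B^\top\otimes A$ is orthogonal, so the standard Gaussian vector $\mathrm{vec}(N)$ is mapped to a standard Gaussian vector. Applying this with $A=U_v$ and $B=S_v^\top$ gives $U_vNS_v^\top \overset{d}{=} N$. Since $M\mapsto G((\Sigma_v+M)S_v)$ is a measurable map, it follows that $V^\top V\overset{d}{=} G((N+\Sigma_v)S_v) = G_v$. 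The identical argument with $(U_w,\Sigma_w,S_w)$ gives $W^\top W\overset{d}{=}G_w$.

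I do not expect a real obstacle here. The point requiring care is conceptual rather than computational: the rotations $U_v$ and $U_w$ are generally different, so a single change of variables cannot align both. That is why the argument must go through equality in law of each marginal separately rather than through a joint transformation. I would also remark that the argument extends to any $d\geq k$ without needing invertibility of $v$. Degenerate singular values are allowed, since the SVD always exists with orthogonal $U_v$ and $S_v$.
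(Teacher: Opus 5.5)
Your proof is correct and matches the paper's approach. The paper gives no separate proof and only invokes orthogonal invariance of the Gaussian matrix $N$, as in its appendix step $(N+U_\theta\mu(\theta))^\top(N+U_\theta\mu(\theta)) \overset{d}{=} (N+\mu(\theta))^\top(N+\mu(\theta))$, applied to each marginal separately exactly as you do; your reading $G_w = G((N+\Sigma_w)S_w)$ is the intended one, since with the literal $\Sigma_v$ the claim fails in general (e.g.\ $S_w=S_v$ and $\Sigma_v\neq\Sigma_w$ make the right-hand side vanish).
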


The distribution of $V^\top V$ is invariant under left multiplication of $v$ by orthogonal matrices. 
Then, distinguishing between $V^\top V$ and $W^\top W$ is equivalent to distinguishing between between the corresponding non-centrality matrices $G(\Sigma_v)$ and $G(\Sigma_w S_w S_v^\top)$. In order to do this, we introduce a smooth parametrization between these two non-centralities and derive a upper bound on the associated Fisher information. This construction leverages the geometry  and the invariance of the non-centrality parameter:
$v \overset{(1-\theta) v + \theta U w}{\longrightarrow} w$, for any orthogonal matrix $U \in O(d)$.

Our analysis relies on the decomposition of $V^\top V = \sum_{i=1}^d V_i^\top V_i$ into a sum of independent non-central Wishart matrices. Then, we aggregate these terms into $\lfloor d/k \rfloor$ non-central Wishart matrices with rank $1$ non-centrality parameters. Coupled with a post-processing argument, this allows to relate the considered Fisher information to the Fisher information of non-central chi-squared distributions, which has been analyzed in Section~\ref{sec:linear-regression}.

Below, we get a upper bound of Fisher information for the straight line parametrization.

\begin{restatable}{theorem}{propFisherUpperBoundMultiDim}\normalfont (Fisher information of non-central Wishart distributions).
\label{prop:fisher-upper-bound-multi-dim}
Assume $d \geq k$. For $v,w \in \mathbb{R}^{d \times k}$, there exists a smooth path $\theta \in [0,1] \mapsto v_\theta \in \mathbb{R}^{d \times k}$ for the family of distributions: \[P := \{p_{\theta} = \mathcal{W}(k,d,v_\theta^\top v_\theta); \; \theta \in [0,1]\},\] such that $p_{0} = p_v$, $p_{1} = p_w$, and for all $\theta \in [0,1]$, the Fisher information $I(\theta)$ of the family of distribution $P$ with respect to  $\theta$ verifies: 
    \[I(\theta) \leq \inf_{U\in O(d)}\|v - U w\|_F^2  \frac{2\|v_\theta\|_F^2}{2\|v_\theta\|_F^2+ \lfloor d/k \rfloor - 3},\]
    with $\|v_\theta\|_F = \|(1-\theta) v + \theta Uw\|_F$.
\end{restatable}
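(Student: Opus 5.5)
The plan is to take the straight-line path after an optimal orthogonal alignment, to reduce the Fisher information of the Gram family to that of a Gaussian location model sitting inside the Gram statistic, and then to bound the latter using Proposition~\ref{prop:Fisher-upper-bound-one-dim}. The chi-square-type reduction in the middle step is the part I am least sure of.

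\textbf{Path and endpoints.} Let $U^\star\in O(d)$ attain $\inf_{U\in O(d)}\|v-Uw\|_F$; it exists because $O(d)$ is compact, and it is the orthogonal Procrustes solution. Set $v_\theta=(1-\theta)v+\theta U^\star w$. The law $\mathcal{W}(k,d,v_\theta^\top v_\theta)$ depends on $v_\theta$ only through $v_\theta^\top v_\theta$. Since $(U^\star w)^\top U^\star w=w^\top w$, we get $p_0=p_v$ and $p_1=p_w$, which is the same left-orthogonal invariance used in the preceding lemma. The velocity $\dot v_\theta=U^\star w-v$ is constant, with $\|\dot v_\theta\|_F^2=\inf_U\|v-Uw\|_F^2$. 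This explains the prefactor in the bound. It therefore suffices to show that, at each fixed $\theta$, the Fisher information of $\mathcal{W}(k,d,\Omega)$ along the direction $\dot v_\theta$ is at most $\|\dot v_\theta\|_F^2\cdot 2\|v_\theta\|_F^2/(2\|v_\theta\|_F^2+\lfloor d/k\rfloor-3)$.

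\textbf{Freezing a convenient frame.} Fisher information at $\theta$ depends only on the family near $\theta$, and it is unchanged by a bijective reparametrisation of the observation. So at a fixed $\theta_0$ I may do two things. First, replace $v_\theta$ by $Qv_\theta R$ with $Q\in O(d)$ and $R\in O(k)$ fixed; the left factor leaves the law unchanged and the right factor conjugates the Gram, which is a bijection. Second, refine the observation, since Fisher information is monotone under post-processing and the Gram statistic is a function of any finer statistic. Let $m=\lfloor d/k\rfloor$. I will choose $R$ to diagonalise $v_{\theta_0}^\top v_{\theta_0}$, and $Q$ so that $Qv_{\theta_0}R$ is supported on $m$ disjoint row blocks of size $k$, each block carrying a rank-one piece of the mean.

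Write $V_\theta^\top V_\theta=\sum_{j=1}^m B_j^\top B_j+\text{(leftover rows)}$, where the summands are independent non-central Wishart matrices. Conditionally on the directions, the information about $\theta$ then flows through a scalar of the form $\sum_j\|b_j+g_j\|^2$, which is non-central $\chi^2$ with $m$ degrees of freedom and amplitude $\|v_\theta\|_F$. Here the $g_j$ are Gaussian. Applying Proposition~\ref{prop:Fisher-upper-bound-one-dim} with $m$ in place of $d$ and amplitude $\|v_\theta\|_F$ gives the factor $2\|v_\theta\|_F^2/(2\|v_\theta\|_F^2+m-3)$. The chain rule, with $|\partial_\theta\|v_\theta\|_F|\le\|\dot v_\theta\|_F$ for the amplitude component, gives the $\|\dot v_\theta\|_F^2$ factor.

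\textbf{Main obstacle.} The delicate step is the middle reduction. The Gram matrix is \emph{coarser} than the block statistics, so post-processing only lets me pass from Gram to something finer, never to a chi-square that is not a function of the Gram. I would try to split the score of the Gram family into two parts: an amplitude component, which is controlled through the chi-square bound above, and an "angular" component, for which the direction of $\dot v_\theta$ is orthogonal to $v_\theta$. The angular component is the one that the optimal alignment $U^\star$ should make harmless.

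To treat it, I would first compute the score via the conditional-expectation identity $\partial_\theta\log p_\theta(y)=\mathbb{E}[\langle N,\dot v_\theta\rangle\mid V_\theta^\top V_\theta=y]$, which holds because the Gram is a function of the Gaussian $V_\theta$. I would then apply Jensen's inequality blockwise, conditioning each block on the rotation-invariant quantities. If the angular part cannot be absorbed this way, the fallback is to bound it by the Gaussian post-processing bound $\|\dot v_\theta\|_F^2$. That would lose the amplification factor, so obtaining the stated form really hinges on showing that the angular contribution also inherits the $2\|v_\theta\|_F^2/(2\|v_\theta\|_F^2+m-3)$ factor.
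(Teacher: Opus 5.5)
There is a genuine gap, and it is the step you flag yourself: as written, neither the amplitude part nor the angular part is actually bounded for $k>1$. You have not exhibited a statistic that is $\chi^2_m$ with amplitude $\|v_\theta\|_F$ and whose Fisher information dominates that of the Gram. The natural candidate $\tr(V_\theta^\top V_\theta)$ is $\chi^2_{dk}(\|v_\theta\|_F^2)$, and a scalar extracted from your blocks is a coarsening of them, so it can only give a lower bound. The reduction that works uses post-processing in the legitimate direction, but with a different layout than your $m$ blocks of $k$ rows. In the frozen SVD frame the mean $\mu=\Sigma S$ sits on its first $k$ rows. You then form $k$ independent packs $\omega_i$, where pack $i$ consists of the mean row $\mu_i$ plus $m-1=\lfloor d/k\rfloor-1$ pure-noise rows. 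Each $\omega_i\sim\mathcal{W}(k,m,\mu_i\mu_i^\top)$ has a rank-one non-centrality, and the Gram equals $\sum_i\omega_i$ plus independent noise. Post-processing and additivity then give $I_{\mathrm{Gram}}(\theta)\le\sum_{i=1}^k I_{\omega_i}(\theta)$. The amplitude in pack $i$ is $\rho_i=\|\mu_i\|$, and $\|v_\theta\|_F$ enters only at the end, through $\rho_i\le\|v_\theta\|_F$ and monotonicity of $x\mapsto 2x/(2x+m-3)$.

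The optimal alignment also does not make the angular part harmless. Since $(Uw)^\top Uw=w^\top w$, the choice of $U$ can at most remove the left-rotation component of $\dot v_\theta$ (directions $Av_\theta$ with $A$ skew), and that component carries no information anyway. The row directions of the non-centrality must still rotate in $\mathbb{R}^k$, unavoidably so when $v^\top v=e_1e_1^\top$ and $w^\top w=e_2e_2^\top$. Your fallback bound $\|\dot v_\theta\|_F^2$ would then lose the amplification factor entirely.

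The missing idea is an exact computation for each rank-one pack. Write $\mu_i=\rho_iu_i$ with $\|u_i\|=1$, so that $\dot u_i\perp u_i$, and let $X=G+e_1\mu_i^\top$. The rank-one ${}_0F_1$ reduces to a Bessel function of $q=\|Xu_i\|^2\sim\chi^2_m(\rho_i^2)$, and the score is
\[
\dot\rho_i\bigl(\sqrt{q}\,R_{m/2-1}(\rho_i\sqrt{q})-\rho_i\bigr)+\frac{\rho_i\,r}{\sqrt{q}}\,R_{m/2-1}(\rho_i\sqrt{q}),
\]
where $r=\langle Xu_i,X\dot u_i\rangle$ and $R_\nu$ is the modified Bessel quotient. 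Because $X\dot u_i$ is centered Gaussian and independent of $Xu_i$, we have $\mathbb{E}[r\mid q]=0$ and $\mathbb{E}[r^2\mid q]=q\|\dot u_i\|^2$. The cross term therefore vanishes and
\[
I_{\omega_i}(\theta)=\dot\rho_i^2\,I(m,\rho_i)+\rho_i^2\|\dot u_i\|^2\,\mathbb{E}\bigl[R_{m/2-1}(\rho_i\sqrt{q})^2\bigr].
\]
Both $y\mapsto y$ and $y\mapsto R_{m/2-1}(\rho_i\sqrt{y})^2$ are increasing. Chebyshev's integral (association) inequality, the mean-zero score identity, and Proposition~\ref{prop:Fisher-upper-bound-one-dim} then give
\[
\mathbb{E}\bigl[R_{m/2-1}(\rho_i\sqrt{q})^2\bigr]\le\frac{\mathbb{E}\bigl[q\,R_{m/2-1}(\rho_i\sqrt{q})^2\bigr]}{\mathbb{E}[q]}=\frac{I(m,\rho_i)+\rho_i^2}{m+\rho_i^2}\le\frac{2\rho_i^2}{2\rho_i^2+m-3}.
\]
Hence $I_{\omega_i}(\theta)\le\|\dot\mu_i\|^2\,2\rho_i^2/(2\rho_i^2+m-3)$. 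Summing over $i$ with $\sum_i\|\dot\mu_i\|^2\le\|\dot v_\theta\|_F^2$ yields the theorem.

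Your path and the idea of freezing an SVD frame at $\theta_0$ are fine. At $\theta_0$ only $\Omega_{\theta_0}$ and $\dot\Omega_{\theta_0}$ matter, so the part of the frozen velocity outside the first $k$ rows can be dropped.
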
  

Using the Fisher information upper bound derived above and integrating along the path, we can upper bound the Rényi divergence using the general Fisher information criterion of Proposition~\ref{prop:Fisher-upper-bound}. Leveraging these results, we can upper bound the privacy loss associated to releasing an infinite number of synthetic data for $k>1$.

\begin{restatable}{theorem}{thmUpperBoundRenyiMultiDim}{\normalfont (Privacy in the $n_{\mathrm{syn}} \to + \infty$ regime, $k>1$).} \label{thm:upper-bound-renyi-multiple-dim}Assume without loss of generality that $\|w\|_F ,\|v\|_F \leq C$. We note the amplification factor:
    \[\textstyle \eta_{d,k} = \frac{D_\alpha(V^\top V,W^\top W)}{D_\alpha(V,W)}.\]
    For $k>1$, we have:
    \begin{align*}
       & \eta_{d,k} \leq  \frac{2C^2}{2C^2 + \lfloor d/k \rfloor - 3}+ o(1),
    \end{align*}
    where the $o(1)$ term is relative to the sensitivity $\Delta$. 
    
    Moreover, we have the general upper bound:
    \[D_\alpha(V^\top V,W^\top W) \leq \frac{1}{\alpha-1}\log f(\alpha,C,\lfloor d/k \rfloor,\Delta).\]
\end{restatable}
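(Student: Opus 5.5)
The plan is to combine the Fisher information bound of Theorem~\ref{prop:fisher-upper-bound-multi-dim} with the two Fisher--Rényi tools of Section~\ref{sec:fisher-renyi}. Throughout, $D_\alpha(V,W)=\frac{\alpha}{2}\|v-w\|_F^2$ because $V,W$ are Gaussians with identity covariance. I take the path $v_\theta=(1-\theta)v+\theta U w$, with $U\in O(d)$ a minimizer of $\|v-Uw\|_F$; such a minimizer exists since $O(d)$ is compact. By the convexity of the Frobenius norm, $\|v_\theta\|_F\le (1-\theta)\|v\|_F+\theta\|w\|_F\le C$. The map $x\mapsto 2x/(2x+m)$ is increasing, so Theorem~\ref{prop:fisher-upper-bound-multi-dim} gives the uniform bound
\[
\sup_{\theta\in[0,1]} I(\theta)\le \delta^2\,\frac{2C^2}{2C^2+\lfloor d/k\rfloor-3},
\qquad
\delta:=\inf_{U\in O(d)}\|v-Uw\|_F\le \|v-w\|_F\le\Delta .
\]

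\textbf{Local bound.} By left-orthogonal invariance of the Wishart law (the lemma before Theorem~\ref{prop:fisher-upper-bound-multi-dim}), $p_1=p_{Uw}=p_w$. I apply Proposition~\ref{prop:local-fisher} to the one-parameter family $\theta\mapsto p_\theta$. To match the $o(\Delta^2)$ convention, I reparametrize by arclength $s=\theta\delta$, so the Fisher information becomes $I(\theta)/\delta^2$. This gives
\[
D_\alpha(V^\top V,W^\top W)=\frac{\alpha}{2}\,\frac{I(0)}{\delta^2}\,\delta^2+o(\delta^2)\le \frac{\alpha}{2}\,\delta^2\,\frac{2C^2}{2C^2+\lfloor d/k\rfloor-3}+o(\Delta^2).
\]
Dividing by $D_\alpha(V,W)=\frac{\alpha}{2}\|v-w\|_F^2$ and using $\delta\le\|v-w\|_F$ yields the claimed bound on $\eta_{d,k}$. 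The remainder is only $o(1)$ when it is $o(\|v-w\|_F^2)$, which is the regime where $\Delta\to0$ with $\|v-w\|_F\asymp\Delta$. This matches how the $k=1$ statement (Theorem~\ref{thm:local-renyi}) is phrased, so I will read the $o(1)$ in the same sense.

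\textbf{Global bound.} I apply Proposition~\ref{prop:Fisher-upper-bound}, again in the arclength parameter $s\in[0,\delta]$. The envelope comes from post-processing. The distribution $p_s$ is that of $G(N+v_{s/\delta})$, so
\[
D_{2\alpha-1}(p_s,p_0)\le D_{2\alpha-1}\big(N+v_{s/\delta},\,N+v\big)=\frac{2\alpha-1}{2}\,s^2=:U(s,0).
\]
The integral then satisfies
\[
D_0^\delta=\int_0^\delta e^{(\alpha-1)(2\alpha-1)s^2/2}\,ds\le \delta\, e^{(\alpha-1)(2\alpha-1)\delta^2/2}.
\]
For the exact form of $f$ I will instead reproduce the computation from the proof of Theorem~\ref{thm:upper-bound-renyi-one-dim}, which gives a bound of the shape $\frac{e^{(\alpha-1)(2\alpha-1)\Delta^2/2}-1}{(\alpha-1)(2\alpha-1)\Delta}$. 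Multiplying by $\sup_s I(s)^{1/2}\le\sqrt{2C^2/(2C^2+\lfloor d/k\rfloor-3)}$ gives the statement with $d$ replaced by $\lfloor d/k\rfloor$. Monotonicity in $\delta\le\Delta$ then lets me replace $\delta$ by $\Delta$; I need to verify that the expression is nondecreasing in its last argument, which holds because $\frac{e^{cx^2}-1}{x}$ is increasing.

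\textbf{Main obstacle.} The hard part is justifying the regularity conditions of Definition~\ref{defi:regularity} for the non-central Wishart family along this path. Differentiability under the integral and finiteness of the moments require controlling ${}_0F_1$ of a matrix argument. I would handle this by working on the level of $N+v_\theta$, where the family is Gaussian and therefore smooth, and then passing to $G$ through conditional expectation of the score. The other delicate point is aligning the integral in $D_0^\delta$ with the exact expression inside $f$, which may require redoing the one-dimensional calculation rather than just quoting it.
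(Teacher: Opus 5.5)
Your plan is the argument the paper intends. Its appendix subsection for this theorem is empty; the main text says to integrate the bound of Theorem~\ref{prop:fisher-upper-bound-multi-dim} along the path via Proposition~\ref{prop:Fisher-upper-bound}, as in the proof of Theorem~\ref{thm:upper-bound-renyi-one-dim}, with Proposition~\ref{prop:local-fisher} giving the local factor. Your path $(1-\theta)v+\theta Uw$, the bound $\|v_\theta\|_F\le C$, the identity $p_{Uw}=p_w$, and the envelope $D_{2\alpha-1}(p_s,p_0)\le(2\alpha-1)s^2/2$ are all fine. The regularity you single out as the main obstacle is already handled by Proposition~\ref{prop:regularity-wishart}, through the Stiefel integral representation of ${}_0F_1$; it carries over with $w$ replaced by $Uw$. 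Two small points. First, the monotonicity you need is that of $x\mapsto 2x/(2x+\lfloor d/k\rfloor-3)$, which requires $\lfloor d/k\rfloor\ge 3$. Second, if you keep the remainder as $o(\delta^2)$ instead of converting it to $o(\Delta^2)$, then $o(\delta^2)/\|v-w\|_F^2\le o(\delta^2)/\delta^2=o(1)$, so you do not need $\|v-w\|_F\asymp\Delta$.

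The step that does not go through is recovering $f$ exactly. With your envelope, the quantity to control is $\int_0^\delta e^{cs^2/2}\,ds$ with $c=(\alpha-1)(2\alpha-1)$, and in the worst case $\delta=\Delta$. But
\[
\frac{e^{c\Delta^2/2}-1}{c\Delta}=\frac{1}{\Delta}\int_0^\Delta s\,e^{cs^2/2}\,ds<\int_0^\Delta e^{cs^2/2}\,ds .
\]
So the factor appearing in $f$ is a strict \emph{lower} bound on the integral you must control, and no estimate of that integral can produce $f$. What Lemma~\ref{lem:gauss-simple-bound} actually gives is $\frac{2(e^{c\Delta^2/2}-1)}{c\Delta}$, hence
\[
D_\alpha(V^\top V,W^\top W)\le\frac{1}{\alpha-1}\log\Bigl(1+2\alpha\sqrt{\tfrac{2C^2}{2C^2+\lfloor d/k\rfloor-3}}\,\frac{e^{c\Delta^2/2}-1}{c\Delta}\Bigr),
\]
which is $f$ with an extra factor $2$ in front of $\alpha$. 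The one-dimensional computation you plan to copy reaches $f$ only through inconsistent bookkeeping. The $\tfrac12$ in the exponent is dropped in one line, a spurious $2$ appears in the denominator in the next, and the final display restores the $\tfrac12$ but not the lost factor. The Gaussian sanity check right after Proposition~\ref{prop:app-renyi-criterion} correctly keeps the factor $2\alpha$. The amplification rate in $\lfloor d/k\rfloor$ is unaffected, but you should state this constant rather than claim $f$. Your cruder estimate $\delta e^{c\delta^2/2}$ does not rescue it: it is already about twice the $f$-factor for small $\delta$ and far larger for large $\delta$.
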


This upper bound establishes a privacy amplification phenomenon for releasing $k$-dimensional synthetic data. The regime $d \geq k$ is practically relevant, as in many modern regimes, the model is overparametrized, with $d \gg k$.

\begin{remark}
     
    We conjecture that our Fisher information upper bound is improvable. Some looseness might arise from our post-processing technique, which decomposes the non-centrality matrix into rank $1$ matrices. In particular, sharper Fisher information upper bounds might be obtained by leveraging the structure of the hypergeometric function of a matrix argument ${}_0F_1$. In Appendix~\ref{app:improvement-fisher-multiple}, we outline such an approach in the case $k=1$, providing an alternative proof of the Fisher information bound from Proposition~\ref{prop:Fisher-upper-bound-one-dim}, which we believe could be extended to non-central Wishart distributions. We then discuss the additional ingredients required to generalize the argument to the case $k>1$.
\end{remark}

\section{Experiments}
\label{sec:experiments}
In this section, we perform experiments to illustrate our theoretical results. While sampling from $ZV$ or $V^\top V$ is straightforward, their densities are analytically intractable. Therefore, we rely on variational inference methods to estimate the Rényi divergences. Specifically, we numerically estimate the privacy loss using the \emph{Convex-Conjugate Rényi Variational Formula}, introduced by~\citet[][(Theorem 2.1 therein]{birrell2023}. In order to estimate the Rényi divergence between two distributions $P$ and $Q$, we proceed as follows. We initialize a two-layer neural network $f(w_0,\cdot)$ with negative poly-softplus activation~\citep{birrell2023} and train it to maximize:
\[\sup_{w} \Big\{ \int f(w)dQ + \frac{1}{\alpha - 1} \log \int |f(w)|^{(\alpha-1)/\alpha} dP \Big\}.\]
Optimization is performed via minibatch stochastic gradient ascent. At each iteration, we sample $X = (X_1,\dots, X_l) \sim P^{\otimes l}$, $Y = (Y_1,\dots,Y_l) \sim Q^{\otimes l}$ for some batch size $l > 0$ and update the parameters using the empirical objective:
\begin{align*}
    F(w,X,Y) &= \textstyle\frac{1}{l}\sum_{i=1}^lf(w,Y_i)\\ & +  \textstyle\frac{1}{\alpha - 1}\log\Big(- \frac{1}{l}\sum_{i=1}^l f(w,X_i)^{(\alpha-1)/\alpha}\Big).
\end{align*}
A more detailed discussion can be found in Appendix~\ref{app:experiments}.

\subsection{Estimating $D_\alpha(ZV,ZW)$ as a function of $n_{\text{syn}}$}
We estimate the divergence $D_\alpha(ZV,ZW)$ as a function of the number of released synthetic samples $n_{\text{syn}}$, in order to qualitatively assess its convergence behavior. Remarkably, Figure~\ref{fig:divergence-nsyn} shows that the Rényi divergence rapidly converges to a plateau, whose value is strictly lower than that of the corresponding post-processing privacy guarantee. Empirically, the number of synthetic samples required to reach the plateau appears to increase with $d$. The different values of the plateaus (dotted lines) highlight the amplification, which is significant even for the small values of $d \in (1,10)$ considered here. This empirical behavior illustrates our theoretical finding that privacy amplification persists under unlimited data release, and validates our focus on the infinite synthetic data release regime.

\begin{figure}[h]
    \centering
    \includegraphics[width=.99\linewidth]{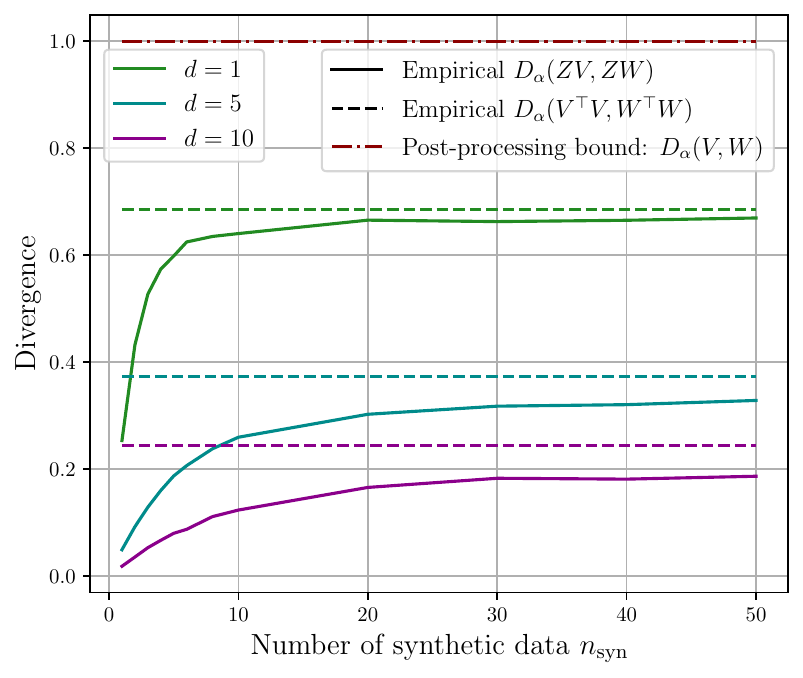}
    \caption{Empirical estimation of $D_\alpha(ZV,ZW)$ as a function of the number of released synthetic data $n_{\text{syn}}$ for multiple values of $d$, $k = \Delta = 1$, $C = \alpha = 2$.}
    \label{fig:divergence-nsyn}
\end{figure}

\subsection{Estimating $D_\alpha(V^\top V, W^\top W)$ as a function of $d,\Delta$}

In this section, we aim to estimate the empirical tightness of the local Rényi divergence bounds derived from Fisher information (Theorem~\ref{thm:local-renyi}) in the case where $k=1$. These bounds are asymptotic, and are thus expected to be informative only in the high privacy regime $\Delta \ll 1$. 
Specifically, we assess, for $\Delta \in (0,1)$, the extent to which the relation predicted by Theorem~\ref{thm:local-renyi} is satisfied: $2\|w\|_F^2/(2\|w\|_F^2 + d) \lesssim \eta_d \lesssim 2C^2/(2C^2 + d - 3)$. We estimate the Rényi divergence $D_\alpha(V^\top V, W^\top W)$  and compare the results to the theoretical envelope induced by the above bounds. In our experiments, we set $1\leq \|w\|_F \leq \|v\|_F \leq C = 2$, and evaluate $D_\alpha(V^\top V, W^\top W)$ as a function of $\Delta$. The results, represented in Figure~\ref{fig:fisher-delta}, show that for small perturbations ($\Delta < 1$), the theoretical local bounds provide an accurate estimation of the amplification. In particular, for $d=50$, the amplification effect is clearly observable.

\begin{figure}[h]
    \centering
    \includegraphics[width=.99\linewidth]{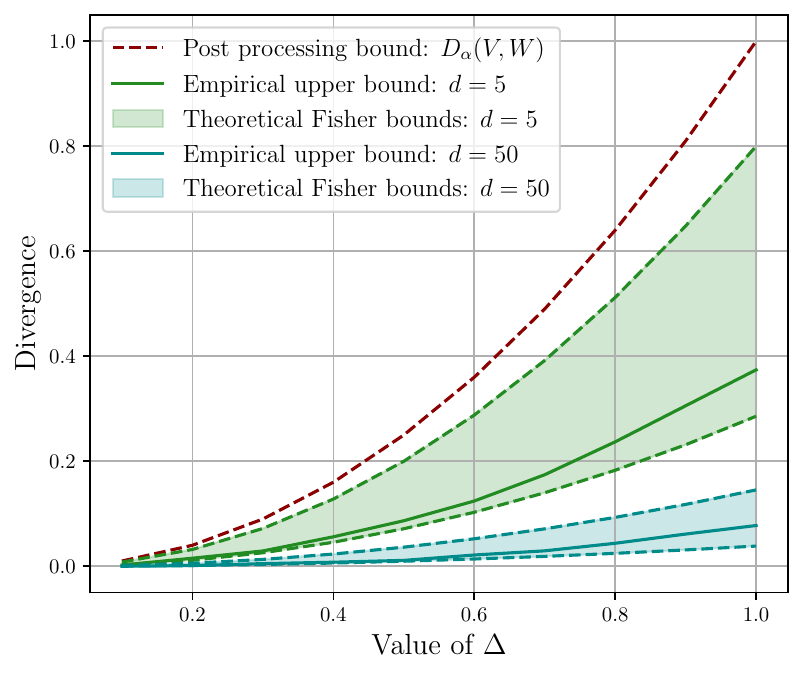}
    \caption{Empirical estimation of $D_\alpha(V^\top V,W^\top W)$ as a function of $\Delta$ for multiple values of $d$, $k = 1$, $\|w\|_F \geq 1$, $C = 2$.}
    \label{fig:fisher-delta}
\end{figure}

\section{Discussion}

In this section, we discuss the tightness of the infinite data release privacy loss $D_\alpha(V^\top V, W^\top W)$ with respect to $D_\alpha(ZV,ZW)$. We also discuss potential directions for extending privacy amplification results to linear generators trained with widely used private learning algorithms.

\subsection{On the tightness of the plateau $D_\alpha(V^\top V, W^\top W)$}
\label{subsec:convergence-rate}
As discussed in Section~\ref{sec:experiments} and~\ref{sec:l_infty}, the privacy loss $D_\alpha(ZV,ZW)$ converges to $D_\alpha(V^\top V, W^\top W)$ when $n_{\text{syn}} \to +\infty$.
A natural question is whether the plateau $D_\alpha(V^\top V,W^\top W)$ is a meaningful approximation of $D_\alpha(ZV,ZW)$ for finite $n_{\text{syn}}$, especially when $d$ is large. 
We show that, in the tractable case $k=1$, for sufficiently large $d$,  the rate of convergence of the difference $D_\alpha(V^\top V, W^\top W) - D_\alpha(ZV,ZW)$ to $0$ is \textbf{at least} as fast as $O(1/n_{\text{syn}})$, with constants that can be chosen uniformly over all large $d$.

\begin{proposition}[Privacy gap is $O(1/n_{\text{syn}})$ for $k=1$]
\label{prop:privacy-gap}
     Let $\alpha \geq 2$ and $v,w \in \mathbb{R}^{d}$. Let $\tau = (\|v\|,\|w\|,\Delta,\alpha)$ and assume that $k=1$. Then, there exists $ K = K(\tau), K' = K'(\tau)$ and a threshold $d_0 = d_0(\tau)$ such that for all $d \geq d_0$ and $n_{\text{syn}} \geq 1$:
    \[\textstyle D_\alpha(V^\top V, W^\top W) - D_\alpha(ZV,ZW) \leq  \frac{K}{n_{\text{syn}}} + K' e^{-c n_{\text{syn}}}.\]
\end{proposition}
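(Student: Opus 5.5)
The plan is to exploit the fact that for $k=1$ everything collapses to scalar statistics. Conditionally on $V$, the vector $ZV\in\mathbb{R}^{n_{\mathrm{syn}}}$ is distributed as $\mathcal{N}(0,\|V\|^2 I_{n_{\mathrm{syn}}})$. By the same sufficiency argument as in Proposition~\ref{prop:sufficient-statistics}, the statistic $\|ZV\|^2$ is sufficient: the conditional law of $ZV$ given its norm is uniform on the sphere for both hypotheses.

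Hence we can write
\[
D_\alpha(ZV,ZW)=D_\alpha\big(X S_n,\;Y S_n\big),
\]
where $X=\|V\|^2\sim\chi^2_d(\|v\|^2)$, $Y=\|W\|^2\sim\chi^2_d(\|w\|^2)$, and $S_n=\chi^2_{n_{\mathrm{syn}}}/n_{\mathrm{syn}}$ is independent of $X$ and $Y$. Taking logarithms (a bijection), this becomes
\[
D_\alpha(L_X+E_n,\;L_Y+E_n),
\]
with $L_X=\log X$, $L_Y=\log Y$ and $E_n=\log S_n$. The target plateau is $D_\alpha(L_X,L_Y)=D_\alpha(V^\top V,W^\top W)$. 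The gap is therefore the divergence loss caused by convolving both laws with a common noise $E_n$, which has mean $O(1/n_{\mathrm{syn}})$, variance $2/n_{\mathrm{syn}}+O(n_{\mathrm{syn}}^{-2})$, and sub-exponential tails.

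\textbf{Step 1: splitting on the size of $E_n$.} Fix a small constant $\delta$ and split on the event $\{|E_n|\le\delta\}$. By Cramér--Chernoff bounds for $\chi^2_{n_{\mathrm{syn}}}$, the complement has probability at most $2e^{-cn_{\mathrm{syn}}}$. Write the convolved densities as $p_n=p*q_n$ and $r_n=r*q_n$, where $p,r$ are the densities of $L_X,L_Y$. The complement event will be absorbed into the $K'e^{-cn_{\mathrm{syn}}}$ term. To do this I would use the joint convexity of $(a,b)\mapsto a^\alpha b^{1-\alpha}$, together with a crude bound of the divergence contribution by the post-processing quantity $D_\alpha(V,W)$, which is finite and depends only on $\tau$. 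This is where $\alpha\ge 2$ is convenient, since moments of likelihood ratios of order $\alpha$ and $2\alpha-1$ are then controlled.

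\textbf{Step 2: second-order expansion on the bulk.} On the bulk event, Taylor-expand
\[
p_n(x)=p(x)+\mathbb{E}[E_n]\,p'(x)+\tfrac12\mathbb{E}[E_n^2]\,p''(x)+R_n(x),
\]
and similarly for $r_n$. The remainder satisfies $|R_n|\lesssim \mathbb{E}|E_n|^3\sup_{|h|\le\delta}|p'''(x+h)|=O(n_{\mathrm{syn}}^{-3/2})$. Plugging these expansions into $\int p_n^\alpha r_n^{1-\alpha}$ gives
\[
\int p^\alpha r^{1-\alpha}\Big(1+O\big(\tfrac1{n_{\mathrm{syn}}}\big)\,\Phi(x)\Big)\,dx ,
\]
where $\Phi$ is a polynomial in the log-derivatives $(\log p)',(\log p)'',(\log r)',(\log r)''$. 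The one-sided claim only needs an upper bound on $D_\alpha(L_X,L_Y)-D_\alpha(L_X+E_n,L_Y+E_n)$. This gap is nonnegative by post-processing, and it is bounded by $\tfrac{1}{n_{\mathrm{syn}}}$ times a tilted expectation of $|\Phi|$. So it suffices to bound $\mathbb{E}_{\tilde P}|\Phi|$, where $\tilde P\propto p^\alpha r^{1-\alpha}$ is the tilted law, using the Bessel-function representation of the non-central $\chi^2$ densities.

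\textbf{Step 3 (main obstacle): uniformity in $d$.} Naively, $L_X$ has standard deviation of order $d^{-1/2}$, so its log-derivatives scale like $d$ and $\Phi$ blows up. To fix this, I would first recentre and rescale by $\sqrt{d}$, setting $\tilde L_X=\sqrt d\,(L_X-\log d)$. In these coordinates the smoothing noise becomes $\sqrt d\,E_n$, with variance $2d/n_{\mathrm{syn}}$, which seems to lose uniformity. The saving grace should be that in the $\sqrt d$ scale the two laws differ only by a location shift of order $(\|v\|^2-\|w\|^2)/\sqrt d$, and both are close to the same Gaussian $\mathcal{N}(0,2)$.

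I would therefore proceed in three moves. First, use a uniform Edgeworth or local-CLT expansion, valid for $d\ge d_0(\tau)$, to compare $p$ and $r$ with Gaussians. Second, compute the gap exactly in the Gaussian case. For two Gaussians with a common variance $s^2\approx 2/d$ and mean difference $m=O(1/d)$, convolution with noise of variance $2/n_{\mathrm{syn}}$ gives the gap
\[
\frac{\alpha m^2}{2}\Big(\frac1{s^2}-\frac1{s^2+2/n_{\mathrm{syn}}}\Big)\le \frac{\alpha m^2}{s^4 n_{\mathrm{syn}}}=O\Big(\frac{1}{n_{\mathrm{syn}}}\Big),
\]
uniformly in $d$. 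Third, control the Edgeworth corrections with the same $1/n_{\mathrm{syn}}$ factor, so that they only change the constant $K$.

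Making the Edgeworth error uniform in the tilted measure $\tilde P$, and not merely in total variation, is the step I expect to be hardest. My first attempt would be to bound the ratio $p/r$ directly through Bessel-function monotonicity and ratio bounds of the kind used in Proposition~\ref{prop:Fisher-upper-bound-one-dim}.
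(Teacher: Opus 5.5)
You have correctly identified the hard part, uniformity in $d$ for all $n=n_{\mathrm{syn}}\ge 1$, but you leave it open, and Steps 1--2 cannot deliver it.

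\textbf{Why Steps 1--3 do not close the argument.}
\begin{itemize}
\item \emph{The relative expansions in Step 2 fail.} You expand $p_n=p*q_n$ and $r_n=r*q_n$ separately as relative perturbations of $p$ and $r$. In log-coordinates, $(\log p)'$ is of order $\sqrt d$ in the bulk and about $d/2$ in the left tail. So for a fixed truncation level $\delta$, the ratio $\sup_{|h|\le\delta}p(\ell+h)/p(\ell)$ can be as large as $e^{d\delta/2}$. Letting $\delta\to0$ with $d$ instead destroys the $e^{-cn}$ bound on $\{|E_n|>\delta\}$ when $n\lesssim d$.
\item \emph{Without truncation it is worse.} For $n<d$, the density of $YS_n$ behaves like $t^{n/2-1}$ near $0$, versus $t^{d/2-1}$ for $Y$. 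Hence $p_n/p\to\infty$ in the left tail, and no expansion of the form $p^\alpha r^{1-\alpha}(1+O(1/n)\Phi)$ with $\Phi$ independent of $n$ holds there.
\item \emph{Step 1 uses convexity in the wrong direction.} Joint convexity plus homogeneity of $(a,b)\mapsto a^\alpha b^{1-\alpha}$ gives subadditivity. That is an \emph{upper} bound on $\int p_n^\alpha r_n^{1-\alpha}$, hence a lower bound on the gap, whereas you need an upper bound on the gap.
\item \emph{Step 3 has no mechanism for the $1/n$ factor.} Replacing $p,r$ by Gaussians changes both $D_\alpha(L_X,L_Y)$ and $D_\alpha(L_X+E_n,L_Y+E_n)$ by amounts controlled by the approximation error in $d$, not by $n$. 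So a direct comparison cannot yield $K/n$ for large $n$. Controlling that error in the tilted measure is exactly the difficulty you flag and do not resolve.
\end{itemize}

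\textbf{The missing idea.} Only the likelihood ratio $r=d\mu/d\nu$ of $Y_v=\|V\|^2$ versus $Y_w=\|W\|^2$ matters, and it varies slowly even though the densities do not. The argument goes as follows.
\begin{itemize}
\item Both laws pass through the same kernel $y\mapsto yX_n$, with $X_n\sim\chi^2_{n}/n$. Write $H_\alpha=e^{(\alpha-1)D_\alpha}$ and $S_w=Y_wX_n$. The gap in $H_\alpha$ is then exactly a conditional Jensen gap, $\mathbb{E}\big[\mathbb{E}[r(Y_w)^\alpha\mid S_w]-\mathbb{E}[r(Y_w)\mid S_w]^\alpha\big]$.
\item A second-order bound on $x^\alpha$ (this is where $\alpha\ge2$ enters) and H\"older reduce this to $\|r(Y_w)-\mathbb{E}[r(Y_w)\mid S_w]\|_\alpha^2$. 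Post-processing supplies the remaining factor, $\|r(Y_w)\|_\alpha^\alpha\le e^{\alpha(\alpha-1)\Delta^2/2}$.
\item Conditional expectation is an $L^\alpha$ contraction. So you may replace the conditional mean by the plug-in $r(S_w)\mathbf{1}_{\{X_n\in(1/2,2)\}}$, losing only a factor $2$.
\item The complement event costs $O(e^{-cn})$, by Chernoff and independence of $X_n$ and $Y_w$.
\item On the event, the Bessel-quotient bound $|(\log r)'(y)|\le(\|v\|^2+\|w\|^2)/(2d)$ and the mean value theorem give $|r(Y_w)-r(Y_wX_n)|\lesssim (Y_w/d)\,|1-X_n|\,r(Y_w)\,e^{O(Y_w/d)}$.
\item Here $Y_w/d$ has moments and exponential moments bounded uniformly for $d\ge d_0(\tau)$, and $\|1-X_n\|_\alpha=O(n^{-1/2})$. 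This gives $O(1/n)$ uniformly in $d$, for every $n\ge1$ at once.
\end{itemize}

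In your coordinates, the key fact is that $(\log p-\log r)'(\ell)=y(\log r)'(y)|_{y=e^\ell}=O(1)$, even though $(\log p)'$ and $(\log r)'$ individually grow with $d$. The second-order term of the gap depends only on this difference. Your $\Phi$, built from the separate log-derivatives, hides this cancellation, which is what pushed you toward a Gaussian/Edgeworth comparison the argument does not need.
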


We are interested in regimes where $D_\alpha(V^\top V,W^\top W)$ itself decreases with $d$ (e.g., scaling like $d^{-\gamma}$ for some $\gamma > 0$, $\gamma \geq 1/2$ follows from Theorem~\ref{thm:upper-bound-renyi-multiple-dim}). In such regimes, the additive gap bound above can be used to assess tightness: achieving a constant-factor approximation of the plateau requires the gap to be smaller than the plateau scale.

\begin{corollary}[Constant factor tightness for large $n_{\text{syn}}$]
\label{corr:privacy-gap-scaling}
    Assume that $D_\alpha(V^\top V,W^\top W) = \Omega(d^{-\gamma})$, $k=1$. Then,  choosing $n_{\text{syn}}\gtrsim d^{\gamma}$ suffices to make $D_\alpha(ZV,ZW)$ a constant-factor approximation of the plateau (up to exponentially small terms).
\end{corollary}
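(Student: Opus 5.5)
The corollary follows by combining Proposition~\ref{prop:privacy-gap} with the monotonicity bound of Proposition~\ref{prop:gram-limit}. Throughout, write $P_\infty := D_\alpha(V^\top V,W^\top W)$ and $P_n := D_\alpha(ZV,ZW)$ with $n=n_{\text{syn}}$.

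\textbf{Step 1: two-sided control of $P_n$.} The upper bound $P_n \leq P_\infty$ holds for every $n>1$ by the second part of Proposition~\ref{prop:gram-limit}. For a lower bound, fix $\tau=(\|v\|,\|w\|,\Delta,\alpha)$ with $\alpha\geq 2$ and take $d\geq d_0(\tau)$. Proposition~\ref{prop:privacy-gap} then gives
\[
P_\infty - \frac{K}{n} - K'e^{-cn} \;\leq\; P_n \;\leq\; P_\infty ,
\]
with constants $K,K',c$ that are uniform in $d\geq d_0$. This uniformity is what lets us take $d$ large while keeping the constants fixed.

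\textbf{Step 2: use the assumed scale of the plateau.} By assumption there is $c_0>0$ with $P_\infty \geq c_0 d^{-\gamma}$ for all large $d$. Fix any target ratio $\rho\in(0,1)$. We want $K/n \leq (1-\rho)\,c_0 d^{-\gamma}$, which holds as soon as
\[
n \;\geq\; \frac{K}{(1-\rho)c_0}\, d^{\gamma},
\]
that is, $n\gtrsim d^{\gamma}$ with an implicit constant depending only on $\tau$ and $\rho$. Under this choice,
\[
\rho\, P_\infty - K'e^{-cn} \;\leq\; P_n \;\leq\; P_\infty .
\]
So $P_n$ is within a factor $\rho$ of the plateau, up to the exponentially small additive term $K'e^{-cn}$. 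This is exactly the statement ``up to exponentially small terms''.

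\textbf{Step 3 (optional): absorb the exponential term.} Since $n\gtrsim d^{\gamma}$, the term $e^{-cn}$ decays like $e^{-c' d^{\gamma}}$, which is eventually $o(d^{-\gamma})$. For $d$ large enough it can therefore be absorbed into the multiplicative factor, say replacing $\rho$ by $\rho/2$.

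The only delicate point is ensuring the constants in Proposition~\ref{prop:privacy-gap} do not grow with $d$, so that the $\Omega(d^{-\gamma})$ lower bound is not overwhelmed. This is provided by the proposition's uniformity over $d\geq d_0$. One should also note that $\tau$ is held fixed, which requires $\|v\|,\|w\|$ to stay bounded as $d$ grows; this is consistent with the bounded-parameter assumption.
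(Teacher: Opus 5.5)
Your proposal is correct and follows the paper's reasoning. The corollary is an immediate consequence of the $d$-uniform gap bound $D_\alpha(V^\top V,W^\top W)-D_\alpha(ZV,ZW)\le K/n_{\text{syn}}+K'e^{-cn_{\text{syn}}}$ from Proposition~\ref{prop:privacy-gap}, together with $D_\alpha(ZV,ZW)\le D_\alpha(V^\top V,W^\top W)$, by choosing $n_{\text{syn}}$ so that $K/n_{\text{syn}}$ falls below a fixed fraction of the assumed $\Omega(d^{-\gamma})$ plateau; your tracking of $\rho$, the condition $\alpha\ge 2$, fixed $\tau$ and the exponential term just spells out the paper's one-line justification.
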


A tighter characterization of the regime $n_{\text{syn}} \leq d$ than the work of~\citet{pierquin2025} is an interesting direction for future work, as it seems that some additional privacy gains may be achievable in this regime.

\begin{remark}
    The constants in Proposition~\ref{prop:privacy-gap} can be drastically improved by leveraging the fact that $D_\alpha(V^\top V, W^\top W) \lesssim d^{-\gamma} D_\alpha(V,W)$, as proved in Theorem~\ref{thm:upper-bound-renyi-multiple-dim}.
\end{remark}
\subsection{On the generalization to other training procedures}

The approach developed in this paper may extend to alternative training procedures for linear generators. Importantly, the sufficient-statistics limit  $D_\alpha(ZV,ZW) \overset{n_{\text{syn}} \to \infty}{\longrightarrow} D_\alpha(V^\top V, W^\top W)$ of Proposition~\ref{prop:sufficient-statistics} does not rely on $V$ or $W$ being isotropic Gaussian matrices (related to output perturbation). As noted by~\citet{Brown2024}, Noisy Gradient Descent applied to linear regression converges to a matrix Gaussian distribution, suggesting that our framework could naturally accommodate this more general setting. Extending the analysis in this direction is a promising avenue for future work.\looseness=-1

\section{Conclusion}

In this work, we studied privacy amplification for synthetic data released by a linear generator and showed that, under parameter boundedness, amplification persists even when releasing an unbounded number of synthetic samples. Our analysis provides a unified characterization of the privacy loss that smoothly interpolates between classical post-processing guarantees and genuine amplification. Notably, it does not depend on the number of released synthetic records, improving on the work of~\citet{pierquin2025}.

A key technical insight is that, in the infinite-release regime, the privacy loss is fully characterized by Gram statistics $V^\top V$, which enables tractable bounds via Fisher information. This yields local (high privacy) Rényi DP upper bounds with tight amplification factors. Beyond the local regime, we derived a general upper bound on Rényi divergence based on a criterion relating Fisher information to Rényi divergences provided a computable upper envelope on the Rényi divergence. While this bound is fully non-asymptotic and broadly applicable in our setting, we do not expect it to be tight in general. An important open question is whether this criterion can be improved to recover the local rates globally.

More broadly, our results provide concrete analytical insights, such as the role of parameter boundedness, the Fisher information analysis, the explicit dependence of amplification on the model dimension, and the sufficient statistics reduction. We expect that these insights will provide both intuition and methodological guidance for understanding privacy amplification in more complex models and data release mechanisms.

An important direction for future work in this regard is to extend these results to deep generative models trained with practical differentially private algorithms like DP-SGD, in order to determine whether similar privacy amplification phenomena manifest in realistic settings.

\section*{Impact Statement}

This paper presents work whose goal is to advance the field of Machine Learning. There are many potential societal consequences of our work, none which we feel must be specifically highlighted here.

\bibliography{references}
\bibliographystyle{apalike}

\newpage

\appendix

\onecolumn

This appendix provides detailed discussions, proofs and supplementary content.
\section{Background on differential privacy}
 
\label{app:DP}
This section briefly reviews the differential privacy notions and tools used in the paper. Throughout, $\mathcal{M}$ denotes a randomized mechanism. We call two datasets $\mathcal{D}$ and $\mathcal{D}'$ of the same size $n$ \emph{adjacent} if they differ in exactly one data record.

Rényi Differential Privacy quantifies privacy via the Rényi divergence between the output distributions induced by $\mathcal{M}$ on adjacent datasets~\citep{Mironov2017}.
\begin{definition}[Rényi Differential Privacy (RDP)~\citep{Mironov2017}]
     For $\alpha>1, \varepsilon > 0$, the Rényi divergence of order $\alpha$ between distributions $P$ and $Q$ is $D_\alpha(P,Q) := \frac{1}{\alpha-1}\log\mathbb{E}_{x \sim P}\Big[\big(\frac{P(x)}{Q(x)}\big)^\alpha\Big]$. For $X \sim P, W \sim Q$, we note $D_\alpha(V,W) = D_\alpha(P,Q)$. Then, $\mathcal{M}$ satisfies $(\alpha,\varepsilon)$-RDP if for every pair of adjacent datasets $\mathcal{D},\mathcal{D}'$,
     \[D_\alpha(\mathcal{M}(\mathcal{D}) \| \mathcal{M}(\mathcal{D}')) \leq \varepsilon.\]
\end{definition}

The prior results of~\citet{pierquin2025} are formulated for trade-off functions in the $f$-Differential privacy ($f$-DP) framework. Trade-off functions~\citep{Dong2019} describe the optimal relationship between type I and type II errors for tests that attempt to distinguish the outputs produced from two adjacent datasets. Formally, let $P$ and $Q$ be two distributions and consider testing:
\[H_1: \text{the distribution is } P \text{~~ vs ~~} H_2: \text{the distribution is } Q.\]
For a  rejection rule $\phi\in[0,1]$, the type I error is $\mathbb{E}_P[\phi]$ and the type II error is $1-\mathbb{E}_Q[\phi]$.
\begin{definition}[Trade-off function]
    Let $t \in (0,1)$. Then, the trade-off function of the statistical test between $H_1$ and $H_2$ is defined as:
    \[T(P,Q)(t) = \inf_{\phi} \{1-\mathbb{E}_Q[\phi]:\mathbb{E}_P[\phi] \leq t\}.\]
    For $X \sim P, W \sim Q$, we note $T(V,W) = T(P,Q)$.
\end{definition}

$f$-Differential privacy is defined as a upper bound condition on the function $f$ for any trade-off function between the statistical test between adjacent datasets:
\[H_1: \text{the distribution is } \mathcal{M}(\mathcal{D}) \text{~~ vs ~~} H_2: \text{the distribution is } \mathcal{M}(\mathcal{D}').\]

\begin{definition}
    Let $f:[0,1]\to[0,1]$ be a decreasing convex function. A mechanism $\mathcal{M}$ satisfies \emph{$f$-differential privacy} ($f$-DP) if for all adjacent datasets $\mathcal{D},\mathcal{D}'$,
    \[T(\mathcal{M}(\mathcal{D}),\mathcal{M}(\mathcal{D}')) \geq f.\]
\end{definition}
\section{Comparison of our upper bounds to the prior results of~\citet{pierquin2025}}
\label{app:comparison-prior-work}
In Section~\ref{sec:l_infty}, we mentioned that the privacy loss upper bounds obtained by~\citet{pierquin2025} follow the relationship:
\[D_\alpha(ZV,ZW) \lesssim \min\left\{1, k\sqrt{\frac{n_{\text{syn}}}{d-k}}\right\}D_\alpha(V,W).\]

We clarify that this upper bound is not a guarantee offered by~\citet{pierquin2025} (after conversion from trade-off functions to RDP). It represents the best \emph{best scaling} one can hope to extract from their proof technique. This means that any conversion of the results of~\citet{pierquin2025} to RDP guarantees is at least as loose as this bound. In the rest of the section, we make precise this claim and highlight several limitations of these guarantees. 

The privacy loss upper bounds of~\citet{pierquin2025} are formulated in terms of trade-off functions. We restate below the result we use, with the notational change that the trade-off function argument is denoted by $t\in(0,1)$ (reserving $\alpha$ for Rényi orders).
\begin{theorem}[Privacy amplification for synthetic data, adapted from~\citep{pierquin2025}]
\label{theo:convergence-trade-off-multi-dim}

    Let $d > 0$. We denote $C_{n_{\text{syn}},k,d} =  C'k \sqrt{\frac{n_{\text{syn}}}{d-k}}$, where $C' > 0$ is an absolute constant. Let $N \in \mathbb{R}^{n_{\text{syn}} \times d}$ be a Gaussian matrix independent of $Z$, and define the trade-off function: \[G_{(d-k,v,w)} = T\left(\sqrt{d-k}N + Zv,  \sqrt{d-k}N + Zw\right),\]
    with $G_{(d-k,v,w)}(t) = 0$ for $t > 1$.
    
    Then, for all $t \in (0,1)$:
    \begin{equation}
    \label{equa:prior-tf-bound}
        f:=\max\left\{T(V,W),G_{(d-k,v,w)}\left(t + C_{n_{\text{syn}},k,d}\right) -C_{n_{\text{syn}},k,d}\right\} \leq T(ZV,ZW)(t).
    \end{equation}
\end{theorem}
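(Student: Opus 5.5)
The bound is a maximum of two trade-off lower bounds, so I would prove each separately. The first, $T(V,W)\le T(ZV,ZW)$, is post-processing: $ZV$ is obtained from $V$ by multiplying by an independent random matrix $Z$, so any test on $ZV$ is a randomized test on $V$. The substance is the second term. For it I would use a standard perturbation lemma for trade-off functions: if $\mathrm{TV}(P,P')\le c$ and $\mathrm{TV}(Q,Q')\le c$, then
\[
T(P,Q)(t)\ \ge\ T(P',Q')(t+c)-c .
\]
This holds because any rejection rule $\phi$ has $|\mathbb{E}_P\phi-\mathbb{E}_{P'}\phi|\le c$, and the same for $Q,Q'$. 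It therefore suffices to show that $ZV$ is within total variation $C_{n_{\text{syn}},k,d}$ of a law that is a post-processing of $\sqrt{d-k}N+Zv$, and likewise for $w$.

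To build the comparison law, I would split the latent matrix using a $k$-dimensional subspace $S\subset\mathbb{R}^d$ and its orthogonal complement. In the cleanest case $S$ contains the columns of both $v$ and $w$; if their spans differ, I would pass to a slightly different dimension $d-m$ and adjust constants. Write $Z=Z_S+Z_\perp$ and $N=N_S+N_\perp$ accordingly. Then
\[
ZV = Zv + Z_SN_S + Z_\perp N_\perp .
\]
Here $Zv=Z_Sv$ depends only on $Z_S$. The term $Z_\perp N_\perp$ is independent of $(Z_S,N_S)$. By rotational invariance, $Z_\perp N_\perp$ has the law of $G\,\Lambda^{1/2}$, where $G\in\mathbb{R}^{n_{\text{syn}}\times k}$ is standard Gaussian and $\Lambda=N_\perp^\top N_\perp$ is an independent $k\times k$ Wishart matrix with $d-k$ degrees of freedom. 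The key step is then to compare $G\Lambda^{1/2}$ with $\sqrt{d-k}\,G$, i.e.\ with $\sqrt{d-k}\,N'$.

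For that comparison, I would condition on $\Lambda$ and compute the KL divergence between $n_{\text{syn}}$ i.i.d.\ rows $\mathcal{N}(0,\Lambda)$ and $\mathcal{N}(0,(d-k)I_k)$. This is $\frac{n_{\text{syn}}}{2}\big(\operatorname{tr}(A)-k-\log\det A\big)$ with $A=\Lambda/(d-k)$, which is $\approx \frac{n_{\text{syn}}}{4}\|A-I\|_F^2$. Since $\mathbb{E}\|A-I\|_F^2\asymp k^2/(d-k)$, convexity of KL together with Pinsker gives
\[
\mathrm{TV}\ \lesssim\ k\sqrt{n_{\text{syn}}/(d-k)} .
\]
This is exactly $C_{n_{\text{syn}},k,d}$. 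The approximation $\operatorname{tr}(A)-k-\log\det A\approx\frac12\|A-I\|_F^2$ needs $A$ near $I$. I would handle the event where $\Lambda$ is far from $(d-k)I$ by a Wishart concentration bound whose probability is absorbed into the TV budget; this is the main technical obstacle.

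Finally, I would rewrite the comparison variable as $(Zv+\sqrt{d-k}N')+Z_SN_S$. Given $Z_S$, adding $Z_SN_S$ is extra Gaussian noise applied identically under $v$ and $w$. Its trade-off function is therefore at least that of the pair formed by $(Z_S,\ Z_Sv+\sqrt{d-k}N')$ and the same with $w$, and this is $G_{(d-k,v,w)}$ up to identifying $Z_S$ with the latent matrix in the definition of $G_{(d-k,v,w)}$. I expect this bookkeeping to be delicate: the post-processing has to be a fixed kernel that does not depend on whether the parameter is $v$ or $w$, which is why $S$ must be a common subspace.
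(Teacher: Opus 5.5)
The paper does not prove this statement itself; it imports it from \citet{pierquin2025}. So I am assessing your argument on its own. Everything before the last step is sound. This includes the perturbation lemma, the split $ZV = Zv + Z_SN_S + Z_\perp N_\perp$, and the bound $\mathrm{TV}(Z_\perp N_\perp,\sqrt{d-k}N')\lesssim k\sqrt{n_{\text{syn}}/(d-k)}$ obtained by conditioning on $N_\perp$, which is where the rate comes from.

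\textbf{The gap is the final identification.} Your kernel (add $Z_SN_S$) takes $Z_S$ as input. So the pair you end up comparing to is $(Z_S, Z_Sv+\sqrt{d-k}N')$ versus $(Z_S, Z_Sw+\sqrt{d-k}N')$, a test in which the latent matrix is \emph{observed}. In $G_{(d-k,v,w)}$ the latent $Z$ is hidden. Forgetting $Z_S$ is itself a post-processing, so the revealed trade-off function is at most $G_{(d-k,v,w)}$, typically strictly. For example, take $w=-v$. The law of $Zv+\sqrt{d-k}N'$ depends on $v$ only through $v^\top v$, so $G_{(d-k,v,w)}(t)=1-t$. Given $Z_S$, however, the two hypotheses differ by the mean shift $2Z_Sv$ and can be told apart. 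Revealing the seed is exactly the no-amplification scenario, so ``identifying $Z_S$ with the latent matrix'' is not bookkeeping. As written, your chain only proves the weaker bound with the revealed trade-off function.

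\textbf{How to repair it.} Treat $Z_SN_S$ as a small perturbation of the large noise $\sqrt{d-k}N'$, rather than as extra noise to post-process away. Conditionally on $Z_S$, the $k$ columns of $Z_SN_S+\sqrt{d-k}N'$ are i.i.d.\ $\mathcal{N}(0,(d-k)(I+B))$ with $B=Z_SZ_S^\top/(d-k)$. Without $Z_SN_S$ they are i.i.d.\ $\mathcal{N}(0,(d-k)I)$. Since $x-\log(1+x)\le x^2/2$ for $x\ge 0$, the conditional KL is at most $\frac{k}{4}\|B\|_F^2$, whose expectation is $\frac{n_{\text{syn}}k^2(n_{\text{syn}}+k+1)}{4(d-k)^2}$. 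Pinsker and Jensen then bound the TV distance between the joint laws of $(Z_S, Z_SN_S+\sqrt{d-k}N')$ and $(Z_S,\sqrt{d-k}N')$ by $O\big(k\sqrt{n_{\text{syn}}(n_{\text{syn}}+k)}/(d-k)\big)$. This is $O(k\sqrt{n_{\text{syn}}/(d-k)})$ whenever the latter is below $1$, which is the only nontrivial case.

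Combine this with your first TV bound and apply the same map $(Z_S,X)\mapsto Z_Sv+X$ to both laws. This gives $\mathrm{TV}(ZV, Zv+\sqrt{d-k}N')\lesssim k\sqrt{n_{\text{syn}}/(d-k)}$ with $Z$ hidden, and your perturbation lemma finishes the proof.

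Each TV bound concerns one hypothesis at a time. So you can take $S$ to be any $k$-dimensional subspace containing the columns of $v$ for the first bound, and one containing the columns of $w$ for the second. No common subspace is needed. This also removes a second, smaller problem: passing to $d-m$ with $m\le 2k$ lowers the comparison noise to $\sqrt{d-m}$. That yields a trade-off function below $G_{(d-k,v,w)}$, not merely different constants.
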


Specifically, we show that:
\begin{itemize}
    \item Proposition~\ref{prop:prior-dp-delta}: Any $(\varepsilon,\delta)$-DP conversion of the amplification term $G_{(d-k,v,w)}\left(t + C_{n_{\text{syn}},k,d}\right) -C_{n_{\text{syn}},k,d}$ of Equation~\ref{equa:prior-tf-bound} yields $\delta = \Omega\left(k(n_{\text{syn}}/d)^{1/2}\right)$. In particular, to get $\delta = o(n^{-1})$ (with $n$ the sample size), one would need $n_\text{syn} = o\left(\frac{d}{k^2 n^2}\right)$.
    \item Proposition~\ref{prop:prior-no-amplification}: Equation~\ref{equa:prior-tf-bound} does not yield any amplification in the high privacy regime $\Delta  < 2C'k \sqrt{\frac{2\pi n_{\text{syn}}}{d-k}}$. In this regime, the privacy loss is no better than $D_\alpha(V,W)$. 
    \item Proposition~\ref{prop:prior-rate}: In the privacy amplification regime, the Rényi divergence conversion $l_\alpha(f)$ of the trade-off function $f$ derived from Equation~\ref{equa:prior-tf-bound} scales as $2C'k \sqrt{\frac{ n_{\text{syn}}}{d-k}}$: \[l_\alpha(f) \geq 2C'k \sqrt{\frac{ n_{\text{syn}}}{d-k}} D_\alpha(V,W).\]
\end{itemize}

First, we explain our proof strategy to derive RDP bounds from Equation~\ref{equa:prior-tf-bound}. Figure~\ref{fig:lower-tradeoff-function} represents the associated trade-off function. This function is piecewise, separated into two regions: a Gaussian trade-off region and an amplification region. By symmetry, the trade-off function is Gaussian at the edges, in the set $(0,x_-)\cup (x_+,1)$ for some $0 \leq x_- \leq x_+ \leq 1$, and $G_{(d-k,v,w)}\left(t +  C_{n_{\text{syn}},k,d}\right) -C_{n_{\text{syn}},k,d}$ at the center, in the interval $(x_-,x_+)$. Then, Propositions~\ref{prop:prior-dp-delta} and~\ref{prop:prior-no-amplification} become clear: any $(\varepsilon,\delta)$ conversion of the amplification regime yields $\delta \geq C_{n_{\text{syn}},k,d}$. Moreover, there is no amplification if the curve $G_{(d-k,v,w)}\left(t + C_{n_{\text{syn}},k,d}\right) -C_{n_{\text{syn}},k,d}$ does not intersect the curve $T(V,W)$. We identify the regimes where it happens.

\begin{figure}[h]
    \centering
    \includegraphics[width=0.4\linewidth]{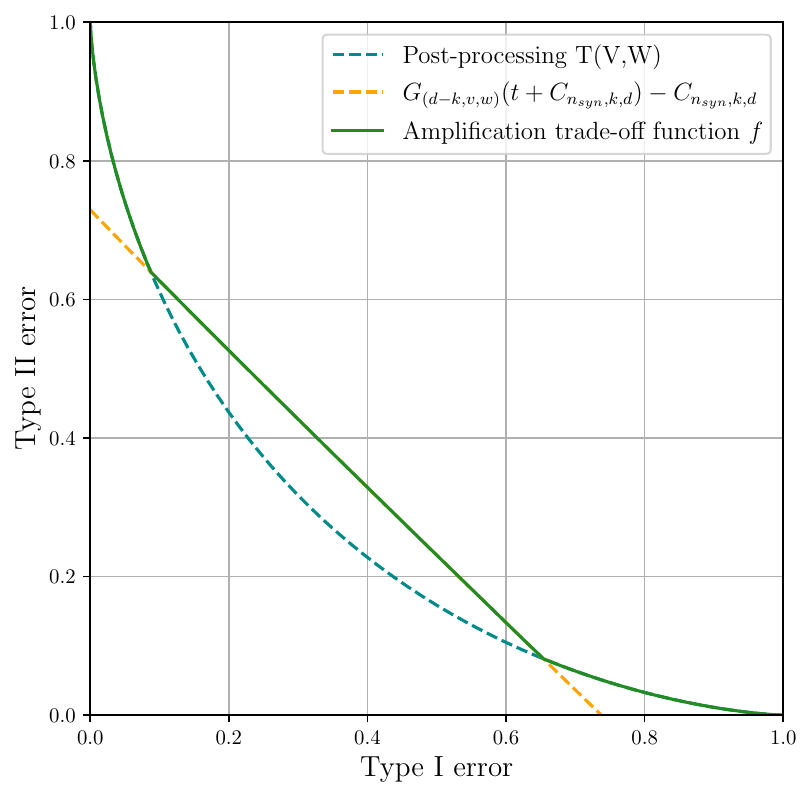}
    \caption{Representation of the trade-off function $f$ for $\Delta=1, C' = 1, d = 60, n_{\text{syn}} = 1, k=1$.}
    \label{fig:lower-tradeoff-function}
\end{figure}

\textbf{Relaxing the trade-off function $f$}:
We rely on a simplification of the $G_{(d-k,v,w)}\left(t + C_{n_{\text{syn}},k,d}\right) -C_{n_{\text{syn}},k,d}$ part. Let $t \in (0,1)$. From Equation~\ref{equa:prior-tf-bound}, $T(ZV,ZW)(t) \geq f(t)$. $f$ is the certifiable privacy loss upper bound obtained by~\citet{pierquin2025}. Then, for any trade-off function $h > f$, the analysis of~\citet{pierquin2025} is not sufficient to certify privacy. In particular, any trade-off function $T$ satisfies $T(t) \leq 1-t$. As $G_{(d-k,v,w)}$ is a trade-off function, we have $G_{(d-k,v,w)} \leq 1- I_d$. Then, \[f \leq \max\left\{T(V,W),1-I_d - 2C_{n_{\text{syn}},k,d}\right\} = \max\left\{\Phi(\Phi^{-1}(1-\alpha) - \Delta),1-I_d - 2C_{n_{\text{syn}},k,d}\right\} := g,\]  where $\Phi$ is the Gaussian cumulative distribution function.

Replacing $G_{(d-k,v,w)}\left(t + C_{n_{\text{syn}},k,d}\right) -C_{n_{\text{syn}},k,d}$ by $1-I_d - 2C_{n_{\text{syn}},k,d}$ unlocks simpler Rényi divergence conversion. This yields lower bounds and minimal conditions for amplification. 

\textbf{A useful reparametrization}: Now, assume that the curves $\Phi(\Phi^{-1}(1-\cdot )$ and $1-I_d - 2C_{n_{\text{syn}},k,d}$ intersect (which means that there is amplification). Note $z  = \Phi^{-1}(1-t)$. Then, $f(z) = \max\{\Phi(z - \Delta), \Phi(z) - 2C_{n_{\text{syn}},k,d}\}$. The intersection is located at the solutions of $\Phi(z) - \Phi(z-\Delta) = 2C_{n_{\text{syn}},k,d}$. We note $z_- \leq 0 \leq z_+$ the lower and upper solutions. 

\textbf{Converting trade-off function into RDP guarantees}:
From~\citet{Dong2019}, trade-off functions convert into Rényi divergences through the following functional:

\begin{proposition}[Conversion from $f$-DP to RDP~\cite{Dong2019}]
If a mechanism is $f$-DP, then it is $(\alpha,l_\alpha(f))$-RDP for all $\alpha >1$ with:
\[l_\alpha(f) = \begin{cases}
    \frac{1}{\alpha-1}\log \int |f'(t)|^{1-\alpha} dt \text{ if } z_f=1,\\ +\infty \text{ else,}
\end{cases}\]
with $z_f = \inf\{t \in (0,1); f(t) = 0\} =1$.
\end{proposition}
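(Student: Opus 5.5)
The plan is to realize the trade-off function $f$ as the trade-off function of an explicit pair of distributions on $[0,1]$. We then transfer the Rényi divergence of that pair to the mechanism using Blackwell's theorem together with the data-processing inequality.

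\textbf{Step 1: a canonical pair of distributions.} Let $P$ be the uniform distribution on $(0,1)$, and let $Q$ be the measure on $(0,1)$ with density $q(t)=|f'(t)|$. Since $f$ is convex and decreasing, $|f'|$ is nonincreasing, so the likelihood ratio $q/p$ is monotone in $t$. Applying the Neyman--Pearson lemma, the optimal level-$s$ test rejects on $\{t\le s\}$. Its type II error is $1-Q((0,s])=1-(f(0)-f(s))$.
\begin{itemize}
\item If $f(0)=1$, this gives $T(P,Q)=f$ exactly.
\item If $f(0)<1$, we add an atom of $Q$ carrying the missing mass. This atom sits outside the support of $P$ and does not affect the divergence integral.
\item $Q$ is a probability measure because the condition $z_f=1$ forces $f(1)=0$.
\end{itemize}

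\textbf{Step 2: computing the divergence.} With $p\equiv 1$ on $(0,1)$,
\[
D_\alpha(P\|Q)=\frac{1}{\alpha-1}\log\int_0^1 p(t)^\alpha q(t)^{1-\alpha}\,dt=\frac{1}{\alpha-1}\log\int_0^1|f'(t)|^{1-\alpha}\,dt,
\]
which is the claimed formula $l_\alpha(f)$.

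For the degenerate case $z_f<1$: then $f\equiv 0$ on $(z_f,1)$, so $Q$ puts no mass there while $P$ does. Hence $P\not\ll Q$ and $D_\alpha(P\|Q)=+\infty$, which explains the second branch of $l_\alpha(f)$.

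\textbf{Step 3: transfer to the mechanism.} If $\mathcal{M}$ is $f$-DP, then for adjacent $\mathcal{D},\mathcal{D}'$ we have $T(\mathcal{M}(\mathcal{D}),\mathcal{M}(\mathcal{D}'))\ge f=T(P,Q)$. By Blackwell's theorem, in the form used by \citet{Dong2019}, there is a Markov kernel $K$ with $K P=\mathcal{M}(\mathcal{D})$ and $K Q=\mathcal{M}(\mathcal{D}')$. The data-processing inequality for Rényi divergences (post-processing) then gives
\[
D_\alpha(\mathcal{M}(\mathcal{D})\|\mathcal{M}(\mathcal{D}'))\le D_\alpha(P\|Q)=l_\alpha(f).
\]

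\textbf{Expected obstacle.} The delicate step is Step 3. We must make sure the orientation of the pair matches the order of arguments in $D_\alpha$; in the paper's setting this is harmless because adjacency is symmetric. We must also make sure Blackwell's comparison applies to the possibly non-symmetric, atom-augmented pair $(P,Q)$ from Step 1. A secondary technical point is that $f'$ may exist only almost everywhere. Convexity handles this: use one-sided derivatives, which agree with $f'$ off a countable set, and note that the integral $\int|f'|^{1-\alpha}$ is unaffected by a null set.
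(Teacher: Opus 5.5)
The paper does not prove this proposition itself but imports it from \citet{Dong2019}, and your argument is exactly the standard proof from that source and is correct: realize $f=T(P,Q)$ with the canonical pair $P=\mathrm{Unif}(0,1)$ and $Q$ having density $|f'|$ plus a $P$-null atom, compute $D_\alpha(P\|Q)=\frac{1}{\alpha-1}\log\int_0^1|f'|^{1-\alpha}$ directly, then transfer via Blackwell's theorem and the data-processing inequality. The obstacles you flag are not real ones: the orientation $P\mapsto\mathcal{M}(\mathcal{D})$, $Q\mapsto\mathcal{M}(\mathcal{D}')$ already matches $D_\alpha(\mathcal{M}(\mathcal{D})\|\mathcal{M}(\mathcal{D}'))$ with no appeal to symmetry of adjacency, Blackwell's theorem imposes no symmetry on the pair, and the atom only needs to lie in every rejection region (e.g.\ at a point $\le 0$), which Neyman--Pearson forces anyway since its likelihood ratio is infinite.
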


We can now prove Propositions~\ref{prop:prior-dp-delta},~\ref{prop:prior-no-amplification} and~\ref{prop:prior-rate}:

\begin{proposition}[DP conversion forces $\delta$ of order $C_{n_{\text{syn}},k,d}$]
\label{prop:prior-dp-delta}
Any $(\varepsilon,\delta)$-DP guarantee derived from the amplification part of Equation~\ref{equa:prior-tf-bound}
necessarily satisfies $\delta=\Omega(C_{n_{\text{syn}},k,d})$ (uniformly over $\varepsilon > 0$). Therefore, $n_\text{syn} = o\left(\frac{d}{k^2 n^2}\right)$.
\end{proposition}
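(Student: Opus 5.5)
We want to show that any $(\varepsilon,\delta)$ guarantee read off from the amplification branch $h(t) := G_{(d-k,v,w)}(t + C_{n_{\text{syn}},k,d}) - C_{n_{\text{syn}},k,d}$ of Equation~\ref{equa:prior-tf-bound} must have $\delta \geq C_{n_{\text{syn}},k,d}$. We use the standard $f$-DP characterization of \citet{Dong2019}: a trade-off function $f$ certifies $(\varepsilon,\delta)$-DP iff $f(t) \geq f_{\varepsilon,\delta}(t) := \max\{0,\, 1-\delta - e^{\varepsilon} t,\, e^{-\varepsilon}(1-\delta - t)\}$ for all $t\in[0,1]$. The approach is therefore to find a single point where $h$ is small and read off the smallest admissible $\delta$.

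\textbf{Upper bound on $h$ near $t=0$.} By the relaxation already carried out above, any trade-off function satisfies $G_{(d-k,v,w)}(s) \le 1-s$. Hence
\[
h(t) \le 1 - t - 2C_{n_{\text{syn}},k,d}.
\]
Evaluating at $t \to 0^+$ gives $h(0^+) \le 1 - 2C_{n_{\text{syn}},k,d}$. (Equivalently, $h$ is $G$ shifted left and then down by $C_{n_{\text{syn}},k,d}$, so $h(0)=G(C_{n_{\text{syn}},k,d})-C_{n_{\text{syn}},k,d}\le 1-2C_{n_{\text{syn}},k,d}$.)

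\textbf{Lower bound forced by the DP curve.} On the other hand, $f_{\varepsilon,\delta}(0) = 1-\delta$ for every $\varepsilon$. So if the amplification branch is to certify $(\varepsilon,\delta)$-DP, we need
\[
1-\delta \le h(0^+) \le 1-2C_{n_{\text{syn}},k,d},
\]
that is, $\delta \ge 2C_{n_{\text{syn}},k,d}$, uniformly in $\varepsilon$. The same conclusion holds if we use the full $f$ but only rely on the amplification branch on $(x_-,x_+)$: at the boundary points $x_\pm$ we get $f \le 1-x_\pm-2C_{n_{\text{syn}},k,d}$, which again forces $\delta \gtrsim C_{n_{\text{syn}},k,d}$ once $\varepsilon$ is bounded, because $f_{\varepsilon,\delta}(t) \ge 1-\delta-e^{\varepsilon}t$. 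To keep the argument clean, I will state it for the amplification term itself, as in the proposition.

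\textbf{Consequence for $n_{\text{syn}}$.} Meaningful DP requires $\delta = o(1/n)$. Substituting $C_{n_{\text{syn}},k,d} = C' k\sqrt{n_{\text{syn}}/(d-k)}$ gives $k\sqrt{n_{\text{syn}}/(d-k)} = o(1/n)$, i.e.\ $n_{\text{syn}} = o\bigl((d-k)/(k^2 n^2)\bigr)$, which is of order $d/(k^2n^2)$ when $d \gg k$.

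The only delicate point is making precise what "derived from the amplification part" means. I will interpret it as: the certified curve equals $h$ on a neighborhood of $0$, or on an interval containing a point where $e^{\varepsilon}t$ is negligible. The rest is elementary.
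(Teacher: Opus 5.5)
Your argument is correct and matches the paper's proof, which is the one-line remark that the form of the amplification branch together with the $(\varepsilon,\delta)$ conversion of \citet{Dong2019} forces $\delta \geq C_{n_{\text{syn}},k,d}$, after which $\delta = o(1/n)$ is imposed. You make the evaluation at $t \to 0^+$ explicit, and get the slightly sharper $\delta \ge 2C_{n_{\text{syn}},k,d}$ from $G(s) \le 1-s$. Your side remark about the full $f$ is unnecessary and not uniform in $\varepsilon$: the bound it yields at an intersection point $x$, namely $\delta \ge 2C_{n_{\text{syn}},k,d} - (e^{\varepsilon}-1)x$, degrades as $\varepsilon$ grows. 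So you are right to restrict the claim to the amplification term itself.
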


\begin{proof}
    Immediate from the form of the trade-off function and the $(\varepsilon,\delta)$ conversion result of~\citet{Dong2019}. Therefore, for a reasonable privacy protection $\delta = o(n^{-1})$, we have $n^{-1} > C_{n_{\text{syn}},k,d}$, giving the desired result. Moreover, converting the trade-off function of Equation~\ref{equa:prior-tf-bound} either yields no amplification (in the Gaussian region) or large $\delta$ (in the amplification region).
\end{proof}

\begin{proposition}[No amplification in the high-privacy regimes]
\label{prop:prior-no-amplification}
If $\Delta < 2C'k \sqrt{\frac{2\pi n_{\text{syn}}}{d-k}}$, the RDP conversion $l_\alpha(f)$ of Equation~\ref{equa:prior-tf-bound} yields $l_\alpha(f) = D_\alpha(V,W)$. In particular, $G_{(d-k,v,w)}\left(t + C_{n_{\text{syn}},k,d}\right) -C_{n_{\text{syn}},k,d}$ does not intersect $T(V,W)$.
\end{proposition}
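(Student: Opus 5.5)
We work in the reparametrized coordinates $z=\Phi^{-1}(1-t)$ introduced above, where the Gaussian part of the relaxed trade-off $g$ is $\Phi(z-\Delta)$ and the amplification part is $\Phi(z)-2C_{n_{\text{syn}},k,d}$. The goal is to show that the two curves never cross under the stated condition, so that $g$, and hence $f$, coincides with the Gaussian trade-off $T(V,W)$ everywhere. Then its RDP conversion $l_\alpha$ is exactly that of $T(V,W)$, namely $D_\alpha(V,W)=\alpha\Delta^2/2$.

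The first step is to locate where the two pieces could intersect. Crossing means $\Phi(z)-\Phi(z-\Delta)=2C_{n_{\text{syn}},k,d}$ for some $z$. The map $h(z)=\Phi(z)-\Phi(z-\Delta)=\int_{z-\Delta}^{z}\varphi(s)\,ds$ is maximized where $\varphi(z)=\varphi(z-\Delta)$, that is, at $z=\Delta/2$. By the mean value theorem, its maximum satisfies
\[
\sup_z h(z)=\int_{-\Delta/2}^{\Delta/2}\varphi(s)\,ds\leq \Delta\,\varphi(0)=\frac{\Delta}{\sqrt{2\pi}}.
\]
Therefore, if $\Delta/\sqrt{2\pi}<2C_{n_{\text{syn}},k,d}$, we have $h(z)<2C_{n_{\text{syn}},k,d}$ for all $z$. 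This gives $\Phi(z)-2C_{n_{\text{syn}},k,d}<\Phi(z-\Delta)$ everywhere, so the curves do not intersect and $g=T(V,W)$.

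One point needs checking here. With $C_{n_{\text{syn}},k,d}=C'k\sqrt{n_{\text{syn}}/(d-k)}$, the natural threshold is $\Delta<2\sqrt{2\pi}\,C'k\sqrt{n_{\text{syn}}/(d-k)}$, which is exactly the statement's condition. I would verify the constant carefully, since the statement writes $\sqrt{2\pi n_{\text{syn}}/(d-k)}$.

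The second step is to pass from $g$ back to $f$. We have the sandwich $T(V,W)\leq f\leq g=T(V,W)$: the lower inequality is the max in Equation~\ref{equa:prior-tf-bound}, and the upper one uses $G_{(d-k,v,w)}\leq 1-I_d$. Hence $f=T(V,W)$ identically. The convention $G(t)=0$ for $t>1$, together with shifting the argument by $C$, only lowers the amplification term, so it cannot affect this conclusion. Applying the conversion proposition of \citet{Dong2019} then gives $l_\alpha(f)=l_\alpha(T(V,W))$. For the Gaussian trade-off $G_\Delta$ this integral evaluates to $\alpha\Delta^2/2=D_\alpha(V,W)$; here $z_f=1$ holds because the Gaussian trade-off vanishes only at $t=1$.

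I expect the main subtlety to be the bookkeeping between the original variable $t$ and the variable $z$. I also need the maximum of $h$ to be bounded with the exact constant matching the statement. If the intended constant differs, I would instead bound $\sup h$ by $\Delta/\sqrt{2\pi}$ and restate the threshold accordingly. The inequality direction is robust either way.
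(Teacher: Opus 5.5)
Your proposal is correct and follows essentially the same route as the paper: you locate the maximum of $\Phi(z)-\Phi(z-\Delta)$ at $z=\Delta/2$, bound it by $\Delta/\sqrt{2\pi}$, and conclude that the relaxed amplification branch stays strictly below $T(V,W)$; your explicit sandwich $T(V,W)\le f\le g$ and the evaluation $l_\alpha(T(V,W))=\alpha\Delta^2/2$ spell out steps the paper leaves implicit. Your worry about the constant is unfounded, since $2\sqrt{2\pi}\,C'k\sqrt{n_{\text{syn}}/(d-k)}=2C'k\sqrt{2\pi n_{\text{syn}}/(d-k)}$ exactly.
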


\begin{proof}
The function $H = \Phi(z) - \Phi(z-\Delta)$ is symmetric around $\Delta/2$. In fact,
\begin{align*}
    H\left(\frac{\Delta}{2} + z\right) &= P\left(N \leq \frac{\Delta}{2} + z\right) - P\left(N \leq z - \frac{\Delta}{2} \right)\\
     &= P\left(N \geq -\frac{\Delta}{2} - z\right) - P\left(N \geq - z + \frac{\Delta}{2} \right)\\
     &=  P\left(N \leq \frac{\Delta}{2} - z \right) - P\left(N \leq -\frac{\Delta}{2} - z\right)= H\left(\frac{\Delta}{2} - z\right).
\end{align*}
It is also increasing then decreasing. In fact, $H'(z) = \frac{1}{\sqrt{2\pi}}(e^{-z^2/2} - e^{-(z-\Delta)^2/2}) \geq 0 \iff z \leq \frac{\Delta}{2}$.
Its maximum value is attained for $z = \frac{\Delta}{2}$. Therefore, the intersection between $\Phi(\Phi^{-1}(1-t) - \Delta)$ and $ (1-t - 2C_{n_{\text{syn}},k,d})$ exists if and only if $H(\Delta/2) \geq 2C_{n_{\text{syn}},k,d}$.

Also, $H(\Delta/2) = \Phi(\Delta/2) - \Phi(-\Delta/2) = 2\Phi(\Delta/2)-1$.

Then, there is no amplification for:
\[2C_{n_{\text{syn}},k,d} > \frac{\Delta}{\sqrt{2\pi}} \geq \sqrt{\frac{2}{\pi}}\int_0^{\Delta/2} e^{-t^2/2} dt = 2\Phi(\Delta/2) -1.\]
\end{proof}

Now, we show that the conversion of the trade-off function $f$ into Rényi divergence $l_\alpha(f)$ satisfies $l_\alpha(f) \geq C_{n_{\text{syn}},k,d} D_\alpha(V,W)$ when $C_{n_{\text{syn}},k,d}$ is sufficiently small.

\begin{proposition}
\label{prop:prior-rate}
    Assume that $d > n_{\text{syn}}, k$. Then, in the privacy amplification regime, or equivalently, when $G_{(d-k,v,w)}\left(t + C_{n_{\text{syn}},k,d}\right) -C_{n_{\text{syn}},k,d}$ intersects with $T(V,W)$, the following relationship holds:
    \[l_\alpha(f) \geq 2C'k\sqrt{\frac{n_{\text{syn}}}{d-k}}D_\alpha(V,W).\]
\end{proposition}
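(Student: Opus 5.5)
We work in the reparametrization $z=\Phi^{-1}(1-t)$ and with the relaxed trade-off function $g\geq f$ introduced above. Since $g\ge f$ pointwise and $l_\alpha$ is monotone (a larger trade-off function certifies more privacy), we have $l_\alpha(f)\ge l_\alpha(g)$, so it suffices to lower bound $l_\alpha(g)$.

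Write $c:=C_{n_{\text{syn}},k,d}$. In the $z$ variable, $g(z)=\max\{\Phi(z-\Delta),\Phi(z)-2c\}$, and the branches cross at $z_-\le \Delta/2\le z_+$, the solutions of $H(z)=2c$ with $H(z)=\Phi(z)-\Phi(z-\Delta)$. The integral $\int|g'(t)|^{1-\alpha}dt$ splits into two pieces:
\begin{itemize}
\item on $(z_-,z_+)$ we have $|g'|=1$, which contributes exactly the $t$-length $\Phi(z_+)-\Phi(z_-)$;
\item outside, the Gaussian branch contributes the Gaussian integrand $e^{(\alpha-1)\Delta z-\ldots}$ weighted against $\phi$.
\end{itemize}

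The next step is to write $e^{(\alpha-1)l_\alpha(g)}$ as the full Gaussian integral $e^{(\alpha-1)D_\alpha(V,W)}=e^{(\alpha-1)\alpha\Delta^2/2}$ minus the Gaussian mass on $(z_-,z_+)$, plus the length of that interval. Equivalently, the deficit is
\[
\delta_{\text{def}}=\int_{z_-}^{z_+}\big(|T'|^{1-\alpha}-1\big)\,dt,
\]
where $T=T(V,W)$.

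We then use concavity to turn this into a linear-in-$c$ lower bound. Because $\log$ is concave and $l_\alpha(g)=\frac{1}{\alpha-1}\log(A-\delta_{\text{def}})$ with $A=e^{(\alpha-1)D_\alpha}$, we get
\[
l_\alpha(g)\ge D_\alpha(V,W)\Big(1-\frac{\delta_{\text{def}}}{A-1}\Big)\quad\text{or}\quad l_\alpha(g)\ge \frac{\log(A-\delta_{\text{def}})}{\log A}\,D_\alpha(V,W).
\]
Both forms are lower bounds, since $\log$ is concave and $\log 1=0$. The ratio $\frac{\log(A-\delta_{\text{def}})}{\log A}$ is at least $\frac{A-\delta_{\text{def}}-1}{A-1}$, so the task reduces to showing
\[
\frac{A-1-\delta_{\text{def}}}{A-1}\ \ge\ 2c.
\]
This amounts to showing that the amplification region cannot remove more than a $(1-2c)$ fraction of the chi-type mass $A-1=\int(|T'|^{1-\alpha}-1)\,dt$.

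\textbf{Main obstacle.} The hard part is controlling the mass that lies outside $(z_-,z_+)$, i.e.\ showing that the tails $z<z_-$ and $z>z_+$ retain at least a $2c$ fraction of $\int(|T'|^{1-\alpha}-1)$. The plan for this step:
\begin{itemize}
\item Use the symmetry of $H$ about $\Delta/2$.
\item Use the defining relation $H(z_\pm)=2c$: the Gaussian mass beyond the crossing points is $\Phi(z_-)+1-\Phi(z_+)$, which by a tail comparison is at least $2c$ for small $c$.
\item Compare the integrand $|T'|^{1-\alpha}-1=e^{(\alpha-1)(\Delta z-\Delta^2/2)}-1$, which is increasing in $|z-\Delta/2|$ on the relevant sides (convex in $z$). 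By a rearrangement / Chebyshev-type argument, its average over the tails dominates its overall average, so the tails carry at least proportion $2c$ of the total.
\end{itemize}
If this proportional argument fails for large $\Delta$, the fallback is to restrict to the regime where the intersection exists and $c$ is small, and to expand to first order in $c$. This gives $l_\alpha(f)\ge l_\alpha(g)\ge 2c\,D_\alpha(V,W)$, which is the claim.
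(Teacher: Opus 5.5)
Your opening steps are sound and follow the paper's proof closely. Passing to the relaxed $g$ via monotonicity of $l_\alpha$ is correct. So is splitting $\int|g'|^{1-\alpha}\,dt$ into the Gaussian tails plus the length $L=\Phi(z_+)-\Phi(z_-)$ of the middle region. Applying the chord bound for $\log$ (the paper uses the equivalent chord bound for $\exp$) then correctly reduces the claim to showing that the tails retain at least a $2c$ fraction of $e^{M}-1$, where $M=\alpha(\alpha-1)\Delta^2/2$. The chord bound needs $1\le e^{(\alpha-1)l_\alpha(g)}\le e^{M}$, which follows from $T(V,W)\le g$; you should say so.

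However, this fraction inequality is the entire content of the proposition, and your argument for it does not work. With $z=\Phi^{-1}(1-t)$ the integrand is $|T'|^{1-\alpha}-1=e^{(\alpha-1)(\Delta^2/2-\Delta z)}-1$; your exponent has the wrong sign. This is an exponential of an affine function of $z$, minus one. It is therefore monotone in $z$, positive on one side of $\Delta/2$ and tending to $-1$ on the other. So it is not increasing in $|z-\Delta/2|$, and one of the two tails contributes negatively. Moreover, the weight $\phi(z)\,dz$ is symmetric about $0$, not about $\Delta/2$, so no rearrangement around $\Delta/2$ applies as you state it. Your fallback of expanding to first order in $c$ would only give an asymptotic statement, not the inequality claimed on the whole amplification regime.

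The missing ingredient is the completion of the square used in the paper. With $a=(\alpha-1)\Delta$, one has $|T'|^{1-\alpha}\,dt=e^{M}\phi(z+a)\,dz$. This turns each piece into a Gaussian mass and gives $e^{(\alpha-1)l_\alpha(g)}=e^{M}\tilde A+L$, where $\tilde A=\Phi(z_-+a)+\Phi(-z_+-a)\in[0,1]$. Two facts then close the argument.
\begin{itemize}
\item $\tilde A+L\ge 1$. The standard Gaussian mass of $(z_-+a,z_++a)$ is at most that of $(z_-,z_+)$. Indeed, $z_-+z_+=\Delta$ (by symmetry of $H$) places the centre of $(z_-,z_+)$ at $\Delta/2\ge 0$, and shifting by $a>0$ moves it away from the mode.
\item $\tilde A\ge\Phi(z_-)=2c+\Phi(z_--\Delta)\ge 2c$. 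This holds for every $c$, not only for small $c$.
\end{itemize}
Hence your fraction equals $(e^{M}\tilde A+L-1)/(e^{M}-1)\ge\tilde A\ge 2c$. Your concavity step then gives $l_\alpha(f)\ge l_\alpha(g)\ge 2c\,D_\alpha(V,W)$.

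Incidentally, $\tilde A+L\ge 1$ is exactly equivalent to your claim that the tail average of $|T'|^{1-\alpha}-1$ dominates its overall average, so that claim is true. It just has to be proved as above, not by pointwise monotonicity in $|z-\Delta/2|$. Alternatively, pair the points $\Delta/2\pm u$ and use $\phi(\Delta/2-u)/\phi(\Delta/2+u)=e^{\Delta u}$ to show that the weighted paired average is nondecreasing in $u$.
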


\begin{proof}

We give a lower bound of the Rényi divergence associated to the trade-off function $f$. We decompose the integral into $3$ pieces representing the Gaussian regime and the amplification regime, and perform the change of variable $z = \Phi^{-1}(1-t)$.
\begin{align*}
    \exp((\alpha-1)l_\alpha(f)) &=  \int_0^1 |f'(t)|^{1-\alpha} dt\\ &=  \int_0^1 |\partial_\theta\max\left\{\Phi(\Phi^{-1}(1-t) - \Delta),1-t - C_{n_{\text{syn}},k,d}\right\}|^{1-\alpha} dt\\
    &=  \int_0^{1-\Phi(z_+)} |\partial_\theta \Phi(\Phi^{-1}(1-t) - \Delta)|^{1-\alpha}dt \\&+ \int_{1-\Phi(z_-)}^1  |\partial_\theta \Phi(\Phi^{-1}(1-t) - \Delta)|^{1-\alpha}dt + |\Phi(z_-) - \Phi(z_+)|\\
    &=  \int_0^{1-\Phi(z_+)}-(\Phi^{-1})'(1-t)^{1-\alpha} \phi(\Phi^{-1}(1-t) - \Delta)^{1-\alpha}dt \\&+ \int_{1-\Phi(z_-)}^1 -(\Phi^{-1})'(1-t)^{1-\alpha}\phi(\Phi^{-1}(1-t) - \Delta)^{1-\alpha}dt +|\Phi(z_-) - \Phi(z_+)|\\
    &= \int_{z_+}^{+\infty} \frac{\phi(z)^{\alpha}}{\phi(z-\Delta)^{\alpha-1}}dz + \int_{-\infty}^{z_-} \frac{\phi(z)^{\alpha}}{\phi(z-\Delta)^{\alpha-1}}dz +|\Phi(z_-) - \Phi(z_+)|\;,\; (z = \Phi^{-1}(1-t)),
\end{align*}
where $\phi$ is the Gaussian density function. By completing the square, for $x < y$, we have:
\begin{align*}
    \int_{x}^{y} \frac{\phi(z)^{\alpha}}{\phi(z-\Delta)^{\alpha-1}}dz &= \frac{1}{\sqrt{2\pi}}\int_{x}^{y} e^{((\alpha-1)(z-\Delta)^2-\alpha z^2)/2}dz\\
    &= \frac{1}{\sqrt{2\pi}}\int_{x}^{y} e^{(-z^2 -2(\alpha-1)\Delta + (\alpha-1)\Delta^2)/2}dz\\
    &= \frac{1}{\sqrt{2\pi}}\int_{x}^{y} e^{-(z + (\alpha-1)\Delta)^2)/2}dz\\
    &=\frac{1}{\sqrt{2\pi}}e^{\alpha(\alpha-1)\Delta^2/2}\int_{x}^{y} e^{(-z^2 -2(\alpha-1)\Delta + (\alpha-1)\Delta^2)/2}dz\\
    &= e^{\alpha(\alpha-1)\Delta^2/2} (\Phi(y + (\alpha-1)\Delta) - \Phi(x + (\alpha-1)\Delta)).
\end{align*}

Then, we can express $\exp((\alpha-1)l_\alpha(f))$ as a function of the Gaussian cumulative function distribution $\Phi$:
\begin{align*}
    \exp((\alpha-1)l_\alpha(f)) &= e^{\alpha(\alpha-1)\Delta^2/2} (1 - \Phi(z_+ + (\alpha-1)\Delta) + \Phi(z_- + (\alpha-1)\Delta)+ |\Phi(z_-) - \Phi(z_+)|\\
    &= e^{\alpha(\alpha-1)\Delta^2/2} (\Phi(z_- + (\alpha-1)\Delta) +\Phi(-(z_+ + (\alpha-1)\Delta)) +  \Phi(z_+) - \Phi(z_-)
\end{align*}

Now, we derive a upper bound to $\Phi(z_+) - \Phi(z_-)$. We note $G_\Delta : x \mapsto \Phi(\Phi^{-1}(1-x) - \Delta)$, and $x_+ = 1-\Phi(z_+), x_- = 1- \Phi(z_-)$. By symmetry of $G_\Delta$ ($G_\Delta = G_\Delta^{-1}$), $x_+ = G_\Delta(x_-)$ and, since $x_-$ is at the intersection between $G_\Delta$ and $1 - I_d - 2C_{n_{\text{syn}},k,d}$:
\[x_+ + x_- = G_\Delta(x_-) + x_- = 1 - 2C_{n_{\text{syn}},k,d}.\]

Also, leveraging the definitions of $x_-$ and $x_+$:
\[\Phi(z_+) + \Phi(z_-) = (1-x_+) + (1-x_-) = 1+2C_{n_{\text{syn}},k,d}.\]

Then, we show that $z_+ = z_- + \Delta$. In fact, $\Phi(z_+) - \Phi(z_+-\Delta) = 2C_{n_{\text{syn}},k,d}$ and: 
\begin{align*}
    \Phi(\Delta - z_+) - \Phi(-z_+) = \Phi(z_+) - \Phi(z_+ - \Delta) = 2C_{n_{\text{syn}},k,d}.
\end{align*}
Then, if $z$ is a solution of the equation $\Phi(z) - \Phi(z-\Delta) = 2C_{n_{\text{syn}},k,d}$ , $\Delta-z$ is also a solution of the equation. Since this equation only admits two solutions, $z_+ + z_- = \Delta$.
We note $A = \Phi(z_- + (\alpha-1)\Delta) +\Phi(-(z_+ + (\alpha-1)\Delta)$ and $L = \Phi(z_+) - \Phi(z_-)$. Then, we have shown that:
\begin{align*}
        \exp((\alpha-1)l_\alpha(f)) &= e^{\alpha(\alpha-1)\Delta^2/2} A+ L.
\end{align*}
Now, let us get a lower bound of this expression. Since $\Phi$ is increasing and $\alpha > 1$, we have:
\begin{align*}
    A = \Phi(z_- + (\alpha-1)\Delta) +\Phi(-(z_+ + (\alpha-1)\Delta) \geq \Phi(z_- + (\alpha-1)\Delta) \geq \Phi(z_-)  \geq 2C_{n_{\text{syn}},k,d}.
\end{align*}
Also, we obtain a lower bound of $A+L$ by showing that $(\Phi(z_- + (\alpha-1)\Delta)- \Phi(z_-)) -(\Phi(z_+ + (\alpha-1)\Delta) - \Phi(z_+)) \geq 0$.
\begin{align*}
    A+L &= 1+ (\Phi(z_- + (\alpha-1)\Delta)- \Phi(z_-)) -(\Phi(z_+ + (\alpha-1)\Delta) - \Phi(z_+))\\
    &= 1+ \int_{0}^{ (\alpha-1)\Delta} (\phi(t+z_-) -  \phi(t+z_+)) dt\\
    &= 1+ \frac{1}{\sqrt{2\pi}}\int_{0}^{ (\alpha-1)\Delta} e^{((t+z_+)^2 - (t+z_-)^2)/2} dt\\
    &= 1+ \frac{1}{\sqrt{2\pi}}\int_{0}^{ (\alpha-1)\Delta} e^{(z_+-z_-)(z_+ + z_- + 2t)/2} dt \geq 1.\\
\end{align*}
By the same reasoning,
\[0 \leq A = 1+ \Phi(z_- + (\alpha-1)\Delta) -\Phi(z_+ + (\alpha-1)\Delta) = 1 + \int_{z_+}^{z_-} \phi(t + (\alpha-1)\Delta) dt \leq 1.\]

Then, by convexity of the $x \mapsto e^x$ function for the input $A \alpha(\alpha-1)\Delta^2/2 + (1-A)\cdot 0$:
\[e^{\alpha(\alpha-1)\Delta^2/2} A+ L \geq e^{\alpha(\alpha-1)\Delta^2/2} A+ 1-A \geq e^{A\alpha(\alpha-1)\Delta^2/2} \geq  e^{C_{n_{\text{syn}},k,d}\alpha(\alpha-1)\Delta^2}.\]
Recall that $D_\alpha(V,W) = \alpha\Delta^2/2$. Then, 
\[\exp((\alpha-1)l_\alpha(f)) \geq \exp(2(\alpha-1) C_{n_{\text{syn}},k,d} D_\alpha(V,W)).\]
Then, the desired result holds:
\[l_\alpha(f) \geq 2C'k\sqrt{\frac{n_{\text{syn}}}{d-k}}D_\alpha(V,W).\]
\end{proof}

\newpage
\section{Relationship between Rényi divergences and Fisher information}
\label{app:Renyi-Fisher}

In this section, we introduce Fisher information and its properties needed for the understanding of the paper. Then, we examine the relationship between Rényi divergences and Fisher information.

\subsection{Fisher information and its properties}
\label{app:Fisher-properties}
This appendix briefly recalls the notion of Fisher information and the few properties we rely on throughout the paper. In this paper, we focus on Fisher information of families of probabilities indexed by scalar parameters. In this paper, all considered distributions are "sufficiently regular". We define sufficient regularity conditions for the usual Fisher information properties to hold.
\begin{definition}[Regular family of probability measures]
\label{defi:regularity}
     Let $P = \{P_\theta : \theta\in\Theta \subset \mathbb{R}\}$ be a family of probability measures on $(\mathbb{R},\mathcal{B}(\mathbb{R}))$. Then, $P$ is regular if:
     \begin{itemize}
         \item $\Theta$ is a right-open set of $\mathbb{R}$: for every point $\theta \in \Theta$, there exists a neighborhood of the form $[\theta,\theta + \delta) \subset \Theta$.
         \item All $P_\theta \in P$ have a common support.
         \item For all $\theta \in \Theta$, $P_\theta$ is absolutely continuous with respect to the Lebesgue measure. We note the corresponding densities $p_\theta$.
         \item For all $\theta \in \Theta$, $\theta \mapsto \log p_\theta(x)$ is twice almost everywhere differentiable.
         \item For all $\theta \in \Theta$,  there exist a neighborhood $U$ and integrable functions $\tilde h_1, \tilde h_2$ such that for all $\theta' \in U$, \[\left|\partial_{\theta'} p_{\theta'}(x)\right| \leq \tilde h_1(x), \;\; \left|\partial_{\theta'}^2 p_{\theta'}(x)\right| \leq \tilde h_2(x).\]
         This condition ensures the sufficient regularity of $P$ for generic properties of Fisher information to hold.
         \item For all $\theta \in \Theta$,  there exist a neighborhood $U$ and integrable functions $h_1, h_2$ such that for all $\theta' \in U$, \[\left|\partial_{\theta'}\log p_{\theta'}(x)\right|\frac{p_{\theta'}(x)^\alpha}{p_{\theta}(x)^{\alpha-1}} \leq h_1(x), \;\; \left(\partial_{\theta'} \log p_{\theta'} (x)^2 +\left| \partial_{\theta'}^2\log p_{\theta'}(x)\right|\right)\frac{p_{\theta'}(x)^\alpha}{p_{\theta}(x)^{\alpha-1}} \leq h_2(x).\]
         This condition is specific to the differentiability of Rényi divergences.
     \end{itemize}
\end{definition}
Note that these are not minimal conditions for the existence of Fisher information. Now, we define Fisher information and score functions:

\begin{definition}[Fisher information and score function]
    Let $P = \{P_\theta : \theta\in\Theta \subset \mathbb{R}\}$ be a regular family of probability measures. Then, the score function is defined by:
    \[s_\theta(x) = \partial_\theta \log p_\theta(x).\]
    The Fisher information of the family $P$ with respect to the parameter $\theta \in \Theta$ is:
    \[I(\theta) = \mathbb{E}_{X \sim P_\theta}[s_\theta(X)^2] \in \mathbb{R}^+.\]
\end{definition}

Here are some properties of the Fisher information that are used throughout the paper:

\begin{proposition}[Fisher information and score properties]
\label{prop:Fisher-properties}
    Let $P = \{P_\theta : \theta\in\Theta \subset \mathbb{R}\}$ be regular of probability  measures such that the Fisher information $I(\theta)$ exists for all $\theta \in \Theta$. Then, the Fisher information and the score function satisfy the following properties:
    \begin{itemize}
        \item \textbf{Mean-zero score and information identity, Lemma 5.3 of~\citet{Lehmann1998}:} $\mathbb{E}_{X \sim P_\theta}[s_\theta(X)] = 0$, hence $I(\theta) = \Var_{X \sim P_\theta}(s_\theta(X))$. Moreover, when $\partial_\theta^2 \log p_\theta(X)$ exists almost averywhere, the information identity holds: \[I(\theta) = - \mathbb{E}_{X \sim P_\theta}[\partial_\theta^2 \log p_\theta(X)].\]
        \item \textbf{Post-processing inequality~\citep{Schervish1995}:} Let $X\sim P_\theta$ and let $Y \sim K(X)$ be generated from a Markov kernel $K$ applied on $X$. Then, \[I_Y(\theta) \leq I_X(\theta).\]
        \item \textbf{Additivity under independent products, adapted from Theorem 5.8 of~\citet{Lehmann1998}:} If $X = (X_1,\dots,X_n)$ are jointly independent given $\theta$, with joint density $p_\theta(x_1,\dots,x_m)=\prod_{i=1}^n p^{(i)}_\theta(x_i)$, then the Fisher information of $X$ with respect to the parameter $\theta$ decomposes as: \[I_{(X_1,\dots,X_n)}(\theta) = \sum_{i=1}^n I_{X_i}(\theta).\]
    \end{itemize}
\end{proposition}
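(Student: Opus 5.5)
These are three standard facts, and I will verify each directly from the regularity conditions of Definition~\ref{defi:regularity}; each reduces to a short computation.

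\textbf{Mean-zero score and information identity.} Since $\int p_\theta(x)\,dx = 1$ for all $\theta$, I differentiate under the integral sign. This is justified by dominated convergence, using the integrable bound $\tilde h_1$ on $|\partial_\theta p_\theta|$. It gives $0=\int \partial_\theta p_\theta = \int s_\theta p_\theta = \mathbb{E}_{P_\theta}[s_\theta]$, so $I(\theta)=\Var(s_\theta)$. Differentiating once more, now dominated by $\tilde h_2$, gives $0=\int \partial_\theta^2 p_\theta$. The identity
\[
\partial^2_\theta \log p_\theta = \frac{\partial^2_\theta p_\theta}{p_\theta} - s_\theta^2
\]
holds on the common support, and integrating it against $p_\theta$ yields $\mathbb{E}[\partial_\theta^2\log p_\theta] = -I(\theta)$.

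\textbf{Additivity.} For independent coordinates, $\log p_\theta(x)=\sum_i \log p^{(i)}_\theta(x_i)$, so the score is $s_\theta=\sum_i s^{(i)}_\theta(X_i)$. Squaring and taking expectations, each cross term equals $\mathbb{E}[s^{(i)}_\theta]\,\mathbb{E}[s^{(j)}_\theta]=0$ by independence and the mean-zero property of each marginal family. Hence $I_X(\theta)=\sum_i I_{X_i}(\theta)$.

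\textbf{Post-processing.} This is the only step needing an idea. The joint density of $(X,Y)$ is $p_\theta(x)k(y\mid x)$, where the kernel $k$ does not depend on $\theta$. The marginal is $q_\theta(y)=\int p_\theta(x)k(y\mid x)\,dx$. Differentiating under the integral sign, justified by the same domination, gives the key identity
\[
\partial_\theta \log q_\theta(Y) = \mathbb{E}[s_\theta(X)\mid Y].
\]
Conditional Jensen then gives $\mathbb{E}[(\mathbb{E}[s_\theta(X)\mid Y])^2]\le \mathbb{E}[s_\theta(X)^2]$, i.e.\ $I_Y(\theta)\le I_X(\theta)$.

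\textbf{Main obstacle.} The hardest step is justifying the interchange of derivative and integral for $q_\theta$, since the domination is stated for $p_\theta$ rather than for $q_\theta$. I will handle this via Fubini: $\int \tilde h_1(x)\,k(y\mid x)\,dx$ is integrable in $y$, which lets me differentiate $q_\theta$ for almost every $y$. Alternatively, I could cite \citet{Schervish1995} for this step directly.
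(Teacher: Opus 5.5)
Your arguments for the mean-zero score, the information identity and additivity are the standard textbook proofs. The paper gives no proof of its own, only the citations, and these three parts are fine under Definition~\ref{defi:regularity}. For additivity you implicitly need each marginal family to be regular so that $\mathbb{E}_\theta[s^{(i)}_\theta(X_i)]=0$, which is the natural reading of the statement.

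The gap is in the post-processing step. Writing the joint law as $p_\theta(x)k(y\mid x)$ assumes that $K$ has a transition density with respect to a fixed $\sigma$-finite measure on the output space. Your Fubini justification ($\int \tilde h_1(x)k(y\mid x)\,dx<\infty$ for a.e.\ $y$) rests on that assumption. It fails for deterministic kernels $K(x,\cdot)=\delta_{f(x)}$ whenever $f(X)$ is not discrete, because a $\sigma$-finite measure has at most countably many atoms. Deterministic maps are precisely how the paper uses this bullet: the sum map $f(\omega_1,\dots,\omega_{k+1})$ in the proof of Theorem~\ref{prop:fisher-upper-bound-multi-dim}, and the bijection $x\mapsto\sqrt{x}$ behind $I(d,\theta)=I_X(d,\theta)$. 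So as written, your proof does not cover the cases that matter.

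The identity $\partial_\theta\log q_\theta(Y)=\mathbb{E}_\theta[s_\theta(X)\mid Y]$ is still true, but you should obtain it by disintegrating in the other direction.
\begin{itemize}
\item Fix $\theta$ and let $Q_{\theta'}$ be the law of $Y$ under $\theta'$.
\item Let $\kappa_\theta(y,dx)$ be the conditional law of $X$ given $Y=y$ under $P_\theta$; it exists since $X$ is Euclidean.
\item Use $Q_\theta$ as the dominating measure, so that $q_{\theta'}(y):=\frac{dQ_{\theta'}}{dQ_\theta}(y)=\int \frac{p_{\theta'}(x)}{p_\theta(x)}\,\kappa_\theta(y,dx)$.
\end{itemize}
Your Fubini idea now works with the dominator $\tilde h_1/p_\theta$, because $\int\!\!\int \frac{\tilde h_1(x)}{p_\theta(x)}\,\kappa_\theta(y,dx)\,Q_\theta(dy)\le\int \tilde h_1(x)\,dx<\infty$. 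Hence for $Q_\theta$-a.e.\ $y$ you may differentiate under the integral sign. The derivative of $\log q_{\theta'}(y)$ at $\theta'=\theta$ equals $\int s_\theta(x)\,\kappa_\theta(y,dx)=\mathbb{E}_\theta[s_\theta(X)\mid Y=y]$; score derivatives do not depend on the choice of dominating measure. Conditional Jensen then concludes exactly as you planned. This version covers arbitrary Markov kernels, including deterministic ones, and contains your density case.
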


\subsection{A local relationship between Rényi divergences and Fisher information}
\label{app:Local-fisher-renyi}
In parametric distributions, the second order derivative of most $f$-divergences coincide the Fisher information~\citep{Polyanskiy2025}. Rényi divergences are obtained as a smooth monotone transformation of a $f$-divergence. This yields a local expansion of the second order of Rényi divergences as a function of Fisher information under small perturbations of the parameter for sufficiently regular families of distributions, as noted by~\citet{Haussler1997} and~\citet{vanErven2014}. However, in these works, the exact technical conditions on the parametrization that are needed are not explicited, and the authors do not know references for these conditions~\citep{vanErven2014}. For completeness, we prove that under the regularity conditions of Definition~\ref{defi:regularity}, the local expansion of Rényi divergences as a function of Fisher information hold. In this paper, all considered families of distributions satisfy these conditions. Note that we do not claim that these conditions are minimal.

\begin{proposition}[Local relationship between Rényi divergences and Fisher information]
\label{prop:local-fisher-app}
    Let $\{P_\theta, \theta\in \Theta \subset \mathbb{R}\}$ be a family of probability distributions. 
    Let $\theta \in \Theta$. Let $\Delta > 0$ such that $\theta + \Delta \in \Theta$. Then, if $P$ is regular (Definition~\ref{defi:regularity}),
    \[D_\alpha(P_{\theta + \Delta},P_{\theta}) = \frac{\alpha}{2}I(\theta)\Delta^2 + o(\Delta^2).\]
\end{proposition}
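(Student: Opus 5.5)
We prove the expansion by a second-order Taylor expansion, in $\Delta$, of the moment functional
\[
F(\Delta) := \int \frac{p_{\theta+\Delta}(x)^\alpha}{p_\theta(x)^{\alpha-1}}\,dx = \mathbb{E}_{X\sim P_\theta}\!\left[\left(\frac{p_{\theta+\Delta}(X)}{p_\theta(X)}\right)^{\alpha}\right],
\]
which satisfies $D_\alpha(P_{\theta+\Delta},P_\theta) = \frac{1}{\alpha-1}\log F(\Delta)$. Since $\Theta$ is right-open, $F$ is defined on some interval $[0,\delta)$. The common support ensures the ratio is well defined, and $F(0)=1$.

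\textbf{First derivative.} For $\theta' = \theta+\Delta$ in the neighbourhood $U$ of the last regularity condition,
\[
\partial_{\theta'}\left(\frac{p_{\theta'}^\alpha}{p_\theta^{\alpha-1}}\right) = \alpha\, s_{\theta'}\,\frac{p_{\theta'}^\alpha}{p_\theta^{\alpha-1}}.
\]
This is dominated by $\alpha h_1$, so dominated convergence (via the mean value theorem) allows differentiation under the integral sign:
\[
F'(\Delta) = \alpha\int s_{\theta+\Delta}\,\frac{p_{\theta+\Delta}^\alpha}{p_\theta^{\alpha-1}}\,dx.
\]
At $\Delta=0$ this equals $\alpha\,\mathbb{E}_{P_\theta}[s_\theta]=0$ by the mean-zero score property of Proposition~\ref{prop:Fisher-properties}.

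\textbf{Second derivative.} Differentiating once more gives
\[
\partial_{\theta'}^2\left(\frac{p_{\theta'}^\alpha}{p_\theta^{\alpha-1}}\right) = \left(\alpha^2 s_{\theta'}^2 + \alpha\,\partial_{\theta'}^2\log p_{\theta'}\right)\frac{p_{\theta'}^\alpha}{p_\theta^{\alpha-1}},
\]
which is dominated by $\alpha^2 h_2$. Hence $F$ is twice differentiable on $[0,\delta)$ and
\[
F''(0) = \alpha^2\,\mathbb{E}_{P_\theta}[s_\theta^2] + \alpha\,\mathbb{E}_{P_\theta}[\partial_\theta^2\log p_\theta].
\]
By the information identity, $\mathbb{E}_{P_\theta}[\partial_\theta^2\log p_\theta] = -I(\theta)$, and its hypotheses are supplied by the $\tilde h_1,\tilde h_2$ domination. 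Therefore
\[
F''(0) = \alpha^2 I(\theta) - \alpha I(\theta) = \alpha(\alpha-1)\,I(\theta).
\]

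\textbf{Conclusion.} Taylor's theorem with Peano remainder (one-sided at $0$, which requires only twice differentiability at $0$) gives
\[
F(\Delta) = 1 + \tfrac{\alpha(\alpha-1)}{2}\,I(\theta)\,\Delta^2 + o(\Delta^2).
\]
Using $\log(1+u) = u + O(u^2)$ with $u = O(\Delta^2)$ and dividing by $\alpha-1$,
\[
D_\alpha(P_{\theta+\Delta},P_\theta) = \frac{\alpha}{2}\,I(\theta)\,\Delta^2 + o(\Delta^2).
\]

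\textbf{Main obstacle.} The delicate step is justifying the two interchanges of derivative and integral. The bounds must hold uniformly for $\theta'$ in a neighbourhood, and this is exactly what $h_1$ and $h_2$ provide. One must also check that $F''$ exists at $0$ in the one-sided sense, which dominated convergence of the difference quotients of $F'$ handles using $h_2$.
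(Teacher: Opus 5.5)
Your proposal is correct and follows essentially the same route as the paper: both differentiate $\theta'\mapsto\int p_{\theta'}^\alpha/p_\theta^{\alpha-1}$ twice under the integral sign, kill the first derivative at $\theta$ with the mean-zero score, and evaluate the second as $\alpha(\alpha-1)I(\theta)$ via the information identity. The only cosmetic difference is that you Taylor-expand this moment functional and then apply $\log(1+u)=u+O(u^2)$, whereas the paper differentiates $\frac{1}{\alpha-1}\log G$ twice directly; your justification of the one-sided derivative at $0$ and of the dominated-convergence interchanges is more explicit than the paper's.
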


\begin{proof}
    Let $\alpha > 1$ and $P$ be a regular family of distribution, as defined in Definition~\ref{defi:regularity}. For $\theta,\theta' \in \Theta$, we note $H(\theta',\theta) = \int \frac{p_{\theta'}(x)^\alpha}{p_{\theta}(x)^{\alpha-1}} dx$. Fix $\theta \in \Theta$ and note for every $\theta' \in \Theta$, $G(\theta') = H(p_{\theta'}, p_{\theta})$. Let $\Delta > 0$ such that $\theta + \Delta \in \Theta$. Since $P$ is regular, the conditions of Definition~\ref{defi:regularity} are sufficient to ensure the twice differentiability of $G$. Then $G(\theta) = 1$ and:
    \begin{align*}
        \partial_{\theta'} G(\theta') &= \alpha\int \partial_{\theta'} \log p_{\theta'} (x)\frac{p_{\theta'}(x)^\alpha}{p_{\theta}(x)^{\alpha-1}} dx,\\
        \partial_{\theta'}^2 G(\theta') &= \alpha \int (\partial_{\theta'}^2\log p_{\theta'}(x) + \alpha (\partial_{\theta'} \log p_{\theta'}(x))^2)\frac{p_{\theta'}(x)^\alpha}{p_{\theta}(x)^{\alpha-1}} dx.
    \end{align*}
    Since score has mean $0$ (Proposition~\ref{prop:Fisher-properties}), $\partial_{\theta} G(\theta) = 0$ and $\partial_{\theta}^2 G(\theta) = \alpha \mathbb{E}_{X \sim P_\theta}[\partial_\theta^2 \log p_\theta(X)] + \alpha^2 I(\theta)$. By the information equality (Proposition~\ref{prop:Fisher-properties}), $I(\theta) = -\mathbb{E}_{X \sim P_\theta}[\partial_\theta^2 \log p_\theta(X)]$. Therefore, $\partial_\theta^2 G(\theta) = \alpha (\alpha-1) I(\theta)$.
    Since $G$ is positive and twice differentiable on $\Theta$, $D : \theta' \mapsto D_\alpha(P_{\theta'}, P_\theta) = \frac{1}{\alpha-1} \log G(\theta')$ is twice differentiable on $\Theta$. Then,
    \begin{align*}
        \partial_{\theta'} D(\theta') &= \frac{1}{\alpha-1} \frac{\partial_{\theta'} G(\theta')}{G(\theta')},\\
        \partial_{\theta'}^2 D(\theta') &= \frac{1}{\alpha-1} \left(\frac{\partial_{\theta'}^2 G(\theta')}{G(\theta')} - \frac{\partial_{\theta'} G(\theta')^2}{G(\theta')^2}\right).
    \end{align*}
    Therefore, $\partial_\theta D(\theta) = 0$ and $\partial_\theta^2 D(\theta) = \alpha I(\theta)$.
    Thus, the following asymptotic expansion holds:
     \[D_\alpha(P_{\theta + \Delta},P_{\theta}) = \frac{\alpha}{2}I(\theta)\Delta^2 + o(\Delta^2).\]
\end{proof}

\subsection{Beyond the local expansion: can Fisher information control Rényi divergence globally?}
\label{app:fisher-global-counterexample}

A natural question is whether this local relationship between Fisher information and Rényi divergences extends beyond infinitesimal perturbations. More precisely, one may ask if there is a \emph{distribution-free} functional $F$ such that, given an arbitrary regular family of distribution $P$ and any $\theta < \theta' \in \Theta$:  \[D_\alpha(P_{\theta'}, P_\theta) = F(\alpha, \{I(u); \;u \in [\theta,\theta']\}).\] In general, no such characterization exists: Fisher information is a local quantity, and it does not determine Rényi divergence for arbitrary parameter values. Our counterexample of Theorem 4.1 of~\citet{Abbasnejad2006}, studied below, is a simple illustration of two family distributions that can have the same Fisher information but different Rényi divergences. There are, however, specific families of distributions in which global formulas are available. First, for distributions belonging to the same exponential family, Fisher information~\citep[Theorem~5.4]{Lehmann1998} and Rényi divergences~\citep[Chapter~2]{Liese1987} admit a closed form. Specific choices of parameters yield an integral representation of Rényi divergences in terms of the Fisher information~\citep{Bercher2012}, but these parametrizations do not simplify the theoretical calculation of the associated Rényi divergences, as the ratio $\frac{p_{\theta'}(x)^t}{p_{\theta}(x)^{1-t}}$ $(t \in (0,\alpha))$ appear in the parametrized densities.

A different line of work attempts to obtain global comparison criterions. For instance, one may hope that pointwise ordering of Fisher information implies ordering of Rényi divergences:
\[\text{ For all }\; \theta \in \Theta,\; I_P(\theta) \leq I_Q(\theta) \implies \text{For all } \theta < \theta' \in \Theta,\; D_\alpha(P_{\theta'},P_\theta) \leq D_\alpha(Q_{\theta'},Q_\theta),\]
where $P$ and $Q$ are two families of distributions indexed by the same parameter set $\Theta$, and the corresponding Fisher information are $I_P$ and $I_Q$.

Two existing attempts in the literature for Rényi divergences~\citep[Theorem~4.1]{Abbasnejad2006} and Kullback-Leibler (KL) divergences~\citep[Theorem~3.1]{Habibi2006} obtain such a statement. However, but they are false in general, even for smooth regular families. We provide a counterexample and discuss the flaw in the proofs below.

\begin{theorem}[Adapted from Theorem 4.1 of~\citet{Abbasnejad2006}]
\label{thm:renyi-fisher-comparison-wrong}
Let $P = \{P_\theta,\ \theta\in \Theta\}$ and $Q = \{Q_\theta,\ \theta\in \Theta\}$ be two families of distributions. Let $0 < \theta_0 < \theta_1 \in \Theta$. Let $\alpha > 1$. Assume that both families have finite Rényi divergence and Fisher information, continuous in $\theta$. If the following conditions hold:
\begin{enumerate}
    \item \label{item:condition-1} For all $\theta \in I=[\theta_0,\theta_1], \;I_P(\theta) - I_Q(\theta) \geq d_0 > 0$,
    \item \label{item:condition-2} For all $\theta \in (\theta_0,\theta_1)$, $\delta \mapsto D_\alpha(p_{\theta + \delta},p_\theta)$ and $\delta \mapsto D_\alpha(q_{\theta + \delta},q_\theta)$ are three times differentiable and the third derivatives are uniformly bounded in a neighborhood $I = [0,c]$ of zero, then:
\end{enumerate}
\[ D_\alpha(p_{\theta_1},p_{\theta_0}) >D_\alpha(q_{\theta_1},q_{\theta_0}).\]
\end{theorem}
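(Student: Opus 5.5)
The statement is false as stated, so instead of proving it I will construct a counterexample and then locate where the original argument must break. The plan is to take two one-parameter families on the same interval $[\theta_0,\theta_1]$. For $P$ I use a family whose Fisher information is pointwise larger but whose Rényi divergence grows slowly. For $Q$ I use a family whose Fisher information is pointwise smaller but whose Rényi divergence of order $\alpha>1$ becomes very large over a finite parameter range. This is possible because Fisher information only controls the local, Hellinger-type geometry, while $D_\alpha$ with $\alpha>1$ is sensitive to tail behaviour.

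\textbf{Concrete choice.} Let $P_\theta=\mathcal{N}(\theta,1)$. Then $I_P(\theta)=1$ and
\[
D_\alpha(P_{\theta_1},P_{\theta_0})=\alpha(\theta_1-\theta_0)^2/2 ,
\]
which is finite and explicit. Let $Q_\theta=\mathcal{N}(0,\sigma(\theta)^2)$ with $\sigma(\theta)=e^{c\theta}$. For the scale family, $I_Q(\theta)=2(\sigma'/\sigma)^2=2c^2$. Fix $c$ with $2c^2\le 1-d_0$, so condition~\ref{item:condition-1} holds with margin $d_0$.

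For Gaussians with common mean and variances $\sigma_1^2=\sigma(\theta_1)^2$, $\sigma_0^2=\sigma(\theta_0)^2$, the closed form is
\[
D_\alpha(\mathcal{N}(0,\sigma_1^2),\mathcal{N}(0,\sigma_0^2))=\log\frac{\sigma_0}{\sigma_1}+\frac{1}{2(\alpha-1)}\log\frac{\sigma_0^2}{\alpha\sigma_0^2-(\alpha-1)\sigma_1^2}.
\]
This expression is finite precisely when $\sigma_1^2/\sigma_0^2<\alpha/(\alpha-1)$, and it diverges to $+\infty$ as the ratio increases to that threshold. The ratio equals $e^{2c(\theta_1-\theta_0)}$. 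I choose $\theta_1-\theta_0=L$ so that $e^{2cL}$ is just below $\alpha/(\alpha-1)$; this keeps $D_\alpha(Q_{\theta_1},Q_{\theta_0})$ finite, as the hypotheses require. Tuning $L$ toward the threshold from below makes $D_\alpha(Q_{\theta_1},Q_{\theta_0})$ exceed $\alpha L^2/2=D_\alpha(P_{\theta_1},P_{\theta_0})$. The right side stays bounded near the threshold value $L^\ast=\log(\alpha/(\alpha-1))/(2c)$, while the left side blows up. Choosing $\theta_0>0$ arbitrary fixes the interval. This reverses the claimed inequality.

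\textbf{Regularity checks.} The remaining step is to verify condition~\ref{item:condition-2} and continuity. Both $\delta\mapsto D_\alpha(P_{\theta+\delta},P_\theta)=\alpha\delta^2/2$ and the explicit scale formula $\delta\mapsto D_\alpha(Q_{\theta+\delta},Q_\theta)$ depend only on $\delta$. They are smooth for $\delta\in[0,c']$ with $c'<L^\ast$, because the scale formula depends only on the ratio $e^{2c\delta}$, which stays strictly below the threshold there. Hence their third derivatives are uniformly bounded on that neighbourhood. Both families also satisfy Definition~\ref{defi:regularity}. The one delicate point is that condition~\ref{item:condition-2} only asks for boundedness near $\delta=0$, whereas our interval length $L$ is close to $L^\ast$. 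I will note explicitly that the hypothesis is local and is therefore satisfied even though the global divergence is large; this mismatch is exactly what the counterexample exploits.

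\textbf{Main obstacle: the flaw.} The hardest part is diagnosing why the original argument fails, not constructing the example. I expect the proof chains the local expansion of Proposition~\ref{prop:local-fisher-app} over small increments and sums them, implicitly treating $D_\alpha$ as additive along the path. Rényi divergence satisfies no such triangle or additivity property. The weak triangle inequality for Rényi divergences raises the order, from $\alpha$ to $2\alpha-1$ and beyond, at each step, and this compounding is why Proposition~\ref{prop:Fisher-upper-bound} needs an auxiliary envelope on $D_{2\alpha-1}$. I would confirm the diagnosis by checking that, in the example, the sum of local terms $\sum_i\frac{\alpha}{2}I_Q\,\delta_i^2\to0$ as the mesh shrinks, while the actual $D_\alpha$ does not.
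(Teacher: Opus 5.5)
Your counterexample is correct, but it differs from the paper's.

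\textbf{The two counterexamples.} The paper takes $P_\theta=\Cauchy(\theta,a)$ against $Q_\theta=\mathcal{N}(\theta,\sigma^2)$ with $a<\sigma/\sqrt2$ and $\alpha=2$. Both Fisher informations are constant, with $I_P=1/(2a^2)>1/\sigma^2=I_Q$. Yet $D_2(p_{\theta_1},p_{\theta_0})=\log\bigl(1+(\theta_1-\theta_0)^2/(2a^2)\bigr)$ grows only logarithmically, while $D_2(q_{\theta_1},q_{\theta_0})=(\theta_1-\theta_0)^2/\sigma^2$ grows quadratically. The inequality therefore reverses once the gap is large.

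Your mechanism is the mirror image. Instead of making the high-information family's divergence grow slowly, you make the low-information family's divergence blow up at the finite gap $L^\ast=\log(\alpha/(\alpha-1))/(2c)$. Your closed form, the finiteness threshold $\sigma_1^2/\sigma_0^2<\alpha/(\alpha-1)$ and $I_Q=2c^2$ are all correct.

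\textbf{What each buys.} Your version works verbatim for every $\alpha>1$, whereas the paper fixes $\alpha=2$. It fails at a bounded gap, and with both families Gaussian it isolates the tail sensitivity of $D_\alpha$ for $\alpha>1$. The paper's version has all pairwise divergences finite on all of $\mathbb{R}$, so the hypothesis ``finite Rényi divergence'' holds under any reading. With your scale family that hypothesis holds only for pairs closer than $L^\ast$. You should state explicitly that you take $\Theta=(\theta_0-\epsilon,\theta_1+\epsilon)$ with $L+2\epsilon<L^\ast$, and $c'\le\epsilon$ in condition~2. With that choice every claim in your argument goes through.

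\textbf{The flaw diagnosis.} Your account of why the original argument fails is a guess, and it does not match the paper's. The paper locates the error in one step. A local bound $\delta^{-2}\bigl(D_\alpha(p_{\theta+\delta},p_\theta)-D_\alpha(q_{\theta+\delta},q_\theta)\bigr)\ge d_1$ is turned into $D_\alpha(p_{\theta+\delta},p_\theta)-D_\alpha(q_{\theta+\delta},q_\theta)\ge d_1$ by dropping the $\delta^{-2}$. The subsequent telescoping induction over the points $\theta+k\delta$ then breaks, because $\delta$ depends on $\theta$.

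Your proposed check, that $\sum_i\frac{\alpha}{2}I_Q\delta_i^2\to0$, holds for any family with bounded Fisher information. For your own $P$, $\sum_i\frac{\alpha}{2}\delta_i^2\to0$ while $D_\alpha(p_{\theta_1},p_{\theta_0})=\alpha L^2/2$. So the check confirms nothing specific about the counterexample or the flaw.

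Finally, the $D_{2\alpha-1}$ envelope in Proposition~\ref{prop:Fisher-upper-bound} does not come from iterating a weak triangle inequality. It arises from applying Cauchy--Schwarz to $\partial_{\theta'}\int p_{\theta'}^{\alpha}p_\theta^{1-\alpha}$.
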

We start by pointing a counterexample to this theorem.

\paragraph{Counterexample of Theorem 4.1 from~\citet{Abbasnejad2006}:}
Fix $\alpha = 2$. Let $a, \sigma > 0$. Let $P_a = \{ \Cauchy(\theta, a),\; \theta \in \mathbb{R}\}$, where $\theta$ is the location parameter of the Cauchy distribution and $a$ is the scale parameter, and $Q_\sigma = \{ \mathcal{N}(\theta, \sigma^2),\; \theta \in \mathbb{R}\}$. Let $\theta_0 < \theta < \theta_1 \in \mathbb{R}$. Then, for the location parameter the Fisher information is constant for both families:
\[I_{P_a}(\theta) = \frac{1}{2 a^2},\;\;\; I_{Q_\sigma}(\theta) = \frac{1}{\sigma^2}.\]
 Also, the Rényi divergence between Cauchy distributions~\citep{Verdu2023} and between Gaussian distributions is known:
 \[D_2(p_{\theta_1}, p_{\theta_0}) = \log\left(1 + \frac{(\theta_1-\theta_0)^2}{2 a^2}\right),\;\;\; D_2(q_{\theta_1}, q_{\theta_0}) = \frac{(\theta_1-\theta_0)^2}{\sigma^2}.\]
 Both families of distributions satisfy the condition~\ref{item:condition-2} of Theorem~\ref{thm:renyi-fisher-comparison-wrong}. Moreover, choosing $a < \frac{1}{\sqrt{2}}\sigma$ allows to satisfy the condition~\ref{item:condition-1}: for all $\theta \in [\theta_0,\theta_1]$, $I_{Q_\sigma}(\theta) \leq I_{P_a}(\theta)$.
 However, for sufficiently large values of $\theta_1 - \theta_0$, it is clear that $D_2(p_{\theta_1}, p_{\theta_0}) \leq D_2(q_{\theta_1}, q_{\theta_0})$.

\paragraph{Flaw in the proofs of Theorem 4.1 from~\citet{Abbasnejad2006} and Theorem 3.1 from~\citet{Habibi2006}:}
The core idea of~\citet{Abbasnejad2006} and~\citet{Habibi2006} relies on showing that, under condition~\ref{item:condition-1}, for all $\theta \in \Theta$, there exists $\delta, d_1 > 0$ such that $\frac{1}{\delta^2}(D_\alpha(p_{\theta + \delta}, p_\theta) - D_\alpha(q_{\theta + \delta}, q_\theta)) \geq d_1$. Then, they incorrectly drop the $1/\delta^2$, by stating that when $c < 1$ (so that $\delta < 1$), $D_\alpha(p_{\theta + \delta}, p_\theta) - D_\alpha(q_{\theta + \delta}, q_\theta) \geq d_1$, which is generally false. The proof is not fixable independently of the regularity conditions. In fact, they try to generalize this statement for larger values of $\delta$ by induction to derive a lower bound on $D_\alpha(p_{\theta + k\delta}, p_\theta) - D_\alpha(q_{\theta + k\delta}, q_\theta)$ for $k > 1$, by a telescoping argument, but now that the lower bound of $D_\alpha(p_{\theta + \delta}, p_\theta) - D_\alpha(q_{\theta + \delta}, q_\theta)$ depends on $\delta$, which depends on $\theta$, the recurrence fails.

The theorem of~\citet{Habibi2006} for KL divergence follows the same proof technique and is also false.

\subsection{A sufficient condition to control Rényi divergence with Fisher information (Proposition~\ref{prop:Fisher-upper-bound})}
\label{app:prop-Fisher-upper-bound}

In this section, we introduce a criterion that allows to obtain a upper bound of Rényi divergences as a function of Fisher information along a parameter segment $[\theta,\theta']$. The derivation assumes an explicit upper envelope on the Rényi divergence that is available in our setting. It has the desirable property that it vanishes when the Fisher information becomes small on $[\theta,\theta']$ (provided $D_{2\alpha-1}$ remains controlled). This upper bound is generally not tight, as it scales as $\sqrt{I(\theta)}$ when the Fisher information is small, while the the local expansion of Theorem~\ref{thm:local-renyi} suggests a $I(\theta)$ rate. Here is the statement of the criterion:

\begin{proposition}[Upper bounding Rényi divergences through Fisher information]
\label{prop:app-renyi-criterion}
    Let $\{P_\theta, \theta\in \Theta \subset \mathbb{R}\}$ be a family of distributions. Let $\theta < \theta' \in \Theta$.  We assume that there exists a function $U : \mathbb{R}^2 \to \mathbb{R^+}$ such that for all  $z \in (\theta,\theta')$, $D_{2\alpha-1}(P_{z},P_{\theta}) \leq U(z,\theta)$. We also assume that $z \mapsto D_\alpha(P_z,P_\theta)$ is differentiable on $(\theta,\theta')$.  Then,
    \[D_\alpha(P_{\theta'}, P_\theta) \leq \frac{1}{\alpha-1}\log \left( 1+ \alpha\sup_{z \in (\theta,\theta')} \sqrt{I(z)} \int_{\theta}^{\theta'} e^{(\alpha-1)U(z,\theta)}dz\right).\]
\end{proposition}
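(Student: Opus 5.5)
We work with the moment functional $G(z) = \int p_z^\alpha p_\theta^{1-\alpha}\,dx = \exp((\alpha-1)D_\alpha(P_z,P_\theta))$. Since $G(\theta)=1$, it is enough to show
\[
G(\theta') \le 1 + \alpha \sup_{z\in(\theta,\theta')}\sqrt{I(z)}\int_\theta^{\theta'} e^{(\alpha-1)U(z,\theta)}\,dz.
\]
Taking $\frac{1}{\alpha-1}\log$ of both sides then gives the statement. The plan is to write $G(\theta') - G(\theta) = \int_\theta^{\theta'} G'(z)\,dz$ and bound $G'(z)$ pointwise. The differentiability of $z\mapsto D_\alpha(P_z,P_\theta)$ (equivalently of $G$), together with the domination conditions of Definition~\ref{defi:regularity}, justifies both differentiating under the integral and applying the fundamental theorem of calculus.

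As in the proof of Proposition~\ref{prop:local-fisher-app}, differentiation gives
\[
G'(z) = \alpha\int s_z(x)\,\frac{p_z(x)^\alpha}{p_\theta(x)^{\alpha-1}}\,dx = \alpha\,\mathbb{E}_{X\sim P_z}\!\left[s_z(X)\left(\frac{p_z(X)}{p_\theta(X)}\right)^{\alpha-1}\right].
\]
Cauchy--Schwarz under $P_z$ then yields
\[
|G'(z)| \le \alpha\sqrt{\mathbb{E}_{P_z}[s_z^2]}\,\sqrt{\mathbb{E}_{P_z}\!\left[(p_z/p_\theta)^{2\alpha-2}\right]} = \alpha\sqrt{I(z)}\,\sqrt{\int p_z^{2\alpha-1}p_\theta^{2-2\alpha}\,dx}.
\]
The second factor is exactly the order-$(2\alpha-1)$ Rényi moment, so
\[
\int p_z^{2\alpha-1}p_\theta^{2-2\alpha}\,dx = \exp\bigl((2\alpha-2)D_{2\alpha-1}(P_z,P_\theta)\bigr),
\]
and its square root equals $\exp((\alpha-1)D_{2\alpha-1}(P_z,P_\theta))$. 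The envelope hypothesis then bounds this by $e^{(\alpha-1)U(z,\theta)}$.

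To finish, we integrate $|G'(z)| \le \alpha\sqrt{I(z)}\,e^{(\alpha-1)U(z,\theta)}$ over $(\theta,\theta')$ and pull out $\sup_z \sqrt{I(z)}$, which gives the displayed bound on $G(\theta')$.

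The main obstacle is justifying the differentiation under the integral sign and the absolute continuity of $G$ on the whole segment, rather than only locally. We handle this with the local domination functions $h_1$ in Definition~\ref{defi:regularity}, covering the compact segment $[\theta,\theta']$ by finitely many such neighborhoods. A smaller point is that $G'$ may be negative, but bounding $G'$ by its absolute value only loosens the estimate, so it causes no difficulty. Cauchy--Schwarz is the step that makes the bound scale like $\sqrt{I}$ rather than $I$, which is the looseness noted in the text.
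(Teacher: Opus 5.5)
Your proposal is correct and follows the paper's proof essentially step for step. Both use the functional $G(z)=\int p_z^\alpha p_\theta^{1-\alpha}\,dx$, split $G'(z)$ by Cauchy--Schwarz into $\sqrt{I(z)}$ times $e^{(\alpha-1)D_{2\alpha-1}(P_z,P_\theta)}$, apply the envelope $U$, pull out the supremum of $\sqrt{I}$, and integrate over $(\theta,\theta')$. One point to tighten is your justification for differentiating under the integral: the domination in the regularity definition is taken relative to $p_z^{\alpha-1}$ at the centre $z$ of each neighbourhood, not relative to the fixed $p_\theta^{\alpha-1}$. So covering $[\theta,\theta']$ by finitely many such neighbourhoods does not by itself control $|s_{z'}|\,p_{z'}^{\alpha}p_{\theta}^{1-\alpha}$. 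What is actually needed is that domination for all pairs, which the paper's later regularity checks supply and which its own proof leaves implicit.
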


\begin{proof}
    For $\theta,\theta' \in \Theta$, we note $H(\theta',\theta) = \int \frac{p_{\theta'}(x)^\alpha}{p_{\theta}(x)^{\alpha-1}} dx$. Fix $\theta \in \Theta$ and note for every $\theta' \in \Theta$, $G(\theta') = H(p_{\theta'}, p_{\theta})$.

    Then, by Cauchy-Scharz inequality: 
    \begin{align*}
        \partial_{\theta'} G(\theta') &= \alpha\int \partial_{\theta'} \log p_{\theta'}(x) \frac{P_{\theta'}(x)^\alpha}{P_\theta(x)^{1-\alpha}}  dx\\
        &= \alpha\; \mathbb{E}_{X \sim P_{\theta'}} \left[\partial_{\theta'} \log p_{\theta'}(X) \left( \frac{P_{\theta'}}{P_\theta}\right)^{\alpha-1}\right]\\
        &\leq \alpha\; (\mathbb{E}_{X \sim P_{\theta'}} \left[(\partial_{\theta'}\log p_{\theta'}(X))^2\right])^{1/2}\left(\mathbb{E}_{X \sim P_{\theta'}} \left[\left( \frac{P_{\theta'}}{P_\theta}\right)^{2(\alpha-1)}\right]\right)^{1/2}\\
        &\leq \alpha \sqrt{I(\theta')}e^{(\alpha-1) D_{2\alpha-1}(P_{\theta '},P_\theta)}\\
        &\leq \alpha\sup_{z\in (\theta,\theta')} \sqrt{I(z)} e^{(\alpha-1) U(\theta',\theta)}.
    \end{align*}

    Integrating over $(\theta,\theta')$, we get:

    \[G(\theta') = G(\theta) + \int_{\theta}^{\theta'} \partial_{z} G(z) dz \leq 1+ \alpha\sup_{z \in (\theta,\theta')} \sqrt{I(z)} \int_{\theta}^{\theta'} e^{(\alpha-1)U(z ,\theta)}dz,\]
    giving the desired result.

\end{proof}

This upper bound can be directly leveraged in our setting. In fact, the Rényi divergence $D_{2\alpha-1}(ZV,ZW)$ is upper bounded by post-processing: $D_{2\alpha-1}(V^\top V,W^\top W) \leq D_{2\alpha-1}(V,W)$. Therefore, for any choice of parametrization of $V^\top V$, we can obtain a upper bound of $D_{2\alpha-1}(V^\top V, W^\top W)$ as a function of the post-processing upper bound and the Fisher information related to this parametrization.

As a validation step, we can compare this upper bound to the exact Rényi divergence in the case of the Rényi divergence between Gaussian distributions: $D_\alpha(N+z,N+w) = \frac{\alpha}{2 \sigma^2}(z-w)^2$, and the Fisher information is $I(z) = \frac{1}{\sigma^2}$.
Then, 
\begin{align*}
    D_\alpha(N+v,N+w) &\leq \frac{1}{\alpha-1} \log \left( 1+ \frac{\alpha}{\sigma} \int e^{(\alpha-1)(2\alpha-1) \frac{(z-w)^2}{2\sigma^2}}dz\right) \\
    &\leq \frac{1}{\alpha-1} \log \left( 1+ \frac{2\alpha \sigma}{(\alpha-1)(2\alpha-1)(v-w)} \left(e^{(\alpha-1)(2\alpha-1) \frac{(v-w)^2}{2\sigma^2}}-1\right)\right)
\end{align*}
Figure~\ref{fig:tightness-fisher-gaussian} represent the upper bound and the original divergence in the case $\sigma = 1, \alpha=2$, as a function of $\Delta$.
\begin{figure}[h]
    \centering
    \includegraphics[width=0.5\linewidth]{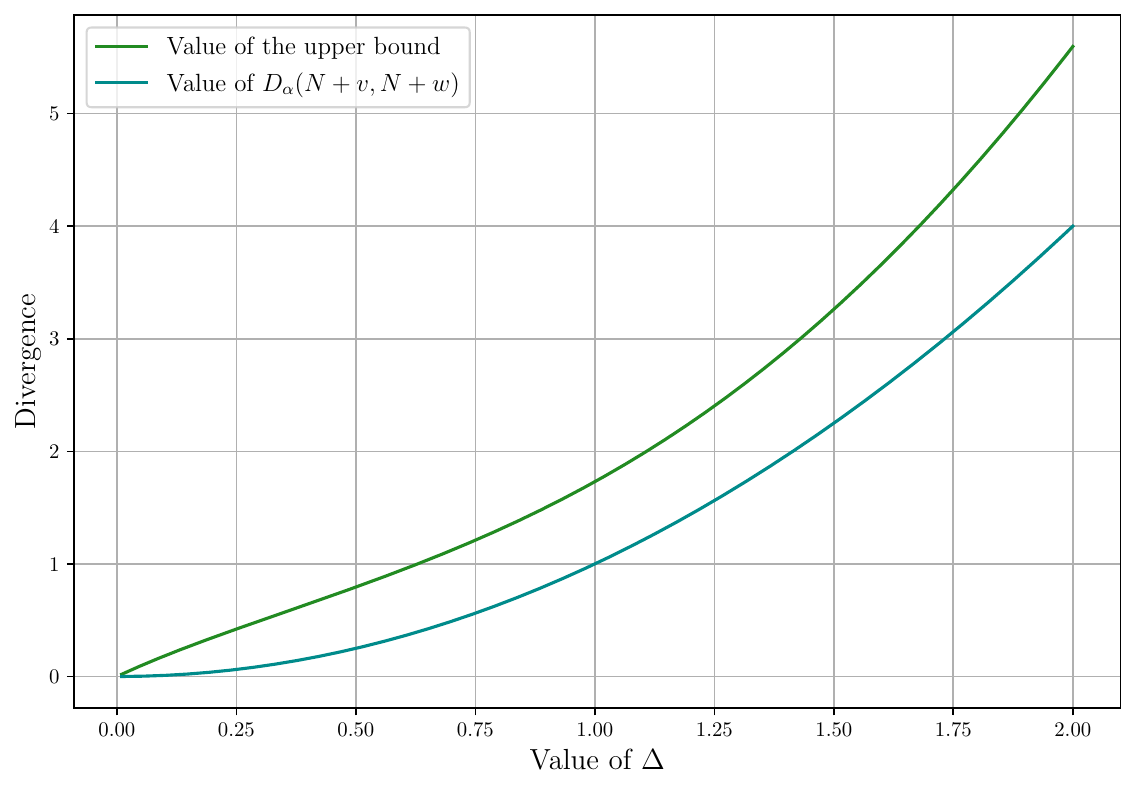}
    \caption{Comparison between the Rényi divergence $D_\alpha(N+v,N+w)$ and the criterion obtained from Proposition~\ref{prop:Fisher-upper-bound} as a function of $\Delta$, in the case $\sigma=1, \alpha = 2$.}
    \label{fig:tightness-fisher-gaussian}
\end{figure}

\newpage

\section{From Outputs to Sufficient Gram Statistics}

\subsection{Proof of Proposition~\ref{prop:sufficient-statistics}}
\label{app:prop-sufficient-statistics}
\propSufficientStats*

\begin{proof}
    For $n_{\text{syn}} \geq d \geq k$, the distributions of $ZV$ admits a density, which is a mixture of matrix normal distributions: $ZV | V=v \sim \mathcal{MN}_{n_{\text{syn}} \times k} (0, 0, v^T v)$. The density writes:
    \begin{align*}
        p_{ZV}(y) = \int  p(v) p_{Zv}(y) dv &= \int  p_V(v) (2\pi)^{- n_{\text{syn}} k /2} \det\left(v^\top v\right)^{-n_{\text{syn}}/2} \exp\left(-\frac{1}{2} \tr\left((v^\top v)^{-1}y^\top y\right)\right) dv\\
    \end{align*}
    Now, define $g_v(z) = \int  p_V(v) (2\pi)^{- n_{\text{syn}} k /2} \det\left(v^\top v\right)^{-n_{\text{syn}}/2} \exp\left(-\frac{1}{2} \tr\left((v^\top v)^{-1} z\right)\right) dv$. Then, \[p_{ZV}(y) = g_v(y^\top y) , \; p_{ZW}(y) = g_w(y^\top y).\]
    By the Fisher–Neyman factorization theorem, 
    \[D_\alpha(ZV,ZW) = D_\alpha(V^\top Z^\top Z V, W^\top Z^\top Z W).\]
\end{proof}

\subsection{Proof of Proposition~\ref{prop:gram-limit}}
\label{app:prop-gram-limit}

\propGramLimit*

\begin{proof}
First, observe that \[\frac{1}{n_{\text{syn}}}Z^\top Z = \frac{1}{n_{\text{syn}}}\sum_{i=1}^{n_{\text{syn}}} Z_i^\top Z_i \overset{n_{\text{syn}} \to \infty}{\to} I_d,\]
by the law of large numbers. This means that $\frac{1}{n_{\text{syn}}}V^\top Z^\top ZV \to V^\top V, \frac{1}{n_{\text{syn}}} W^\top Z^\top Z W \to W^\top W $.
    Also, Rényi divergences are lower semi-continuous for the weak topology~\citep{vanErven2014}.
    This means that:
    \[ \underset{n_{\text{syn}} \to \infty}{\lim} \inf D_\alpha(V^\top Z^\top Z V, W^\top Z^\top Z W) = \underset{n_{\text{syn}} \to \infty}{\lim} \inf D_\alpha\left(\frac{1}{n_{\text{syn}}}V^\top Z^\top Z V, \frac{1}{n_{\text{syn}}}W^\top Z^\top Z W\right) \geq D_\alpha(V^\top V, W^\top W).\]

    Also, we define the kernel $K : \Sigma_v \mapsto \frac{1}{n_{\text{syn}}}\mathcal{W}(k,n_{\text{syn}},\Sigma_v)$, for $\Sigma_v \in \mathbb{S}_k^+(\mathbb{R})$, or equivalently: $K(\Sigma_v) = \Sigma_v^{1/2} G^\top G \Sigma_v^{1/2}$.
    
    Now, we note $\Sigma_v = V^\top V$. Then, $V$ admits a polar decomposition $V = U_v \Sigma_v^{1/2}$, where $U \in \mathbb{R}^{d \times k}$ verifies $U_v^\top U_v = I_k$. Let $G \in \mathbb{R}^{n_{\text{syn}} \times k}$ be a standard Gaussian matrix. By invariance of Gaussian matrices by isometries, $ZV = ZU_v \Sigma_v^{1/2} \overset{d}{=} G\Sigma_v^{1/2}$, and:
    \begin{align*}
        D_\alpha(V^\top V, W^\top W) \geq D_\alpha(K(V^\top V), K(W^\top W)) &= D_\alpha\left(\Sigma_v^{1/2}G^\top G \Sigma_v^{1/2}, \Sigma_w^{1/2}G^\top G \Sigma_w^{1/2} \right)\\
        &= D_\alpha\left(V^\top U_v^\top Z^\top Z U_v V, W^\top U_w^\top Z^\top Z U_w W \right)\\
        &= D_\alpha\left(V^\top Z^\top Z V, W^\top Z^\top Z W \right).
    \end{align*}
    Finally, 
\[D_\alpha(V^\top Z^\top Z V, W^\top Z^\top Z W)\leq  D_\alpha(V^\top V, W^\top W) \leq \underset{n_{\text{syn}} \to \infty}{\lim} \inf D_\alpha(V^\top Z^\top Z V, W^\top Z^\top Z W),\]
giving the desired result.
\end{proof}

\newpage

\subsection{Proof of Proposition~\ref{prop:nofreelunch}}
\label{app:prop-nofreelunch}

\propNoFreeLunch*

\begin{proof}
    The idea of the proof is to take $v_\theta = t e_1, w_t = (t + \Delta)e_1$, where $e_1 = (\delta_{11}) \in \mathbb{R}^{d\times k}$. We note $V_t = N + v_t$ and $W_t = N + w_t$.
    First, by post-processing:
    \[D_\alpha(V_t^\top V_t,W_t^\top W_t) \geq D_\alpha((V_t^\top V_t)_{11}, (W_t^\top W_t)_{11})) = D_\alpha(\chi_d^2(t^2),\chi_d^2((t+\Delta)^2)).\]
    Now, take $f : x \mapsto \sqrt{|x|}$. As $t \to \infty$, $f$ becomes injective on a set of probability $p_t \to 1$. The distribution of $f(\chi_d^2(t^2))$ converges weakly to $\mathcal{N}(t,1)$. Then, for all $t > 0$:
    \[\underset{t \to \infty}{\lim} \inf D_\alpha(f(\chi_d^2(t)),f(\chi_d^2((t + \Delta)^2)) \geq D_\alpha(\mathcal{N}(t,1), \mathcal{N}(t+\Delta, 1)) = D_\alpha(V,W),\]
    which is finite, giving the desired result.
\end{proof}

\section{Privacy amplification in linear regression}

\subsection{Computing the Fisher information of $\chi_d^2(\theta^2)$ for the amplitude parameter $\theta$}
\label{app:fisher-one-dim}

In this section, we restate the introduction of Section~\ref{sec:linear-regression} and add relevant results and details. The Fisher information of non-central chi-squared distribution for the amplitude parameter does not seem tractable~\citep{Idier2014}. As a result, most of the literature have focused either on deriving lower bounds~\citep{Bouhrara2018, Idier2014, Stein2016} or on numerical estimation~\citep{Bouhrara2018}. In contrast, the literature on upper bounds remains limited, aside from the trivial post-processing inequality $I(d,\theta) \leq 1$. 

In the work \citep{Idier2014}, authors focus on the Fisher information of Rician and generalized Rician distributions with respect to the amplitude parameter $\theta$. Generalized Rician distributions $R(\alpha, \theta)$ reduce to non-central chi distribution when the scale parameter $\alpha = 1$ or equivalently, to square root of non-central chi-squared random variables. Denoting $I_X(d,\theta)$ the Fisher information of generalized Rician distributions for the parameter $\theta$, and noting that $x \mapsto \sqrt{x}$ is bijective, we have $I(d,\theta) = I_X(d,\theta)$.
Then, computing a upper bound for $I_X(d,\theta)$ also give a upper bound to $I(\theta)$.

\citet{Idier2014} find the following upper bound:
\[I_X(d,\theta) \leq \min\left\{1,\frac{2\theta^2}{d}\right\}.\]
While useful in the $d \gg \theta^2$ regime, this bound is piecewise and does not naturally capture the asymptotic behaviour $I_X(d,\theta) \overset{\theta \to +\infty}{\to} 1$. In this section, we propose a tighter upper bound to $I_X(\theta)$, and therefore to $I(d,\theta)$ by leveraging the properties of the modified Bessel function of first kind $I_d$.

\begin{proposition}[Fisher information of non-central $\chi_d^2$, adapted from~\citet{Idier2014}]
\label{prop:FisherIdier}
    Let $d > 1$. Then, the Fisher information admits the representation:
    \[I_X(\theta) = \mathbb{E}_{Y \sim \mathcal{R}_{d/2,\theta}}[Y (R_{d/2-1}(Y) - R_{d/2}(Y))],\]
    where $R_d = \frac{I_{d+1}}{I_d}$ is the modified Bessel quotient of the first kind and $\mathcal{R}_{d/2,\theta}$ is a generalized Rician distribution with non-centrality parameter $\theta^2$, scale $\theta$ and degree $d/2$~\citep{Idier2014}: 
    \[\mathcal{R}_{d/2,\theta}(y) = \frac{y^{d/2+1}}{\theta^{d+2}} e^{-\frac{y^2 }{2\theta^2} - \theta^2/2} I_{d/2}(y).\]
\end{proposition}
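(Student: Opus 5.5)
The plan is to compute the score of the non-central $\chi_d$ family directly, rewrite it with Bessel recurrences, and then use a change of measure that turns expectations under $\chi_d(\theta)$ into expectations under $\chi_{d+2}(\theta)$. Throughout, let $\nu=d/2-1$ and let $X=\sqrt{V^\top V}$. Since $x\mapsto\sqrt{x}$ is a bijection of $\mathbb{R}^+$, the Fisher information of $X$ with respect to $\theta$ equals $I(d,\theta)$, so it suffices to work with the density of $X$. Changing variables in $p_v$ gives
\[q^{(d)}_\theta(x)=x^{d/2}\theta^{1-d/2}e^{-(x^2+\theta^2)/2}I_\nu(\theta x).\]

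\textbf{Step 1 (score).} I would differentiate $\log q^{(d)}_\theta$ in $\theta$ and apply the recurrence $I_\nu'(z)=I_{\nu+1}(z)+\frac{\nu}{z}I_\nu(z)$. This gives
\[x\,\frac{I_\nu'(\theta x)}{I_\nu(\theta x)}=xR_\nu(\theta x)+\frac{\nu}{\theta}.\]
The term $\nu/\theta$ cancels against $(1-d/2)/\theta$, so the score is
\[s_\theta(x)=xR_\nu(\theta x)-\theta.\]
By the mean-zero score property (Proposition~\ref{prop:Fisher-properties}), $\mathbb{E}_{d}[XR_\nu(\theta X)]=\theta$. Hence
\[I_X(\theta)=\mathbb{E}_{d}[X^2R_\nu(\theta X)^2]-\theta^2.\]

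\textbf{Step 2 (change of measure, the key step).} Multiplying the density by $xR_\nu(\theta x)$ replaces $I_\nu$ by $I_{\nu+1}$ and raises the power of $x$ by one:
\[xR_\nu(\theta x)\,q^{(d)}_\theta(x)=\theta\,q^{(d+2)}_\theta(x).\]
Consequently $\mathbb{E}_{d}[XR_\nu(\theta X)h(X)]=\theta\,\mathbb{E}_{d+2}[h(X)]$ for any integrable $h$. Taking $h(x)=xR_\nu(\theta x)$ yields
\[I_X(\theta)=\theta\,\mathbb{E}_{d+2}[XR_\nu(\theta X)]-\theta^2.\]

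\textbf{Step 3 (second mean-zero identity).} Applying Step 1 to the $\chi_{d+2}(\theta)$ family, whose Bessel index is $\nu+1=d/2$, the mean-zero score gives $\mathbb{E}_{d+2}[XR_{d/2}(\theta X)]=\theta$. So $\theta^2=\mathbb{E}_{d+2}[\theta XR_{d/2}(\theta X)]$, and therefore
\[I_X(\theta)=\mathbb{E}_{d+2}\big[\theta X\big(R_{d/2-1}(\theta X)-R_{d/2}(\theta X)\big)\big].\]

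\textbf{Step 4 (identify the law of $Y$).} Set $Y=\theta X$ with $X\sim\chi_{d+2}(\theta)$. The change of variables $y=\theta x$, $dx=dy/\theta$, in $q^{(d+2)}_\theta$ gives exactly the stated density
\[\mathcal{R}_{d/2,\theta}(y)=\frac{y^{d/2+1}}{\theta^{d+2}}e^{-\frac{y^2}{2\theta^2}-\theta^2/2}I_{d/2}(y),\]
which concludes the argument.

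The main obstacle is justifying the two mean-zero score identities and the interchange of derivative and integral. This requires the regularity conditions of Definition~\ref{defi:regularity} for both the $\chi_d$ and $\chi_{d+2}$ families. I would verify them using $0\le R_\nu\le1$ together with the Gaussian tails of the densities, which provides integrable dominating functions uniformly in a neighborhood of $\theta$. The assumption $d>1$ ensures $\nu>-1/2$, so the Bessel functions involved are well behaved near zero.
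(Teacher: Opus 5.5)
Your proof is correct. The score $s_\theta(x)=xR_{d/2-1}(\theta x)-\theta$, the shift identity $xR_\nu(\theta x)\,q^{(d)}_\theta(x)=\theta\,q^{(d+2)}_\theta(x)$, and the rescaling $Y=\theta X$ all check out, and the resulting density is exactly $\mathcal{R}_{d/2,\theta}$.

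The paper does not prove this proposition itself; it imports it from \citet{Idier2014}. Your argument is therefore a self-contained derivation. Its key ingredient is the same Bessel-index shift the paper uses in the argument for Proposition~\ref{prop:Fisher-upper-bound-one-dim}. There, $yR_{d/2}(y)\,\mathcal{R}_{d/2,\theta}(y)$ is recognized as $\theta^2$ times the density of $\theta\chi_{d+4}$, which is precisely your Step~3 identity $\mathbb{E}_{d+2}[\theta X R_{d/2}(\theta X)]=\theta^2$.

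One simplification: the ``main obstacle'' you flag is not really there.
\begin{itemize}
\item Taking $h\equiv 1$ in Step~2 gives $\mathbb{E}_d[XR_\nu(\theta X)]=\theta$, simply because $q^{(d+2)}_\theta$ integrates to one.
\item The same identity one index higher gives $\mathbb{E}_{d+2}[XR_{d/2}(\theta X)]=\theta$.
\item $I_X(\theta)$ is by definition $\mathbb{E}[s_\theta(X)^2]$ with the pointwise score, so no differentiation under the integral sign is needed.
\end{itemize}
The only thing left to check is integrability of $X^2R_\nu(\theta X)^2$, which follows from $0\le R_\nu\le 1$ and the Gaussian tails of the densities.
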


Motivated by this characterization, we seek to improve upon the upper bounds of~\citet{Idier2014} by deriving a sharp upper bound on $Y (R_{d/2-1}(Y) - R_{d/2}(Y))$:

\begin{lemma}
\label{lem:bessel-upper-bound}
    Let $d > 2, x > 0$. Then,
    \[x(R_{d-1}(x) - R_{d}(x)) \leq \frac{x R_{d-1}(x)}{x R_{d-1}(x) + d -\frac{1}{2}}.\]
\end{lemma}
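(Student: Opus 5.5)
The plan is to reduce the statement to an inequality for the single function $u(x) := xR_{d-1}(x)$, and then prove that inequality by a comparison argument for the Riccati equation that $u$ satisfies.

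\textbf{Step 1: eliminate $R_d$.} The Bessel recurrence $I_{\nu-1}(x)-I_{\nu+1}(x)=\frac{2\nu}{x}I_\nu(x)$ with $\nu=d$ gives
\[
\frac{1}{R_{d-1}(x)}=R_d(x)+\frac{2d}{x}.
\]
Writing $r=R_{d-1}(x)$ and $u=xr>0$, this turns the left-hand side into
\[
x(r-R_d)=u-\frac{x^2}{u}+2d.
\]
Set $c=d-\tfrac12$ and multiply through by $u(u+c)>0$. Using $u^2+2du-x^2=(u^2+2cu-x^2)+u$, the claimed inequality becomes
\begin{equation*}
x^2-u^2-2cu\;\geq\;\frac{cu}{u+c}.
\end{equation*}
This is an upper bound on $u$, strictly sharper than the classical Amos-type bound $u^2+2cu\le x^2$, i.e.\ $R_{d-1}(x)\le x/(c+\sqrt{x^2+c^2})$. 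Since that classical bound alone is not enough, I will not try to deduce the result from it directly.

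\textbf{Step 2: rewrite via the Riccati equation.} The ratio satisfies $R_\nu'=1-\frac{2\nu+1}{x}R_\nu-R_\nu^2$. For $u=xR_{d-1}$ this yields
\[
xu'=x^2-2(d-1)u-u^2,
\]
so that $x^2-u^2-2cu=xu'-u$. The target is therefore equivalent to the differential inequality
\[
\frac{x\,u'(x)}{u(x)}\;\geq\;\frac{u+2c}{u+c}.
\]
Equivalently, in terms of
\[
F(x,u):=x^2-u^2-2cu-\frac{cu}{u+c},
\]
I need $F(x,u(x))\ge 0$ for all $x>0$.

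\textbf{Step 3: barrier argument.}
\begin{itemize}
\item \emph{Near $0$.} From the series, $u(x)=\frac{x^2}{2d}+O(x^4)$. Hence $F(x,u(x))=x^2\bigl(1-\frac{2c+1}{2d}\bigr)+O(x^4)$, and the leading coefficient vanishes since $2c+1=2d$. So I must expand to order $x^4$. This uses $u=\frac{x^2}{2d}-\frac{x^4}{8d^2(d+1)}+\dots$ and the term $\frac{cu}{u+c}=u-\frac{u^2}{c}+\dots$, and should give $F>0$ for small $x$ when $d>2$.
\item \emph{No crossing.} Suppose there is a first point $x_0>0$ with $F(x_0,u(x_0))=0$. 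There $u$ lies on the curve $u=\phi(x)$ defined by $F=0$. I will show that $\frac{d}{dx}F(x,u(x))>0$ at $x_0$. This amounts to $2x_0>\partial_uF\cdot u'$, with $u'$ replaced via the Riccati equation and $x^2$ eliminated using $F=0$. That produces an explicit rational inequality in $u$ and $c$, which I expect to reduce to a polynomial with positive coefficients, using $c>3/2$.
\item \emph{Large $x$.} As a sanity check, $u\sim x-c+O(1/x)$, so $F\to$ a positive limit. This is consistent with the invariance argument.
\end{itemize}

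\textbf{Expected obstacle.} The main difficulty is the transversality computation in Step 3 at a hypothetical crossing point. The algebra is a cubic-type inequality in $u$, and the condition $d>2$ must enter precisely there. If the sign check fails for some range of $u$, my fallback is a different bound: derive a sharper Amos/Segura-type upper bound $R_{d-1}(x)\le x/\bigl(c'+\sqrt{x^2+c''^2}\bigr)$, itself proved by the same Riccati comparison, and verify that it implies the inequality $x^2-u^2-2cu\ge cu/(u+c)$ from Step 1.
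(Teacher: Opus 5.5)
Your proof is correct, and it takes a genuinely different route from the paper's.

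\textbf{How the two proofs differ.} The paper keeps $R_d$ and factors $x(R_{d-1}-R_d)=xR_{d-1}\,(I_d^2-I_{d-1}I_{d+1})/I_d^2$. It then imports two literature inequalities: Baricz's Tur\'an-type bound $(I_d^2-I_{d-1}I_{d+1})/I_d^2\le (x^2+d^2-\tfrac14)^{-1/2}$, and the Amos/Segura bound $xR_{d-1}+d-\tfrac12\le\sqrt{x^2+(d-\tfrac12)^2}\le\sqrt{x^2+d^2-\tfrac14}$. You instead eliminate $R_d$ with the recurrence. This isolates exactly what is needed beyond Amos, namely the sharpened bound $x^2-u^2-2cu\ge cu/(u+c)$. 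You then prove that bound by a Riccati comparison, using only the recurrences and the small-$x$ series. The paper's argument is three lines given the citations, but it inherits their hypotheses and relies on their exact form. Yours is self-contained and actually yields the strict inequality for every $d>\tfrac12$, so the hypothesis $d>2$ plays no role.

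\textbf{The obstacle you flagged is not one.} Write $G(x)=F(x,u(x))$. The Riccati equation gives $xu'=G+u(u+2c)/(u+c)$. At any zero of $G$ this yields
\[
xG'(x)=\frac{2cu}{u+c}-\frac{c^2u(u+2c)}{(u+c)^3}=\frac{cu^2(2u+3c)}{(u+c)^3}>0.
\]
This holds for any $c>0$, with no need for $c>3/2$, so your fallback is unnecessary. Your small-$x$ expansion works as planned: it gives $G(x)=\frac{3x^4}{8cd^2(d+1)}+O(x^6)>0$.

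\textbf{One correction.} Your large-$x$ sanity check is wrong. Knowing $u=x-c+O(1/x)$ does not determine the limit of $G$. With $u=x-c+\frac{c(c-1)}{2x}+\frac{c(c-1)}{2x^2}+O(x^{-3})$ one finds $G(x)=c/x+O(x^{-2})\to0$, not a positive limit. This is harmless, because your barrier argument only uses positivity near $0$ and the sign of $G'$ at a putative first zero.
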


\begin{proof}
We note $R_d(x) = \frac{I_{d+1}(x)}{I_{d}(x)}$ the modified Bessel quotient of the first kind.
The modified Bessel functions of first kind admit the following recurrence relations~\citep[11.115~and~11.116]{Arfken2011}:
    \[I_{d-1}(x) - I_{d+1}(x) = \frac{2d}{x}I_d(x).\]
    \[I_{d-1}(x) + I_{d+1}(x) = 2 I_d'(x).\]

    Then, we relate $x (R_{d-1}(x) - R_{d}(x))$ to $I_{d}(x)^2-I_{d+1}(x)I_{d-1}(x)$: \begin{align*}
        x (R_{d-1}(x) - R_{d}(x)) &= x \left( \frac{I_{d}(x)}{I_{d-1}(x)} - \frac{I_{d+1}(x)}{I_{d}(x)} \right)\\
        &=x  \frac{I_{d}(x)^2-I_{d+1}(x)I_{d-1}(x)}{I_{d-1}(x)I_d(x)} \\
        &= x R_{d-1}(x) \frac{I_{d}(x)^2-I_{d+1}(x)I_{d-1}(x)}{I_d(x)^2}.
    \end{align*}
    Since $x R_d(x) \geq 0$, to prove the lemma, we prove that:
    \[\frac{I_{d}(x)^2-I_{d+1}(x)I_{d-1}(x)}{I_d(x)^2} \leq \frac{1}{x R_{d-1}(x) +d-\frac{1}{2}}.\]

    Leveraging Theorem 1 from~\citet{Baricz2015}, we have:
    \[I_d(x)^2 - I_{d-1}(x)I_{d+1}(x) \leq \frac{1}{\sqrt{x^2 + d^2 - \frac{1}{4}}}I_d(x)^2.\]

    Then, to get the desired result, we must show that:
    \[\sqrt{x^2 + d^2 - \frac{1}{4}} \geq x R_{d-1}(x) +d -\frac{1}{2}.\]

    \citet{Segura2011} proves the following inequality [p.~526]:
    \[x\frac{I_{d-1}'(x)}{I_{d-1}(x)} \leq \sqrt{x^2 + \left(d - \frac{1}{2}\right)^2}-\frac{1}{2}.\]

    Then leveraging the Bessel function recurrence relationships and $d > 1$: \begin{align*}
        xR_{d-1}(x) + d - \frac{1}{2} = x \frac{I_{d-1}'(x)}{I_{d-1}(x)} + \frac{1}{2} &\leq  \sqrt{x^2 + \left(d - \frac{1}{2}\right)^2}\\
        &\leq \sqrt{x^2 + d^2 - d + \frac{1}{4}}\\
        &\leq \sqrt{x^2 + d^2 - \frac{1}{4}}.
    \end{align*}

    Then, we obtain the result:

    \[x(R_{d-1}(x) - R_{d}(x)) \leq \frac{x R_{d-1}(x)}{x R_{d-1}(x) + d -\frac{1}{2}}.\]
\end{proof}

Now, we can upper bound the Fisher information:

\propFisherUpperBoundOneDim*

\begin{proof}
Let $\theta > 0, d > 2$. We have, by concavity of $x \mapsto \frac{x}{x+c}$:
    \begin{align*}
        I_X(\theta) &= \mathbb{E}_{Y \sim \mathcal{R}_{d/2,\theta}}[Y (R_{d/2-1}(Y) - R_{d/2}(Y))]\\
        &\leq \mathbb{E}_{Y \sim \mathcal{R}_{d/2,\theta}}\left[\frac{Y R_{d/2-1}(Y)}{Y R_{d/2-1}(Y) + \frac{d-1}{2}}\right] \;\;\;\;\;\text{ (Lemma~\ref{lem:bessel-upper-bound})}\\
        &\leq \frac{\mathbb{E}_{Y \sim \mathcal{R}_{d/2,\theta}}\left[Y R_{d/2-1}(Y)\right]}{\mathbb{E}_{Y \sim \mathcal{R}_{d/2,\theta}}\left[Y R_{d/2-1}(Y)\right] + \frac{d-1}{2}}. \;\;\;\;\,\text{ (Jensen inequality)}
    \end{align*}
\end{proof}

From there, we could simply use $Y R_{d/2-1}(Y)\leq Y$ and the fact that $x \mapsto \frac{x}{x+c}$ is increasing for $x \in \mathbb{R}^+$ to get the simple upper bound:

\[I(\theta)\leq \frac{\mathbb{E}_{Y \sim \mathcal{R}_{d/2,\theta}}\left[Y\right]}{\mathbb{E}_{Y \sim \mathcal{R}_{d/2,\theta}}\left[Y\right] + \frac{d-1}{2}}\leq \frac{\mathbb{E}_{Y \sim \mathcal{R}_{d/2,\theta}}\left[Y^2\right]^{1/2}}{\mathbb{E}_{Y \sim \mathcal{R}_{d/2,\theta}}\left[Y^2\right]^{1/2} + \frac{d-1}{2}} \leq \frac{\theta(\theta + \sqrt{d+2})}{\theta(\theta + \sqrt{d+2}) + \frac{d-1}{2}},\]

but we lose a $d^{1/2}$ factor. Instead, we use the expectation characterization of $I_X(\theta)$ to obtain a tighter upper bound.
We compute $\mathbb{E}_{Y \sim \mathcal{R}_{d/2,\theta}}\left[Y R_{d/2-1}(Y)\right]$. The random variable $Y = \theta \sqrt{X}$ follows a generalized Rician distribution $\mathcal{R}_{d/2,\theta}$, where $X\sim \chi_{d+2}(\theta^2)$~\citep{Idier2014}.

We first write the integral expansion of $I_X(\theta)$:

\begin{align*}
    I_X(\theta) &= \mathbb{E}_{Y \sim \mathcal{R}_{d/2,\theta}}[Y (R_{d/2-1}(Y) - R_{d/2}(Y))]\\
    &= \mathbb{E}_{Y \sim \mathcal{R}_{d/2,\theta}}[Y R_{d/2-1}(Y)] - \int y R_{d/2}(y) \frac{y^{d/2+1}}{\theta^{d+2}} e^{-\frac{y^2}{2\theta^2} - \theta^2/2} I_{d/2}(y)dy\\
    &= \mathbb{E}_{Y \sim \mathcal{R}_{d/2,\theta}}[Y R_{d/2-1}(Y)] - \int   \frac{y^{d/2+2}}{\theta^{d+2}}  e^{-\frac{y^2 }{2\theta^2} - \theta^2/2} I_{d/2+1}(y)dy\\
    &= \mathbb{E}_{Y \sim \mathcal{R}_{d/2,\theta}}[Y R_{d/2-1}(Y)] - \theta^2 \int P_{\theta \chi_{d+4}(\theta^2)}(y) dy\\
    &= \mathbb{E}_{Y \sim \mathcal{R}_{d/2,\theta}}[Y R_{d/2-1}(Y)] - \theta^2.
\end{align*}

Therefore,
\[\mathbb{E}_{Y \sim \mathcal{R}_{d/2,\theta}}[Y R_{d/2-1}(Y)] = I_X(\theta) + \theta^2.\]

Putting all together, we have:
\[I_X(\theta) \leq \frac{\mathbb{E}_{Y \sim \mathcal{R}_{d/2,\theta}}\left[Y R_{d/2-1}(Y)\right]}{\mathbb{E}_{Y \sim \mathcal{R}_{d/2,\theta}}\left[Y R_{d/2-1}(Y)\right] + \frac{d-1}{2}} \leq \frac{I_X(\theta) + \theta^2}{I_X(\theta) + \theta^2 + \frac{d-1}{2}}.\]

Solving this quadratic inequality, we obtain:

\[I_X(\theta) \leq \frac{1}{2}\left(\frac{3}{2}- d - \theta^2 + \sqrt{\left(\theta^2 + \frac{d-3}{2}\right)^2 + 4\theta^2}\right),\]
or, equivalently:
\[I_X(\theta) \leq \frac{2\theta^2}{\theta^2 + d -\frac{3}{2} + \sqrt{\left(\theta^2 + \frac{d-3}{2}\right)^2 + 4\theta^2}},\]

This upper bound is better than the $I_X(\theta) \leq \frac{2\theta^2}{d}$ bound of~\citet{Idier2014} for every $\theta > 0,d > 6$.
Using the identity $x \leq \sqrt{x^2 + y^2}$, we obtain:

\[I_X(\theta) \leq \frac{2\theta^2}{2\theta^2 + d-3},\]

giving the simplified desired result. This upper bound is tighter than the $I_X(\theta) \leq \frac{2\theta^2}{d}$ bound of~\citet{Idier2014} for every $\theta > \sqrt{\frac{3}{2}}$.
\subsection{Proof of Theorem~\ref{thm:upper-bound-renyi-one-dim}}

Before proving the theorem, here is a sharp elementary upper bound to $\int_0^x e^{\sigma t^2} dt$:
\begin{lemma}
\label{lem:gauss-simple-bound}
    Let $\sigma, x > 0$. Then:
    \[\int_0^x e^{\sigma t^2} dt \leq \frac{e^{\sigma x^2}-  1}{\sigma x}.\]
\end{lemma}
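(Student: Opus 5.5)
The plan is to rewrite the right-hand side as an integral over the same interval $[0,x]$ and then compare the two integrands. Since $\frac{d}{dt} e^{\sigma t^2} = 2\sigma t\, e^{\sigma t^2}$, the fundamental theorem of calculus gives
\[
\frac{e^{\sigma x^2}-1}{\sigma x} = \frac{1}{\sigma x}\int_0^x 2\sigma t\, e^{\sigma t^2}\,dt = \int_0^x \frac{2t}{x}\, e^{\sigma t^2}\,dt .
\]
The claim is therefore equivalent to
\[
J := \int_0^x \Big(\frac{2t}{x}-1\Big) e^{\sigma t^2}\,dt \geq 0 .
\]

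To prove $J \geq 0$, I would use the symmetry of the weight. The weight $g(t) = 2t/x - 1$ satisfies $g(x-t) = -g(t)$, so it is antisymmetric about the midpoint $x/2$. Substituting $t \mapsto x-t$ on $[0,x/2]$ and adding the result to the integral over $[x/2,x]$ gives
\[
J = \int_{x/2}^{x} \Big(\frac{2t}{x}-1\Big)\Big(e^{\sigma t^2} - e^{\sigma (x-t)^2}\Big)\,dt .
\]
For $t \in [x/2,x]$, the first factor is nonnegative. The second factor is also nonnegative: $t \geq x-t \geq 0$, and $u \mapsto e^{\sigma u^2}$ is increasing on $\mathbb{R}^+$ because $\sigma>0$. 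The integrand is therefore nonnegative, so $J \geq 0$ and the lemma follows.

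This is an instance of Chebyshev's integral inequality: the integral of an increasing function against a mean-zero increasing weight is nonnegative. No step looks like a real obstacle. The only point needing care is the pairing/antisymmetry computation. If that turns out to be awkward to write, the fallback is to invoke Chebyshev's inequality for the two increasing functions $t\mapsto 2t/x-1$ and $t \mapsto e^{\sigma t^2}$ on $[0,x]$, using that $\int_0^x (2t/x - 1)\,dt = 0$.
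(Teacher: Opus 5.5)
Your proof is correct and complete. The identity $\frac{e^{\sigma x^2}-1}{\sigma x}=\int_0^x \frac{2t}{x}e^{\sigma t^2}\,dt$ holds, the reflection $t\mapsto x-t$ is carried out correctly, and the folded integrand is pointwise nonnegative on $[x/2,x]$.

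The paper states this lemma without any proof (it only calls it an elementary bound), so there is no argument of theirs to compare against.

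Your symmetrization actually proves a more general fact: $\int_0^x f \le \frac{2}{x}\int_0^x t f(t)\,dt$ for any nondecreasing $f$ on $[0,x]$. The specific integrand $e^{\sigma t^2}$ is used only to evaluate the right-hand side in closed form.

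A shorter argument specific to this integrand is a termwise comparison of power series. The left side equals $\sum_{m\ge0}\frac{\sigma^m x^{2m+1}}{m!\,(2m+1)}$ and the right side equals $\sum_{m\ge0}\frac{\sigma^m x^{2m+1}}{(m+1)!}$, with all terms nonnegative. Since $(m+1)!\le m!\,(2m+1)$ for every $m\ge0$, the inequality follows term by term. The termwise ratio $\frac{m+1}{2m+1}$ equals $1$ at $m=0$ and tends to $1/2$. This shows the bound is tight as $\sigma x^2\to0$ and loses roughly a factor $2$ as $\sigma x^2\to\infty$, a point your argument does not reveal.
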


\thmUpperBoundRenyiOneDim*

\begin{proof}
For generality, we do not set in this proof $\sigma = 1$.
    We leverage the inequality of Lemma~\ref{lem:gauss-simple-bound}:
    \[D_\alpha(V^\top V,W^\top W) \leq \frac{1}{\alpha-1}\log \left( 1 + \alpha\sup_{z \in (v,w)} \sqrt{I(z)}\int_v^w e^{(\alpha-1)D_{2\alpha-1}(\|N+z\|_F^2,W^\top W)}dz\right).\]
    Using the post-processing property:

    \[\sup_{z \in (v,w)}D_{2\alpha-1}(\|N+z\|_F^2,W^\top W) \leq (2\alpha-1) \frac{(\|w\|_F - z)^2}{2\sigma^2}.\]

    We get:
    \begin{align*}
        \int_v^w e^{(\alpha-1)\sup_{z \in (v,w)}D_{2\alpha-1}(\|N+z\|_F^2,WW^T)}dz &= \int_v^w e^{(\alpha-1) (2\alpha-1) \frac{(z - w)^2}{\sigma^2}}dz\\
        &= \int_0^{v-w} e^{(\alpha-1) (2\alpha-1) \frac{z^2}{\sigma^2}}dz\\
        &\leq  \sigma^2\frac{e^{(\alpha-1) (2\alpha-1) \frac{(\|v\|_F-\|w\|_F)^2}{\sigma^2}} -1}{2(\alpha-1)(2\alpha-1)(\|v\|_F-\|w\|_F)}.
    \end{align*}
    Then, leveraging Proposition~\ref{prop:Fisher-upper-bound-one-dim}:
    \[\sup_{z \in (v,w)} \sqrt{I(z)}  \leq \left(\frac{2C^2}{2C^2 + d - 3}\right)^{1/2}.\]

    Finally:
    \[ D_\alpha(V^\top V,W^\top W) \leq \frac{1}{\alpha-1}\log\left(1+\alpha \sigma^2 \left(\frac{2C^2}{2C^2 + d - 3}\right)^{1/2}\frac{e^{(\alpha-1) (2\alpha-1) \frac{(\|v\|_F-\|w\|_F)^2}{2\sigma^2}} -1}{(\alpha-1)(2\alpha-1)(\|v\|_F-\|w\|_F)}\right),\]
\end{proof}

\subsection{Interpretation of the Fisher comparison principle and Theorem~\ref{thm:upper-bound-renyi-one-dim}}

Then, we can also represent the privacy guarantees obtained from Theorem~\ref{thm:upper-bound-renyi-one-dim} as a function of $d$, as shown in Figure~\ref{fig:general-amplification-one-dim}, illustrating the privacy gains.

\begin{figure}[h]
    \centering
    \includegraphics[width=0.5\linewidth]{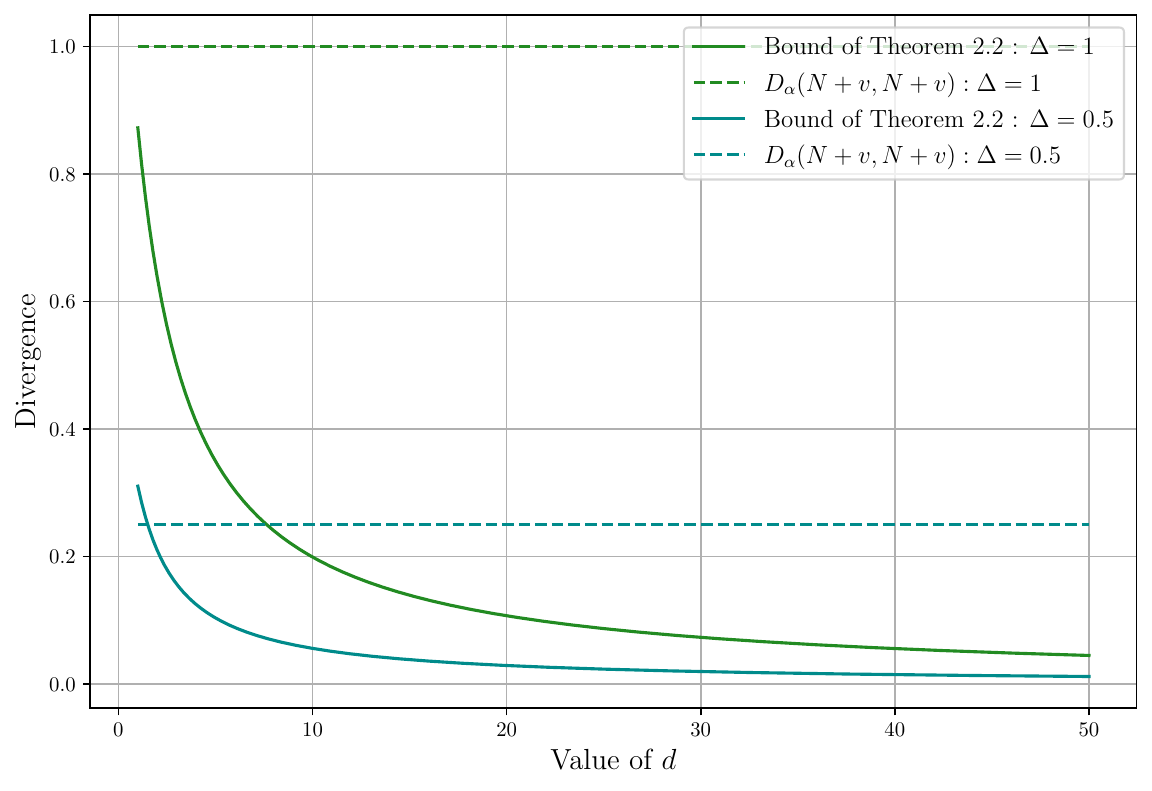}
    \caption{Privacy bounds of Theorem~\ref{thm:upper-bound-renyi-one-dim} as a function of $d$ for multiple values of $\Delta$, $C = \sigma = 1$.}
    \label{fig:general-amplification-one-dim}
\end{figure}

\newpage

\section{Privacy amplification in multivariate linear regression}
In this section, we note $P_v$ the distribution of $(N+v)^\top(N+v)$, which is a non-central Wishart distribution $\mathcal{W}(n,d,\Omega_v)$ with non-centrality parameter $\Omega_v = v^\top v$:
\[P_v(y) = P_0(y) e^{-tr(\Omega_v)/2} {}_0F_1\left(\frac{d}{2}, \frac{1}{4}y^{1/2} \Omega_v y^{1/2} \right),\]
where $P_0$ is the central Wishart density $\mathcal{W}(k,d)$, and ${}_0F_1$ is the generalized hypergeometric function.

We state some properties of ${}_0F_1$:

\begin{proposition}[Properties of ${}_0F_1$~\citep{Muirhead2009}]
${}_0F_1$ has the following properties:
    \begin{itemize}
        \item Conjugation invariance: Let $U \sim O(k)$, then: \[{}_0F_1(a,Z) = {}_0F_1(a, U Z U^T).\]
        \item Dependence on eigenvalues: if $Z,Z'$ have the same eigenvalues with the same algebraic multiplicities, then: \[{}_0F_1(a,Z) = {}_0F_1(a, Z').\]
    \end{itemize}
\end{proposition}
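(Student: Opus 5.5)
The plan is to argue directly from the zonal-polynomial series defining the hypergeometric function of a matrix argument, following \citet{Muirhead2009}:
\[
{}_0F_1(a,Z)=\sum_{m=0}^{\infty}\sum_{\kappa\vdash m}\frac{C_\kappa(Z)}{(a)_\kappa\, m!},
\]
where $\kappa$ runs over partitions of $m$ into at most $k$ parts, $(a)_\kappa$ is the generalized Pochhammer symbol, and $C_\kappa$ is the zonal polynomial. Both properties concern only how the summands depend on $Z$. So the plan reduces them to corresponding properties of each $C_\kappa$, and then passes them through the sum.

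First, I would justify the termwise manipulations. Each $C_\kappa$ is homogeneous of degree $m$, and $\sum_{\kappa\vdash m} C_\kappa(Z)=(\tr Z)^m$. This gives $|C_\kappa(Z)|\le C_\kappa(\|Z\| I_k)$ for symmetric $Z$, by monotonicity of zonal polynomials with nonnegative coefficients in the eigenvalues. Using $|(a)_\kappa|\gtrsim$ a positive bound for $a=d/2>(k-1)/2$, the series is dominated by the scalar ${}_0F_1$-type series in $k\|Z\|$, which is entire. Hence the series converges absolutely for every $Z$, and equalities established term by term transfer to ${}_0F_1$.

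Second, I would use the defining characterization of zonal polynomials. $C_\kappa(Z)$ is a symmetric homogeneous polynomial in the eigenvalues $\lambda_1,\dots,\lambda_k$ of $Z$; concretely, it is a specific linear combination of monomial symmetric functions $M_\lambda$. Being a function of the spectrum only, it immediately satisfies $C_\kappa(UZU^\top)=C_\kappa(Z)$ for $U\in O(k)$, because $UZU^\top$ and $Z$ are similar. Summing over $\kappa$ and $m$ gives conjugation invariance. If one instead starts from the integral characterization $\int_{O(k)} C_\kappa(XHYH^\top)\,dH = C_\kappa(X)C_\kappa(Y)/C_\kappa(I_k)$, invariance follows from the invariance of the Haar measure; I would mention this as an alternative route.

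Third, for the eigenvalue property, the main point to handle is which matrices are admissible arguments. In our use the argument is $\tfrac14 y^{1/2}\Omega y^{1/2}$, which is symmetric. For symmetric $Z,Z'$ with the same eigenvalues and multiplicities, the spectral theorem gives $Z=QDQ^\top$ and $Z'=Q'DQ'^\top$ with the same diagonal $D$. Thus $Z'=UZU^\top$ with $U=Q'Q^\top\in O(k)$, and the first property applies. For nonsymmetric arguments such as $\tfrac14 y\Omega$, which share the spectrum of $\tfrac14 y^{1/2}\Omega y^{1/2}$ since $AB$ and $BA$ are isospectral, I would appeal directly to the fact that each $C_\kappa$ is a polynomial in the power sums $\tr(Z^j)$. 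These power sums depend only on the eigenvalues with multiplicity, so the equality holds verbatim.

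I expect the only real obstacle to be citing precisely the spectral characterization of $C_\kappa$ and the uniform convergence bound; the rest is bookkeeping.
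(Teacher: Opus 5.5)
Your argument is correct and is the standard reasoning behind the statement. The paper gives no proof of its own and simply cites \citet{Muirhead2009}, where both properties are immediate because each zonal polynomial $C_\kappa(Z)$ is by definition a symmetric polynomial in the eigenvalues of $Z$. This spectral definition gives the second bullet directly for any argument, symmetric or not, and conjugation invariance is then the special case of similar matrices. So your convergence estimate and your symmetric/nonsymmetric case split are not needed for the identities themselves, since the two series agree term by term.
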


\subsection{Proof of Theorem~\ref{prop:fisher-upper-bound-multi-dim}}
\label{app:prop-fisher-upper-bound-multi-dim}

\propFisherUpperBoundMultiDim*

\begin{proof}
    Let $\Sigma_v, \Sigma_w \in \mathbb{R}^{d \times k}$ be two rectangular diagonal matrices. Let $S_v,S_w \in O(k)$ and $U_v, U_w \in O(d)$ be orthogonal matrices. We note $v = U_v^\top\Sigma_vS_v, w = U_w^\top\Sigma_wS_w$. Let $\theta \in (0,1)$. Let $U \in O(d)$ be an orthogonal matrix. Define $f : \omega_1,\dots, \omega_{k + 1} \mapsto \sum_{i=1}^{k+1} \omega_i$. 

    Now, let $\Sigma_\theta \in \mathbb{R}^{n \times d}$ such that $\Sigma_0 = \Sigma_v$ and $\Sigma_1 = \Sigma_w$. Let $S_\theta \in O(k)$ such that $S_0 = I_k$ and $S_1 = S_w$, and $U_\theta \in O(d)$ such that $U_0 = U_v$ and $U_1 = U_w U$. For $i \in \{1,\dots,d\}$, we note $\mu(\theta) =\Sigma_\theta S_\theta$ and $v_\theta = U_\theta\mu(\theta)$ Since $\Sigma_\theta$ is rectangular diagonal, $\mu(\theta)$ has the form: $\mu(\theta) = \begin{pmatrix} M(t) \\ 0_{(d-k)\times k}
    \end{pmatrix}$. $\mu_i(\theta) \in \mathbb{R}^{k}$ is a column vector. Also, we note $V_\theta = N + v_\theta$.
    Then, by invariance of Gaussian matrices by orthogonal transformations: \[V_\theta^\top V_\theta = (N+U_\theta\mu(\theta))^\top (N + U_\theta \mu(\theta)) \overset{d}{=}(N+\mu(\theta))^\top (N+\mu(\theta))\]

We define $\omega_i = (N_i+\mu_i(\theta)^\top)^\top (N_i+\mu_i(\theta))+ \sum_{j=1}^{\lfloor d/k \rfloor-1} N^\top_{jk + i} N_{jk+i}$. Then, we have:
\begin{alignat*}{2}
    V_\theta^\top V_\theta &= \left(\sum_{i=1}^{k} \left((V_\theta^\top)_i (V_\theta)_i + \sum_{j=1}^{\lfloor d/k \rfloor-1} (V_\theta^\top)_{jk + i} (V_\theta)_{jk+i}\right)\right) + \sum_{i=k\lfloor d/k \rfloor}^{d} (V_\theta)^\top_i (V_\theta)_i \\
    &= \left(\sum_{i=1}^{k} \left((N_i+\mu_i(\theta)^\top)^\top (N_i+\mu_i(\theta)^\top)+ \sum_{j=1}^{\lfloor d/k \rfloor-1} N^\top_{jk + i} N_{jk+i}\right)\right) + \sum_{i=k\lfloor d/k \rfloor}^{d} N^\top_i N_i\\
    &= f\left(\omega_1,\dots,\omega_k, \sum_{i=k\lfloor d/k \rfloor}^{d} N^\top_i N_i\right)
\end{alignat*}

Then, by post-processing and independence of the packs:
\begin{align*}
    I(\theta) &= I(\theta,f(\mathcal{W}(k,\lfloor d/k \rfloor,\mu_1(\theta) \mu_1(\theta)^\top), \dots \mathcal{W}(k,\lfloor d/k \rfloor,\mu_k(\theta) \mu_k(\theta)^\top),\mathcal{W}(k,d-k\lfloor d/k \rfloor,0)))\\
    & \leq \sum_{i=1}^{k} I(\theta, \mathcal{W}(k,\lfloor d/k \rfloor,\mu_i(\theta)\mu_i(\theta)^\top))\\
\end{align*}
We note $m = \lfloor d/k \rfloor$. In the pack $ \mathcal{W}(k,m,\mu_i(\theta) \mu_i(\theta)^\top)$, we note $G \in \mathbb{R}^{m \times k}$ a Gaussian matrix and $X_i(\theta) = G + e_1\mu_i(\theta)^\top$. Then, $X_i(\theta)^\top X_i(\theta) \sim  \mathcal{W}(k,m,\mu_i(\theta) \mu_i(\theta)^\top)$.

Now, we compute a upper bound to the Fisher information $I(\theta, \mathcal{W}(k,m,\mu_i(\theta) \mu_i(\theta)^\top)$ of $ \mathcal{W}(k,m,\mu_i(\theta) \mu_i(\theta)^\top)$ for the parameter $\theta$.

The density writes:
\[p_t(y) = p_0(y)e^{-\tr(\mu_i(\theta)^\top \mu_i(\theta))/2} {}_0F_1\left( \frac{m}{2}, \frac{1}{4} y^{1/2} \mu_i(\theta) \mu_i(\theta)^\top y^{1/2}\right).\]
Since $\mu_i$ is a vector, the ${}_0F_1$ term reduces to the scalar ${}_0F_1$ function of its single non-zero eigenvalue:
\[{}_0F_1\left( \frac{m}{2}, \frac{1}{4} y^{1/2} \mu_i(\theta) \mu_i(\theta)^\top y^{1/2}\right) = {}_0F_1\left( \frac{m}{2}, \frac{1}{4} \mu_i(\theta)^\top y \mu_i(\theta) \right).\]
In fact, $y^{1/2} \mu_i(\theta) \mu_i(\theta)^\top y^{1/2}$ has $\mu_i(\theta)^\top y \mu_i(\theta)$ as only non-zero eigenvalue. 

We note, for $x > 0$, $F(d,x) = \log {}_0F_1(d,x)$. Also, we note $\mu_i(\theta) = \rho_i(\theta) u_i(\theta)$, where $\rho_i(\theta) > 0$ and $\|u_i(\theta)\| = 1$. This parametrization is interesting because $\langle u_i(\theta) ,\dot u_i(\theta) \rangle = 0$, so that $\langle \mu_i(\theta), \dot \mu_i(\theta)\rangle = \rho_i(\theta)\dot\rho_i(\theta)$.

Then, $\dot \mu_i(\theta) = \dot \rho_i(\theta) u_i(\theta) + \rho_i(\theta) \dot u_i(\theta)$, $\mu_i(\theta)^\top y \mu_i(\theta) = \rho_i(\theta)^2 q_\theta$ and $\mu_i(\theta)^\top y \mu_i(\theta) = \rho_i(\theta) (\dot\rho_i(\theta) u(t)^\top + \rho_i(\theta) \dot u(t))y U_\theta(t) = \rho_i(\theta)^2 r_\theta + \rho_i(\theta)\dot \rho_i(\theta) q_\theta$.

 We also write $q_\theta = u_i(\theta)^\top y u_i(\theta) = \|X_i(\theta) u_i(\theta)\|^2$ and $r_\theta = \dot u_i(\theta)^\top y u_i(\theta) = \langle X_i(\theta)u_i(\theta), X_i(\theta) \dot u_i(\theta)\rangle$. 
Then, the score function writes:

\begin{align*}
    \partial_\theta \log p_t (y) &= -\frac{1}{2}\partial_\theta \|\Omega_\theta\|_F^2 +\partial_\theta \log {}_0F_1\left( \frac{m}{2},  \frac{1}{4} \mu_i(\theta)^\top y \mu_i(\theta) \right) \\&= -\langle \mu_i(\theta),\dot \mu_i(\theta)\rangle + \frac{1}{2}F'\left( \frac{m}{2},  \frac{1}{4} \mu_i(\theta)^\top y \mu_i(\theta) \right) \dot \mu_i(\theta)^\top y \mu_i(\theta)\\
    &= -\langle \mu_i(\theta),\dot \mu_i(\theta)\rangle + \frac{1}{2}F'\left(\frac{m}{2}, \frac{\rho_i(\theta)^2}{4} q_\theta \right) (\rho_i(\theta)^2 r_\theta + \rho_i(\theta)\dot \rho_i(\theta) q_\theta),
\end{align*}

 Also, we relate the scalar ${}_0F_1$ function to modified Bessel functions of first kind:
\[{}_0F_1\left( \frac{m}{2}, x\right) = \Gamma(m/2)x^{1/2 - m/4} I_{m/2-1}(2\sqrt{x}).\]

Differentiating $\log {}_0F_1(m/2,x)$ and leveraging the relationship $2\sqrt{x}\frac{I_{m/2-1}'(2\sqrt{x})}{I_{m/2-1}(2\sqrt{x})} = \frac{m}{2}-1 + 2\sqrt{x}R_{m/2-1}(2\sqrt{x})$ from the proof of Proposition~\ref{prop:Fisher-upper-bound-one-dim}, we get:
\begin{align*}
    G(m/2,x)  &=  \frac{1-\frac{m}{2}}{2x} + \frac{1}{\sqrt{x}} \frac{I_{m/2-1}'(2\sqrt{x})}{I_{m/2-1}(2\sqrt{x)}} = \frac{1}{\sqrt{x}} R_{m/2-1}(2\sqrt{x}).
\end{align*}

Then, the score function writes:

\[\partial_\theta \log p_t(y) =  \dot \rho_i(\theta) \left(- \rho_i(\theta) +  \sqrt{q_\theta} R_{m/2-1}\left(\rho_i(\theta)\sqrt{q_\theta}\right) \right) + \frac{\rho_i(\theta) r_\theta }{\sqrt{q_\theta}}R_{m/2-1}\left(\rho_i(\theta)\sqrt{q_\theta}\right).\]

Now, we observe that $X_i(\theta) u_i(\theta)$ is a Gaussian vector. Then, $q_\theta = \|X_i(\theta) u_i(\theta)\|^2 = \|G u_i(\theta) + \rho_i(\theta) e_1\|^2$ follows a non-central chi-squared distribution $\chi_m^2(\rho_i(\theta)^2)$ with non-centrality $\rho_i(\theta)^2$ and degree of freedom $m$. Moreover, we observe that $X_tu_i(\theta)$ and $X_t \dot u_i(\theta)$ are independent. In fact, $\langle u_i(\theta), \dot u_i(\theta) \rangle = 0$. Also, $X_i(\theta) \dot u_i \sim \mathcal{N}(\rho_i(\theta) \langle u_i(\theta), \dot u_i(\theta)\rangle e_1, \|\dot u_i(\theta)\|^2 I_m) = \mathcal{N}(0,\|\dot u_i(\theta)\|^2 I_m)$. Then, we have:
\[r_\theta | \{ X_i(\theta) u_i(\theta) = z\} = \langle X_i(\theta)u_i(\theta), X_i(\theta) \dot u_i(\theta) \rangle | \{X_i(\theta) u_i(\theta) = z\} \sim \mathcal{N}(0, z^2\|\dot u_i(\theta)\|^2).\]

In particular, we have:
\[ \mathbb{E}[r_\theta | q_\theta] = 0, \;\; \mathbb{E}[r_\theta^2 | q_\theta ] = q_\theta\|\dot u_i(\theta)\|^2.\]

Also, we recall that $\|\dot \mu_i(\theta)\|^2 = \|\dot \rho_i(\theta) u_i(\theta) + \rho_i(\theta) \dot u_i(\theta)\|^2 = \dot \rho_i(\theta)^2 + \rho_i(\theta) \|\dot u_i(\theta)\|^2$.

We can now compute the Fisher information:
\begin{align*}
    &I(t, \mathcal{W}(k,\lfloor d/k \rfloor,\mu_i(\theta) \mu_i(\theta)^\top)\\&= \mathbb{E}\left[\left(\dot \rho_i(\theta) \left(- \rho_i(\theta) +  \sqrt{q_\theta} R_{m/2-1}\left(\rho_i(\theta)\sqrt{q_\theta}\right) \right) + \frac{\rho_i(\theta) r_\theta }{\sqrt{q_\theta}}R_{m/2-1}\left(\rho_i(\theta)\sqrt{q_\theta}\right)\right)^2 \right] \\
    &= \mathbb{E}\left[\mathbb{E}\left[\left(\dot \rho_i(\theta) \left(- \rho_i(\theta) +  \sqrt{q_\theta} R_{m/2-1}\left(\rho_i(\theta)\sqrt{q_\theta}\right) \right) + \frac{\rho_i(\theta) r_\theta }{\sqrt{q_\theta}}R_{m/2-1}\left(\rho_i(\theta)\sqrt{q_\theta}\right)\right)^2 \Bigg| q_\theta \right]\right] \\
    &= \mathbb{E}\left[\dot \rho_i(\theta)^2 \left(- \rho_i(\theta) +  \sqrt{q_\theta} R_{m/2-1}\left(\rho_i(\theta)\sqrt{q_\theta}\right) \right)^2   \right] \\
    &+2 \mathbb{E}\left[\left(\dot \rho_i(\theta) \left(- \rho_i(\theta) +  \sqrt{q_\theta} R_{m/2-1}\left(\rho_i(\theta)\sqrt{q_\theta}\right) \right) \frac{\rho_i(\theta) \mathbb{E}[r_\theta|q_\theta] }{\sqrt{q_\theta}}R_{m/2-1}\left(\rho_i(\theta)\sqrt{q_\theta}\right)\right)^2 \right] \\
    &+ \mathbb{E}\left[ \frac{\rho_i(\theta)^2\mathbb{E}[r_\theta^2|q_\theta]  }{q_\theta}R_{m/2-1}\left(\rho_i(\theta)\sqrt{q_\theta}\right)^2 \right]\\
    &= \dot \rho_i(\theta)^2 \mathbb{E}\left[ \left(- \rho_i(\theta) +  \sqrt{q_\theta} R_{m/2-1}\left(\rho_i(\theta)\sqrt{q_\theta}\right) \right)^2 \right] + \rho_i(\theta)^2\|\dot u_i(\theta)\|^2\mathbb{E}\left[R_{m/2-1}\left(\rho_i(\theta)\sqrt{q_\theta}\right)^2 \right]\\
\end{align*}
Now, we relate this expectation to the Fisher information of a non-central chi-squared distribution $\chi_m^2(\rho_i(\theta)^2)$ for the amplitude parameter $\rho_i(\theta)$. The density writes:
\[p_{\rho_i(\theta)}(x) = \frac{1}{2}e^{-(x + \rho_i(\theta)^2)/2} \left(\frac{x}{\rho_i(\theta)^2}\right)^{m/4-1/2} I_{m/2-1}\big(\rho_i(\theta)\sqrt{x}\big).\]
Then, the score function is: \begin{align*}
    \partial_{\rho_i(\theta)} \log p_{\rho_i(\theta)}(x) = - \rho_i(\theta) + \frac{1-m/2}{\rho_i(\theta)} + \sqrt{x}\frac{I_{m/2-1}'(\rho_i(\theta)\sqrt{x})}{I_{m/2-1}(\rho_i(\theta)\sqrt{x})} =  - \rho_i(\theta) + \sqrt{x} R_{m/2-1}(\rho_i(\theta) \sqrt{x}).
\end{align*}

Then, noting $Y \sim \chi_m^2(\rho_i(\theta)^2)$ and leveraging $\mathbb{E}_Y[\partial_{\rho_i(\theta)} \log p_{\rho_i(\theta)}(Y)] = 0$, the associated Fisher information $I(\rho_i(\theta))$ is equal to:
\begin{align*}
    I(\rho_i(\theta)) &= \mathbb{E}_Y\left[\left(- \rho_i(\theta) + \sqrt{Y} R_{m/2-1}\left(\rho_i(\theta) \sqrt{Y}\right)\right)^2\right]\\ &= \rho_i(\theta)^2 + \mathbb{E}_Y\left[\left(\sqrt{Y} R_{m/2-1}\left(\rho_i(\theta) \sqrt{Y}\right)\right)^2\right]- 2 \rho_i(\theta) \mathbb{E}_Y\left[\sqrt{Y} R_{m/2-1}\left(\rho_i(\theta) \sqrt{Y}\right)\right]\\
    &= \mathbb{E}_Y\left[Y R_{m/2-1}\left(\rho_i(\theta) \sqrt{Y}\right)^2\right] - \rho_i(\theta)^2.
\end{align*}

Then, the Fisher information writes:
\[I(\theta, \mathcal{W}(k,\lfloor d/k \rfloor,\mu_i(\theta) \mu_i(\theta)^\top) = \dot \rho_i(\theta)^2 I(\rho_i(\theta)) + \rho_i(\theta)^2 \|\dot u_i(\theta)\|^2 \mathbb{E}\left[R_{m/2-1}\left(\rho_i(\theta)\sqrt{q_\theta}\right)^2 \right].\]
Now, we connect $I(\rho_i(\theta))$ to $\mathbb{E}\left[R_{m/2-1}(2\sqrt{q_\theta})^2 \right] = \mathbb{E}_Y\left[R_{m/2-1}\left(2\sqrt{Y}\right)^2 \right]$. Since $R_{m/2-1}$ is increasing, we can apply the Chebyshev integral inequality:

\begin{align*}
    \mathbb{E}\left[R_{m/2-1}\left(2\sqrt{q_\theta}\right)^2 \right] \leq \frac{\mathbb{E}_Y\left[YR_{m/2-1}\left(2 \rho_i(\theta) \sqrt{Y}\right)^2 \right]}{\mathbb{E}_Y\left[Y \right]} = \frac{I(\rho_i(\theta)) + \rho_i(\theta)^2}{m + \rho_i(\theta)^2} &\leq \frac{\rho_i(\theta)^2 (2\rho_i(\theta)^2 + m - 1) }{(2\rho_i(\theta)^2 + m - 3)(\rho_i(\theta)^2 + m)}\\
    &\leq \frac{2\rho_i(\theta)^2 }{2\rho_i(\theta)^2 + m - 3},
\end{align*}
leveraging the Fisher information upper bound for non-central chi-squared distributions of~\ref{prop:Fisher-upper-bound} $I(v_i) \leq \frac{2 v_i^2}{2 v_i^2 + m - 3}$.

Therefore,

\begin{align*}
    I(\theta, \mathcal{W}(k,\lfloor d/k \rfloor,U_\theta^\top \Lambda^{\top}_i \Lambda_i U_\theta)) &\leq  \dot \rho_i(\theta)^2 I(\rho_i(\theta)) + \rho_i(\theta)^2 \|\dot u_i(\theta)\|^2  \frac{2\rho_i(\theta)^2 }{2\rho_i(\theta)^2 + m - 3}\\
    &\leq (\dot \rho_i(\theta)^2 + \rho_i(\theta)^2 \|\dot u_i(\theta)\|^2 ) \frac{2\rho_i(\theta)^2 }{2\rho_i(\theta)^2 + m - 3}\\
    &\leq \|\dot \mu_i(\theta)\|^2 \frac{2\rho_i(\theta)^2 }{2\rho_i(\theta)^2 + m - 3}.
\end{align*}

From there, we upper bound the total Fisher information by leveraging the concavity of $x \mapsto \frac{x}{x+c}$ on $\mathbb{R}^+$ for the inputs $\|\dot \mu_i(\theta)\|^2$:
\begin{align}
    I(\theta) &\leq \sum_{i=1}^k \|\dot \mu_i(\theta)\|^2 \frac{2\rho_i(\theta)^2 }{2\rho_i(\theta)^2 + m - 3}\\
    &\leq \|\dot \mu(\theta)\|_F^2 \frac{2\sum_{i=1}^k \rho_i(\theta)^2 \dot \mu_i(\theta)^2 / \|\dot \mu(\theta)\|_F^2}{\sum_{i=1}^k  \rho_i(\theta)^2 \dot \mu_i(\theta)^2 / \|\dot \mu(\theta)\|_F^2 + m - 3}\\
    &\leq \|\dot \mu(\theta)\|_F^2 \frac{\rho(\theta)^2}{\rho(\theta)^2 + m - 3}.
\end{align}

Now, taking $\mu(\theta) = (1-\theta) v + \theta Uw$, we get $\dot \mu(\theta) = Uw-v$, and $\rho(\theta) = \|(1-\theta) v + \theta Uw\|_F$, giving the desired result.

\end{proof}

\newpage
\subsection{Upper bounding the divergence in the general case}
\label{app:prop-renyi-upper-bound-multi-dim}


\subsection{Proof of Proposition~\ref{prop:Fisher-upper-bound-one-dim}}
\label{app:improvement-fisher-multiple}
In this section, we outline another upper bound for the Fisher information of non-central chi-squared distributions. This upper bound is worse than the upper bound of Proposition~\ref{prop:Fisher-upper-bound-one-dim}, but we think that the proof technique is worth of noting. In fact, the proof technique is different and relies on a Poincaré inequality step. We highlight this proof technique because we think that a similar proof could be developed for non-central Wishart distributions, giving a potential better Fisher information than the one we obtain in this paper. To develop the same upper bounds for non-central Wishart distribution, we need a control on the spectrum of the Hessian of ${}_0F_1$, which appears when performing the Poincaré inequality step. We believe that this could be done through the integral representation of ${}_0F_1$ (Proposition~\ref{prop:integral-representation}).
\begin{restatable}{proposition}{propFisherUpperBoundOneDimWorse}\normalfont(Fisher information of non-central $\chi^2$).
\label{prop:Fisher-upper-bound-one-dim-worse}
For all $\theta> 0$ and  $p_\theta \in P$, and :
    \[I(\theta) \leq \frac{2\theta^2(\theta^2 + 2d)}{2d^2 +  \theta^2(\theta^2 + 2d)}.\]
\end{restatable}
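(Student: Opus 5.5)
The plan is to avoid the Bessel-quotient inequalities of Lemma~\ref{lem:bessel-upper-bound} and instead work on the Gaussian vector underlying $\chi_d^2(\theta^2)$. Write $g = N + \theta e_1 \in \mathbb{R}^d$ with $N$ standard Gaussian, and $X = \|g\|^2 \sim \chi_d^2(\theta^2)$. The family $\{\mathcal{N}(\theta e_1, I_d)\}$ has score $N_1 = g_1 - \theta$ and Fisher information $1$. Since $X$ is a deterministic function of $g$, the score of $X$ is the conditional expectation of the Gaussian score:
\[s_\theta(X) = h(X) := \mathbb{E}[N_1 \mid X].\]
Hence $I(\theta) = \mathbb{E}[h(X)^2] = \mathbb{E}[N_1 h(X)]$, and $1 - I(\theta) = \mathbb{E}[\operatorname{Var}(N_1\mid X)]$. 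The target bound is equivalent to
\[2d^2\, I(\theta) \le (1 - I(\theta))\cdot 2\theta^2(\theta^2+2d) = 2(1-I(\theta))\big((\mathbb{E}X)^2 - d^2\big),\]
since $\mathbb{E}X = d + \theta^2$. So the goal is an inequality of this ``$I$ versus $1-I$'' type, which is then solved for $I$, exactly as the quadratic inequality was solved in the proof of Proposition~\ref{prop:Fisher-upper-bound-one-dim}.

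\textbf{Step 1 (Gaussian integration by parts).} Apply Stein's identity to $N_1 h(\|g\|^2)$:
\[I(\theta) = \mathbb{E}[N_1 h(X)] = 2\,\mathbb{E}[g_1 h'(X)].\]
The derivative $h'$ is explicit through the Bessel-quotient expression of the score, $h(x) = -\theta + \sqrt{x}\,R_{d/2-1}(\theta\sqrt{x})$.

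\textbf{Step 2 (Poincaré step).} Apply the Gaussian Poincaré inequality to the function $g \mapsto h(\|g\|^2)$. Since $\mathbb{E}h = 0$, this gives
\[I(\theta) = \operatorname{Var}h(X) \le 4\,\mathbb{E}[X h'(X)^2].\]
Next, convert this into a bound involving $1 - I$. Differentiating the score once more gives $h' $ in terms of $\partial_x^2 \log p_\theta$, which is where the Hessian of $\log {}_0F_1$ enters in the scalar case. The key claim to establish is the pointwise bound
\[4x\,h'(x)^2 \le \frac{\text{(conditional variance term)}\cdot\big(x-d\big)^2}{d^2},\]
obtained by comparing $\operatorname{Var}(N_1\mid X=x)$ with the second derivative of $\log\big(x^{1/2-d/4}I_{d/2-1}(\theta\sqrt{x})\big)$. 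The form of this comparison is guided by the fact that, conditionally on $X = x$, $g$ lies on the sphere of radius $\sqrt{x}$ with a von Mises--Fisher tilt $e^{\theta g_1}$.

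\textbf{Step 3 (Cauchy--Schwarz and Jensen).} Combine Steps 1 and 2 with Cauchy--Schwarz to get a product of $\mathbb{E}[\operatorname{Var}(N_1\mid X)] = 1 - I$ and a moment of $X$. Then use $\mathbb{E}[(X-d)\cdot\,] $-type moments together with $\mathbb{E}X = d+\theta^2$ and Jensen to reduce the moment factor to $\big((\mathbb{E}X)^2 - d^2\big)/d^2 = \theta^2(\theta^2+2d)/d^2$. This yields $I \le (1-I)\,\theta^2(\theta^2+2d)/d^2$. Rearranging gives
\[I(\theta) \le \frac{\theta^2(\theta^2+2d)}{d^2+\theta^2(\theta^2+2d)},\]
which is at most the stated bound.

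\textbf{Main obstacle.} The hard part is Step 2: controlling $h'$, equivalently the second derivative of $\log {}_0F_1(d/2,\cdot)$, by the conditional variance with the correct $d^{-2}$ scaling. I would first try the known monotonicity and Turán-type bounds on $R_{d/2-1}$, namely $R' \le 1 - R^2 - \frac{d-1}{x}R$, to bound $h'$ pointwise. If the constant comes out slightly off, I would settle for a bound of the same functional form with a shifted $d$, since the stated estimate is itself weaker than Proposition~\ref{prop:Fisher-upper-bound-one-dim} and is mainly meant to illustrate the Poincaré-based technique.
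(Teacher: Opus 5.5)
There is a genuine gap, and it is more than a matter of constants. The inequality meant to carry the proof (the pointwise claim of Step~2, fed into Step~3) is never established, and the route through the Poincar\'e step cannot reach the statement at all.

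First, the displayed pointwise claim is false as written. The Riccati identity gives $h'(x)=\frac{\theta}{2}\bigl(1-R_\nu(z)^2-\frac{2\nu}{z}R_\nu(z)\bigr)$ with $z=\theta\sqrt{x}$ and $\nu=d/2-1$. This is strictly positive, because the non-central $\chi^2$ family has monotone likelihood ratio and so its score increases in $x$. Hence $h'(d)>0$, while your right-hand side vanishes at $x=d$, since $\Var(N_1\mid X=x)\le(\sqrt{x}+\theta)^2$ is finite.

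Second, and more fundamentally, no argument that passes through $I\le 4\,\mathbb{E}[Xh'(X)^2]$ and then bounds the right-hand side from above can succeed for small $\theta$. Using $R_\nu(z)\le z/d$ and $h'\ge 0$, one gets $4Xh'(X)^2\ge \frac{4\theta^2X}{d^2}-\frac{4\theta^4X^2}{d^3}$, so $4\,\mathbb{E}[Xh'(X)^2]\ge \frac{4\theta^2}{d}-O(\theta^4)$. By contrast, your target $(1-I)A/d^2$ (with $A=\theta^2(\theta^2+2d)$) and the stated bound are both $\frac{2\theta^2}{d}+O(\theta^4)$. The Gaussian Poincar\'e inequality loses a factor $2$ on functions of $\|g\|^2$ near $\theta=0$: $\Var\|N\|^2=2d$ versus $\mathbb{E}\bigl[\|\nabla(\|N\|^2)\|^2\bigr]=4d$. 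The stated bound already has the exact leading coefficient, so there is no room for this loss, and ``settling for a shifted $d$'' would not help.

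Step~3 is not yet an argument either. Cauchy--Schwarz does not turn $\mathbb{E}[\Var(N_1\mid X)\,m(X)]$ into $(1-I)$ times a constant. Also, $\mathbb{E}[(X-d)^2]=\theta^4+4\theta^2+2d\neq(\mathbb{E}X)^2-d^2$, and Jensen bounds such convex moments from below, not above.

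Two further remarks. Your reformulation is not an equivalence. The inequality $2d^2I\le 2(1-I)A$ means $I\le A/(d^2+A)$, which is strictly stronger than the stated $2A/(2d^2+A)$; the latter even exceeds $1$ once $A>2d^2$. The strengthening can fail, e.g.\ for $d=1$, where $1-I=\mathbb{E}[X\,\mathrm{sech}^2(\theta\sqrt{X})]=O(\theta e^{-\theta^2/2})$ eventually drops below $(1+\theta^2)^{-2}$.

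The paper uses no self-referential $I$-versus-$(1-I)$ step. It writes $I=\frac{4}{\theta^2}\Var(\mathbb{E}[K\mid Y])$ through the Poisson-mixture representation, which is exactly your $\Var h(X)$. It then applies the same Gaussian Poincar\'e inequality, bounds the derivative pointwise via the Riccati identity and a lower bound on the Bessel quotient, and concludes with Jensen on a concave function of $Y$ --- essentially your fallback. The paper writes the Bessel argument as $\sqrt{\lambda Y}$; the correct argument is $\theta\sqrt{Y}=\sqrt{2\lambda Y}$. With it, that chain yields $I\le 4\theta^2(d+\theta^2)/\bigl(d^2+\theta^2(d+\theta^2)\bigr)$, and by the obstruction above the leading term $4\theta^2/d$ is unavoidable on that route.

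The more promising ingredient in your plan is Step~1. After conditioning on $X$ it reads $I=2\,\mathbb{E}[(\theta+h(X))h'(X)]$, an exact identity that incurs no Poincar\'e-type loss.
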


\begin{proof}
    We note $\lambda = \theta^2/2$.
    Let $\pi_\lambda(i) = e^{-\lambda} \lambda^i /i!$ be the Poisson probability mass function and $f_i = \frac{x^{i/2-1} e^{-x/2}}{2^{i/2}\Gamma(i/2)}$ the chi-squared density with $i$ degrees of freedom.
    We write non-central chi squared random variables as Poisson-weighted mixture of central chi squared random variables: \[p_\theta = \sum_{i=0}^{+\infty} \pi_{\theta^2/2}(i)f_{d+2i}.\]
    Then, using $\partial_\theta \pi_{\theta^2/2}(i)= (- \theta  + \frac{2 i}{\theta} )  \pi_{\theta^2/2}(i)$
    \[\partial_\theta p_\theta = \sum_{i=0}^{+\infty} \left(- \theta  + \frac{2 i}{\theta} \right)\pi_{\theta^2/2}(i)f_{d+2i},\]
    We note $K \sim Poisson(\theta^2/2)$, and $Y | K=i\sim Gamma(d/2 + i, 2)$. Then,\[s_\theta^2 = \left(\frac{\partial_\theta p_\theta}{p_\theta}\right)^2 =\mathbb{E}\left[ -\theta + \frac{2K}{\theta} \bigg | Y \right]^2 = \frac{4}{\theta^2}\left(\mathbb{E}[K | Y] - \frac{\theta^2}{2}\right)^2,\]
    and finally
    \[I(\theta) = \mathbb{E}_\theta\left[s_\theta^2\right] = \frac{4}{\theta^2}\Var\left(\mathbb{E}[K|Y]\right).\]
    We want to compute a upper bound to $\Var(\mathbb{E}[K|Y])$. 

    The variable $K|Y = y$ follows the Bessel distribution~\citep{Yuan2000} with parameters $Bessel(\sqrt{\lambda y},d/2 -1)$. Then,

\[E[K|Y] = \frac{1}{2} \sqrt{\lambda Y} \frac{I_{d/2}(\sqrt{\lambda Y})}{I_{d/2-1}(\sqrt{\lambda Y})},\]
where $I_{d/2}$ is the modified Bessel function of index $d/2$.

Let $g : Z \mapsto \sum_{i=1}^d (z_i + v_i)^2 = \|Z+V\|_F^2$, with $Z \sim \mathcal{N}(0, I_d)$, $f_\nu : y \mapsto \frac{1}{2}yR_\nu(y)$, and $x = \sqrt{\lambda y}$. Then, \[\frac{df_\nu}{dy}(x) = \frac{\lambda}{4x}\left(R_\nu(x) + x R_\nu'(x)\right).\]

Using the identity:
\[R'_\nu(x)=1 - \frac{2(\nu-1)}{x}R_\nu(x) - R_\nu(x)^2,\]
we get:
\[\frac{df_\nu}{dy}(x) = \frac{\lambda}{4}\left(1- R_\nu(x)^2 - \frac{2(\nu-1)}{x}R_\nu(x)\right).\]

A known sharp inequality for Bessel quotients is, for $\nu > 1/2$~\citep{ruiz2016}:

\[\frac{x}{\nu + \sqrt{\nu^2 + x^2}} \leq \frac{I_{\nu}(x)}{I_{\nu-1}(x)}  \leq \frac{x}{\nu-1/2 + \sqrt{(\nu-1/2)^2 + x^2}}.\]

Then,

\begin{align*}
    \frac{df_\nu}{dy}(x) &\leq \frac{\lambda}{4}\left(1- \left(\frac{x}{\nu + \sqrt{\nu^2 + x^2}}\right)^2 - \frac{2(\nu-1)}{x}\frac{x}{\nu + \sqrt{\nu^2 + x^2}} \right)\\
    &\leq \frac{\lambda}{4 }\left(\frac{2\nu}{\nu + \sqrt{\nu^2 + x^2}} -\frac{2(\nu-1) }{\nu + \sqrt{\nu^2 + x^2}} \right)\\
    &\leq \frac{\lambda}{2}\frac{1}{\nu + \sqrt{\nu^2 + x^2}}.
\end{align*}

The Poincaré inequality gives:

\begin{align*}
    \Var\left( \frac{1}{2} \sqrt{\lambda Y} \frac{I_{d/2}(\sqrt{\lambda Y})}{I_{d/2-1}(\sqrt{\lambda Y})} \right)  =& \; \Var\left(f_{(d-1)/2} \circ g(Z) \right) \\ \leq&\; \mathbb{E}\left[\|\nabla (f_{(d-1)/2} \circ g (Z)) \|_F^2\right] \\ 
    =&\; \mathbb{E}\left[4 (f'_{(d-1)/2} \circ g (Z))^2 \|Z+v\|_F^2\right]\\
    =&\; \mathbb{E}\left[ 4\left(\frac{\lambda}{2}\frac{1}{\nu + \sqrt{\nu^2 + \lambda Y}}\right)^2 Y\right]\\
    =& \;\lambda^2\mathbb{E}\left[ \frac{Y}{2\nu^2 +  \lambda Y} \right].
\end{align*}

Then, taking $\nu = d/2$ and by concavity of $y \mapsto \frac{y}{2\nu^2 + \lambda y}$, Jensen inequality gives:
\[I(\lambda) \leq 2\lambda\frac{\mathbb{E}\left[Y\right]}{2(d/2)^2 +  \lambda \mathbb{E}\left[Y\right]} =  \frac{4\lambda(\lambda + d)}{d^2 +  2\lambda(\lambda + d)}\leq \frac{2\theta^2(\theta^2 + 2d)}{2d^2 +  \theta^2(\theta^2 + 2d)}.\]

\end{proof}

\newpage


\section{Numerical estimation of the divergence}
\label{app:experiments}
In this section, we restate the discussion of Section~\ref{sec:experiments} and add details for completeness.

We perform a series of experiments designed to illustrate our theoretical results. While sampling from $ZV$ is straightforward, its density is analytically intractable. Therefore, we rely on variational inference methods to estimate the Rényi divergence $D_\alpha(ZV,ZW)$. Specifically, we numerically estimate the Rényi divergence induced by releasing $ZV$ using the \emph{Convex-Conjugate Rényi Variational Formula}, introduced~\cite{birrell2023} (Theorem 2.1).

\begin{theorem}[Convex-Conjugate Rényi Variational Formula (Theorem 2.1 from~\citep{birrell2023})]
Let $(\Omega, \mathcal{M})$ be a measurable space, $P, Q \in \mathcal{P}(\Omega)$, and $\alpha \in (0,1) \cup (1,\infty)$. Then,
\[D_\alpha(P,Q) =\alpha \sup_{g \in \mathcal{M}_b(\Omega); g < 0} \left\{ \int gdQ + \frac{1}{\alpha - 1} \log \int |g|^{(\alpha-1)/\alpha} dP \right\} + \log \alpha + 1,\]
where $\mathcal{M}_b(\Omega)$ is the space of bounded functions on $\Omega$.
\end{theorem}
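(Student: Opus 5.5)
The plan is to prove both inequalities directly via Hölder's inequality, after substituting $h=-g>0$ and $\beta=(\alpha-1)/\alpha$. I treat $\alpha>1$, which is the case used in the paper; for $\alpha\in(0,1)$ the argument is the same but with reversed Hölder inequalities and $\beta<0$. The objective inside the supremum becomes
\[\Psi(h) = -\int h\,dQ + \frac{1}{\alpha-1}\log\int h^{\beta}\,dP, \qquad \beta\in(0,1).\]
The claim is then equivalent to $\alpha\sup_h \Psi(h) = D_\alpha(P,Q) - \log\alpha - 1$.

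\textbf{Step 1: the absolutely continuous case, upper bound.} Assume $P\ll Q$ and write $r=dP/dQ$. Then $\int h^\beta\,dP=\int h^\beta r\,dQ$. Apply Hölder with exponents $1/\beta$ and $1/(1-\beta)=\alpha$:
\[\int h^\beta r\,dQ \le \Big(\int h\,dQ\Big)^{\beta}\Big(\int r^\alpha\,dQ\Big)^{1/\alpha}.\]
Set $s=\int h\,dQ>0$. Taking logarithms and dividing by $\alpha-1$ gives
\[\Psi(h)\le -s+\tfrac{1}{\alpha}\log s+\tfrac{1}{\alpha}D_\alpha(P,Q),\]
using $\frac{1}{\alpha(\alpha-1)}\log\int r^\alpha\,dQ = D_\alpha/\alpha$. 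The scalar function $s\mapsto -s+\frac1\alpha\log s$ is concave and is maximized at $s=1/\alpha$, with value $-\frac1\alpha(1+\log\alpha)$. Multiplying by $\alpha$ yields $\alpha\sup\Psi\le D_\alpha-\log\alpha-1$.

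\textbf{Step 2: the absolutely continuous case, lower bound.} Equality in Hölder holds when $h^{\beta\cdot(1/\beta)}\propto r^\alpha$, i.e. $h=c\,r^\alpha$. The constant $c$ is fixed by the optimal scale $\int h\,dQ=1/\alpha$. With this choice both inequalities of Step 1 become equalities.

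The main obstacle is that $h=c\,r^\alpha$ need not be bounded, nor bounded away from $0$. Since $g$ must be bounded and strictly negative, I would use truncations $h_n=\min\{\max\{c\,r^\alpha,1/n\},n\}$. Monotone and dominated convergence then give $\Psi(h_n)\to\Psi(c\,r^\alpha)$ whenever $\int r^\alpha\,dQ<\infty$. The $1/n$ floor adds only $O(1/n)$ to $\int h\,dQ$, and the $P$-integral converges by monotone convergence on the upper truncation.

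If $D_\alpha=\infty$, I would instead use $h_n=c_n\min(r,n)^\alpha$, with $c_n$ chosen optimally for each $n$. Step 1 applied to the truncated density shows $\alpha\Psi(h_n)$ tends to infinity.

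\textbf{Step 3: singular case.} If $P\not\ll Q$, there is a set $A$ with $Q(A)=0$ and $P(A)>0$, and $D_\alpha(P,Q)=\infty$. Take $h=M\mathbf 1_A+\epsilon\mathbf 1_{A^c}$. Then $\int h\,dQ=\epsilon$, while $\int h^\beta\,dP\ge M^\beta P(A)\to\infty$ as $M\to\infty$. Hence the supremum is $+\infty$, matching the left side. Combining the three steps proves the formula.
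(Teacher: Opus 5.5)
The paper gives no proof of this statement. It is quoted from Birrell et al.\ (2023) only to justify the variational estimator used in the experiments, so there is no in-paper argument to compare with. Judged on its own, your H\"older route is a clean, self-contained proof for $\alpha>1$. It also confirms that the factor $\alpha$ and the constant $\log\alpha+1$ are correct for the standard normalization of $D_\alpha$. Unlike a bare citation, it exhibits the maximizer $g^\star=-r^\alpha/(\alpha\int r^\alpha\,dQ)$, which is the function the neural estimator is implicitly approximating.

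Two small repairs are needed for $\alpha>1$:
\begin{itemize}
\item In the subcase $P\ll Q$, $\int r^\alpha\,dQ=\infty$, Step~1 is the wrong tool, since it is an upper bound. What you need is $\int h_n^\beta\,dP=c_n^\beta\int\min(r,n)^{\alpha-1}r\,dQ\ge c_n^\beta\int\min(r,n)^\alpha\,dQ$. With $c_n=1/(\alpha\int\min(r,n)^\alpha\,dQ)$ this gives $\alpha\Psi(h_n)\ge\frac{1}{\alpha-1}\log\int\min(r,n)^\alpha\,dQ-\log\alpha-1\to\infty$. Also, $h_n$ vanishes on $\{r=0\}$, so add a floor $\epsilon$; this lowers $\Psi$ by at most $\epsilon$ since $\beta>0$.
\item In Step~2 the $P$-integrals converge by dominated, not monotone, convergence, because the floor and cap move in opposite directions. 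A dominator is $c^\beta r^{\alpha-1}+1$, which is $P$-integrable because $\int r^{\alpha-1}\,dP=\int r^\alpha\,dQ<\infty$.
\end{itemize}

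The genuine gap is the range $\alpha\in(0,1)$. The statement includes it, and you dismiss it as ``the same argument'', but it is not the same. For $\alpha<1$, $D_\alpha(P,Q)=\frac{1}{\alpha-1}\log\int p^\alpha q^{1-\alpha}\,d\mu$ is finite unless $P\perp Q$. So Step~3's assertion that $P\not\ll Q$ forces $D_\alpha=\infty$ is false, and the reduction to $P\ll Q$ underlying Steps~1--2 is unavailable. The case $P\not\ll Q$, $P\not\perp Q$ is therefore left uncovered.

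The repair is short but must be made:
\begin{itemize}
\item \textbf{Upper bound.} Work with densities $p,q$ with respect to $\mu=P+Q$. Apply ordinary H\"older with exponents $1/\alpha$ and $1/(1-\alpha)$ to $p^\alpha q^{1-\alpha}=(h^\beta p)^\alpha(hq)^{1-\alpha}$. This gives $\int p^\alpha q^{1-\alpha}\,d\mu\le(\int h^\beta\,dP)^\alpha(\int h\,dQ)^{1-\alpha}$ for every $h>0$, with no absolute continuity needed. Dividing by $\alpha(\alpha-1)<0$ yields exactly your bound $\Psi(h)\le -s+\frac1\alpha\log s+\frac1\alpha D_\alpha$.
\item \textbf{Attainment.} Set $c=1/(\alpha\int p^\alpha q^{1-\alpha}\,d\mu)$. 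Let $h_n$ be your truncation of $c(p/q)^\alpha$ on $\{q>0\}$, and set $h_n=n$ on $\{q=0\}$. Since $\beta<0$, the singular part contributes $n^\beta P(\{q=0\})\to0$. Because the floor $1/n$ now makes $h_n^\beta$ large, domination has to come from $h_n\ge\min(c(p/q)^\alpha,n)$, which gives $h_n^\beta p\le c^\beta p^\alpha q^{1-\alpha}+p$.
\item \textbf{Infinite case.} The only infinite case is $P\perp Q$. Your test function $M\mathbf 1_A+\epsilon\mathbf 1_{A^c}$ still works, provided $A$ is chosen with $P(A)=1=Q(A^c)$. Otherwise the term $\epsilon^\beta P(A^c)$ keeps $\Psi$ bounded above. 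It now works because $M^\beta\to0$ while the prefactor $\frac{1}{\alpha-1}$ is negative.
\end{itemize}
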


Among existing variational methods to estimate the Rényi divergence, this formula has the specificity of being well-behaved when $\alpha > 1$~\citep{birrell2023}, which aligns with our setting. In order to estimate the Rényi divergence between two distributions $P,Q$, we proceed as follows. We initialize a two layers neural network $f(w_0,\cdot)$ with negative poly-softplus activation~\citep{birrell2023} and train it to maximize:
\[\sup_{w} \left\{ \int f(w)dQ + \frac{1}{\alpha - 1} \log \int |f(w)|^{(\alpha-1)/\alpha} dP \right\}\]
Optimization is performed via minibatch stochastic gradient ascent. At each iteration, we sample $X = (X_1,\dots, X_l) \sim P^{\otimes l}$, $Y = (Y_1,\dots,Y_l) \sim Q^{\otimes l}$ for some batch size $l > 0$ and update the network parameters using the empirical objective:
\[F(w,X,Y) =\frac{1}{l}\sum_{i=1}^lf(w,Y_i)+  \frac{1}{
\alpha - 1}\log\left( -\frac{1}{l}\sum_{i=1}^l f(w,X_i)^{(\alpha-1)/\alpha}\right).\]
Each estimation is performed $10$ times and the results are averaged. For reproducibility, here are some experimental details:
\begin{itemize}
    \item Number of training steps: $2\cdot 10^4$.
    \item Batch size: $512$.
    \item Learning rate: $10^{-3}$.
    \item Hidden dimension: $64$.
    \item Final activation: Poly-softplus~\citep{birrell2023}.
    \item Optimizer: Adam~\citep{Kingma2014}.
    \item Validation score is computed every $100$ steps with a resampled validation set of size $5 \times 10^4$.
    \item Early stopping criterion: No validation score improvement for $10$ validation steps.
    \item Once trained, the divergence estimation is made on a sample of size $5 \times 10^4$.
\end{itemize}
At each step, data is resampled from the distributions of $ZV$ and $ZW$.

\subsection{Proofs of Proposition~\ref{prop:privacy-gap} and Corrolary~\ref{corr:privacy-gap-scaling}}
In this section, we derive a lower bound on $D_\alpha(ZV,ZW) - D_\alpha(V^\top V, W^\top W)$ in the case $k=1$. In this setting, we leverage the fact that $D_\alpha(ZV,ZW) = D_\alpha(V^\top Z^\top Z V, W^\top Z^\top Z W) = D_\alpha(\|ZV\|^2,\|ZW\|^2)$.
Here, $ZV \in \mathbb{R}^{n_{\text{syn}}}$. Also, $V^\top V = \|V\|^2$ follows a non-central chi-squared distribution $\chi_d^2(\|v\|^2)$ with degree of freedom $d$ and non-centrality parameter $\|v\|^2$. To simplify notations, we note $n=n_{\text{syn}}$ and: $S_v = \frac{1}{n_{\text{syn}}}\|ZV\|^2$ and $Y_v = \|V\|^2$. We note the measures: $\mu_n = P\left(\frac{1}{n_{\text{syn}}}\|ZV\|^2\right), \nu_n = P\left(\frac{1}{n_{\text{syn}}}\|ZW\|^2\right), \mu = \|V\|^2$ and $ \nu = \|W\|^2$. Also, for $s > 0$, we note $r(x) = \frac{d\mu}{d\nu}(s)$.

\begin{lemma}
    Assume that $k=1$. Let $\alpha > 1$. For two distributions $P, Q$, we note $H_\alpha(P,Q) = \exp((\alpha-1)D_\alpha(P,Q))$. Let $X_n \sim \frac{1}{n}\chi_{n}^2$ follow a (rescaled) chi-squared distribution with degree of freedom $n$. Then,
    \[H_\alpha(\mu, \nu) - H_\alpha(\mu_n, \nu_n) = \mathbb{E}_{S_w\sim \nu_n}\left[\mathbb{E}_{Y_w \sim \nu}[r(Y_w)^\alpha | Y_wX_n = S_w] - \mathbb{E}_{Y_w \sim \nu}\left[r(Y_w) \big| Y_wX_n = S_w \right]^\alpha\right].\]
\end{lemma}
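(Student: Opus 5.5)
The idea is to view $(\mu_n,\nu_n)$ as the image of $(\mu,\nu)$ under a common Markov kernel, and write the loss of $H_\alpha$ under that kernel as a Jensen gap for $x\mapsto x^\alpha$. For $k=1$, conditionally on $V$ the vector $ZV$ is $\mathcal{N}(0,\|V\|^2 I_{n})$. Hence $S_v=\frac1n\|ZV\|^2 \overset{d}{=} Y_v X_n$ with $X_n\sim\frac1n\chi^2_n$ independent of $Y_v=\|V\|^2$, and likewise $S_w\overset{d}{=}Y_wX_n$ with the same law of $X_n$. So $\mu_n = K\mu$ and $\nu_n=K\nu$, where $K(y,\cdot)$ is the law of $yX_n$. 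The rescaling by $1/n$ does not change Rényi divergences, so $H_\alpha(\mu_n,\nu_n)=\exp((\alpha-1)D_\alpha(ZV,ZW))$ by Proposition~\ref{prop:sufficient-statistics}. This identification is only used for interpretation; the lemma itself concerns $\mu_n,\nu_n$.

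\textbf{Key step: express $d\mu_n/d\nu_n$ as a conditional expectation.} Form the joint law $\Gamma$ of $(Y_w,S_w)$ with $Y_w\sim\nu$ and $S_w=Y_wX_n$. The joint law of $(Y_v,S_v)$ is absolutely continuous with respect to $\Gamma$ with density $r(y)$, since the kernels coincide. Hence for any bounded measurable $\phi$,
\[
\int \phi\, d\mu_n=\mathbb{E}[r(Y_w)\phi(S_w)]=\mathbb{E}\big[\mathbb{E}[r(Y_w)\mid S_w]\,\phi(S_w)\big],
\]
so $\frac{d\mu_n}{d\nu_n}(s)=\mathbb{E}_{Y_w\sim\nu}[r(Y_w)\mid Y_wX_n=s]$ for $\nu_n$-a.e.\ $s$. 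This is the standard data-processing representation of a likelihood ratio. Since $\mu$ and $\nu$ are non-central chi-squared laws with common support $(0,\infty)$, $r$ is well defined and positive, and the conditional law of $Y_w$ given $S_w$ exists by disintegration, or explicitly via densities because $X_n$ has a density.

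\textbf{Conclusion via the tower property.} We have
\[
H_\alpha(\mu,\nu)=\mathbb{E}_\nu[r(Y_w)^\alpha]=\mathbb{E}_{S_w\sim\nu_n}\big[\mathbb{E}[r(Y_w)^\alpha\mid S_w]\big]
\]
and
\[
H_\alpha(\mu_n,\nu_n)=\mathbb{E}_{\nu_n}\big[(d\mu_n/d\nu_n)^\alpha\big]=\mathbb{E}_{S_w\sim\nu_n}\big[\mathbb{E}[r(Y_w)\mid S_w]^\alpha\big].
\]
Subtracting gives the identity. Conditional Jensen shows each integrand is nonnegative, which recovers the post-processing inequality.

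\textbf{Main obstacle: integrability.} Both quantities must be finite so the subtraction is legitimate. $H_\alpha(\mu,\nu)<\infty$ follows from the post-processing bound $D_\alpha(\mu,\nu)\le D_\alpha(V,W)=\alpha\|v-w\|^2/2$. Then $H_\alpha(\mu_n,\nu_n)\le H_\alpha(\mu,\nu)$ by conditional Jensen. The measure-theoretic care needed for conditioning on the null events $\{Y_wX_n=s\}$ is handled by working with regular conditional distributions, or with the explicit conditional density $\nu(y)\,p_{X_n}(s/y)/(y\,p_{\nu_n}(s))$.
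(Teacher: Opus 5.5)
Your proposal is correct and follows the paper's proof essentially step for step: you view $\mu_n,\nu_n$ as images of $\mu,\nu$ under the common kernel $y\mapsto yX_n$, identify $d\mu_n/d\nu_n$ with $\mathbb{E}[r(Y_w)\mid S_w]$, and apply the tower property to $H_\alpha(\mu,\nu)$ before subtracting. Your extra checks supply rigor the paper leaves implicit, namely finiteness of both terms (via post-processing and conditional Jensen) and existence of the conditional law (via the explicit conditional density).
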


\begin{proof}
    First, we notice that  $ZV | V \sim \mathcal{N}(0,\|V\|^2 I_{n})$. Let $X_n \sim \frac{1}{n}\chi_{n}^2$. Then, $S_v | Y_v = Y_v X_n$. $\mu_n$ and $\nu_n$ can be seen as the output of the same Markov kernel $K : y \mapsto yX_n$ for the measures $\mu$ and $\nu$. Thus, the density ratio $\frac{d\mu_{n}}{d\nu_{n}}(s)$ writes:
    \[\frac{d\mu_{n}}{d\nu_{n}}(s) = \mathbb{E}_{Y_w \sim\nu}\left[\frac{d\mu}{d\nu}(Y_w) \big| Y_wX_n = s \right] = \mathbb{E}_{Y_w \sim \nu}\left[r(Y_w) \big| Y_wX_n = s \right]\]
    Now, by definition,
    \[H_\alpha(S_v,S_w) = \mathbb{E}_{S_w \sim \nu_n}\left[\mathbb{E}_{Y_w \sim \nu}\left[r(Y_w) \big| Y_wX_n = S_w \right]^\alpha\right]\]
    
    Also by tower property, \[H_\alpha(Y_v,Y_w) = \mathbb{E}_{Y_w \sim \nu}\left[r(Y_w)^\alpha\right] = \mathbb{E}_{S_w \sim \nu_n}\left[\mathbb{E}_{Y_w \sim \nu}\left[r(Y_w)^\alpha \big| Y_wX_n = S_w \right]\right],\]
    giving the desired result.
    \end{proof}

    Now, we upper bound the quantity $H_\alpha(\mu, \nu) - H_\alpha(\mu_n, \nu_n)$. Our strategy relies on showing that, for large values of $n$, the ratio $r(Y_w)$ is concentrated around $\mathbb{E}[r(Y_w) | Y_w X_n = S_w]$ with high probability.

    \begin{lemma}
    \label{lem:divergence-decomposition-moments}
    Let $\alpha \geq 2$. We note $M(S_w) = \mathbb{E}[r(Y_w) | Y_w X_n = S_w]$. Then,
    
        \[H_\alpha(\mu, \nu) - H_\alpha(\mu_n, \nu_n) \leq \alpha(\alpha-1) 2^{\alpha-3}e^{(\alpha-1)(\alpha-2)\Delta^2/2} \|r(Y_w)-M(S_w)\|_\alpha^2.\]
    \end{lemma}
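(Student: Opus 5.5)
Starting from the previous lemma, the gap is the expected Jensen gap of $\phi(x)=x^\alpha$:
\[
H_\alpha(\mu,\nu)-H_\alpha(\mu_n,\nu_n)=\mathbb{E}\big[r(Y_w)^\alpha - M(S_w)^\alpha\big],
\]
where $M(S_w)=\mathbb{E}[r(Y_w)\mid S_w]$ and the joint law of $(Y_w,S_w)$ is the one of $Y_w\sim\nu$ and $S_w=Y_wX_n$. The plan is to bound this conditional Jensen gap by a second-order Taylor expansion around the conditional mean, and then to control the resulting weight with Hölder's inequality.

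\textbf{Step 1: Taylor expansion.} Write $r=r(Y_w)$ and $M=M(S_w)$. For each realisation, the integral form of Taylor's formula gives
\[
r^\alpha = M^\alpha+\alpha M^{\alpha-1}(r-M)+\alpha(\alpha-1)(r-M)^2\int_0^1(1-t)\,\xi_t^{\alpha-2}\,dt,
\]
where $\xi_t=M+t(r-M)$. We take the conditional expectation given $S_w$. Since $\mathbb{E}[r-M\mid S_w]=0$ and $M$ is $S_w$-measurable, the linear term disappears.

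\textbf{Step 2: bounding the weight.} Because $\alpha\ge 2$, the map $x\mapsto x^{\alpha-2}$ is nondecreasing, so $\xi_t^{\alpha-2}\le \max(r,M)^{\alpha-2}\le (r+M)^{\alpha-2}$. When $\alpha\ge3$, convexity gives $(r+M)^{\alpha-2}\le 2^{\alpha-3}(r^{\alpha-2}+M^{\alpha-2})$. When $2\le\alpha<3$, a similar bound holds with constant at most $1\le 2^{\alpha-3}\cdot 2$; the exact constant is cosmetic. Integrating $\int_0^1(1-t)\,dt=\tfrac12$ then yields
\[
H_\alpha(\mu,\nu)-H_\alpha(\mu_n,\nu_n)\le \alpha(\alpha-1)2^{\alpha-3}\,\mathbb{E}\big[(r^{\alpha-2}+M^{\alpha-2})(r-M)^2\big].
\]

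\textbf{Step 3: Hölder's inequality.} We apply Hölder with exponents $\tfrac{\alpha}{\alpha-2}$ and $\tfrac{\alpha}{2}$:
\[
\mathbb{E}\big[r^{\alpha-2}(r-M)^2\big]\le \big(\mathbb{E}_\nu[r^\alpha]\big)^{(\alpha-2)/\alpha}\,\|r-M\|_\alpha^2 .
\]
By conditional Jensen, $\mathbb{E}[M^\alpha]\le\mathbb{E}[r^\alpha]$, so the same bound covers the $M^{\alpha-2}$ term. It remains to bound $\mathbb{E}_\nu[r^\alpha]=H_\alpha(\mu,\nu)$. By post-processing (the map $V\mapsto\|V\|^2$), $H_\alpha(\mu,\nu)\le \exp\big((\alpha-1)\tfrac{\alpha}{2}\Delta^2\big)$. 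Raising to the power $(\alpha-2)/\alpha$ gives exactly $e^{(\alpha-1)(\alpha-2)\Delta^2/2}$, which is the constant in the statement. Here $\Delta$ is understood as $\|v\|-\|w\|$, or any bound on $\|v-w\|$; both work by post-processing.

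\textbf{Main obstacle.} The delicate point is the bookkeeping of constants in Step 2: collecting the factor $2$ coming from the two terms $r^{\alpha-2}$ and $M^{\alpha-2}$ together with the $\tfrac12$ from the Taylor remainder, and matching the power $2^{\alpha-3}$ in the regime $2\le\alpha<3$. Checking integrability (so that $\|r-M\|_\alpha<\infty$ and the conditional Taylor step is legitimate) is routine, since all moments of $r$ up to order $2\alpha-1$ are finite by the Gaussian post-processing bound.
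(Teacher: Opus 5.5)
Your structure matches the paper's: conditional Taylor expansion around $M(S_w)$ with the linear term vanishing, Hölder with exponents $\tfrac{\alpha}{\alpha-2}$ and $\tfrac{\alpha}{2}$, conditional Jensen $\|M\|_\alpha\le\|r\|_\alpha$, and the post-processing bound $\mathbb{E}_\nu[r^\alpha]\le e^{\alpha(\alpha-1)\Delta^2/2}$. However, Step 2 does not deliver the stated constant. The problem is splitting $(r+M)^{\alpha-2}$ into $r^{\alpha-2}+M^{\alpha-2}$ before applying Hölder. Each of the two resulting terms is then bounded by $\|r\|_\alpha^{\alpha-2}\|r-M\|_\alpha^2$, so you pay an extra factor $2$ on top of the pointwise constant.

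For $\alpha\ge 3$ the pointwise constant is $2^{\alpha-3}$, and with the $\tfrac12$ from the remainder you do land on $\alpha(\alpha-1)2^{\alpha-3}$. Note that your Step 2 display drops that $\tfrac12$: it should read $\alpha(\alpha-1)2^{\alpha-4}$, otherwise Step 3 produces $2^{\alpha-2}$.

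For $2\le\alpha<3$, the best pointwise constant is $1$ (subadditivity of $x\mapsto x^{\alpha-2}$). You then end with $\alpha(\alpha-1)\|r\|_\alpha^{\alpha-2}\|r-M\|_\alpha^2$, which exceeds the claim by a factor $2^{3-\alpha}\in(1,2]$. At $\alpha=2$ you get $2\|r-M\|_2^2$, whereas the lemma gives $\|r-M\|_2^2$, which is in fact exact there, since the quadratic Jensen gap equals $\mathbb{E}[(r-M)^2]$. The constant is the content of the statement, so calling the discrepancy cosmetic leaves the range $\alpha\in[2,3)$ unproved.

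The fix is simply not to split. Apply Hölder directly to $(r+M)^{\alpha-2}(r-M)^2$ to get $\|r+M\|_\alpha^{\alpha-2}\|r-M\|_\alpha^2$. Then use Minkowski together with conditional Jensen: $\|r+M\|_\alpha\le\|r\|_\alpha+\|M\|_\alpha\le 2\|r\|_\alpha$. This gives $\|r+M\|_\alpha^{\alpha-2}\le 2^{\alpha-2}\|r\|_\alpha^{\alpha-2}$ uniformly for all $\alpha\ge 2$. Combined with the $\tfrac12$ from the remainder, it yields exactly $\alpha(\alpha-1)2^{\alpha-3}$ with no case distinction; this is the paper's argument. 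The rest of your proposal, including your remark that $\Delta$ may be taken as $\bigl|\|v\|-\|w\|\bigr|$ or any bound on $\|v-w\|$, is fine.
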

    
    \begin{proof}
    We have, by definition:
    \[\mathbb{E}[r(Y_w) - M(S_w)|Y_w X_n = S_w] = \mathbb{E}[r(Y_w) - M(S_w)|S_w]  =  0.\]
    Then, $\mathbb{E}[\alpha M^{\alpha-1}(S_w)(r(Y_w)-M(S_w))|S_w] = 0$. Let $\phi_\alpha: x \mapsto x^\alpha$ Then, leveraging $\alpha \geq 2$ and by the mean value theorem in the expectation in the interval $(\min(M(S_w),r(Y_w)),\max(M(S_w),r(Y_w))$, and $\max(M(S_w),r(Y_w)) \leq M(S_w) + r(Y_w)$:
    \begin{align*}
         H_\alpha(\mu, \nu) - H_\alpha(\mu_n, \nu_n) &= \mathbb{E}\left[\mathbb{E}[r(Y_w)^\alpha |S_w] -M(S_w)^\alpha\right]\\
        &= \mathbb{E}\left[\mathbb{E}[r^\alpha(Y_w) - M(S_w)^\alpha - \alpha M^{\alpha-1}(S_w)(r(Y_w)-M(S_w))|S_w]\right]\\
        &= \mathbb{E}\left[\mathbb{E}[\phi(r(Y_w)) - \phi(M(S_w)) -  (r(Y_w)-M(S_w)) \phi'(M(S_w))|S_w]\right]\\
        &=\frac{1}{2} \mathbb{E}\left[\mathbb{E}[\phi''(Z) (M(S_w)-r(Y_w))^2|S_w]\right], \;\;\;\; Z \in (\min(M(S_w),r(Y_w)),\max(M(S_w),r(Y_w))\\
        &\leq \frac{\alpha(\alpha-1)}{2} \mathbb{E}\left[\mathbb{E}[(M(S_w)+r(Y_w))^{\alpha-2} (M(S_w)-r(Y_w))^2|S_w]\right]\\
        &\leq \frac{\alpha(\alpha-1)}{2} \|r(Y_w)+M(S_w)\|_\alpha^{\alpha-2} \|r(Y_w)-M(S_w)\|_\alpha^2,
    \end{align*}
    where the last step is the Hölder inequality with exponents $\frac{\alpha}{\alpha-2}$ and $\frac{\alpha}{2}$.
    Now, by tower property and the Jensen inequality,
    \[\|M(S_w)\|_\alpha^\alpha  = \mathbb{E}[M(S_w)^\alpha] = \mathbb{E} [\mathbb{E}[r(Y_w)|S_w]^\alpha] \leq \mathbb{E} [\mathbb{E}[r(Y_w)^\alpha|S_w]] = \|r(Y_w)\|_\alpha^\alpha.\]
    Then, triangle inequality implies: 
    \[\|r(Y_w) + M(S_w)\|_\alpha \leq \|r(Y_w)\|_\alpha + \|M(S_w)\|_\alpha \leq 2 \|r(Y_w)\|_\alpha.\]
    Therefore,

    \[H_\alpha(\mu, \nu) - H_\alpha(\mu_n, \nu_n) \leq \alpha(\alpha-1) 2^{\alpha-3}\|r(Y_w)\|_\alpha^{\alpha-2} \|r(Y_w)-M(S_w)\|_\alpha^2.\]

    Also, we can relate $\|r(Y_w)\|_\alpha$ to $D_\alpha(V,W)$ by a post-processing argument: \[\|r(Y_w)\|_\alpha^\alpha = \mathbb{E}\left[r(Y_w)^\alpha\right] = e^{(\alpha-1)D_\alpha(\|V\|^2,\|W\|^2)} \leq e^{(\alpha-1)D_\alpha(V,W)} = e^{\alpha(\alpha-1)\Delta^2/2} .\]

    Then, we obtain: 
    \[H_\alpha(\mu, \nu) - H_\alpha(\mu_n, \nu_n) \leq \frac{\alpha(\alpha-1)}{2} 2^{\alpha-2}e^{(\alpha-1)(\alpha-2)\Delta^2/2} \|r(Y_w)-M(S_w)\|_\alpha^2.\]
    \end{proof}
    Now, we upper bound $\|r(Y_w)-M(S_w)\|_\alpha^2$ by a concentration argument:
    \begin{proposition}
    Let $\alpha \geq 2$. Then, there exists $C > 0$ such that $d > C$ implies:
        \[H_\alpha(\mu, \nu) - H_\alpha(\mu_n, \nu_n) = O(1/n),\]
        where the rate does not depend on $d$.
    \end{proposition}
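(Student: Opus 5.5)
The plan is to bound the quantity $\|r(Y_w)-M(S_w)\|_\alpha^2$ appearing in Lemma~\ref{lem:divergence-decomposition-moments} by $O(1/n)$, with a constant that is uniform in $d\geq d_0$. The first step replaces the conditional expectation $M(S_w)$ by the plug-in $r(S_w)$. Since $M(S_w)-r(S_w)=\mathbb{E}[r(Y_w)-r(S_w)\mid S_w]$, conditional Jensen gives $\|M(S_w)-r(S_w)\|_\alpha\leq\|r(Y_w)-r(S_w)\|_\alpha$. Hence
\[\|r(Y_w)-M(S_w)\|_\alpha\leq 2\,\|r(Y_w)-r(Y_wX_n)\|_\alpha ,\]
and it suffices to show $\|r(Y_w)-r(Y_wX_n)\|_\alpha=O(n^{-1/2})$ uniformly in $d$.

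The second step controls the log-derivative of $r=p_v/p_w$, the ratio of the $\chi^2_d(\|v\|^2)$ and $\chi^2_d(\|w\|^2)$ densities. Write $\nu=d/2-1$ and use $I_\nu'/I_\nu=R_\nu+\nu/x$. Then
\[\partial_y\log p_v(y)=\frac{\|v\|R_\nu(\|v\|\sqrt y)}{2\sqrt y}+\frac{\nu}{2y}+\text{const},\]
and the $\nu/(2y)$ terms cancel in $\log r$. By the Bessel quotient bound $R_\nu(x)\leq x/(\nu-1/2+\sqrt{(\nu-1/2)^2+x^2})\leq x/(2\nu-1)$, each remaining term is at most $C^2/(2(2\nu-1))$. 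This gives the key uniform estimate
\[|(\log r)'(y)|\leq \kappa_d:=\frac{C^2}{d-3}.\]
The mean value theorem in $\log r$ then yields
\[|r(YX)-r(Y)|\leq r(Y)\,\kappa_d\,Y|X-1|\,e^{\kappa_d Y|X-1|}.\]
The point is that $\kappa_d Y$ is of order $C^2\,Y/d$, and $Y/d$ concentrates near $1$ for $Y\sim\chi^2_d(\|w\|^2)$. So the derivative factor does not degrade with $d$; this is exactly where the large-$d$ threshold enters.

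The third step is a Hölder split of the $\alpha$-th moment of the right-hand side into three factors:
\[\mathbb{E}[r(Y_w)^{3\alpha}]^{1/3},\qquad \mathbb{E}\big[(\kappa_dY_w)^{3\alpha}e^{3\alpha\kappa_dY_w|X_n-1|}\big]^{1/3},\qquad \mathbb{E}|X_n-1|^{3\alpha}.\]
The independence of $X_n$ and $Y_w$ is used to separate the pieces as needed.
\begin{itemize}
\item The first factor equals $\exp\big((3\alpha-1)D_{3\alpha}(\|V\|^2,\|W\|^2)\big)$. By post-processing it is at most $e^{3\alpha(3\alpha-1)\Delta^2/2}$, which does not depend on $d$ or $n$.
\item The third factor is $O(n^{-3\alpha/2})$, since $X_n\sim\chi^2_n/n$ is sub-gamma around $1$.
\end{itemize}

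The middle factor is the main obstacle, because of the exponential. I would split on the event $\{|X_n-1|\leq 1/2\}$.
\begin{itemize}
\item On this event the exponent is at most $\tfrac32\alpha\kappa_dY_w$. The moment generating function of $Y_w$ is finite at $\lambda=\tfrac32\alpha\kappa_d$ once $\lambda<1/2$, i.e.\ for $d\geq d_0(\alpha,C)$. It then equals $(1-2\lambda)^{-d/2}\exp\big(\lambda\|w\|^2/(1-2\lambda)\big)$, and since $\lambda d$ stays bounded this is bounded uniformly in $d$. The polynomial factor $(\kappa_dY_w)^{3\alpha}$ is likewise uniformly bounded in $d$ by the same concentration of $Y_w/d$.
\item On the complement, which has probability $e^{-cn}$, I would bound $|r(Y_w)-r(Y_wX_n)|$ crudely by $r(Y_w)+r(Y_wX_n)$ and apply Cauchy--Schwarz. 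The moments of $r(Y_wX_n)$ are controlled in the same way, via the bound on $\kappa_d$ and the moment generating functions of $Y_w$ and $X_n$. This produces the $K'e^{-cn}$ term.
\end{itemize}

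Combining the three factors gives $\|r(Y_w)-r(Y_wX_n)\|_\alpha^2\leq K_1/n+K_2e^{-cn}$ for $d\geq d_0$. Lemma~\ref{lem:divergence-decomposition-moments} then gives the same form for $H_\alpha(\mu,\nu)-H_\alpha(\mu_n,\nu_n)$. Finally, $H_\alpha\geq1$ and $\log$ is $1$-Lipschitz on $[1,\infty)$, so
\[D_\alpha-D_\alpha^{(n)}\leq \frac{H_\alpha(\mu,\nu)-H_\alpha(\mu_n,\nu_n)}{\alpha-1},\]
which transfers the bound to the Rényi divergences and yields Proposition~\ref{prop:privacy-gap}.
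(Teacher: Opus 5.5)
Your architecture is the same as the paper's: a contraction to a plug-in, the $O(1/d)$ bound on $(\log r)'$ from a Bessel-quotient inequality, the mean value theorem with an exponential factor, a three-way H\"older split, the moment generating function of $Y_w$ at scale $1/d$ (hence bounded uniformly in $d$), Chernoff for $X_n$, and sub-gamma moments of $X_n-1$.

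\textbf{The gap is the bad event $\{|X_n-1|>1/2\}$.} Your reduction to $\|r(Y_w)-r(Y_wX_n)\|_\alpha$ is only useful if $r(S_w)\in L^\alpha$, and your Cauchy--Schwarz step needs $\mathbb{E}[r(Y_wX_n)^{2\alpha}]$ bounded uniformly in $d$. This is not ``controlled in the same way.'' Your only pointwise control is $r(Y_wX_n)\le r(Y_w)e^{\kappa_dY_w|X_n-1|}$, which leads to $\mathbb{E}\big[r(Y_w)^{2\alpha}e^{2\alpha\kappa_dY_w|X_n-1|}\big]$. That expectation is $+\infty$ for every $d$ and $n$. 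Conditionally on $Y_w=y$, the $X_n$-moment generating function is finite only if $4\alpha\kappa_d y<n$, and $Y_w$ has unbounded support. Separate moment generating functions of $Y_w$ and $X_n$ cannot help, because the exponent involves the product $Y_wX_n$, which has only a stretched-exponential tail.

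The paper avoids $r(S_w)$ on this event by choosing $g=r(S_w)\mathbbm{1}_A$ with $A=\{X_n\in(1/2,2)\}$. However, $\mathbbm{1}_A$ is a function of $X_n$, not of $S_w$, so that choice is not covered by the contraction inequality. Your plug-in $g=r(S_w)$ is the legitimate choice, which is exactly why you have to face this tail.

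\textbf{The missing ingredient is a second, $d$-free pointwise bound} expressing that $\log r$ grows only like $\sqrt y$. Since $0\le R_\nu\le 1$, you have $|(\log r)'(y)|\le c/(2\sqrt y)$ with $c=\max(\|v\|,\|w\|)$. Hence $\log r(Y_wX_n)-\log r(Y_w)\le c\sqrt{Y_w}\,(\sqrt{X_n}-1)$ when $X_n\ge 1$; for $X_n<1$ the exponent is at most $\kappa_dY_w$, which is harmless. Then split on the size of $Y_w$:
\begin{itemize}
\item On $\{Y_w\le Kd\}$, the $\kappa_d$ bound caps the exponent by $\kappa_dKd\,|X_n-1|\lesssim KC^2|X_n-1|$. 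Its $X_n$-moment generating function is finite and bounded once $n\ge n_0(\alpha,C,K)$.
\item On $\{Y_w>Kd\}$, the $1/\sqrt n$-sub-Gaussianity of $\sqrt{X_n}$ gives a conditional factor $2e^{c'Y_w/n}$. For $n\ge n_0$ and $K$ large, the probability $\mathbb{P}(Y_w>Kd)$, exponentially small in $d$, beats (via H\"older) the at most exponential-in-$d$ growth of $\mathbb{E}[e^{c'Y_w/n}]$, uniformly in $d$.
\item For $n<n_0$ the claim is trivial, since $H_\alpha(\mu,\nu)\le e^{\alpha(\alpha-1)\Delta^2/2}$.
\end{itemize}
With this repair, the rest of your argument goes through.
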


    \begin{proof}
    First, for any measurable function $g$, $\|r(Y_w) - M(S_w)\|_\alpha \leq 2\|r(Y_w) - g(S_w)\|_\alpha$.
    In fact, $r(Y_w) - M(S_w) = r(Y_w) - g(S_w) - \mathbb{E}[r(Y_w) - g(S_w) | S_w]$ and the conditional expectation is a contraction of $L^\alpha(\mathbb{R)}$.
    
    Let $A = \{X_n \in (1/2, 2)\}$. Then, choosing $g(S_w) = r(S_w) \mathbbm{1}_A$, we have:
    \begin{align*}
        \|r(Y_w)-g(S_w)\|_\alpha^\alpha = \mathbb{E}[|r(Y_w) - r(S_w) \mathbbm{1}_A|^\alpha] = \mathbb{E}[|r(Y_w) - r(S_w)|^\alpha \mathbbm{1}_A] + \mathbb{E}[|r(Y_w)|^\alpha \mathbbm{1}_{A^C}].
    \end{align*}
        Therefore, we can decompose $\mathbb{E}[(r(Y_w) - M(S_w))^\alpha]$ into two terms:
    \[\mathbb{E}[(r(Y_w) - M(S_w))^\alpha] \leq  2^\alpha (\mathbb{E}[|r(Y_w) - r(S_w)|^\alpha \mathbbm{1}_{A}] + \mathbb{E}[|r(Y_w)|^\alpha\mathbbm{1}_{A^C}]).\]
    We upper bound both terms.
    First, we upper bound $P(A^C)$ by concentration. $X_n \sim \frac{1}{n}\chi_n^2$. The moment-generating function of $X_n$ is finite for $t \in (-\infty, n/2)$ and writes $M_{X_n}(t) = \left(1- \frac{2t}{n}\right)^{-n/2}$. Let $t\in (0,n/2)$. Then, the Chernoff inequality gives:
    \[P(X_n \geq 2) \leq M_{X_n}(t) e^{-2t}  =\exp\left(-2t - \frac{n}{2} \log(1-2t/n) \right).\]

    We find the lower bound of this inequality at $t^* =  n(1-\frac{1}{x})/2$.
    
    $P(X_n \geq 2) \leq e^{-\frac{n}{2}(1- \log 2)}$.
    Now, let $t < 0$. Since $x \mapsto e^{tx}$ is decreasing:
    \[P(X_n \leq 1/2) = P(e^{Xt} \geq e^{t/2})  \leq M_{X_n}(t) e^{-t/2} = \exp\left(-t/2 - \frac{n}{2} \log(1-2t/n) \right).\]
    We find the lower bound of this inequality at $t^* = -n/2$. Then, $P(X_n \leq 1/2) \leq e^{-\frac{n}{2}(\log 2 - 1/2)}$
    Also, $\log 2 - 1/2 \approx 0.19 \leq 1 - \log 2 \approx 0.31$.
    Therefore, $P(A^C) \leq 2 e^{-cn}$, with $c = \frac{1}{2}(\log 2 - 1/2)$.
    
    Let $p,q > 0$ such that $1/p + 1/q = 1$.
    \begin{align*}
        \mathbb{E}[(r(Y_w) - M(S_w))^\alpha\mathbbm{1}_{A^C}] &\leq P(A^C)^{1/p} \mathbb{E}[(r(Y_w) - M(S_w))^{\alpha q}]^{1/q} \\
        &\leq  2^{1/p}e^{-cn/p} \mathbb{E}[(r(Y_w) - M(S_w))^{\alpha q}]^{1/q}\\
    \end{align*}
    Also, leveraging Lemma~\ref{lem:divergence-decomposition-moments}, $\mathbb{E}[r(Y_w)^{\alpha q}] \leq e^{(\alpha q -1) \Delta^2/2}$.
    
    Therefore, taking $p=q=2$,
    \[ \mathbb{E}[(r(Y_w) - r(S_w))^\alpha\mathbbm{1}_{A^C}] \leq 2^{1/2}\exp(\alpha(2\alpha - 1) \Delta^2/2-cn/2).\]

    Now, we upper bound the quantity $\mathbb{E}[|1-X_n|^\alpha]$. $nX_n-n$ is a sub-Gamma random variable, since $n X_n - n \sim \Gamma(2n,2)$. From Theorem 2.3. of~\citet{Boucheron2013}, for even $\alpha > 1$, $\mathbb{E}[|n X_n - n|^\alpha] \leq (\alpha/2)! (16n)^{\alpha/2} + \alpha!8^\alpha \lesssim C_\alpha n^{\alpha/2}$.
    Dividing by $n^\alpha$, there exists $C_\alpha > 0$ such that $\mathbb{E}[|1-X_n|^\alpha] \leq C_\alpha n^{-\alpha/2}$. For other values of $\alpha$, we use the monotonicity of the $L_p$ norm to obtain the upper bound.

    Also, noticing $(r(Y_w) - r(Y_wX_n))^\alpha \mathbbm{1}_{A} \leq |Y_w - Y_wX_n|^\alpha\sup_{t \in (Y_w/2, 2Y_w)} r'(t)^\alpha = Y_w^\alpha|1 - X_n|^\alpha\sup_{x \in (1/2, 2)} r'(Y_w x)^\alpha$, we get:
    \begin{align*}
        \mathbb{E}[(r(Y_w) - r(S_w))^\alpha\mathbbm{1}_{A}]
     &\leq \mathbb{E}[Y_w^\alpha |1-X_n|^\alpha \sup_{x \in (1/2,2)} r'(Y_wx)^\alpha]\\
        &\leq\mathbb{E}[|1-X_n|^\alpha] \mathbb{E}[Y_w^\alpha  \sup_{x \in (1/\beta,\beta)} r'(Y_w x)^\alpha]\\
        &\leq C_\alpha n^{-\alpha/2}\mathbb{E}[Y_w^\alpha \sup_{x \in (1/2,2)} r'(Y_w x)^\alpha]
    \end{align*}

    Now, we evaluate $\mathbb{E}[Y_w^\alpha \sup_{x \in (1/\beta,\beta)} r'(Y_w x)^\alpha]$. We note $\lambda_v = \|v\|, \lambda_w = \|w\|$. Let $y > 0$. The density of $Y_w$ writes:

    \begin{align*}
        p_{\lambda_v}(y) = p_0(y) e^{- \lambda_v^2/2} \lambda_v^{1-d/2}  I_{d/2-1}\big(\lambda_v\sqrt{y}\big).
    \end{align*}

    Then, we write the ratio $r(y)$:
    \begin{align*}
        r(y) = e^{(\lambda_w^2-\lambda_v^2)/2} \left(\frac{\lambda_v}{\lambda_w}\right)^{1-d/2}\frac{I_{d/2-1}\big(\lambda_v\sqrt{y}\big)}{I_{d/2-1}\big(\lambda_w\sqrt{y}\big)}.
    \end{align*}
    We recall the following relationship:
    \[\lambda_v \sqrt{y}R_{d/2-1}(\lambda_v \sqrt{y}) + d/2 - 1= \lambda_v \sqrt{y} \frac{I_{d/2-1}'(\lambda_v \sqrt{y})}{I_{d/2-1}(\lambda_v \sqrt{y})}.\]

    Then, we compute a upper bound to $\frac{dr}{dy}(y)$.
    \begin{align*}
        \frac{dr}{dy}(y) &= \partial_y \log I_{d/2-1}(\lambda_v\sqrt{y})- \partial_y \log I_{d/2-1}(\lambda_w\sqrt{y})\\
        &= \frac{1}{2\sqrt{y}}\left(\lambda_v\frac{I_{d/2-1}'(\lambda_v\sqrt{y})}{I_{d/2-1}(\lambda_v\sqrt{y})} - \lambda_w\frac{I_{d/2-1}'(\lambda_w \sqrt{y})}{I_{d/2-1}(\lambda_w \sqrt{y})}\right)\\
        &= \frac{1}{2\sqrt{y}}\left(\lambda_vR_{d/2-1}(\lambda_v \sqrt{y}) - \lambda_wR_{d/2-1}(\lambda_w \sqrt{y})\right)\\
        &\leq \frac{1}{2\sqrt{y}}\left(\lambda_vR_{d/2-1}(\lambda_v \sqrt{y}) + \lambda_wR_{d/2-1}(\lambda_w \sqrt{y})\right)\\
        &\leq \frac{\lambda_v^2+ \lambda_w^2}{2d},
    \end{align*}
    where the Bessel quotient upper bound is obtained leveraging the following inequality:
    \[ 0 \leq R_{\nu}(y) =\frac{I_{\nu+1}(y)}{I
    _{\nu}(y)} \leq \frac{y}{2(\nu+1)}.\]

    Therefore, noting $B = (\lambda_v^2 + \lambda_w^2)/2$, $|r'(y)| = r(y) |\partial_y \log r(y) | \leq \frac{B}{d} r(y)$. Then, by Gronwall's inequality, for any $x \in (1/2,2)$, 
    \[r(yx) \leq r(y) e^{B y (x-1)/d}.\]
    Therefore, $\sup_{x \in (1/2,2)} |r'(Y_w x)| \leq \frac{B}{d} \sup_{x \in (1/2,2)} |r(Y_w x)| \leq \frac{B}{d} r(Y_w) e^{2yB/d}$.
    
    Then, we apply the generalized Hölder inequality:
    \begin{align*}
        \mathbb{E}[Y_w^\alpha \sup_{x \in (1/\beta,\beta)} r'(Y_wx)^\alpha] &\leq \frac{B^\alpha}{d^\alpha}  \mathbb{E}[Y_w^\alpha  r(Y_w)^\alpha e^{2\alpha Y_wC/d}]\\
        &\leq \frac{B^\alpha}{d^\alpha}  \mathbb{E}[Y_w^{3\alpha}]^{1/3} \mathbb{E}[e^{6\alpha Y_w B/d}]^{1/3} \mathbb{E}[r(Y_w)^{3\alpha}]^{1/3}\\
        &\leq \frac{B^\alpha}{d^\alpha}\mathbb{E}[Y_w^{3\alpha}]^{1/3} \mathbb{E}[e^{6\alpha Y_w B/d}]^{1/3} e^{(3\alpha-1)D_{2\alpha}(V,W)/3}\\
        &\leq \frac{B^\alpha}{d^\alpha} \mathbb{E}[Y_w^{3\alpha}]^{1/3} \mathbb{E}[e^{6\alpha Y_w B/d}]^{1/3}  e^{\alpha(3\alpha-1)\Delta^2/3}
    \end{align*}
    Also, leveraging the moment-generating function of non-central chi-squared distributions, for $d > 12 \alpha \lambda_w  B$:
    \[\mathbb{E}[e^{6\alpha Y_w B/d}] = (1-12\alpha B/d)^{-d/2} \exp(6\alpha B\lambda_w/(d-12\alpha B)).\]
    Also, leveraging Lemma~\ref{lem:chi-2-moments}, we can upper bound $\mathbb{E}[Y_w^{3\alpha}]$:
    \[\mathbb{E}[Y_w^{3\alpha}] \leq (d+ \alpha C_1 (2 + \lambda_w)/2)^{3\alpha},\]
    with $C_1 > 0$.
    Then, we obtain:
    \begin{align*}
        \mathbb{E}[Y_w^\alpha \sup_{x \in (1/\beta,\beta)} r'(Y_wx)^\alpha] &\leq \frac{B^\alpha}{d^\alpha} \mathbb{E}[Y_w^{3\alpha}]^{1/3} \mathbb{E}[e^{6\alpha Y_w B/d}]^{1/3}  e^{\alpha(3\alpha-1)\Delta^2/3}\\
        &\leq B^\alpha \left(1+ \frac{\alpha C_1 (2 + \lambda_w)}{2d}\right)^{\alpha}  \left(1- \frac{12\alpha B}{d}\right)^{-d/6} e^{\alpha(3\alpha-1)\Delta^2/3 + 6\alpha B\lambda_w/3(d-12\alpha B)}.\\
        &\lesssim C'_{v,w,\alpha} \text{, independent of $d$ for sufficiently large $d$}.
    \end{align*}
    In particular, this upper bound does not diverge when $d \to \infty$. This means that for $d$ large enough, we get convergence rates that do not depend on the dimensionality of the model $d$. However, our proof technique generates a factor $\exp(6\alpha B\lambda_w/(d-12\alpha B))$ appear, which means that our analysis only holds for large values of $d$.

    Then, 
    \[\|r(Y_w)-M(S_w)\|_\alpha^2 \leq 2(2^{1/\alpha}e^{(2\alpha - 1) \Delta^2-\alpha cn}+ (C_\alpha C'_\alpha )^{2/\alpha} n^{-1}) = O(n^{-1}).\]

    Therefore, $H_\alpha(\mu, \nu) - H_\alpha(\mu_n, \nu_n) = O(1/n)$.
\end{proof}

\begin{lemma}
\label{lem:chi-2-moments}
    Let $d > 1, \theta > 0$ and $Y_w \sim \chi_d^2(\theta^2)$ follow a non-central chi-squared distribution. Let $\alpha > 1$. Then, there exists $C_0 > 0$ such that
    \[\|Y\|_\alpha \leq 2(d/2 + \alpha C_1(2+ \theta^2)).\]
\end{lemma}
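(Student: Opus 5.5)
The goal is a bound on $\|Y\|_\alpha$ for $Y\sim\chi_d^2(\theta^2)$ that is linear in $d$ and affine in $\theta^2$. I would work from the Gaussian representation $Y=\|N+v\|^2$ with $N\sim\mathcal N(0,I_d)$ and $\|v\|=\theta$, and expand
\[
Y = \|N\|^2 + 2\langle N,v\rangle + \theta^2 .
\]
Minkowski's inequality in $L^\alpha$ then gives
\[
\|Y\|_\alpha\le \|\,\|N\|^2\|_\alpha + 2\|\langle N,v\rangle\|_\alpha + \theta^2,
\]
so it suffices to bound the $\alpha$-th moment of each of the three pieces separately.

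\textbf{Central chi-squared term.} $\|N\|^2\sim\chi_d^2$ is a Gamma$(d/2,2)$ variable, so its moments are explicit:
\[
\mathbb E\|N\|^{2\alpha}=2^\alpha\,\frac{\Gamma(d/2+\alpha)}{\Gamma(d/2)}.
\]
To keep the bound linear in $d$ I would use $\Gamma(x+\alpha)/\Gamma(x)\le (x+\alpha)^\alpha$. For integer $\alpha$ this follows from a product of $\alpha$ factors each at most $x+\alpha$. For real $\alpha$ it follows from log-convexity of $\Gamma$ (Gautschi/Wendel type inequalities), or from the monotonicity of $L^p$ norms applied to $\lceil\alpha\rceil$. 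This gives
\[
\|\,\|N\|^2\|_\alpha\le 2(d/2+\alpha)=d+2\alpha .
\]

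\textbf{Cross term.} $\langle N,v\rangle\sim\mathcal N(0,\theta^2)$, so $\|\langle N,v\rangle\|_\alpha=\theta\,\|G\|_\alpha\le \theta\sqrt{\alpha}$ up to an absolute constant $c$ (standard Gaussian moment growth). Then $2c\theta\sqrt\alpha\le c\alpha(1+\theta^2)$ by AM-GM, since $\theta\le(1+\theta^2)/2$ and $\sqrt\alpha\le\alpha$.

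\textbf{Collecting terms.} Summing the three bounds gives
\[
\|Y\|_\alpha\le d+2\alpha+c\alpha(1+\theta^2)+\theta^2 .
\]
This is at most $2\bigl(d/2+\alpha C_1(2+\theta^2)\bigr)$ with $C_1=\max(1,c)$, because $\alpha>1$ absorbs the lone $\theta^2$ term into $\alpha C_1\theta^2$. The constant in the statement (written $C_0$/$C_1$) is this absolute constant.

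\textbf{Main obstacle.} The only delicate step is the Gamma-ratio bound for non-integer $\alpha$; I would first try Jensen/monotonicity, $\|\cdot\|_\alpha\le\|\cdot\|_{\lceil\alpha\rceil}$, which costs only replacing $\alpha$ by $\alpha+1\le2\alpha$ and is absorbed in $C_1$. Everything else is routine moment bookkeeping.
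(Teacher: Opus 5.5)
Your proof is correct, but it takes a different route from the paper's.

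\textbf{The paper's route.} The paper uses the Poisson-mixture representation $Y\mid K\sim\chi^2_{d+2K}$ with $K\sim\text{Poisson}(\theta^2/2)$. It applies the Gamma-ratio bound conditionally on $K$ to get $\|Y\|_\alpha\le 2\|\alpha+d/2+K\|_\alpha\le 2(\alpha+d/2+\|K\|_\alpha)$. It then controls $\|K\|_\alpha$ through the sub-exponential (Orlicz $\psi_1$) norm of the Poisson variable, $\|K\|_\alpha\le C_0\alpha\|K\|_{\psi_1}\lesssim\alpha(1+\theta^2/2)$. This step is the source of the unspecified constants and of the crude $\alpha\theta^2$ term.

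\textbf{Your route.} You expand $Y=\|N\|^2+2\langle N,v\rangle+\theta^2$ and apply Minkowski. The Gamma-ratio bound is then needed only for the central part at $x=d/2$. The non-centrality is handled by an explicit Gaussian moment plus a deterministic term.

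\textbf{What yours buys.} It is more elementary: there is no Orlicz-norm machinery, and the constants are explicit, since one can take $\|G\|_\alpha\le\sqrt\alpha$ for $\alpha\ge1$. It also gives the sharper intermediate bound $\|Y\|_\alpha\le d+2\alpha+2\theta\sqrt\alpha+\theta^2$, in which $\theta^2$ carries no factor $\alpha$. You lose this only when you relax to the stated form. You also justify the Gamma-ratio inequality for non-integer $\alpha$, for instance via $\log\Gamma(x+\alpha)-\log\Gamma(x)\le\alpha\psi(x+\alpha)\le\alpha\log(x+\alpha)$. The paper uses that inequality without comment.

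\textbf{What the paper's buys.} The mixture argument applies verbatim to non-integer degrees of freedom, whereas your Gaussian representation presupposes integer $d$. This is harmless here, since $d$ is the model width. In any case, for real $d>1$ you could write $Y\overset{d}{=}\chi^2_{d-1}+(G+\theta)^2$ with independent terms and run the same argument.
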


\begin{proof}
    Let $K \sim \text{Poisson}(\theta^2/2)$. Then, non-central chi-squared distributions admit the following Poisson mixture of chi-squared distributions representation: $Y | K \sim\chi_{d+2K}^2$. Also, central chi-squared distributions $X_d \sim \chi_d^2$ admit the following moments: $E[X_d^\alpha] = 2^\alpha \frac{\Gamma(\alpha + d/2)}{\Gamma(d/2)}$, where $\Gamma$ denotes the $\Gamma$ function.
    Therefore, leveraging the convexity inequality $(x+y)^\alpha \leq 2^{\alpha-1} (x^\alpha + y^\alpha)$:
    \begin{align*}
        \mathbb{E}[Y^\alpha] = \mathbb{E}[\mathbb{E}[Y^\alpha | K]] = \mathbb{E}\left[\mathbb{E}[X_{d + 2K}^\alpha|K]\right] = 2^\alpha\mathbb{E}\left[\frac{\Gamma(\alpha + K + d/2)}{\Gamma(K + d/2)}\right] &\leq 2^\alpha\mathbb{E}\left[(\alpha + K + d/2)^\alpha\right].
    \end{align*}
    Then, by triangle inequality: $\|Y\|_\alpha \leq 2 \| \alpha + d/2 + K\|_\alpha \leq 2(\alpha + d/2 + \|K\|_\alpha)$.
    
    The moment generating function of $K$ writes $\mathbb{E}[e^{tK}] = \exp(\theta^2 (e^t-1)/2)$. Since $K$ is subexponential, we can simply upper bound $\|K\|_\alpha $ with Orlicz norms~\citep{Vershynin2018}: $\|K\|_\alpha \leq C_0 \alpha \|K\|_{\psi_1}$, with $C_0 > 0$ an absolute constant. Since $E[K] = \theta^2/2$, we can upper bound the Orlicz norm by $\|K\|_{\psi_1} \lesssim 1+\theta^2/2$.  Therefore, there exists $C_1 > 0$ such that $\|K\|_\alpha \leq C_1 \alpha (1+\theta^2/2)$
    Then,
    \[\|Y\|_\alpha \leq 2(d/2 + \alpha (1+ C_1(1+ \theta^2/2)).\]
\end{proof}

\newpage
\section{Regularity of the studied distributions and technical lemmas}

In this paper, we study several distributions and parametrizations. Our results hold under regularity conditions, imposed by Proposition~\ref{prop:Fisher-upper-bound} and Proposition~\ref{prop:app-renyi-criterion}. In this section, we prove that the studied distributions are sufficiently regular, as defined by Definition~\ref{defi:regularity}.

\subsection{Regularity of non-central chi-squared distributions}

Let $d \in \mathbb{N}^*$. We consider the family of distributions $P = \{P_\theta := \chi_d^2(\theta^2), \theta > 0\}$.

\begin{proposition}
\label{prop:regularity-chi2}
    The family of distributions $P$ is regular.
\end{proposition}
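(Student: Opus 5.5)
We verify the six items of Definition~\ref{defi:regularity} for $P=\{\chi_d^2(\theta^2),\theta>0\}$, working throughout from the explicit density
\[p_\theta(x)=p_0(x)e^{-\theta^2/2}\theta^{1-d/2}I_{d/2-1}(\theta\sqrt{x}),\qquad x>0.\]
The first three items are immediate. The parameter set $(0,\infty)$ is open, so it is right-open. The density is strictly positive on $(0,\infty)$, since $I_\nu>0$ on $(0,\infty)$; this gives a common support. Absolute continuity holds because the density is explicit. For smoothness, $\theta\mapsto I_\nu(\theta\sqrt x)$ is real-analytic and positive, so $\log p_\theta(x)$ is $C^\infty$ in $\theta$ for every $x>0$. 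Using the score computation already carried out in the proof of Theorem~\ref{prop:fisher-upper-bound-multi-dim}, we have
\[\partial_\theta\log p_\theta(x)=-\theta+\sqrt{x}\,R_{d/2-1}(\theta\sqrt x).\]
Differentiating once more with the Riccati identity $R_\nu'=1-\tfrac{2\nu+1}{x}R_\nu-R_\nu^2$ (up to the paper's indexing convention) gives $\partial^2_\theta\log p_\theta(x)$ as $-1$ plus $x$ times a bounded combination of Bessel quotients.

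The key quantitative inputs are the bounds $0\le R_\nu(y)\le 1$ and $R_\nu(y)\le y/(2(\nu+1))$ recalled in the paper. They give
\[|\partial_\theta\log p_\theta(x)|\le \theta+\sqrt x,\qquad |\partial_\theta^2\log p_\theta(x)|\le c_1+c_2 x,\]
uniformly for $\theta'$ in a compact neighborhood $U=[\theta_0/2,2\theta_0]$. The bounds on $\partial_\theta p=p\,\partial_\theta\log p$ and $\partial^2_\theta p=p\,(\partial^2_\theta\log p+(\partial_\theta\log p)^2)$ then reduce to dominating $\sup_{\theta'\in U}(1+x)\,p_{\theta'}(x)$ by an integrable function. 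For this we use the Poisson-mixture representation $p_\theta=\sum_i\pi_{\theta^2/2}(i)f_{d+2i}$. On $U$ the Poisson weights satisfy $\pi_{\theta'^2/2}(i)\le e^{2\theta_0^2}(2\theta_0^2)^i/i!$, so
\[\sup_{\theta'\in U}p_{\theta'}\le e^{2\theta_0^2}\sum_i\frac{(2\theta_0^2)^i}{i!}f_{d+2i}=:\tilde h,\]
which has finite polynomial moments of every order. This yields $\tilde h_1$ and $\tilde h_2$.

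The main obstacle is the last, Rényi-specific item: dominating $(1+x)^2 p_{\theta'}^\alpha/p_\theta^{\alpha-1}$ uniformly over $\theta'\in U$. My plan is to control the ratio directly. By the computation of $dr/dy$ in the proof of Proposition~\ref{prop:privacy-gap}, $\partial_x\log\bigl(p_{\theta'}(x)/p_\theta(x)\bigr)\le(\theta'^2+\theta^2)/(2d)$. Integrating from $0$, where the ratio is the finite constant $e^{(\theta^2-\theta'^2)/2}(\theta'/\theta)^{0}\cdot(\text{limit of Bessel ratio})$, gives
\[\frac{p_{\theta'}(x)}{p_\theta(x)}\le K e^{\beta x},\qquad \beta=\frac{\theta'^2+\theta^2}{2d}.\]
Hence
\[\frac{p_{\theta'}^\alpha}{p_\theta^{\alpha-1}}=p_{\theta'}\Bigl(\frac{p_{\theta'}}{p_\theta}\Bigr)^{\alpha-1}\le K^{\alpha-1}\tilde h(x)\,e^{(\alpha-1)\beta x}.\]
This is integrable against $(1+x)^2$ provided $(\alpha-1)\beta<1/2$, since each $f_{d+2i}$ decays like $e^{-x/2}$ and the Poisson sum only contributes a factor of order $e^{c\sqrt x}$. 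This is where the argument may fail for large $\alpha\theta^2/d$. The fallback is to replace the crude bound by the exact large-$x$ asymptotics $I_\nu(y)\sim e^y/\sqrt{2\pi y}$. These give $p_{\theta'}/p_\theta\asymp e^{(\theta'-\theta)\sqrt x}$, i.e. only sub-exponential growth. Then $p_{\theta'}^\alpha/p_\theta^{\alpha-1}\lesssim x^{d/2-1}e^{-x/2+(\alpha\theta'-(\alpha-1)\theta)\sqrt x}$, which is integrable for every $\alpha$. Combining this tail estimate with continuity on compacts $[0,M]$ completes the domination, and hence regularity.
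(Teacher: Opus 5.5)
Your proof is correct, and its skeleton is the paper's. Both use the explicit score $-\theta+\sqrt{x}R_{d/2-1}(\theta\sqrt{x})$, polynomial control of the first two $\theta$-derivatives of $\log p_\theta$, exponential decay of the density, and a growth bound on $p_{\theta'}/p_\theta$ for the Rényi-specific item. The execution differs in two places.

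First, you dominate $p_{\theta'}$ uniformly over $U$ by the Poisson-mixture envelope $\tilde h=e^{4\theta_0^2}p_{2\theta_0}$. The paper instead uses the pointwise tail bound $p_\theta(x)\lesssim x^{d/4-1/2}e^{-x/2+\theta\sqrt{x}}$ derived from Lemma~\ref{lem:bessel-ratio-bound}, and leaves uniformity in $\theta'$ implicit. Your version is more careful here. Your identity $\partial_\theta^2\log p_\theta=-1+xR'_{d/2-1}(\theta\sqrt{x})$, with the bounds $\theta+\sqrt{x}$ and $c_1+c_2x$, is also cleaner than the paper's displayed derivative bounds.

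Second, for the likelihood ratio the paper integrates $\partial_y\log I_\nu(y)\le 1+\nu/y$ (that is, $R_\nu\le 1$). This gives a bound of order $e^{(\theta'-\theta)\sqrt{x}}$ valid for all $x$ at once. You instead first use the linear bound $R_\nu(y)\le y/(2(\nu+1))$, and correctly observe that the resulting $e^{\beta x}$ growth may break integrability when $(\alpha-1)\beta\ge 1/2$. You then repair this with the large-argument asymptotics of $I_\nu$ plus a compact-set argument.

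The repair works: the asymptotics are uniform for $\theta'$ in a compact set away from $0$, and your first bound keeps the ratio bounded on $[0,M]$. On $[0,M]$, however, you should dominate $p_{\theta'}^\alpha/p_\theta^{\alpha-1}$ by $\tilde h$ times a bound on the ratio rather than invoke continuity. For $d=1$ this function blows up like $x^{-1/2}$ at the origin.

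There is also a simpler fix. Using $0\le R_\nu\le 1$ in the very same $dr/dy$ computation gives $\partial_x\log(p_{\theta'}/p_\theta)\le\theta'/(2\sqrt{x})$, hence $p_{\theta'}/p_\theta\le Ke^{\theta'\sqrt{x}}$ globally. This is essentially the paper's route and removes the need for asymptotics altogether.
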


\begin{proof}
    We proceed from Definition~\ref{defi:regularity} by verifying each point:
    \begin{itemize}
        \item $\Theta = \mathbb{R}^+$ is a right-open set of $\mathbb{R}$.
        \item For each $\theta \in \Theta$, $\supp(P_\theta) = \mathbb{R}^+$.
        \item For each $\theta \in \Theta$, $P_\theta$ admits a density: \[\text{for all }x > 0,\; p_\theta(x) = p_0(x) e^{- \theta^2/2} \theta^{1-d/2}  I_{d/2-1}\big(\theta\sqrt{x}\big),\]
        where $p_0$ is the density of a central chi-squared distribution with degree of freedom $d$.
        \item For all $\theta \in \Theta$, $\log p_\theta(x)$ is twice differentiable. In fact, modified Bessel functions of the first kind are differentiable~\citep{Arfken2011}, and, leveraging Lemma~\ref{lem:ricatti-bessel}:
        \begin{align*}
            \partial_\theta \log p_\theta(x)  &=  - \theta + \sqrt{x} R_{d/2-1}(\theta\sqrt{x}),\\
            \partial_\theta^2 \log p_\theta(x)  &=  - 1 + \frac{1}{2\sqrt{x}} R_{d/2-1}(\theta\sqrt{x}) + \frac{\theta}{2}R'_{d/2-1}(\theta\sqrt{x})\\
            &=- 1 + \frac{1}{2\sqrt{x}} R_{d/2-1}(\theta\sqrt{x}) + \frac{\theta}{2} ( 1 - R_{d/2-1}(\theta\sqrt{x}) - R_{d/2-1}(\theta\sqrt{x})^2). 
        \end{align*}
        \item For all $\theta \in \Theta$, $\partial_\theta p_\theta$ and $\partial_\theta^2 p_\theta$ verify the integrability conditions. In fact, $R_d(x) \leq \frac{x}{2(d+1)}$.
        Therefore,
        \begin{align*}
            |\partial_\theta \log p_\theta(x)|  &\leq  - \theta + \frac{\theta x}{d},\\
            |\partial_\theta^2 \log p_\theta(x)| &\leq- 1 + \frac{\theta}{2d} + \frac{\theta}{2} \left( 1 + \frac{\theta\sqrt{x}}{d} + \frac{\theta^2 x}{d^2}\right).
        \end{align*}
        Moreover, $p_\theta$ have exponentially light tails. In fact, leveraging Lemma~\ref{lem:bessel-ratio-bound}, $I_{d/2-1}(\theta\sqrt{x}) \leq I_{d/2-1}(1) \theta \sqrt{x} e^{\theta \sqrt{x} -1}$, and:
        \begin{align*}
            p_\theta(x) &= \frac{1}{2} e^{-(\theta^2 + x)/2} \left(\frac{x}{\theta^2}\right)^{d/4-1/2} I_{d/2-1}(\theta\sqrt{x})\\
            &\lesssim x^{d/4-1/2} e^{-x/2 + \theta \sqrt{x}}.
        \end{align*}
        
        As $|\partial_\theta \log p_\theta(x)|$ and $|\partial_\theta^2 \log p_\theta(x)|$ can be upper bounded by polynomials, the integrability conditions hold.
        \item For all $\theta, \theta' \in \Theta$, $\left|\partial_{\theta'}\log p_{\theta'}(x)\right|\frac{p_{\theta'}(x)^\alpha}{p_{\theta}(x)^{\alpha-1}}$ and $\left(\partial_{\theta'} \log p_{\theta'} (x)^2 +\left| \partial_{\theta'}^2\log p_{\theta'}(x)\right|\right)\frac{p_{\theta'}(x)^\alpha}{p_{\theta}(x)^{\alpha-1}}$ verify the integrability conditions.
        In fact, for some constant $C(\theta,\theta')$:
        \begin{align*}
            \frac{p_{\theta'}(x)}{p_{\theta}(x)} &= C(\theta, \theta')  \frac{I_{d/2-1}\big(\theta'\sqrt{x}\big)}{I_{d/2-1}\big(\theta\sqrt{x}\big)} \leq C(\theta, \theta') \left(\frac{\theta'}{\theta}\right)^{d/2-1} e^{\sqrt{x}(\theta' - \theta)}.
        \end{align*}
        Therefore: \[\frac{p_{\theta'}(x)^\alpha}{p_{\theta}(x)^{\alpha-1}} \lesssim x^{d/4-1/2} e^{-x/2 + \sqrt{x} (\alpha\theta'-(\alpha-1)\theta)}.\]
        Since $|\partial_\theta \log p_\theta(x)|$ and $|\partial_\theta^2 \log p_\theta(x)|$ can be upper bounded by polynomials, the integrability conditions hold.
    \end{itemize}
\end{proof}

\subsection{Regularity of non-central Wishart distributions}

Let $d > k \geq 1$. Let $v,w \in \mathbb{R}^{d \times k}$. We note $v_\theta = (1-\theta) v + \theta w$, and $\Omega_\theta = v_\theta^\top v_\theta$. We consider the family of distributions: \[P = \{P_\theta := \mathcal{W}(d,k,\Omega_\theta),\; \theta \in \mathbb{R}\}.\]

\begin{proposition}
\label{prop:regularity-wishart}
    The family of distributions $P$ is regular.
\end{proposition}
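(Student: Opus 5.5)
We verify the items of Definition~\ref{defi:regularity} one by one, reducing as much as possible to the chi-squared case (Proposition~\ref{prop:regularity-chi2}). The index set $\Theta=\mathbb{R}$ is open, so the right-openness condition holds. Since $d\geq k$, every $P_\theta$ is absolutely continuous on the cone $\mathbb{S}_k^{++}$ with density $P_0(y)e^{-\tr(\Omega_\theta)/2}\,{}_0F_1(\tfrac d2,\tfrac14 y^{1/2}\Omega_\theta y^{1/2})$. The factor ${}_0F_1$ is strictly positive, because its zonal-polynomial series has nonnegative coefficients and constant term $1$. Hence all $P_\theta$ share the support $\mathbb{S}_k^{++}$.

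For smoothness, $\theta\mapsto\Omega_\theta$ is a quadratic polynomial and ${}_0F_1(a,\cdot)$ is an entire function of the eigenvalues. Therefore $\theta\mapsto\log p_\theta(y)$ is $C^\infty$, and we can write $\partial_\theta\log p_\theta$ and $\partial_\theta^2\log p_\theta$ explicitly through derivatives of $\log{}_0F_1$.

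The main work is the domination conditions. Here I would rely on the integral representation of ${}_0F_1$ over $O(d)$ (equivalently, the Laplace-type representation from the Gaussian construction $X=N+v_\theta$). It gives ${}_0F_1(\tfrac d2,\tfrac14 y^{1/2}\Omega y^{1/2})=\mathbb{E}_H\exp(\tr(H^\top v_\theta y^{1/2}))$ for a Haar-distributed $H$. From this we get two bounds.

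First, the growth bound ${}_0F_1\leq \exp(\|v_\theta\|_F\,\tr(y)^{1/2})$.

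Second, gradient bounds. $\partial_\theta\log{}_0F_1$ is a ratio of expectations that is controlled by $\|\dot v_\theta\|_F\,\|y^{1/2}\|_F=\|w-v\|_F\sqrt{\tr y}$. The second derivative is controlled by $\|w-v\|_F^2\tr(y)$ plus the analogous term with $\ddot v_\theta=0$. The reason is that the second derivative of a log-moment-generating function is a variance, and that variance is bounded by the squared sup of the exponent's derivative. Consequently $|\partial_\theta\log p_\theta|$ and $|\partial_\theta^2\log p_\theta|$ are bounded by polynomials in $\tr(y)$, uniformly for $\theta'$ in a neighborhood $U$ of $\theta$.

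For tails, note that $P_0(y)\propto \det(y)^{(d-k-1)/2}e^{-\tr(y)/2}$. Combined with the growth bound, this yields
\[p_{\theta'}(y)\lesssim P_0(y)\,e^{c\sqrt{\tr y}}\]
for $\theta'\in U$. This gives integrable envelopes $\tilde h_1,\tilde h_2$ of the form $\mathrm{poly}(\tr y)P_0(y)e^{c\sqrt{\tr y}}$.

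For the Rényi-specific condition we need to bound $p_{\theta'}^\alpha/p_\theta^{\alpha-1}$. The numerator is handled by the growth bound. For the denominator we need a lower bound ${}_0F_1\geq 1$, which follows from Jensen's inequality on the Haar integral, because $\mathbb{E}_H\tr(H^\top v y^{1/2})=0$. This gives
\[p_{\theta'}^\alpha/p_\theta^{\alpha-1}\lesssim P_0(y)e^{\alpha c\sqrt{\tr y}}.\]
This is still integrable against polynomials, because the Gaussian factor $e^{-\tr y/2}$ dominates $e^{C\sqrt{\tr y}}$, which is the same mechanism as in the chi-squared case.

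I expect the main obstacle to be making the derivative bounds on $\log{}_0F_1$ rigorous and uniform in $\theta'$, including near points where $v_\theta$ is rank-deficient. The Haar/Gaussian integral representation avoids dealing with zonal polynomials directly and handles those degenerate points without special treatment, so I would try that route first.
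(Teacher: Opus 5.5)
Your proposal is correct and follows essentially the same route as the paper: both check the items of Definition~\ref{defi:regularity} directly, use the Stiefel/Haar integral representation of ${}_0F_1$ (Proposition~\ref{prop:integral-representation}) to get the exponential growth bound, the score bound and the second-derivative (tilted-variance) bound, and then rely on the $e^{-\tr(y)/2}$ factor of the central Wishart density to obtain integrable envelopes. The only minor difference is in the R\'enyi-specific condition: you bound $p_{\theta'}^\alpha/p_\theta^{\alpha-1}$ using the lower bound ${}_0F_1\ge 1$ (Jensen on the Haar integral), whereas the paper bounds the ratio $p_{\theta'}/p_\theta$ by integrating the log-derivative (Lemma~\ref{lem:upper-bound-ratio}); both routes give an envelope of the form $P_0(y)e^{c\sqrt{\tr y}}$ times a polynomial in $\tr y$.
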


\begin{proof}
    We proceed from Definition~\ref{defi:regularity} by verifying each point:
    \begin{itemize}
        \item $\Theta = \mathbb{R}$ is a right-open set of $\mathbb{R}$.
        \item For each $\theta \in \Theta$, $\supp(P_\theta) = \mathbb{S}_+^k(\mathbb{R})$, the set of symmetric positive semidefinite matrices.
        \item Since $d > k$, for each $\theta \in \Theta$, $P_\theta$ admits a density~\citep{Muirhead2009}: \[\text{for all }x > 0,\; p_\theta(x) = p_0(x) e^{- \tr(\Omega_\theta)/2} {}_0F_1\left(\frac{d}{2}, \frac{1}{4}x^{1/2} \Omega_\theta x^{1/2}\right),\]
        where $p_0$ is the density of a central Wishart distribution with degree of freedom $d$ and size $k$.
        \item For all $\theta \in \Theta$, $\log p_\theta(x)$ is twice differentiable. In fact, from Lemma~\ref{lem:hypergeometric-regularity}, hypergeometric functions of a matrix argument are $C^\infty(\mathbb{R}^{k \times k})$.
        \item For all $\theta \in \Theta$, $\partial_\theta p_\theta$ and $\partial_\theta^2 p_\theta$ verify the integrability conditions. In fact, Lemma~\ref{lem:scaling-log-hypergeometric} ensures that, for all $x \in \mathbb{S}_+^k(\mathbb{R})$,
        \begin{align*}
            \left|\partial_\theta \log p_\theta(x)\right| &\leq \|v-w\|_F^2 \|v_\theta\|_F^2 +\sqrt{k} \|v-w\|_F  \|x\|_F^{1/2},\\
            \left|\partial_\theta^2 \log p_\theta(x)\right| &\leq 2k \|v-w\|_F^2 \|x\|_F.
        \end{align*}
        Moreover, $p_\theta$ have exponentially light tails. In fact, from Lemma~\ref{lem:hypergeometric-scaling}, ${}_0F_1\left(\frac{d}{2}, \frac{1}{4}x^{1/2} \Omega_\theta x^{1/2}\right) \leq e^{\sqrt{k}\|v_\theta\|_F \|x\|_F^{1/2}}$, and:
        \[p_\theta(x) = p_0(x) e^{- \tr(\Omega_\theta)/2} {}_0F_1\left(\frac{d}{2}, \frac{1}{4}x^{1/2} \Omega_\theta x^{1/2}\right) \leq C_0 \det(x)^{(d-k-1)/2} e^{- \tr(\Omega_\theta)/2 + \sqrt{k}\|v_\theta\|_F \|x\|_F^{1/2} - tr(x)/2},\]
        therefore ensuring domination by an integrable function.
        \item For all $\theta, \theta' \in \Theta$, $\left|\partial_{\theta'}\log p_{\theta'}(x)\right|\frac{p_{\theta'}(x)^\alpha}{p_{\theta}(x)^{\alpha-1}}$ and $\left(\partial_{\theta'} \log p_{\theta'} (x)^2 +\left| \partial_{\theta'}^2\log p_{\theta'}(x)\right|\right)\frac{p_{\theta'}(x)^\alpha}{p_{\theta}(x)^{\alpha-1}}$ verify the integrability conditions.
        In fact, from Lemma~\ref{lem:upper-bound-ratio}, for some constant $C(\theta,\theta')$:
        \begin{align*}
            \frac{p_{\theta'}(x)}{p_{\theta}(x)} &\leq C(\theta,\theta') \exp\left( |\theta' - \theta|\sqrt{k}\|v-w\|_{F} \|x\|_F^{1/2} \right).
        \end{align*}
        Since $|\partial_\theta \log p_\theta(x)|$ and $|\partial_\theta^2 \log p_\theta(x)|$ can be upper bounded by polynomials of in the norm of $x$, the integrability conditions hold.
    \end{itemize}
\end{proof}
Here, we state some classical properties of modified Bessel functions of the first kind and the hypergeometric function of a matrix argument ${}_0F_1$:
\begin{lemma}[Ricatti equation for Bessel quotients]
\label{lem:ricatti-bessel}
    \[R'_d(x)=1 - \frac{2d +1}{x}R_d(x) - R_d(x)^2\]
\end{lemma}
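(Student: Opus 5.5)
The plan is a direct computation from the two Bessel recurrences already used in the proof of Lemma~\ref{lem:bessel-upper-bound}. These are $I_{d-1}(x) - I_{d+1}(x) = \frac{2d}{x}I_d(x)$ and $I_{d-1}(x) + I_{d+1}(x) = 2I_d'(x)$, as in \citet{Arfken2011}. Throughout, I work on $x>0$ with index $d > -1$, so that $I_d(x) > 0$ and $R_d = I_{d+1}/I_d$ is well defined and differentiable.

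First, I turn the two recurrences into expressions for $I_d'$ and $I_{d+1}'$ that involve only $I_d$ and $I_{d+1}$. Subtracting the first recurrence from the second and halving gives
\[I_d'(x) = I_{d+1}(x) + \frac{d}{x} I_d(x).\]
Adding them and halving gives $I_d'(x) = I_{d-1}(x) - \frac{d}{x}I_d(x)$. Shifting the index $d \to d+1$ in this second identity yields
\[I_{d+1}'(x) = I_d(x) - \frac{d+1}{x} I_{d+1}(x).\]

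Second, I differentiate the quotient:
\[R_d'(x) = \frac{I_{d+1}'(x)}{I_d(x)} - \frac{I_{d+1}(x)}{I_d(x)}\cdot\frac{I_d'(x)}{I_d(x)}.\]
Substituting the two identities gives $\frac{I_{d+1}'}{I_d} = 1 - \frac{d+1}{x}R_d$ and $\frac{I_d'}{I_d} = R_d + \frac{d}{x}$. Hence
\[R_d' = 1 - \frac{d+1}{x}R_d - R_d\left(R_d + \frac{d}{x}\right) = 1 - \frac{2d+1}{x}R_d - R_d^2,\]
which is the claim.

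There is no real obstacle here. The only points needing care are the index shift and the sign conventions in the recurrences, together with noting that $I_d>0$ on $(0,\infty)$ so that the quotient rule applies.
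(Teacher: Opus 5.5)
Your proof is correct and is essentially the paper's own argument: both apply the quotient rule to $R_d = I_{d+1}/I_d$ and remove the derivatives using the two recurrences from \citet{Arfken2011}. The only cosmetic difference is the order of substitution. You first combine the recurrences into the ladder relations $I_d' = I_{d+1} + \frac{d}{x}I_d$ and $I_{d+1}' = I_d - \frac{d+1}{x}I_{d+1}$. The paper instead substitutes the sum recurrence directly and then eliminates $I_{d+2}$ and $I_{d-1}$ with the difference recurrence.
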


\begin{proof}
    Modified Bessel functions of the first kind admit the following recurrence relations~\citep[11.115~and~11.116]{Arfken2011}:
    \[I_{d-1}(x) - I_{d+1}(x) = \frac{2d}{x}I_d(x).\]
    \[I_{d-1}(x) + I_{d+1}(x) = 2 I_d'(x).\]
    Then,
    \begin{align*}
        R'_d(x) &= \frac{I'_{d+1} (x)}{I_d(x)} - \frac{I_{d+1}(x) I_d'(x)}{I_d(x)^2}\\
        &= \frac{I_{d+2}(x) + I_d(x)}{2I_d(x)} - \frac{I_{d+1}(x)^2 + I_{d+1}(x) I_{d-1}(x)}{2I_d(x)^2}\\
        &= \frac{(I_d(x) - \frac{2(d+1)}{x} I_{d+1}) + I_d(x)}{2I_d(x)} - \frac{I_{d+1}(x) (I_{d+1}(x) + \frac{2d}{x}I_d(x))}{2I_d(x)^2} - \frac{1}{2}R_d(x)^2\\
        &= 1 - \frac{2d+1}{x} R_d(x) - R_d(x)^2.
    \end{align*}
\end{proof}

\begin{lemma}
\label{lem:bessel-ratio-bound}
    Let $d > 1, x,y > 0$. Then: $\frac{I_d(y)}{I_d(x)} \leq \left(\frac{y}{x}\right)^d e^{y-x}$.
\end{lemma}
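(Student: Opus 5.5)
The plan is to integrate the logarithmic derivative of $I_d$ between $x$ and $y$, in the regime $y \geq x$. This is the regime in which the lemma is used: in Proposition~\ref{prop:regularity-chi2} one bounds $I_{d/2-1}(\theta\sqrt{x})$ against $I_{d/2-1}(1)$ for large arguments, and the ratio $p_{\theta'}/p_\theta$ with $\theta'>\theta$.

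\textbf{Caveat on the range.} The inequality as literally stated cannot hold for $y<x$. As $y\to 0$ one has $I_d(y)\sim (y/2)^d/\Gamma(d+1)$, so the ratio of left to right sides tends to $e^{x}(x/2)^d/(\Gamma(d+1)I_d(x))$. This is $\geq 1$ because $I_d(x)\leq (x/2)^d e^x/\Gamma(d+1)$, with strict inequality for $x>0$. I would therefore state and prove the lemma under the hypothesis $y\geq x>0$, and check that every invocation satisfies it, swapping roles where necessary.

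\textbf{Step 1 (log-derivative identity).} From the two recurrences quoted in the proof of Lemma~\ref{lem:ricatti-bessel}, $I_{d-1}-I_{d+1}=\frac{2d}{t}I_d$ and $I_{d-1}+I_{d+1}=2I_d'$, subtracting gives $I_d'(t)=I_{d+1}(t)+\frac{d}{t}I_d(t)$. Hence
\[
\frac{d}{dt}\log I_d(t)=\frac{d}{t}+R_d(t), \qquad R_d=\frac{I_{d+1}}{I_d}.
\]

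\textbf{Step 2 (bound on the Bessel quotient).} I need $0\leq R_d(t)\leq 1$ for $t>0$. Nonnegativity is immediate from the positive power series of $I_\nu$. For the upper bound I would cite the classical inequality $I_{\nu+1}(t)<I_\nu(t)$ for $\nu\geq -1/2$. Alternatively, one can use the bound $R_d(t)\leq t/(d+1/2+\sqrt{(d+1/2)^2+t^2})<1$ from \citet{ruiz2016}, already quoted in the appendix.

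\textbf{Step 3 (integration).} For $y\geq x$, integrating Step 1 over $[x,y]$ and applying Step 2 gives
\[
\log\frac{I_d(y)}{I_d(x)}=\int_x^y\Big(\frac{d}{t}+R_d(t)\Big)dt\leq d\log\frac{y}{x}+(y-x).
\]
Exponentiating yields the claim.

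The only genuinely nontrivial input is the bound $R_d\leq 1$ in Step 2, which I take from the literature. The main point of care is the direction of integration, since the bound on $R_d$ only helps when $y\geq x$.
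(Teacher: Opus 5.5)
Your proof is correct and is essentially the paper's argument: use the recurrences to get $\frac{d}{dt}\log I_d(t)=R_d(t)+d/t$, bound $R_d\le 1$, and integrate over $[x,y]$. Your caveat on the range is also correct and worth keeping, and you justify $R_d\le 1$, which the paper simply asserts. The paper's integration step tacitly needs $y\ge x$, and the inequality in fact fails for every $y<x$, since $t\mapsto t^{-d}e^{-t}I_d(t)$ has log-derivative $R_d(t)-1<0$.
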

\begin{proof}
Let $d > 1, x,y > 0$. We compute the derivative of $\log I_d(x)$:
\[\partial_x \log I_d(x) = \frac{I_d'(x)}{I_d(x)} =  \frac{I_{d-1}(x) + I_{d+1}(x)} {2I_d(x)} = R_d(x) + \frac{d}{x} \leq 1 + \frac{d}{x}.\]
Then,
     \begin{align*}
        \frac{I_d(y)}{I_d(x)}= \exp(\log I_d(y) - \log I_d(x)) &= \exp\left( \int_{x}^y \partial_u \log I_d(u) du\right)\\
        &\leq \exp\left( \int_{x}^y \left(1 + \frac{d}{u}\right) du\right)\\
        &\leq \left(\frac{y}{x}\right)^d e^{y-x}.
    \end{align*}
\end{proof}

\begin{lemma}[${}_0F_1$ is real-analytic~\citep{Gross1987}]
\label{lem:hypergeometric-regularity}
    Let $d > k \geq 1$. Let ${}_0F_1$ be the hypergeometric function of a matrix argument. Then, $X \in \mathbb{R}^{k \times k} \mapsto {}_0F_1(d/2,X)$ is real-analytic. Equivalently, ${}_0F_1(d/2,\cdot) \in C^\infty(\mathbb{R}^{k\times k})$.
\end{lemma}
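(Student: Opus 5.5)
The statement to prove is Lemma~\ref{lem:hypergeometric-regularity}: $X\mapsto{}_0F_1(d/2,X)$ is real-analytic on $\mathbb{R}^{k\times k}$. I would start from the zonal polynomial series
\[
{}_0F_1(a,X)=\sum_{j\ge0}\sum_{\kappa\vdash j}\frac{C_\kappa(X)}{(a)_\kappa\, j!},
\]
where $C_\kappa$ are the zonal polynomials and $(a)_\kappa=\prod_{i}(a-(i-1)/2)_{\kappa_i}$ are the generalized Pochhammer symbols. Each $C_\kappa$ is a homogeneous polynomial of degree $j$ in the entries of $X$. The approach is to show that the series converges absolutely and locally uniformly, together with its complexification, on all of $\mathbb{C}^{k\times k}$. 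It then defines an entire function of the $k^2$ complex entries, and its restriction to real matrices is real-analytic and in particular $C^\infty$.

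\textbf{Step 1: bound each homogeneous piece.} For a complex matrix $X$, I would use $|C_\kappa(X)|\le C_\kappa(\|X\| I_k)=\|X\|^{j}C_\kappa(I_k)$, where $\|X\|$ is the operator norm, together with $\sum_{\kappa\vdash j}C_\kappa(I_k)=(\operatorname{tr} I_k)^j=k^j$. For the complex case, the positivity of the zonal coefficients in the monomial-symmetric basis should give the bound via the spectral radius. If that route is delicate, I would use the integral representation ${}_0F_1$ over $O(k)$ or the Stiefel manifold instead, which directly bounds $|C_\kappa(X)|$ by $C_\kappa(I)\|X\|^{j}$.

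\textbf{Step 2: lower-bound the Pochhammer symbols.} With $a=d/2$ and $d>k$, every factor $a-(i-1)/2$ for $i\le k$ is at least $a-(k-1)/2>1/2$. Hence $(a)_\kappa\ge c^{\,j}$ for some $c>0$, or more crudely $(a)_\kappa\ge \prod_i(\kappa_i)!/\text{poly}$, which is enough. In fact only $(a)_\kappa\ge \epsilon^j$ is needed, since the factor $j!$ already appears in the denominator.

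\textbf{Step 3: conclude.} Combining the two bounds, the $j$-th homogeneous block is bounded by $(k\|X\|/\epsilon)^j/j!$, which is summable uniformly on bounded sets. By Weierstrass's theorem for power series in several complex variables, ${}_0F_1(d/2,\cdot)$ is entire in the entries, so it is real-analytic on real matrices, and termwise differentiation is justified.

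The main obstacle is Step 1 for non-symmetric or complex arguments. Zonal polynomials are defined through eigenvalues, so I would first argue on symmetric matrices, which is all the paper uses since $x^{1/2}\Omega x^{1/2}$ is symmetric. Eigenvalues are not smooth there, so I would instead use that $C_\kappa(X)$ is a polynomial in the power sums $\operatorname{tr}(X^m)$. These are polynomials in the entries and satisfy $|\operatorname{tr}X^m|\le k\|X\|^m$. The bound then follows from majorizing the power-sum expansion coefficientwise by its value at $\|X\|I_k$, which is the step I would check most carefully.
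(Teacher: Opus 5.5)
The paper gives no proof of this lemma; it simply attributes it to Gross--Richards (1987). Your proposal replaces the citation with a self-contained argument, and its main line is correct.

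\begin{itemize}
\item Each $C_\kappa$ is a homogeneous polynomial of degree $j$ in the entries. A symmetric polynomial in the eigenvalues is a polynomial in the coefficients of the characteristic polynomial.
\item Only partitions with at most $k$ parts contribute. So every factor of $(d/2)_\kappa$ is at least $(d-k+1)/2>1/2$, and $(d/2)_\kappa\ge 2^{-j}$.
\item The inequality $|C_\kappa(X)|\le C_\kappa(I_k)\,\rho(X)^j\le C_\kappa(I_k)\|X\|^j$, combined with $\sum_{\kappa\vdash j}C_\kappa(I_k)=k^j$, bounds the $j$-th block by $(2k\|X\|)^j/j!$ on all of $\mathbb{C}^{k\times k}$.
\end{itemize}

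Compared with the citation, this gives an explicit entire extension, justifies termwise differentiation, and yields the growth bound $|{}_0F_1(d/2,X)|\le e^{2k\|X\|}$. You also correctly read the lemma's ``equivalently'' as the one-way implication from real-analytic to $C^\infty$.

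The one nontrivial input is the spectral-radius inequality. You should name its source rather than leave it at ``should''. There are two options.
\begin{itemize}
\item Zonal polynomials have nonnegative coefficients in the monomial symmetric basis; this is the $\alpha=2$ case of the Knop--Sahi positivity theorem for Jack polynomials. For complex eigenvalues it gives $|C_\kappa(\lambda_1,\dots,\lambda_k)|\le C_\kappa(|\lambda_1|,\dots,|\lambda_k|)\le C_\kappa(\rho,\dots,\rho)$.
\item Alternatively, use the spherical-function formula $C_\kappa(\Lambda)=C_\kappa(I_k)\int_{O(k)}\Delta_\kappa(H^\top\Lambda H)\,dH$. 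Here $\Delta_\kappa$ is the product of leading principal minors raised to the powers $\kappa_i-\kappa_{i+1}$. The formula extends from real to complex diagonal $\Lambda$ as a polynomial identity, and Cauchy--Binet gives $|\Delta_\kappa(H^\top\Lambda H)|\le\rho^j$.
\end{itemize}

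The fallback in your last paragraph does not work. The power-sum expansions of zonal polynomials have coefficients of both signs; already $C_{(1,1)}(X)=\frac23\big[(\operatorname{tr}X)^2-\operatorname{tr}X^2\big]$. Majorizing coefficientwise with $|\operatorname{tr}X^m|\le k\|X\|^m$ therefore gives $\frac23(k^2+k)\|X\|^2$, not $C_{(1,1)}(\|X\|I_k)=\frac23(k^2-k)\|X\|^2$. The majorants no longer sum to $(k\|X\|)^j$, and controlling them would require genuine bounds on zonal spherical functions.

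The fallback is also unnecessary, as is retreating to symmetric matrices. The non-smoothness of eigenvalues is irrelevant: analyticity of each term comes from $C_\kappa$ being a polynomial in the entries. The eigenvalues enter only through a pointwise bound, which is locally uniform because $\rho(X)\le\|X\|$, for nonsymmetric and complex $X$ alike. Keep the spectral-radius route and the proof is complete.
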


\begin{proposition}[Integral representation of ${}_0F_1$, Theorem~7.4.1. from~\citet{Muirhead2009}]
\label{prop:integral-representation}
    Let $1 \leq k \leq d$. If $x \in \mathbb{R}^{d \times k}$, then:
    \[{}_0F_1\left(\frac{d}{2}, \frac{1}{4} X^\top X\right) = \int_{V_{d,k}} e^{\langle X^\top, H\rangle} d\mu(H),\]
    where $V_{d,k} \subset \mathbb{R}^{k \times d}$ is the Stiefel manifold, and $\mu$ is the normalized uniform measure on $V_{n,d}$.
\end{proposition}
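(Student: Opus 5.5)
The plan is to prove the identity by matching power series: expand the exponential on the right-hand side and identify the surviving terms, degree by degree, with the zonal-polynomial series defining ${}_0F_1(d/2,\cdot)$. I will read $\langle X^\top,H\rangle$ as $\tr(XH)$ for $X\in\mathbb{R}^{d\times k}$ and $H\in V_{d,k}\subset\mathbb{R}^{k\times d}$ with $HH^\top=I_k$. I will write $\mu$ for the normalized invariant measure, and note that it is the pushforward of Haar measure on $O(d)$ under $Q\mapsto$ (first $k$ rows of $Q$). So every Stiefel integral can be rewritten as an $O(d)$ integral $\int_{O(d)} e^{\tr(\tilde X Q)}\,dQ$, with $\tilde X\in\mathbb{R}^{d\times d}$ obtained by padding $X$ with $d-k$ zero columns. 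This reduces the statement to the classical Bessel-type identity $\int_{O(d)}\mathrm{etr}(\tilde X Q)\,dQ={}_0F_1(d/2;\tfrac14\tilde X^\top\tilde X)$. The nonzero eigenvalues of $\tilde X^\top\tilde X$ coincide with those of $X^\top X$. By the eigenvalue-dependence property of ${}_0F_1$ recalled earlier in the paper, this gives the claim.

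\textbf{Step 1 (invariance and parity).} Both sides are invariant under $X\mapsto UXS$ for $U\in O(d)$ and $S\in O(k)$: the left side because it depends only on the spectrum of $X^\top X$, the right side by bi-invariance of $\mu$. Hence both are functions of the singular values of $X$ alone. Expanding $e^{\tr(XH)}=\sum_j \tr(XH)^j/j!$, which is legitimate by dominated convergence since $|\tr(XH)|\le \sqrt{k}\|X\|_F$ on the compact manifold, every odd-degree term vanishes by the symmetry $H\mapsto -H$. We are left with $\sum_{m\ge0}\frac{1}{(2m)!}\int (\tr XH)^{2m}\,d\mu(H)$. Each of these integrals is a homogeneous, orthogonally invariant polynomial of degree $m$ in $X^\top X$. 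It therefore lies in the span of the zonal polynomials $C_\kappa(X^\top X)$ with $|\kappa|=m$.

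\textbf{Step 2 (computing the coefficients, the main obstacle).} The remaining task is the identity
\[
\frac{1}{(2m)!}\int_{O(d)}(\tr \tilde XQ)^{2m}\,dQ=\sum_{|\kappa|=m}\frac{C_\kappa(\tfrac14\tilde X^\top\tilde X)}{(d/2)_\kappa\, m!},
\]
which is exactly the degree-$m$ part of the series for ${}_0F_1(d/2;\cdot)$. I would obtain it by a Gaussian averaging trick. Let $G$ be a $d\times d$ standard Gaussian matrix, with polar decomposition $G=QT$, where $Q$ is Haar-distributed and independent of $T=(G^\top G)^{1/2}$. On one side, $\mathbb{E}(\tr \tilde X G)^{2m}=\frac{(2m)!}{2^m m!}(\tr \tilde X^\top\tilde X)^m$, since $\tr\tilde XG$ is a centered Gaussian with variance $\tr\tilde X^\top\tilde X$. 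On the other side, one conditions on $Q$ and integrates out $T$ using the averaging identity $\int_{O(d)} C_\kappa(AQBQ^\top)\,dQ=C_\kappa(A)C_\kappa(B)/C_\kappa(I_d)$ together with the Wishart moments $\mathbb{E}\,C_\kappa(G^\top G)=2^m(d/2)_\kappa C_\kappa(I_d)$. Matching the $\kappa$-components of $(\tr\,\cdot)^m=\sum_{|\kappa|=m}C_\kappa(\cdot)$ then isolates the factor $1/(d/2)_\kappa$. This bookkeeping, namely decoupling $Q$ from $T$ inside a power of a trace that is not itself a zonal polynomial, is the delicate part. If it becomes unwieldy, I would instead cite the standard computation in Muirhead's proof of Theorem~7.4.1, which carries it out.

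\textbf{Step 3 (sanity check and conclusion).} As a check, for $k=1$ the right side is the spherical average $\int_{S^{d-1}}e^{\langle x,h\rangle}d\sigma(h)=\Gamma(d/2)(|x|/2)^{1-d/2}I_{d/2-1}(|x|)$. This equals ${}_0F_1(d/2;|x|^2/4)$, consistent both with the Bessel expression used earlier for the non-central $\chi^2$ density and with the paper's scalar relation ${}_0F_1(m/2,x)=\Gamma(m/2)x^{1/2-m/4}I_{m/2-1}(2\sqrt x)$. Summing the degree-$m$ identities of Step~2 over $m$ yields the proposition, with absolute convergence of both series ensuring the interchange of sum and integral.
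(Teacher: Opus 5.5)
The paper does not prove this proposition; it quotes Muirhead's Theorem~7.4.1, so the comparison is between a citation and your self-contained argument. Your outline is the classical route and is correct in outline: odd terms killed by $H\mapsto -H$, even terms expanded in zonal polynomials, coefficients fixed by a Gaussian moment computation. Over the bare citation it buys transparency about where the parameter $d/2$ comes from, namely the $d$ degrees of freedom of a Wishart matrix. Two points should be tightened.

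First, the step you call delicate disappears if you reverse the conditioning. With $G=QT$, condition on $T$ and average over $Q$ first. The coefficients $a_\kappa$ in $\int_{O(d)}(\tr AQ)^{2m}\,dQ=\sum_{|\kappa|=m}a_\kappa C_\kappa(A^\top A)$ (your Step~1 with $k=d$) do not depend on $A$. Taking $A=T\tilde X$ therefore gives $\mathbb{E}[(\tr \tilde XG)^{2m}\mid T]=\sum_{|\kappa|=m}a_\kappa C_\kappa(\tilde X^\top G^\top G\tilde X)$. Now combine three facts: the Wishart moment formula $\mathbb{E}\,C_\kappa(\tilde X^\top G^\top G\tilde X)=2^m(d/2)_\kappa C_\kappa(\tilde X^\top\tilde X)$, the identity $(\tr Y)^m=\sum_{|\kappa|=m}C_\kappa(Y)$, and linear independence of the $C_\kappa$. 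Together they give $a_\kappa=(2m)!/(4^m m!\,(d/2)_\kappa)$, which is your target identity. No separate $O(d)$ averaging identity is required. Integrating out $T$ at fixed $Q$, the order you describe, is exactly what produces the non-zonal power of a trace.

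Second, the padding step is not covered by the eigenvalue-dependence property recalled in the paper. That property compares matrices of the same size with the same multiplicities, while $\tfrac14\tilde X^\top\tilde X$ has $d-k$ extra zero eigenvalues. You need the stability $C_\kappa(\mathrm{diag}(\lambda_1,\dots,\lambda_k,0,\dots,0))=C_\kappa(\lambda_1,\dots,\lambda_k)$ when $\kappa$ has at most $k$ parts (and $0$ otherwise), together with the size-independence of $(d/2)_\kappa$. It is simpler to skip the padding. For $G\in\mathbb{R}^{k\times d}$ standard Gaussian, write $G=TH$ with $T=(GG^\top)^{1/2}$ and $H=T^{-1}G$. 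Then $H$ is uniform on $V_{d,k}$ and independent of $T$, by invariance under $G\mapsto GU$ for $U\in O(d)$. Hence $\tr(XG)=\tr((XT)H)$, with $GG^\top$ a $k\times k$ Wishart matrix with $d$ degrees of freedom, and the same computation proves the statement directly at size $k$.
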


\begin{lemma}
\label{lem:scaling-log-hypergeometric}
    Let $d > 2$, $x \in \mathbb{S}_+^k(\mathbb{R})$. Then,
    \begin{align*}
         \left|\partial_\theta \log {}_0F_1\left(\frac{d}{2}, \frac{1}{4}x^{1/2} \Omega_\theta x^{1/2}\right)\right| &\leq \sqrt{k}\|v-w\|_{F} \|x\|_F^{1/2},\\
         \left|\partial_\theta^2 \log {}_0F_1\left(\frac{d}{2}, \frac{1}{4}x^{1/2} \Omega_\theta x^{1/2}\right)\right| &\leq 2k \|v-w\|_F^2 \|x\|_F.
    \end{align*}
\end{lemma}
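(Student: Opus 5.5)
This follows from the Stiefel integral representation of ${}_0F_1$ (Proposition~\ref{prop:integral-representation}). Write $\Omega_\theta = v_\theta^\top v_\theta$. Then
\[
\tfrac14 x^{1/2}\Omega_\theta x^{1/2} = \tfrac14 X_\theta^\top X_\theta, \qquad X_\theta := v_\theta x^{1/2} \in \mathbb{R}^{d\times k}.
\]
Hence
\[
F(\theta) := {}_0F_1\Big(\tfrac d2,\tfrac14 x^{1/2}\Omega_\theta x^{1/2}\Big) = \int_{V_{d,k}} e^{\langle X_\theta^\top, H\rangle}\, d\mu(H).
\]
The map $\theta\mapsto X_\theta$ is affine, with $\dot X_\theta = (w-v)x^{1/2}$ and $\ddot X_\theta = 0$. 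The integrand is smooth in $\theta$ and bounded on compact $\theta$-sets, since $V_{d,k}$ is compact. So we may differentiate under the integral.

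Introduce the tilted probability measure $d\mu_\theta(H) \propto e^{\langle X_\theta^\top,H\rangle}d\mu(H)$ and set $L(H) := \langle \dot X_\theta^\top, H\rangle$. A standard computation gives
\[
\partial_\theta \log F = \mathbb{E}_{\mu_\theta}[L], \qquad \partial_\theta^2\log F = \operatorname{Var}_{\mu_\theta}(L),
\]
where the second identity uses $\ddot X_\theta=0$. It therefore suffices to bound $|L(H)|$ uniformly over $H\in V_{d,k}$. The first derivative is then at most $\sup|L|$, and the variance is at most $\mathbb{E}[L^2]\le \sup L^2$.

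For the uniform bound, use Cauchy--Schwarz:
\[
|L(H)| \le \|\dot X_\theta\|_F\,\|H\|_F = \sqrt{k}\,\|(w-v)x^{1/2}\|_F,
\]
since $H$ has $k$ orthonormal rows. Next use $\|(w-v)x^{1/2}\|_F \le \|w-v\|_F\,\|x^{1/2}\|_{op}$. For $x\succeq 0$ we have $\|x^{1/2}\|_{op} = \|x\|_{op}^{1/2} \le \|x\|_F^{1/2}$. This yields
\[
|\partial_\theta\log F| \le \sqrt{k}\,\|v-w\|_F\,\|x\|_F^{1/2}, \qquad |\partial_\theta^2 \log F| \le k\,\|v-w\|_F^2\,\|x\|_F .
\]
The second bound is already better than the claimed $2k\|v-w\|_F^2\|x\|_F$, so the factor $2$ is just slack.

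The main obstacle is matching the representation's conventions: whether the pairing in Proposition~\ref{prop:integral-representation} is $\langle X^\top, H\rangle$ with $H\in\mathbb{R}^{k\times d}$, and whether $x^{1/2}\Omega x^{1/2}$ equals $X^\top X$ or $XX^\top$. Both choices share the same nonzero eigenvalues, and ${}_0F_1$ depends only on eigenvalues, so either can be used. One must also justify differentiating under the integral. This follows from compactness of $V_{d,k}$ and continuity of the integrand and its $\theta$-derivatives, which dominated convergence handles. If the representation were unavailable, a fallback would be Lemma~\ref{lem:bessel-ratio-bound}-type monotonicity arguments on eigenvalues, but the Stiefel route is the direct one.
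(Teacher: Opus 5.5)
Your proof is correct and follows essentially the paper's argument. Like the paper, you differentiate the Stiefel integral representation under the integral sign, write the log-derivatives as moments of $\langle x^{1/2}\dot v_\theta^\top, H\rangle$ under the tilted measure, and bound this pairing uniformly by Cauchy--Schwarz with $\|H\|_F=\sqrt{k}$. Your two refinements are valid and tighten details the paper leaves implicit. First, recognizing the second derivative as a variance (since $\ddot v_\theta=0$) gives the constant $k$ instead of $2k$. Second, the explicit step $\|(w-v)x^{1/2}\|_F\le\|w-v\|_F\|x\|_{\mathrm{op}}^{1/2}\le\|w-v\|_F\|x\|_F^{1/2}$ justifies the $\|x\|_F^{1/2}$ factor.
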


\begin{proof}
    We have:
    \[p_\theta(x) = p_0(x) e^{- \tr(\Omega_\theta)/2} {}_0F_1\left(\frac{d}{2}, \frac{1}{4}x^{1/2} \Omega_\theta x^{1/2}\right).\]
    Then, leveraging the integral representation of ${}_0F_1$ from Proposition~\ref{prop:integral-representation}:
    \begin{align*}
        \partial_\theta \log {}_0F_1\left(\frac{d}{2}, \frac{1}{4}x^{1/2} \Omega_\theta x^{1/2}\right) &= \frac{1}{{}_0F_1\left(\frac{d}{2}, \frac{1}{4}x^{1/2} \Omega_\theta x^{1/2}\right) }\partial_\theta \int_{V_{d,k}} e^{\langle x^{1/2} v_\theta^\top, H\rangle} d\mu(H)\\
        &= \frac{1}{{}_0F_1\left(\frac{d}{2}, \frac{1}{4}x^{1/2} \Omega_\theta x^{1/2}\right) }\partial_\theta \int_{V_{d,k}}\langle x^{1/2} \dot v_\theta^\top, H\rangle e^{\langle x^{1/2} v_\theta^\top, H\rangle} d\mu(H)\\
        &= \mathbb{E}_{H \sim \pi_X}[\langle x^{1/2} \dot v_\theta^\top, H\rangle].
    \end{align*}
    Then, since $H$ is supported on the Stiefel manifold $V_{d,k}$:
    \[\left|\partial_\theta \log {}_0F_1\left(\frac{d}{2}, \frac{1}{4}x^{1/2} \Omega_\theta x^{1/2}\right)\right| \leq \|H\|_F\|\dot v_\theta\|_F\|x\|_F^{1/2}  \leq \sqrt{k}\|v-w\|_{F} \|x\|_F^{1/2}.\]
    Also, the second-order derivative writes:
    \begin{align*}
        \partial_\theta^2 \log {}_0F_1\left(\frac{d}{2}, \frac{1}{4}x^{1/2} \Omega_\theta x^{1/2}\right) &= \frac{\partial_\theta \int_{V_{d,k}}\langle x^{1/2} \dot v_\theta^\top, H\rangle e^{\langle x^{1/2} v_\theta^\top, H\rangle} d\mu(H)}{{}_0F_1\left(\frac{d}{2}, \frac{1}{4}x^{1/2} \Omega_\theta x^{1/2}\right)} - \left(\frac{\partial_\theta \int_{V_{d,k}}\langle x^{1/2} \dot v_\theta^\top, H\rangle e^{\langle x^{1/2} v_\theta^\top, H\rangle} d\mu(H)}{{}_0F_1\left(\frac{d}{2}, \frac{1}{4}x^{1/2} \Omega_\theta x^{1/2}\right)}\right)^2\\
        &= \frac{\int_{V_{d,k}}\left(\langle x^{1/2} \dot v_\theta^\top, H\rangle^2 + \langle x^{1/2} \ddot v_\theta^\top, H\rangle\right) e^{\langle x^{1/2} v_\theta^\top, H\rangle} d\mu(H)}{{}_0F_1\left(\frac{d}{2}, \frac{1}{4}x^{1/2} \Omega_\theta x^{1/2}\right)} - \mathbb{E}_{H \sim \pi_X}[\langle x^{1/2} \dot v_\theta^\top, H\rangle]^2.\\
        &= \mathbb{E}_{H \sim \pi_X}[\langle x^{1/2} \dot v_\theta^\top, H\rangle^2 + \langle x^{1/2} \ddot v_\theta^\top, H\rangle] - \mathbb{E}_{H \sim \pi_X}[\langle x^{1/2} \dot v_\theta^\top, H\rangle]^2
    \end{align*}
    Then, since $H$ is supported on the Stiefel manifold $V_{d,k}$:
    \[\left|\partial_\theta^2 \log {}_0F_1\left(\frac{d}{2}, \frac{1}{4}x^{1/2} \Omega_\theta x^{1/2}\right)\right| \leq 2k \|v-w\|_F^2 \|x\|_F.\]
    
\end{proof}

\begin{lemma}
\label{lem:upper-bound-ratio}
    Let $d > 2$, $x \in \mathbb{S}_+^k(\mathbb{R})$ and $\theta, \theta' \in \mathbb{R}$. Then,
    \[\frac{{}_0F_1\left(\frac{d}{2}, \frac{1}{4} x^{1/2} \Omega_{\theta'} x^{1/2}\right)}{{}_0F_1\left(\frac{d}{2}, \frac{1}{4}x^{1/2} \Omega_{\theta} x^{1/2}\right)} \leq \exp\left( |\theta'-\theta|\sqrt{k}\|v-w\|_{F} \|x\|_F^{1/2}\right) .\]
\end{lemma}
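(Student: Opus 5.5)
We prove Lemma~\ref{lem:upper-bound-ratio} by controlling the logarithmic derivative along the path and integrating. Set
\[\phi(t) := \log {}_0F_1\Big(\tfrac{d}{2}, \tfrac14 x^{1/2}\Omega_t x^{1/2}\Big), \qquad t\in\mathbb{R},\]
where $\Omega_t = v_t^\top v_t$ and $v_t = (1-t)v + tw$. Then the ratio in the statement equals $\exp(\phi(\theta') - \phi(\theta))$. By the fundamental theorem of calculus,
\[\phi(\theta')-\phi(\theta) \le |\theta'-\theta|\,\sup_t|\phi'(t)|.\]
Differentiability of $\phi$ is guaranteed by Lemma~\ref{lem:hypergeometric-regularity}, since ${}_0F_1$ is real-analytic and $t\mapsto x^{1/2}\Omega_t x^{1/2}$ is polynomial in $t$. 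So it suffices to show $|\phi'(t)| \le \sqrt{k}\,\|v-w\|_F\,\|x\|_F^{1/2}$ uniformly in $t$.

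This is precisely the first bound in Lemma~\ref{lem:scaling-log-hypergeometric}, which we may take as given. For completeness, we recall why it holds, since it is the only substantive step.

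First, write $\tfrac14 x^{1/2}\Omega_t x^{1/2} = \tfrac14 X_t^\top X_t$ with $X_t = v_t x^{1/2}\in\mathbb{R}^{d\times k}$. Then apply the integral representation of Proposition~\ref{prop:integral-representation}:
\[{}_0F_1\Big(\tfrac d2,\tfrac14 X_t^\top X_t\Big) = \int_{V_{d,k}} e^{\langle X_t^\top, H\rangle}\,d\mu(H).\]
Differentiating under the integral sign is legitimate because $V_{d,k}$ is compact and the integrand is smooth in $t$. This gives
\[\phi'(t) = \mathbb{E}_{H\sim\pi_t}\big[\langle (\dot v_t x^{1/2})^\top, H\rangle\big],\]
where $\pi_t$ is the tilted probability measure proportional to $e^{\langle X_t^\top,H\rangle}d\mu(H)$.

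Next, bound the inner product by Cauchy--Schwarz. Every $H$ on the Stiefel manifold satisfies $\|H\|_F=\sqrt{k}$. Moreover $\dot v_t = w-v$, and
\[\|(w-v)x^{1/2}\|_F \le \|w-v\|_F\,\|x^{1/2}\|_{op} \le \|w-v\|_F\,\|x\|_F^{1/2},\]
using $\|x^{1/2}\|_{op}^2=\|x\|_{op}\le\|x\|_F$ for $x\in\mathbb{S}_+^k(\mathbb{R})$. Hence $|\phi'(t)| \le \sqrt{k}\,\|v-w\|_F\,\|x\|_F^{1/2}$ for every $t$. Integrating from $\theta$ to $\theta'$ and exponentiating yields the claim.

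The main obstacle is a bookkeeping point about the integral representation. Proposition~\ref{prop:integral-representation} is stated with the argument $\tfrac14 X^\top X$ for some $X\in\mathbb{R}^{d\times k}$, whereas the density uses $x^{1/2}\Omega_t x^{1/2}$. These two matrices are unitarily similar: $x^{1/2}v_t^\top v_t x^{1/2} = X_t^\top X_t$ holds exactly, and both share the nonzero spectrum of $X_tX_t^\top$. By the conjugation and eigenvalue invariance of ${}_0F_1$, this identification is harmless. Beyond that, one should check that the bound on $\phi'$ does not depend on $t$, which is clear from the computation above. No lower bound on $d$ is needed beyond the standing assumption that the representation applies ($k\le d$).
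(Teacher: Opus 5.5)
Your proposal is correct and follows essentially the same route as the paper: you write the ratio as $\exp\bigl(\int_\theta^{\theta'}\partial_u \log {}_0F_1\,du\bigr)$ and bound the integrand uniformly by the first estimate of Lemma~\ref{lem:scaling-log-hypergeometric}, which you re-derive just as the paper does (integral representation over $V_{d,k}$, then Cauchy--Schwarz with $\|H\|_F=\sqrt{k}$). Your additional justifications fill in steps the paper leaves implicit: the exact identity $x^{1/2}\Omega_t x^{1/2}=X_t^\top X_t$ with $X_t=v_t x^{1/2}$, differentiation under the integral on the compact Stiefel manifold, and the bound $\|x^{1/2}\|_{op}\le\|x\|_F^{1/2}$.
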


\begin{proof}
    Let $d > 2$, $x \in \mathbb{S}_+^k(\mathbb{R})$ and $\theta, \theta' \in \mathbb{R}$. 
    \begin{align*}
        \frac{{}_0F_1\left(\frac{d}{2}, \frac{1}{4} x^{1/2} \Omega_{\theta'} x^{1/2}\right)}{{}_0F_1\left(\frac{d}{2}, \frac{1}{4}x^{1/2} \Omega_{\theta} x^{1/2}\right)} &= \exp\left(\log {}_0F_1\left(\frac{d}{2}, \frac{1}{4} x^{1/2} \Omega_{\theta'} x^{1/2}\right) - \log {}_0F_1\left(\frac{d}{2}, \frac{1}{4} x^{1/2} \Omega_{\theta} x^{1/2}\right)\right)\\
        &= \exp\left( \int_{\theta}^{\theta'} \partial_u \log {}_0F_1\left(\frac{d}{2}, \frac{1}{4} x^{1/2} \Omega_{u} x^{1/2}\right) du \right)\\
        &\leq \exp\left( |\theta' - \theta|\sqrt{k}\|v-w\|_{F} \|x\|_F^{1/2} \right),
    \end{align*}
    giving the desired result.
\end{proof}

\begin{lemma}
\label{lem:hypergeometric-scaling}
    Let $x \in \mathbb{S}_+^k(\mathbb{R})$ and $ d > 1$. Then,
    \[{}_0F_1(d/2, x) \leq e^{\sqrt{k}\|v_\theta\|_F \|x\|_F^{1/2}}.\]
\end{lemma}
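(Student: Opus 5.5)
The plan is to read the statement in the form in which it is used in the proof of Proposition~\ref{prop:regularity-wishart}, namely
\[
{}_0F_1\Big(\frac{d}{2}, \frac14 x^{1/2}\Omega_\theta x^{1/2}\Big) \le e^{\sqrt{k}\,\|v_\theta\|_F\,\|x\|_F^{1/2}},
\]
with $\Omega_\theta = v_\theta^\top v_\theta$ and $v_\theta\in\mathbb{R}^{d\times k}$. Taken literally, the left-hand side ${}_0F_1(d/2,x)$ does not involve $v_\theta$, so this reading is the only one under which the bound makes sense. The proof will then be a direct estimate from the integral representation of Proposition~\ref{prop:integral-representation}. That representation requires $d\ge k$, which holds in the setting where the lemma is applied ($d>k\ge1$).

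First, I would write the matrix argument as a Gram matrix. Set $X := v_\theta x^{1/2}\in\mathbb{R}^{d\times k}$, where $x^{1/2}$ is the symmetric PSD square root. Then
\[
X^\top X = x^{1/2} v_\theta^\top v_\theta x^{1/2} = x^{1/2}\Omega_\theta x^{1/2}.
\]
Proposition~\ref{prop:integral-representation} therefore gives
\[
{}_0F_1\Big(\frac{d}{2},\frac14 X^\top X\Big)=\int_{V_{d,k}} e^{\langle X^\top,H\rangle}\,d\mu(H).
\]

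Second, I would bound the integrand uniformly and use that $\mu$ is a probability measure. By Cauchy--Schwarz, $\langle X^\top,H\rangle\le \|X\|_F\|H\|_F$. Every $H$ in the Stiefel manifold has $k$ orthonormal rows (or columns), so $\|H\|_F=\sqrt{k}$. This gives
\[
{}_0F_1\le e^{\sqrt{k}\,\|X\|_F}.
\]

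Third, I would control $\|X\|_F$. We have
\[
\|X\|_F=\|v_\theta x^{1/2}\|_F\le \|v_\theta\|_F\,\|x^{1/2}\|_{\mathrm{op}},
\qquad
\|x^{1/2}\|_{\mathrm{op}}=\lambda_{\max}(x)^{1/2}\le \|x\|_F^{1/2},
\]
where the last inequality holds because $x$ is PSD. Combining the three steps yields the claimed bound.

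The main obstacle is not technical but interpretive: the statement needs to be read as above, and one needs the hypothesis $d\ge k$ for the integral representation (the lemma only states $d>1$). If a bound for a general PSD argument $y=\frac14 X^\top X$ were wanted instead, the same computation gives ${}_0F_1(d/2,y)\le e^{2\sqrt{k}\,\|y^{1/2}\|_F}$, obtained by taking $X$ with $X^\top X = 4y$.
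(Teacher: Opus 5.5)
Your proof is correct and follows the paper's argument. The paper likewise reads the statement as a bound on ${}_0F_1\big(\frac d2,\frac14 x^{1/2}\Omega_\theta x^{1/2}\big)$, applies the Stiefel-manifold integral representation with $X^\top = x^{1/2}v_\theta^\top$, and bounds the integral by the supremum of its integrand. You also spell out two things the paper leaves implicit: the final estimate $\langle X^\top,H\rangle\le\sqrt{k}\,\|v_\theta\|_F\,\|x\|_F^{1/2}$ (via Cauchy--Schwarz, $\|H\|_F=\sqrt{k}$ and $\lambda_{\max}(x)\le\|x\|_F$), and the hypothesis $d\ge k$ that the representation needs.
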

\begin{proof}
     Let $x \in \mathbb{S}_+^k(\mathbb{R}), d > 1, t \in \mathbb{R}$. Then, leveraging the integral representation of the ${}_0F_1$ function from Proposition~\ref{prop:integral-representation}:
    \begin{align*}
        \log {}_0F_1\left(\frac{d}{2}, \frac{1}{4}x^{1/2} \Omega_\theta x^{1/2}\right) &=  \int_{V_{d,k}} e^{\langle x^{1/2} v_\theta^\top, H\rangle} d\mu(H) \leq \sup_{H \in V_{n,d}} e^{\langle x^{1/2} v_\theta^\top, H\rangle} \leq  e^{\sqrt{k}\|v_\theta\|_F \|x\|_F^{1/2}}.
    \end{align*}
\end{proof}
\end{document}